\documentclass[
 amsmath,amssymb,superscript,
 aps,pra,
onecolumn,
]{revtex4-2}

\usepackage{tcolorbox}
\usepackage{xcolor}
\usepackage{graphicx} 
\usepackage{dcolumn} 
\usepackage{bm} 
\usepackage{bbm} 
\usepackage{braket} 
\usepackage{amsthm}
\usepackage[normalem]{ulem}

\usepackage[ruled]{algorithm2e}

\newtheorem{theorem}{Theorem}

\newtheorem{lemma}{Lemma}
\newtheorem{proposition}{Proposition}
\newtheorem{definition}{Definition}

\newtheorem*{theorem-non}{Theorem}
\newtheorem*{lemma-non}{Lemma}
\newtheorem*{corollary-non}{Corollary}
\newtheorem*{proposition-non}{Proposition}

\definecolor{NatureTeal}{HTML}{0F6B68}
\definecolor{NatureBlue}{HTML}{1F4E79}
\definecolor{NatureRed}{HTML}{A23B3B}

\usepackage[colorlinks=true,
            citecolor=NatureTeal,
            linkcolor=NatureRed,
            urlcolor=NatureRed]{hyperref}

\usepackage{appendix}

\newcommand{\bx}{\bm{x}}

\newcommand{\calM}{\mathcal{M}}

\newcommand{\bxi}{\bm{x}^{(i)}}

\newcommand{\yi}{y^{(i)}}

\newcommand{\bomega}{\bm{\omega}}

\newcommand{\Prob}{\text{Pr}}

\newcommand{\llangle}{\langle\langle}
\newcommand{\rrangle}{\rangle\rangle}

\DeclareMathOperator{\CNOT}{CNOT}
 \DeclareMathOperator{\bUnitary}{\mathfrak{U}}
 
\DeclareMathOperator{\DKL}{D_{\text{KL}}}

\DeclareMathOperator{\Tr}{Tr}

\DeclareMathOperator{\CI}{\text{CI}}
\DeclareMathOperator{\T}{\text{T}}
 \DeclareMathOperator{\bRZ}{\mathfrak{R}_Z}

\DeclareMathOperator{\RZ}{RZ}

\DeclareMathOperator{\Unif}{Unif}

\DeclareMathOperator{\poly}{poly}

\usepackage{pifont}

\usepackage{xspace}

\newcommand{\DSE}{\ensuremath{\mathsf{DSE}}\xspace}
\newcommand{\OSE}{\ensuremath{\mathsf{OSE}}\xspace}

\begin{document}

\title{Learnable yet not simulable: a quantum resource theory of learning models}

\author{Xinbiao Wang$^{1}$}

\author{Yuxuan Du$^{1}$}
\thanks{Correspondence to:  duyuxuan123@gmail.com,\\dacheng.tao@ntu.edu.sg}

\author{Dacheng Tao$^{1}$}
\thanks{Correspondence to:  duyuxuan123@gmail.com,\\dacheng.tao@ntu.edu.sg}

\affiliation{$^1$College of Computing and Data Science, Nanyang Technological University, Singapore 639798, Singapore}

\begin{abstract}
Quantum resource theory has sharpened our understanding of the intrinsic complexity of quantum systems, particularly their classical simulability. However, it remains unclear which quantum resource governs the classical learnability of quantum circuits, especially beyond the regime of efficient classical simulation. Here we close this knowledge gap by studying the expectation-value functions of families of tunable quantum circuits, with many applications in digital quantum simulation, quantum metrology, and quantum-system characterization. Specifically, we introduce a new resource measure, the dynamical stabilizer entropy (\DSE), which quantifies how broadly an expectation-value function is distributed across its frequency modes. By relating \DSE to operator stabilizer entropy, we establish a computational phase diagram that compares classical simulators with quantum-data-assisted classical surrogates. We first determine the \DSE-dependent learnability boundary of this diagram by deriving bounds on the sample complexity and runtime of classical surrogates, and by developing a \DSE-guided surrogate. We then complete the diagram by proving, under standard complexity-theoretic assumptions, the existence of circuit families that can be efficiently learned by this surrogate but cannot be efficiently emulated from their circuit descriptions alone. Numerical experiments on random and structured circuits with up to 80 qubits support the predicted \DSE-dependent computational landscape. These results establish a quantitative resource-theoretic framework for delineating the boundary between classical simulation and learning, motivate resource measures linking quantum resources to learnability, and guide the design of learning-based algorithms for scalable quantum systems beyond the reach of direct classical simulation.
\end{abstract}

\maketitle

\section{Introduction}

Quantum resources offer quantitative criteria for probing the intrinsic complexity of quantum systems, from characterization and algorithmic advantage to classical simulability \cite{chitambar2019quantum}. Over the past decades, a broad range of such resources has been identified, with entanglement and magic serving as paradigmatic examples \cite{horodecki2009quantum,bravyi2005universal,veitch2014resource,leone2025entanglement}. Together, these resources delineate whether quantum systems can be efficiently characterized through tasks such as tomography \cite{leone2024learning,chia2024efficient, grewal2025efficient,mele2025efficient,mele2025learning} and fidelity estimation \cite{Flammia201111dirct,leone2023nonstabilizerness,hinsche2025efficient}, how quantum algorithms acquire provable advantages \cite{takagi2019operational,takagi2019general,ahnefeld2022coherence,ahnefeld2026coherence}, and whether concrete quantum problems remain accessible to classical methods \cite{seddon2021quantifying,christandl2024resource,gu2025magic,zhang2025classical}. When the relevant resources are constrained, the corresponding systems can admit efficient classical simulation, as exemplified by tensor-network methods for limited entanglement \cite{markov2008simulating,villalonga2019flexible,cirac2021matrix,pan2022simulation,berezutskii2025tensor} and near-Clifford or Pauli-based methods for restricted magic and operator structure \cite{bravyi2016improved,bravyi2016trading,bravyi2019simulation,rudolph2023classical,beguvsic2024fast}. This resource-based perspective has become especially visible in recent experiments on quantum processors with tens to hundreds of qubits \cite{kim2023evidence,google2025observation,fischer2026dynamical,tindall2026dynamics,haghshenas2026digital}, where claims of quantum advantages are tested against classical simulators tailored to the structure of the problem being solved.

Beyond these established roles, quantum resources are beginning to reshape quantum learning \cite{biamonte2017quantum,dunjko2018machine}, a regime in which quantum systems and learning models become mutually enabling. Early studies have developed along two complementary directions. One asks how quantum resources control the expressivity, trainability, and generalization of quantum learning models \cite{marrero2021entanglement,Holmes_2022,bu2022statistical,bu2023effects,Il_rio_Correr_2024,wang2024separable,bu2024complexity}. The other examines how resources enter the learning problem beyond the model itself, through entangled training data \cite{schatzki2021entangled,Sharma_2022,wang2024transition}, quantum access models or resourceful target distributions \cite{Huang_2021,buadescu2021improved,aharonov2022quantum,chen2022exponential,huang2022quantum, hinsche2023one,fanizza2022learning,zhao2025entanglement}. However, unlike classical simulability, classical learnability still lacks a resource-theoretic characterization. Recent classical surrogates sharpen this gap \cite{huang2021provably,schreiber2023classical,Sweke2025potential}. After consuming a limited number of quantum samples during training, they can later predict the behaviour of related quantum systems entirely on the classical side. Theoretical results show that such surrogates are provably efficient for circuit families with large non-Clifford content and substantial entangling structure, even when these same features make instance-wise classical simulation costly \cite{du2025efficient,liao2025demonstration}. This fundamental difference raises a central question:
\textit{What makes a quantum system learnable when it is not classically simulable?}

\begin{figure*}[t]  
\begin{center}
\centerline{\includegraphics[width=0.96\textwidth]{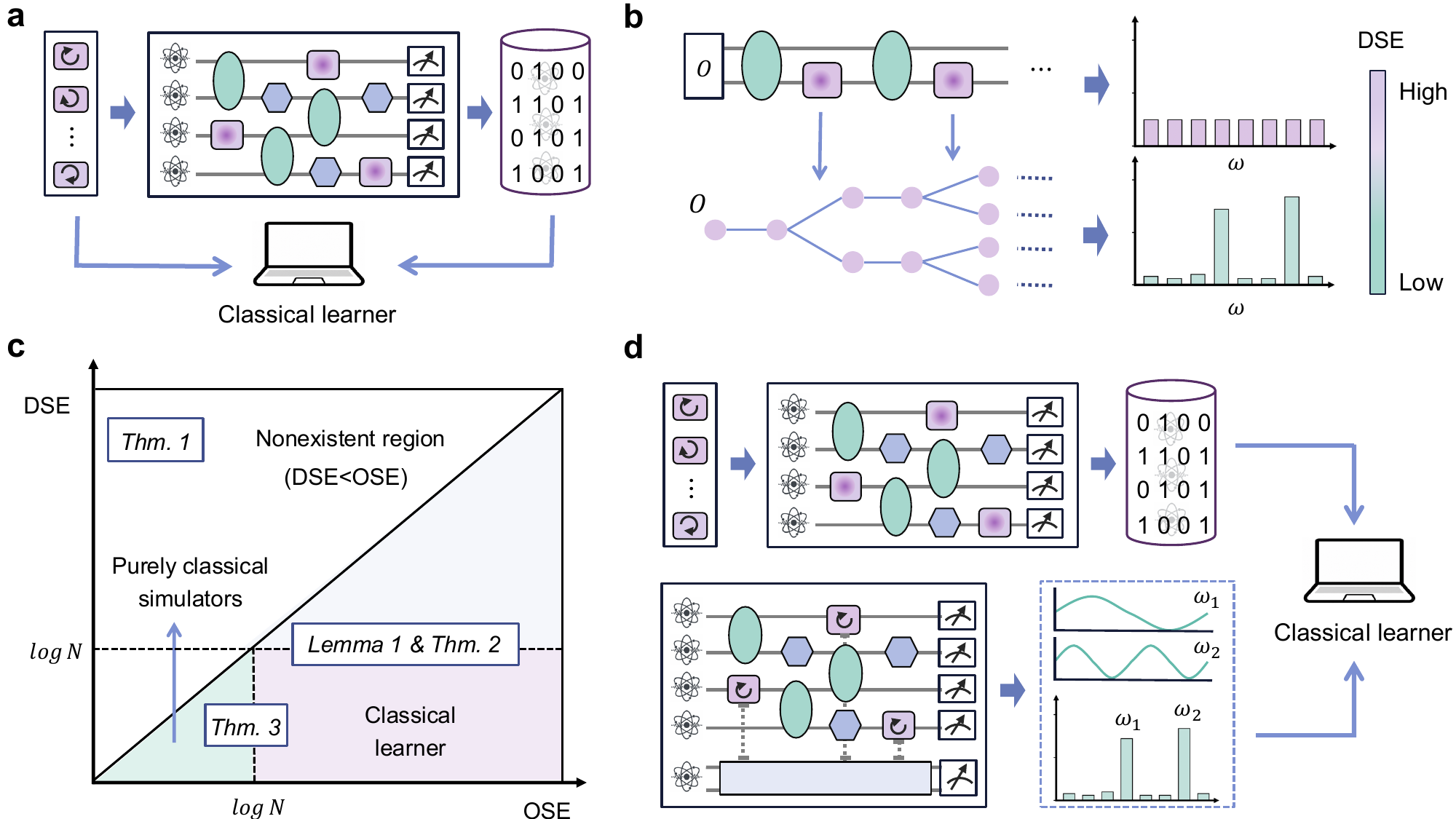}}
\caption{\small{\textbf{Schematic overview of the classical simulability and classical learnability of quantum circuits under a quantum resource framework.} \textbf{a.} The construction of classical surrogates consists of two steps. First, the classical learner collects the training dataset with labeled data from quantum devices. Then, the classical learner trains a classical machine learning model using these training data. \textbf{b.} Interpreting the operational meaning of \DSE for learnability. \DSE is defined from the trigonometric expansion of the target function, whose basis size grows exponentially with the number of tunable rotation gates. Small \DSE implies that $f(\bx)$ is concentrated on a small set of dominant trigonometric components, enabling efficient classical learning with a compact feature space. In contrast, large \DSE corresponds to a broad distribution over exponentially many components, thereby obstructing efficient classical learnability. \textbf{c.} Computational phase diagram comparing conventional classical algorithms and learning-based classical surrogates for predicting expectation values of quantum circuits. The computational capability of each method is evaluated on a circuit family with bounded resources in the worst-case setting. Conventional classical algorithms are efficient only in the low-magic regime, whereas classical surrogates remain efficient throughout the low-\DSE regime, even in a high-magic regime. The upper triangular region is excluded because \DSE is upper-bounded by a magic-resource measure in the worst case, as shown in Theorem~\ref{lem:ose_upper_bound}. \textbf{d.} The construction of the \DSE-guided classical surrogate consists of three steps. First, the classical learner collects the training dataset with
labeled data from quantum computers (as depicted in the upper panel). Second, a quantum subroutine is developed to identify the important trigonometric features contributing to \DSE (as depicted in the lower panel). Finally, the classical learner leverages the identified features on the collected training dataset to implement the classical surrogate.}} 
\label{fig:scheme}
\end{center}
\vskip -0.2in
\end{figure*}

In this study, we answer this question by establishing a computational phase diagram that delineates the capabilities of classical surrogates and classical simulators within a unified quantum resource framework, as illustrated in Fig.~\ref{fig:scheme}. Concretely, we consider the task of estimating expectation-value functions generated by $N$-qubit quantum circuits containing $d$ tunable Pauli rotation gates. Such functions underpin a wide range of applications, ranging from digital quantum simulation~\cite{feynman2018simulating,zhang2022digital} and quantum metrology~\cite{degen2017quantum,koczor2020variational} to variational quantum algorithms~\cite{cerezo2021variational_VQA} and quantum-system certification~\cite{eisert2020quantum,Flammia201111dirct}. To characterize the learnability of these functions, we introduce a new resource measure $\mathcal{M}$, termed \textit{dynamical stabilizer entropy} (\DSE), which quantifies how broadly an expectation-value function is distributed across the frequency modes generated by the tunable gates.

We first use \DSE to establish the information-theoretic boundary of classical learnability in the computational phase diagram. By deriving \DSE-dependent upper and lower bounds on the sample complexity, we show that the learnability of the explored circuit families $\mathcal{F}_{\mathcal{U}_{\lambda}}$ with bounded \DSE $\mathcal{M}_{\lambda}$ depends exponentially on $\mathcal{M}_{\lambda}$. When the bounded \DSE $\mathcal{M}_{\lambda}$ scales at most logarithmically with the system size $N$, polynomially many training samples suffice to learn all expectation-value functions in $\mathcal{F}_{\mathcal{U}_{\lambda}}$. By contrast, when $\mathcal{M}_{\lambda}$ grows superlogarithmically, there exist circuit families in $\mathcal{F}_{\mathcal{U}_{\lambda}}$ whose expectation-value functions cannot be learned sample-efficiently by any classical learner. These results provide a resource-theoretic characterization of the classically learnable regime, revealing both the capabilities and the fundamental limitations of classical surrogates.

We then characterize the remaining part of the phase diagram by comparing classical learning with classical simulation. We relate \DSE to operator stabilizer entropy (\OSE)~\cite{dowling2025magic}, an instance-wise measure of quantum magic, and show that \DSE is upper-bounded by \OSE in the worst-case setting. This relation implies that the classically simulable regime is contained within the classically learnable regime. Moreover, we develop a \DSE-guided classical surrogate $h_{\mathsf{q}}$ that uses a quantum subroutine to identify dominant frequency modes and construct a compact predictive model. Our runtime analysis shows that $h_{\mathsf{q}}$ remains computationally efficient for learning $\mathcal{F}_{\mathcal{U}_{\lambda}}$ throughout the low-\DSE regime, thereby turning the information-theoretic learnability guarantee into an implementable learning protocol. Under standard complexity-theoretic assumptions, we further show that classical algorithms without access to quantum-generated training data cannot attain comparable computational guarantees for simulating $\mathcal{F}_{\mathcal{U}_{\lambda}}$ in this regime. This establishes a rigorous computational separation between classical simulators and classical surrogates, demonstrating that learning can remain efficient beyond the reach of direct classical simulation. Together, these results complete the computational phase diagram in Fig.~\ref{fig:scheme} and provide a resource-theoretic map for determining when description-only classical simulation or quantum-data-assisted emulation is preferable.

We conduct systematic numerical experiments to validate these theoretical findings up to $80$-qubit. The numerical results reproduce the theoretically predicted computational landscape in Fig.~\ref{fig:scheme}c, with \DSE serving as a dividing line between classically learnable and non-learnable regimes, and \OSE characterizing the boundary between classically simulable and learnable regimes.

\section{Main Result}

\subsection{Problem setup}
We study mean-value estimation of an observable on states prepared by a family of parametrized quantum circuits. As illustrated in Fig.~\ref{fig:scheme}(a), for an observable $O=\sum_{l=1}^qO_l$ with $ \sum_l\|O_l\|_{\infty}\le 1$, we consider the function class
\begin{equation}\label{eq:target_function_set}
   \mathcal{F}_{\mathcal{U}} = \Big\{f(\bx)= \Tr\big(\rho_0, O(\bx;U)\big)~\big|~ U\in\mathcal{U} \Big\},
\end{equation}
where $\rho_0$ is an arbitrary $N$-qubit state, $\bx\in [-\pi,\pi]^d$, and $O(\bx;U):=U(\bx)^{\dagger}OU(\bx)$ is the Heisenberg-evolved observable. Here, we focus on a restricted circuit family $\mathcal{U}\subset \mathsf{Arc}(N,d)$, where the set $\mathsf{Arc}(N,d)$ collects all $N$-qubit circuits $U(\bx) = \prod_{j=1}^{d}(\RZ(\bx_j)V_j)$ of $G$ gates, comprising $d$ tunable rotation gates $\RZ$ interleaved with fixed blocks $V_j$ from a universal gate set~\cite{nielsen2010quantum} (e.g., $\{H, S, T, \CNOT\}$), in an arbitrary layout. Each $U$ defines one mean-value function $f(\bx)$, so $\mathcal{F}_{\mathcal{U}}$ collects these functions across all $U\in \mathcal{U}$. Such estimation tasks underlie a broad range of applications, including digital quantum simulation~\cite{feynman2018simulating,zhang2022digital}, quantum metrology~\cite{degen2017quantum,koczor2020variational}, variational quantum algorithms~\cite{cerezo2021variational_VQA}, and quantum-system certification~\cite{eisert2020quantum,Flammia201111dirct}.

The objective of mean-value estimation in Eq.~(\ref{eq:target_function_set}) is to use a classical model $h(\bx)$ to approximate $f(\bx)$ for varying $\bx$. Following standard practice, the quality of $h(\bx)$ is measured by the average prediction error, or expected risk~\cite{gao2018efficient,huang2022provably,lerch2024efficient,fontana2025classical,du2025efficient,angrisani2025classically,liao2025demonstration}, under an input distribution $\mathbb{D}$, i.e.,
\begin{equation}\label{eq:prediction_error}
	\mathsf{R}(h) = \mathbb{E}_{\bx\sim \mathbb{D}} \left| h(\bx) - f(\bx)  \right|^2.
\end{equation}
This metric applies equally to classical simulators and to classical learning surrogates (see Supplementary Information (SI)~\ref{append:sec:preliminary}). The family $\mathcal{F}_{\mathcal{U}}$ is said to be \textit{classically efficient} if, for any $f\in \mathcal{F}_{\mathcal{U}}$ and a fixed target accuracy $\epsilon$, such an $h(\bx)$ can be constructed in time polynomial in the qubit count $N$ and the input dimension $d$ while achieving $\mathsf{R}(h) \leq \epsilon$ with high probability. When $\mathbb{D}$ is a point mass, $\mathcal{F}_{\mathcal{U}}$ collapses to a single circuit instance and is efficient for classical simulators when  $\mathcal{U}$ consists of circuits containing low instance-wise quantum resources such as magic~\cite{bravyi2016improved,leone2022stabilizer} and entanglement~\cite{cirac2021matrix,markov2008simulating}. However, for a general $\mathbb{D}$, which quantum resource governs the computational efficiency of $\mathcal{F}_{\mathcal{U}}$ for a restricted class $\mathcal{U}$ remains largely open, leaving the computational separation among classical simulators and classical surrogates unclear.

\subsection{Dynamical stabilizer entropy}
To address this question, we propose a new resource measure, dubbed \textit{dynamical stabilizer entropy} (\DSE). To define \DSE formally, we first expand each target function $f(\bx)\in \mathcal{F}_{\mathcal{U}}$ for any $\mathcal{U}\subset \mathsf{Arc}(N,d)$ in the trigonometric basis via its Pauli-transfer-matrix representation~\cite{greenbaum2015introduction}, namely
\begin{equation}\label{eq:f_x_tri_expansion}
	f(\bx) = \sum_{\bomega\in \Lambda} \Phi_{\bomega}(\bx)\Tr(\rho_0 Q_{\bomega}),
\end{equation}
where the trigonometric basis at the mode $\bomega$ is $\Phi_{\bomega}(\bx)=\prod_{j=1}^d [\mathbbm{1}_{\bomega_j=0} + \cos(\bx_j), \mathbbm{1}_{\bomega_j=1}+ \sin(\bx_j), \mathbbm{1}_{\bomega_j=-1}]$, the operator $Q_{\bomega}$ is a linear combination of Pauli strings, and $\Lambda = \{-1, 0, 1\}^{d}$ is the frequency set  with $|\Lambda|=3^d$~\cite{du2025efficient}. As illustrated in Fig.~\ref{fig:scheme}b, the number of modes that contribute to $f(\bx)$, i.e., those $\bomega$ with $\Tr(\rho_0 Q_{\bomega})\neq 0$, is governed by the circuit layout $U$ in $\mathsf{Arc}(N,d)$, and can grow exponentially with $d$.

\DSE essentially quantifies how broadly $f(\bx)$ spreads over $\Lambda$. In this measure, the fixed gates $V_j$ in Eq.~\eqref{eq:target_function_set} are free operations, whereas the tunable rotations $\RZ(\bx_j)$ are resource-generating. Formally, the $\alpha$-order \DSE is
\begin{equation}\label{eq:SSE}
    \mathcal{M}^{(\alpha)}[O(\bx;U)] = \max_{V\in \mathbb{S}\mathbb{U}(2^N)} \mathcal{S}^{(\alpha)}[V^{\dagger}O(\bx;U) V],
\end{equation}
where $\mathcal{S}^{(\alpha)}[\cdot]$ is the order-$\alpha$ R\'enyi entropy of the mode distribution induced by the operator $V^{\dagger}O(\bx;U) V$. Following Eq.~(\ref{eq:f_x_tri_expansion}), its expansion is $\sum_{\bomega\in\Lambda}\Phi_{\bomega}(\bx)Q'_{\bomega}$. As a result, the induced distribution takes the form as $\mathrm{p}(\bomega)=\tilde{\mathrm{p}}(\bomega)/\sum_{\bomega\in\Lambda}\tilde{\mathrm{p}}(\bomega)$ with $\tilde{\mathrm{p}}(\bomega)=\mathbb{E}_{\bx\sim\mathbb{D}}\Phi_{\bomega}(\bx)^2\Tr(\rho_0 Q'_{\bomega})^2$, recording the average squared contribution of each mode, and $\mathcal{S}^{(\alpha)}[V^{\dagger}O(\bx;U) V]=(1-\alpha)^{-1}\log\big(\sum_{\bomega\in\Lambda}\mathrm{p}^{\alpha}(\bomega)\big)$. For brevity, we write $\mathcal{M}^{(\alpha)}[O(\bx;U)]$ as shorthand, leaving the average over $\mathbb{D}$ and the maximization over $V$ implicit when they are clear from context.

The following theorem establishes that \DSE is a valid quantum resource monotone~\cite{chitambar2019quantum} and bounds it against the operator stabilizer entropy (\OSE)~\cite{dowling2025magic}, $\mathfrak{M}_{\mathrm{ose}}^{(\alpha)}[O(\bx; U)]$, an instance-wise magic monotone for the nonstabilizerness of $O(\bx; U)$. The proofs are deferred to SI.~\ref{append:sec:properties-SSE}.
\begin{theorem}[Informal]
    \label{lem:ose_upper_bound}
    Following Eqs.~\eqref{eq:target_function_set}-\eqref{eq:SSE}, \DSE is non-negative and invariant under fixed gates. Given the class $\mathsf{Arc}(N,d)$, we have $ \max_{U\in \mathsf{Arc}(N,d)}\mathcal{M}^{(\alpha)}[O(\bx;U)]\lesssim d$,  where the symbol $\lesssim$ hides constant factors. When $d\le N$, we have $d \lesssim \max_{\bx,U\in \mathsf{Arc}(N,d)}\mathfrak{M}_{\mathrm{ose}}^{(\alpha)}[O(\bx;U)]$.
\end{theorem}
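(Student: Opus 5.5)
The theorem has four parts, and I would prove them in increasing order of difficulty: non-negativity, invariance under fixed gates, the upper bound by $d$, and the lower bound on \OSE.

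\textbf{Non-negativity and the upper bound.} The first three parts come directly from the definition in Eq.~\eqref{eq:SSE}. Fix $V$. The induced $\mathrm{p}(\bomega)$ is a probability distribution on $\Lambda=\{-1,0,1\}^d$. Every R\'enyi entropy of order $\alpha\ge 0$ on a finite alphabet satisfies $0\le \mathcal{S}^{(\alpha)}\le \log|\Lambda| = d\log 3$. Taking the maximum over $V\in\mathbb{SU}(2^N)$ keeps both bounds. Hence $0\le \mathcal{M}^{(\alpha)}[O(\bx;U)]\le d\log 3$ for every $U\in\mathsf{Arc}(N,d)$, which gives $\max_U \mathcal{M}^{(\alpha)}\lesssim d$.

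For the degenerate case where all $\tilde{\mathrm{p}}$ vanish for some $V$, I would adopt the convention that $\mathcal{S}=0$.

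\textbf{Invariance under fixed gates.} Let $W$ be a fixed unitary applied before or after the circuit, so $O(\bx;U)\mapsto W^\dagger O(\bx;U)W$ (or the analogous inner placement absorbed into the $V_j$). The maximization runs over $V^\dagger W^\dagger O W V$, and $V\mapsto WV$ is a bijection of $\mathbb{SU}(2^N)$ up to a global phase that cancels in conjugation. The maximum is therefore unchanged. Invariance under changes of the interleaved $V_j$ only needs the statement for the family-level quantity with $V$ acting at the boundary, so I would state it for that.

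\textbf{Lower bound on \OSE when $d\le N$.} Here I would exhibit an explicit circuit and point. Take $U(\bx)=\prod_{j=1}^d \RZ_j(\bx_j)H_j$, with rotations on distinct qubits $j=1,\dots,d$, possibly preceded by Hadamards so each rotation acts nontrivially. Take $O=Z_1$, or better a product observable $X^{\otimes d}\otimes I$ normalized. At $\bx_j=\pi/4$ each local factor becomes $(X+Y)/\sqrt{2}$, the Heisenberg evolution of $X$ under a $T$-like rotation. The evolved operator is then $\bigotimes_{j}(X_j+Y_j)/\sqrt2$. Its normalized Pauli spectrum is uniform over $2^d$ strings, so the operator stabilizer entropy, i.e.\ the R\'enyi entropy of $|c_P|^2/\|O\|_2^2$ minus the stabilizer offset, is additive and equals $d\cdot\mathfrak{M}^{(\alpha)}_{\mathrm{ose}}[(X+Y)/\sqrt2]=\Theta(d)$. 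I would cite the additivity and the exact form of \OSE from \cite{dowling2025magic}. The normalization $\sum_l\|O_l\|_\infty\le1$ is harmless because \OSE is scale invariant.

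\textbf{Main obstacle.} I expect this last step to be the hardest. I need the precise definition of \OSE used in the SI and its additivity over tensor products. I also need to check that the single-qubit factor has strictly positive \OSE for all $\alpha$ of interest, and that the constant in $\lesssim$ is $\alpha$-independent or at least bounded. If additivity fails for the chosen normalization, I would instead directly compute the participation ratio $\sum_P |c_P|^{2\alpha}$ for the product operator, which factorizes explicitly.
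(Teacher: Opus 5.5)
Your proposal is correct and follows essentially the same route as the paper. Rényi-entropy bounds on the $3^d$-element mode alphabet give non-negativity and $\mathcal{M}^{(\alpha)}\le d\log 3$, and the bijection $V\mapsto WV$ gives invariance. For the lower bound, the witness $O=X_1\otimes\cdots\otimes X_d\otimes\mathbb{I}^{\otimes(N-d)}$ with rotations on distinct qubits at $\bx=\frac{\pi}{4}\bm{1}_d$, combined with additivity of \OSE, suffices.

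Two small cleanups. First, drop the $O=Z_1$ alternative, since it only yields a single-qubit evolved operator with bounded \OSE. Second, the paper's \OSE in Eq.~\eqref{append:eq:ose} carries no stabilizer offset. Each factor's uniform two-Pauli spectrum therefore contributes exactly $1$ for every $\alpha$, so the constant is $\alpha$-independent.
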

These two inequalities determine the infeasible region of the computational phase diagram in Fig.~\ref{fig:scheme}c. For a fixed number of rotation gates $d$, the worst-case \DSE scales at most as $\mathcal{O}(d)$, whereas the worst-case \OSE scales at least as $\Omega(d)$. In other words, all worst-case operating points lie on or below the diagonal $\DSE=\OSE$, leading the upper-left triangle in Fig.~\ref{fig:scheme}c infeasible. The computational separation among classical simulators and classical surrogates therefore appears in the lower-right region.

Beyond its role as a monotone, \DSE lifts the resource-theoretic characterization provided by \OSE from a single instance to an entire parametrized circuit family $\mathcal{F}_{\mathcal{U}}$. This lifting connects \DSE to related resource measures, including state stabilizer R\'enyi entropy~\cite{leone2022stabilizer}, unitary stabilizer nullity~\cite{jiang2023lower}, and circuit complexity~\cite{bu2024complexity}. It also yields a family-level notion of compressibility. Low entanglement enables tensor-network truncation~\cite{cirac2021matrix}, low \OSE enables Pauli-path truncation~\cite{rudolph2023classical}, and low \DSE enables efficient construction of Pauli-path simulators and classical surrogates across $\mathcal{F}_{\mathcal{U}}$. Similarly, dynamical properties studied through \OSE, such as the growth, locality, and scrambling of operator magic~\cite{dowling2025magic}, can be reformulated for $\mathcal{F}_{\mathcal{U}}$ as questions about how tunable circuits $U(\bx)$ generate and distribute operator magic across parameter space. In this sense, \DSE provides a family-wide diagnostic of the classical accessibility of parametrized quantum dynamics. Further discussions are provided in SI.~\ref{append:sec:properties-SSE}.

\subsection{Classical learnability of \texorpdfstring{$\mathcal{F}_{\mathcal{U}}$}{mathcalF-mathcalU} in terms of \DSE}
We now use \DSE to delineate the classical learnability of $\mathcal{F}_{\mathcal{U}_{\lambda}}$ for the circuit family $\mathcal{U}_{\lambda}$ with bounded \DSE. In particular, we consider the circuit family
\begin{equation}\label{eq:dse_circuit}
    \mathcal{U}_{\lambda}=\{U\in \mathsf{Arc}(N,d):\mathcal{M}^{(1)}[O(\bx;U)]\le \mathcal{M}_{\lambda}\}
\end{equation}
Here, $\lambda$ labels the circuit class, while $\mathcal{M}_{\lambda}$ denotes its \DSE bound. The learnability of $\mathcal{F}_{\mathcal{U}_{\lambda}}$ follows by combining a sample-complexity characterization with an explicit runtime analysis. The former gives an information-theoretic criterion for sample-efficient learning, while the latter verifies that the same \DSE threshold controls the cost of an implementable surrogate. Together, they identify where classical learning is both statistically and computationally efficient.

To set the stage, we briefly recall the classical learning protocols. Classical learners (or classical surrogates) are model-agnostic and receive the training set $\mathcal{T}=\{(\bx^{(i)},y^{(i)})\}_{i=1}^{n}$, where $\bx^{(i)}\sim\mathbb{D}$ on $[-\pi,\pi]^d$ and each label $y^{(i)}$ is estimated from $m$ incoherent measurements. After training, a classical learner runs the trained predictor $h(\bx)$ on classical hardware at the inference stage without any access to a quantum device.

The lemma below characterizes the sample complexity of classical learning $\mathcal{F}_{\mathcal{U}_{\lambda}}$ in terms of the first-order \DSE, with the formal statement and proof deferred to SI.~\ref{append:sec:information_theoretic_bound}-\ref{append:sec:dse_lower_bound}.
\begin{lemma}[Informal]\label{mt:thm:general_ML_complexity}
    Following Eqs.~\eqref{eq:target_function_set}-\eqref{eq:dse_circuit}. Let $\nu_{f} =\mathbb{E}_{\bx \sim [-\pi,\pi]^d} f(\bx)^2$ and $\epsilon< \nu_{f}/4$, and $\mathcal{U}_{\lambda}\subset \mathsf{Arc}(N,d)$ be the circuit class with bounded \DSE, i.e., $\mathcal{M}^{(1)}[O(\bx;U)]\le \mathcal{M}_{\lambda}$ for any $U\in \mathcal{U}_{\lambda}$. Given $\mathcal{T}$, the training data size
    \begin{equation}\label{mt:eq:ml_sample_complexity}
       \Omega\Big( \frac{2^{ \mathcal{M}_{\lambda}} }{ m \nu_{f} }\Big) \le n \le 
       \tilde{\mathcal{O}} \Big( \frac{2^{12 \nu_{f} \mathcal{M}_{\lambda}/\epsilon} \cdot d}{\epsilon}  \Big)
    \end{equation}
    is sufficient and necessary to achieve $\epsilon$-prediction error in Eq.~\eqref{eq:prediction_error} for a classical surrogate $h(\bx)$ with high probability. Here $h(\bx)$ has a model dimension $\tilde{\mathcal{O}}(2^{12\nu_{f} \mathcal{M}_{\lambda}/\epsilon})$.
\end{lemma}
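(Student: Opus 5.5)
The plan has two parts: an upper bound via truncation to a small set of dominant frequency modes, and a lower bound via a packing/Fano argument over a family of circuits whose mode distributions are nearly uniform on $2^{\mathcal{M}_\lambda}$ modes.

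\textbf{Upper bound.} Write $f(\bx)=\sum_{\bomega}c_{\bomega}\Phi_{\bomega}(\bx)$ with $c_{\bomega}=\Tr(\rho_0 Q_{\bomega})$. Note that under the uniform (or product) input distribution the functions $\Phi_{\bomega}$ are orthogonal. Hence $\tilde{\mathrm{p}}(\bomega)$ is exactly the squared $L^2$ weight of mode $\bomega$, and $\sum_{\bomega}\tilde{\mathrm{p}}(\bomega)=\nu_f$.

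The key step is a Shannon-entropy tail bound. If $H(\mathrm{p})\le\mathcal{M}_\lambda$ (the choice $V=\mathbb{I}$ in Eq.~\eqref{eq:SSE} is dominated by the maximum), set $\mathfrak{S}=\{\bomega:\mathrm{p}(\bomega)\ge 2^{-K}\}$. Then $|\mathfrak{S}|\le 2^K$. Every $\bomega\notin\mathfrak{S}$ has $-\log\mathrm{p}(\bomega)>K$, so Markov's inequality gives
\[
\sum_{\bomega\notin\mathfrak{S}}\mathrm{p}(\bomega)\le \mathcal{M}_\lambda/K .
\]
The truncation error is therefore $\nu_f\,\mathcal{M}_\lambda/K$. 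Choosing $K\approx 12\nu_f\mathcal{M}_\lambda/\epsilon$ makes this at most $\epsilon/12$, and the model dimension is $|\mathfrak{S}|\le 2^{12\nu_f\mathcal{M}_\lambda/\epsilon}$.

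Next, the learner must find $\mathfrak{S}$ without knowing $U$. I would handle this by a union over candidate supports: with $\log\binom{3^d}{2^K}\lesssim 2^K d$, the class of $2^K$-sparse trigonometric polynomials with bounded coefficients has covering number/pseudo-dimension $\tilde{\mathcal{O}}(2^K d)$. Alternatively, one can use the Fourier-sampling subroutine the paper later introduces. Empirical risk minimization (least squares) over this class, with labels bounded by $\|O\|\le 1$ and shot noise of variance $1/m$, then gives excess risk $\tilde{\mathcal{O}}(2^K d/n)$. Setting this to $\epsilon/2$ yields $n=\tilde{\mathcal{O}}(2^K d/\epsilon)$, which matches the right-hand side of Eq.~\eqref{mt:eq:ml_sample_complexity} after absorbing constants.

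\textbf{Lower bound.} I would build a packing inside $\mathcal{F}_{\mathcal{U}_\lambda}$. Take $k\approx\mathcal{M}_\lambda$ tunable rotations acting on a small block of qubits, arranged so that the Heisenberg-evolved observable spreads uniformly over $2^{k}$ modes. A layout of CNOTs and $H$ gates can make each rotation double the number of contributing Pauli paths; this uses the construction behind Theorem~\ref{lem:ose_upper_bound}.

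Choosing signs by fixed Pauli/Clifford gates then gives $2^{\Omega(2^k)}$ functions $f_s=\sqrt{\nu_f/2^k}\sum_{\bomega\in S}s_{\bomega}\Phi_{\bomega}$ with $s\in\{\pm1\}^{2^k}$. Each has \DSE $\approx k\le\mathcal{M}_\lambda$, since the distribution is uniform on $2^k$ modes. Fixed gates are free, so conjugation by the optimal $V$ cannot raise the entropy; the invariance part of Theorem~\ref{lem:ose_upper_bound} should be invoked here.

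By Varshamov–Gilbert, the packing has pairwise $L^2$ distance $\gtrsim\nu_f>4\epsilon$, so a successful learner identifies $s$. Fano's inequality requires mutual information $\gtrsim 2^k$. Each label is an average of $m$ Bernoulli-type outcomes whose mean varies by only $O(\sqrt{\nu_f})$ across hypotheses, so each sample contributes $O(m\nu_f)$ bits. This gives $n\gtrsim 2^{\mathcal{M}_\lambda}/(m\nu_f)$.

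\textbf{Main obstacle.} I expect the hardest step to be the lower-bound construction: certifying that the packed circuits really lie in $\mathcal{U}_\lambda$. This requires that the maximum over $V\in\mathbb{SU}(2^N)$ in Eq.~\eqref{eq:SSE} does not inflate the entropy beyond $\approx k$. I would try first to show that for the construction the mode distribution is invariant under the maximization, or to bound it by $\log$ of the number of modes reachable from $k$ rotations, which is $\le k\log 3$. Adjusting constants then absorbs the gap.
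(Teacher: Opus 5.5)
Your upper bound is essentially the paper's argument: Markov's inequality on $-\log\mathrm{p}(\bomega)$ together with $\mathcal{S}^{(1)}\le\mathcal{M}^{(1)}$, a union over the $\binom{3^d}{2^K}$ candidate supports, and ERM over the resulting net. That half is fine.

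The gap is in the lower bound, at the step ``choosing signs by fixed Pauli/Clifford gates gives $2^{\Omega(2^k)}$ functions $f_s$, $s\in\{\pm1\}^{2^k}$.'' Take the stabilizer input $\rho_0=\ket{0}\bra{0}^{\otimes N}$, a Pauli observable such as $X_1$, and only Clifford (or Pauli) fixed blocks. You can then push every Clifford through the rotations and write $U(\bx)=C\prod_{j=1}^{k}e^{-i\bx_jP_j/2}$ with Pauli strings $P_j$. Every branch of the Heisenberg expansion is a signed product of $P_1,\dots,P_k$ and $R=C^{\dagger}OC$, so $\Tr(\rho_0Q_{\bomega})\in\{0,\pm1\}$. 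Moreover, $f$ is completely determined by the commutation pattern of these $k+1$ strings together with the expectation values of $\rho_0$ on a generating set of the diagonal subgroup they generate, i.e.\ by $O(k^2)$ bits. Hence at most $2^{O(k^2)}$ sign patterns are reachable, and Varshamov--Gilbert cannot be applied to the full cube $\{\pm1\}^{2^k}$. Any packing of this family has $\log M=O(k^2)$, so Fano yields at best $n\gtrsim\mathcal{M}_\lambda^2/(m\nu_f)$. The exponential dependence on $\mathcal{M}_\lambda$, which is the whole content of the lower bound, is lost. Even the uniform spread fails here: strings in $\{X,Y\}^{\otimes k}$ at Hamming distance one anticommute, so a stabilizer state has nonzero expectation on at most $2^{k-1}$ of them.

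The missing idea is that \emph{every} parameter-independent unitary is free, so the hard family may use a generic, non-stabilizer fixed block. The paper takes $V_{\bm b}$ to prepare, through a purification on $2k$ qubits with $O(4^k)$ gates, the reduced state $\rho_{\bm b}=2^{-k}\bigl(\mathbb{I}+2^{-k}\sum_{\bomega}\bm b_{\bomega}Q_{\bomega}\bigr)$ on the $k$ rotated qubits. This yields $f_{\bm b}=2^{-k}\sum_{\bomega\in\{\pm1\}^k}\bm b_{\bomega}\Phi_{\bomega}$ for \emph{every} $\bm b\in\{\pm1\}^{2^k}$, so Varshamov--Gilbert applies verbatim.

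Some shrinking of the coefficients is unavoidable. Uniform boundedness $|f|\le1$ over all sign patterns already forces coefficients $\lesssim 2^{-k/2}$, and positivity of $\rho_{\bm b}$ forces $O(2^{-k})$. So $\nu_f$ is tied to $k$ in the hard family ($\nu_f=2^{-2k}$ in the paper), not a free parameter as in your ansatz. Your normalization also gives $\mathbb{E}f_s^2=2^{-w}\nu_f$, not $\nu_f$, for modes of Hamming weight $w$.

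Finally, your ``main obstacle'' is settled by the support argument, not by invariance. The invariance property in Theorem~\ref{lem:ose_upper_bound} does not say the maximum over $V$ is attained at $V=\mathbb{I}$. What works is that conjugation by any fixed $V$ cannot enlarge the frequency support, which here is exactly $2^k$ modes, not $3^k$. Hence $\mathcal{M}^{(1)}=k$ exactly. Retreating to the $k\log 3$ bound would turn $2^{\mathcal{M}_\lambda}$ into $2^{\mathcal{M}_\lambda/\log 3}$, which changes the exponent rather than a constant.
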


Lemma~\ref{mt:thm:general_ML_complexity} captures the classically learnable regime of $\mathcal{F}_{\mathcal{U}_{\lambda}}$ for the circuit subset $\mathcal{U}_{\lambda}\subset \mathsf{Arc}(N,d)$ along the horizontal boundary in Fig.~\ref{fig:scheme}c. The exponential dependence of both the upper and lower bounds in Eq.~\eqref{mt:eq:ml_sample_complexity} on \DSE establishes \DSE as an exact dividing line for classical learnability. When $\mathcal{M}_{\lambda}=\max_U\mathcal{M}^{(1)}[O(\bx;U)]=\mathcal{O}(\log N)$, polynomially many samples and a polynomial-size model suffice for classical surrogates to reach $\mathsf{R}(h) \leq \epsilon$. Beyond this regime, the sample complexity grows exponentially in $N$. Together with $\mathcal{M}^{(1)}[O(\bx;U)]\lesssim d$ from Theorem~\ref{lem:ose_upper_bound}, this criterion places all families $\mathcal{F}_{\mathcal{U}}$ with $d=\mathcal{O}(\log N)$ tunable rotations on the classically learnable side and refines prior sample complexity bounds based on the number of tunable gates~\cite{leone2024learning,chia2024efficient,grewal2025efficient,bu2022statistical,bu2023effects,molteni2024exponential}. Thus, learnability is governed not by tunability alone, but by how broadly the target function spreads across frequency modes.

\noindent \textbf{Remark.} The assumption $\epsilon \le \nu_{f}/4$ in Lemma~\ref{mt:thm:general_ML_complexity} stems from the lower bound and ensures that the two bounds in Eq.~\eqref{mt:eq:ml_sample_complexity} share the same dependence on $\nu_{f}$.

\smallskip

To establish the computational side of the separation, we next construct a \DSE-guided surrogate $h_{\mathsf q}$ and analyze its runtime. Its implementation has three stages: dataset construction, model construction, and inference. The dataset construction and inference stages follow standard classical surrogates, with $\mathcal{T}$ obtained from circuit queries and $h_{\mathsf q}$ deployed on classical hardware. The main new ingredient is model construction, where a quantum subroutine identifies informative modes $\Lambda_{\mathsf q}\subset\Lambda$ in Eq.~\eqref{eq:f_x_tri_expansion}.  

The quantum subroutine identifies modes in $\Lambda_{\mathsf q}$ with non-negligible contributions to the target function $f(\bx)$. Here, we assume coherent access to the unknown circuit $U(\bx)$, $U(\bx)^{\dagger}$, and coherent operation of the tunable rotation gate during circuit execution. We emphasize that these coherent operations are only required in the training stage and still leave the classical surrogate model agnostic, as the layout of the fixed gates in $U(\bx)$ remains unknown to the surrogate. As shown in Fig.~\ref{fig:scheme}c, given this access and a $q$-sparse Pauli representation $O=\sum_{l=1}^q \bm{a}_l P_l$, the
subroutine prepares the state $\ket{\psi_{\mathsf q}}
    = \sum_{\bomega\in\Lambda} \bm{s}_{\bomega}\sqrt{\mathrm{p}(\bomega)}\ket{\bomega}$ with $\bm{s}_{\bomega}$ being a sign factor, using a $(N+d+\lceil \log q\rceil)$-qubit circuit with $\Theta(qN+G)$ gates. Measuring this state samples modes according to $\mathrm{p}(\bomega)$ in Eq.~\eqref{eq:SSE}. Since $\mathrm{p}(\bomega)$ depends on the structure-specific coefficient $\Tr(\rho_0 Q_{\bomega})$, the samples are biased toward modes that are relevant to $f(\bx)$. Repeating this procedure independently $m_f$ times yields the mode set $\Lambda_{\mathsf q}$ used by the surrogate (see SI.~\ref{append:sec:direct_mode_sampler} for details). Once $\Lambda_{\mathsf q}$ is constructed, the surrogate predicts a new input $\bx$ as
\begin{equation}\label{eq:kernel_ML}
    h_{\mathsf q}(\bx)
    =
    \frac{1}{n}\sum_{i=1}^n
    \kappa_{\mathsf q}(\bx,\bx^{(i)}) y^{(i)},
\end{equation}
where $\kappa_{\mathsf q}(\bx,\bx^{(i)}) = \sum_{\bomega\in\Lambda_{\mathsf q}}2^{\|\bomega\|_0} \Phi_{\bomega}(\bx)\Phi_{\bomega}(\bx^{(i)})$ is the partial trigonometric-monomial kernel.

The resulting efficiency guarantee is stated below, with formal statement and proof deferred to SI.~\ref{append:sec:surrogate_prediction}. 
 \begin{theorem}[Informal]\label{mt:thm:prediction_error_bound}  
 Following Eqs.~\eqref{eq:target_function_set}--\eqref{eq:kernel_ML}, it suffices to use $m_f=\widetilde{\mathcal{O}}\left( 2^{2\nu_f\mathcal{M}_{\lambda}/\epsilon} \log(2/\delta)\right)$ independent frequency draws. The quantum subroutine then returns the set $\Lambda_{\mathsf q}$ of distinct observed modes, with $d_{\mathsf q}:=|\Lambda_{\mathsf q}|\le m_f$, in expected time $\mathcal{O}\left( (qN+G)m_f\|\bm a\|_1/\sqrt{\nu_f} \right)$. A training-set size $n=4d_{\mathsf q}/(\epsilon\delta)$ then suffices to ensure $\mathsf{R}(h_{\mathsf q})\le\epsilon$ with probability at least $1-\delta$.
 \end{theorem}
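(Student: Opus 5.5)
The plan splits the error of $h_{\mathsf q}$ into a truncation (bias) part, controlled by how much mode mass $\Lambda_{\mathsf q}$ captures, and a statistical (variance) part, controlled by $n$. First I fix the normalization. For $\bx$ uniform, the rescaled functions $\{2^{\|\bomega\|_0/2}\Phi_{\bomega}\}$ are orthonormal, since $\mathbb{E}\cos^2=\mathbb{E}\sin^2=1/2$ and distinct modes are orthogonal. Hence $\mathbb{E}\,\kappa_{\mathsf q}(\bx,\bx')f(\bx')$ is the $L^2$ projection $f_{\Lambda_{\mathsf q}}:=\sum_{\bomega\in\Lambda_{\mathsf q}}\Phi_{\bomega}(\bx)\Tr(\rho_0Q_{\bomega})$. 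Parseval then gives $\tilde{\mathrm p}(\bomega)=2^{-\|\bomega\|_0}\Tr(\rho_0Q_{\bomega})^2$ and $\sum_{\bomega}\tilde{\mathrm p}(\bomega)=\nu_f$. The truncation error is therefore
\[
\|f-f_{\Lambda_{\mathsf q}}\|_2^2=\nu_f\,\mathrm p(\Lambda\setminus\Lambda_{\mathsf q}),
\]
and I aim to make it at most $\epsilon/2$. This means $\Lambda_{\mathsf q}$ must capture all but an $\epsilon/(2\nu_f)$ fraction of the mass of $\mathrm p$.

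The bound on $m_f$ comes from a typical-set argument. Let $\eta=\epsilon/(4\nu_f)$ and set the threshold $\tau=2^{-\mathcal M_\lambda/\eta}$. By Markov's inequality on $-\log\mathrm p(\bomega)$, whose mean is the Shannon entropy $\mathcal S^{(1)}\le\mathcal M^{(1)}\le\mathcal M_\lambda$, we get $\mathrm p(\{\bomega:\mathrm p(\bomega)<\tau\})\le\eta$. The heavy set $H=\{\bomega:\mathrm p(\bomega)\ge\tau\}$ has at most $1/\tau=2^{4\nu_f\mathcal M_\lambda/\epsilon}$ elements. I would then do a coupon-collector step. After $m_f$ draws, the expected missed heavy mass is
\[
\sum_{\bomega\in H}\mathrm p(\bomega)(1-\mathrm p(\bomega))^{m_f}\le |H|/(em_f).
\]
So $m_f=\tilde{\mathcal O}(|H|/\eta)$, combined with a Markov or Chernoff step giving confidence $\log(2/\delta)$, leaves missed mass below $2\eta$. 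The constant in the exponent ($2$ versus $4$) depends on how I split $\eta$ between the tail and the missed heavy mass. I would tune it to match the stated $2^{2\nu_f\mathcal M_\lambda/\epsilon}$, or accept a constant-factor change there.

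For the runtime I assume the subroutine's correctness. It is an LCU block encoding of $\sum_l\bm a_lP_l$, conjugated by coherent $U,U^\dagger$, with the $\RZ$ gates controlled on mode registers so that each $\RZ$ branches into the $\{1,\cos,\sin\}$ components. This produces $\ket{\psi_{\mathsf q}}$ with success amplitude of order $\sqrt{\nu_f}/\|\bm a\|_1$. Amplitude amplification, or simply repetition, then gives expected cost $(qN+G)\|\bm a\|_1/\sqrt{\nu_f}$ per draw. Multiplying by $m_f$ gives the stated time. The bound $d_{\mathsf q}\le m_f$ is trivial.

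The variance step is the main obstacle. Write $h_{\mathsf q}$ as the empirical mean of $Z_i=\kappa_{\mathsf q}(\bx,\bx^{(i)})y^{(i)}$, whose expectation is $f_{\Lambda_{\mathsf q}}(\bx)$ (for unbiased labels). Then
\[
\mathbb E_{\mathcal T}\,\mathbb E_{\bx}|h_{\mathsf q}-f_{\Lambda_{\mathsf q}}|^2\le\frac1n\,\mathbb E_{\bx,\bx'}\kappa_{\mathsf q}(\bx,\bx')^2\,y'^2 .
\]
Using $|y|\le1$ and orthonormality, $\mathbb E_{\bx,\bx'}\kappa_{\mathsf q}^2=\sum_{\bomega\in\Lambda_{\mathsf q}}1=d_{\mathsf q}$. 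This holds because squaring the kernel and integrating over both arguments collapses the cross terms. The variance is thus at most $d_{\mathsf q}/n$, and Markov's inequality with failure probability $\delta/2$ turns it into $2d_{\mathsf q}/(n\delta)\le\epsilon/2$ when $n=4d_{\mathsf q}/(\epsilon\delta)$. The orthogonal decomposition then gives $\mathsf R(h_{\mathsf q})\le\epsilon/2+\epsilon/2$. A union bound over the two failure events finishes the proof. I expect the delicate points to be keeping the uniform-distribution normalization consistent with $\mathbb D$, and handling the label-noise contribution to $y^2$, which is bounded by $1$ regardless.
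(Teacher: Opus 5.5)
Your overall route is the paper's: the orthogonal split $\mathsf{R}(h_{\mathsf q})=T_1+T_2$, and Markov's inequality on $-\log\mathrm p(\bomega)$ to bound the light tail by $\mathcal{M}^{(1)}/\log(1/\tau)$. It also shares the heavy-set size bound $|\{\mathrm p\ge\tau\}|\le 1/\tau$ and the per-mode variance bound $\mathbb{E}_{\mathcal T}T_2\le d_{\mathsf q}/n$, followed by Markov with $n=4d_{\mathsf q}/(\epsilon\delta)$ and a union bound. The variance step is correct.

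The heavy-set coverage step, however, does not deliver the stated $m_f$. You bound the expected missed heavy mass by $|H|/(em_f)$ and then apply Markov, which gives a $1/\delta$ dependence rather than $\log(2/\delta)$. More seriously, you reserve half of the truncation budget for missed heavy mass, which forces $\tau=2^{-4\nu_f\mathcal{M}_\lambda/\epsilon}$. That is not a harmless constant: $2^{4\nu_f\mathcal{M}_\lambda/\epsilon}$ is the square of $2^{2\nu_f\mathcal{M}_\lambda/\epsilon}$, so it is not $\widetilde{\mathcal O}(2^{2\nu_f\mathcal{M}_\lambda/\epsilon})$.

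The paper avoids both problems with a direct union bound over heavy modes: $\Pr(\exists\,\bomega\in\Lambda_\tau \text{ never drawn})\le|\Lambda_\tau|(1-\tau)^{m_f}\le\tau^{-1}e^{-m_f\tau}\le\delta/2$ once $m_f=\tau^{-1}\log\bigl(2/(\delta\tau)\bigr)$. On that event the missed mass is exactly the light tail. The whole $\epsilon/(2\nu_f)$ budget can then go to the tail, giving $\tau=2^{-2\nu_f\mathcal{M}_\lambda/\epsilon}$ and $m_f=2^{2\nu_f\mathcal{M}_\lambda/\epsilon}\bigl(2\nu_f\mathcal{M}_\lambda/\epsilon+\log(2/\delta)\bigr)$, as claimed. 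You already have both ingredients, $|H|\le 1/\tau$ and the per-mode miss probability $(1-\mathrm p(\bomega))^{m_f}$, so the repair is one line.

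In the runtime step, plain repeat-until-success costs $\|\bm a\|_1^2/\nu_f$ preparations per draw. Only amplitude amplification gives the $\|\bm a\|_1/\sqrt{\nu_f}$ factor in the statement, so ``or simply repetition'' should be dropped.

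Your sketch of the subroutine, with each $\RZ$ branching directly into its $\{1,\cos,\sin\}$ components, is also not literally implementable, since those components are not unitary. The paper instead uses controlled rotations on the grid $\{0,\pm 2\pi/3\}$ followed by the local unitary $\widetilde R$. Since you take the subroutine's correctness as given, this does not affect your argument.
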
 
 
Theorem~\ref{mt:thm:prediction_error_bound} and Lemma~\ref{mt:thm:general_ML_complexity} together delineate the learnable regime for classical surrogates in Fig.~\ref{fig:scheme}c, where \DSE acts as the efficiency threshold. For the circuit class $\mathcal{U}_{\lambda}$ with a low-\DSE bound where $\mathcal{M}_{\lambda}=\mathcal{O}(\log N)$, the \DSE-guided surrogate $h_{\mathsf{q}}$ has polynomial sample complexity and runtime. Once \DSE grows beyond logarithmic size, the achieved sample-complexity lower bound in Lemma~\ref{mt:thm:general_ML_complexity} and runtime scaling in Theorem~\ref{mt:thm:prediction_error_bound} imply a cost exponential in $\mathcal{M}_{\lambda}$, making classical learning of $\mathcal{F}_{\mathcal{U}_{\lambda}}$ intractable.

\noindent \textbf{Remark.}  Details of the quantum subroutine are provided in Methods. For $\nu_f\leq \mathcal{O}(1/\poly(N))$, the classical learner can determine a trivial constant surrogate to achieve $\mathsf{R}(h_{\mathsf q})\leq \epsilon$ in polynomial runtime.

\begin{figure*}[t]  
\begin{center}
\centerline{\includegraphics[width=0.98\textwidth]{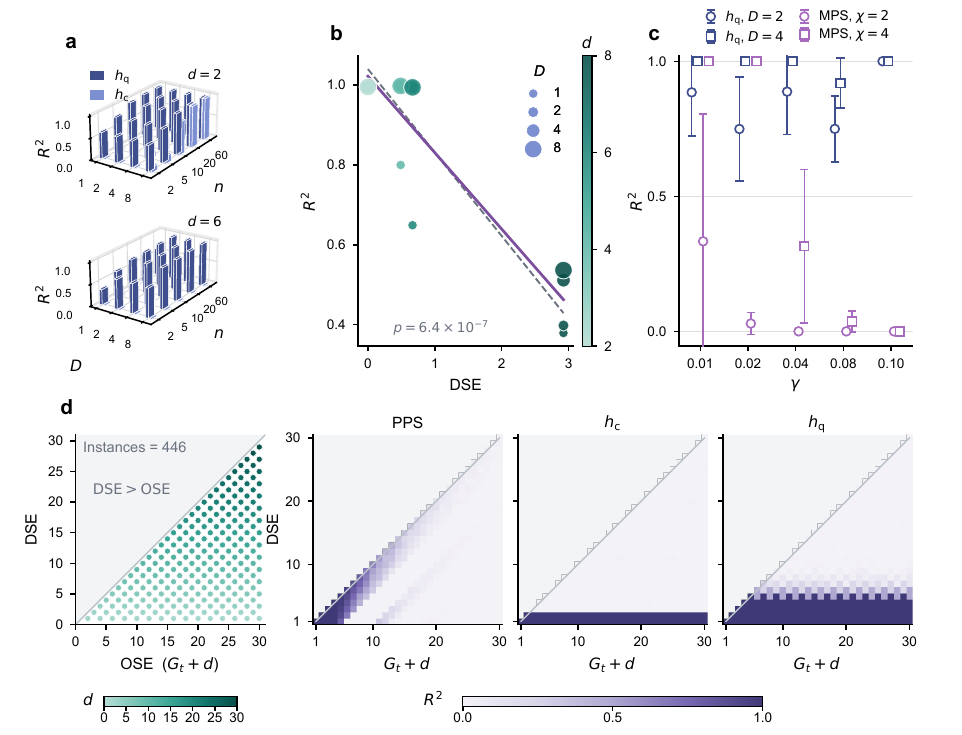}}
\caption{\small{\textbf{Numerical validation of \DSE-guided classical learnability and its separation from classical simulation.}
\textbf{a.} Prediction accuracy $R^2$ of the \DSE-guided surrogate $h_{\mathsf q}$ and the Hamming-weight surrogate $h_{\mathsf c}$ for random $N=6$ qubit circuits with depth $66$, $G_t=6$, and CNOT rate $\gamma=0.1$. The top and bottom panels correspond to $d=2$ and $d=6$ tunable rotations, respectively. The training-set size is varied over $n\in\{2,5,10,20,60\}$, and the retained feature dimension over $D\in\{1,2,4,8\}$.
\textbf{b.} Prediction accuracy of $h_{\mathsf q}$ versus the first-order mode entropy $\mathcal{S}^{(1)}[O(\bx;U)]$, used as an instance-dependent proxy for \DSE, for circuits with $d\in{2,4,6,8}$ tunable gates (as indicated by point colors) and retained feature dimensions $D\in{1,2,4,8}$ (as highlighted by point size), using a training set of size $n=20$. The solid line is a linear fit ($r=-0.916$), and the dashed line denotes a reference anticorrelation with $r=-1$; the two-sided Pearson correlation test gives $p=6.4\times10^{-7}$.
\textbf{c.} Comparison between $h_{\mathsf q}$ and a matrix-product-state (MPS) simulator as the CNOT rate is varied, with $d=G_t=6$ and $n=100$. Circles and squares denote truncation parameters $D=2$ and $D=4$, respectively. Points show the mean over random circuit layouts, and error bars denote the standard deviation.
\textbf{d.} Resource relation and prediction accuracy for a structured $N=31$ qubit circuit family with varying $d$ (indicated by color) and $G_t$, subject to $d+G_t\leq30$. Left, \DSE versus \OSE for $446$ circuit instances. The diagonal marks $\DSE=\OSE$, and the shaded upper-left region is forbidden by $\DSE>\OSE$. Right, $R^2$ heat maps for the Pauli-path simulator (PPS), $h_{\mathsf c}$, and $h_{\mathsf q}$ as functions of \DSE and $G_t+d$, using $n=400$ and truncation budget $D=16$. Values of $R^2$ below zero are displayed as zero.}}
\label{fig:exp_results_1}
\end{center}
\vskip -0.2in
\end{figure*}

\subsection{Separation between classical simulators and learners}
We last compare classical surrogates with classical simulators. According to Eq.~\eqref{eq:f_x_tri_expansion}, efficient classical simulation always entails efficient learnability, since a simulator can generate training labels for a classical model (see SI.~\ref{append:subsec:prelim-sim-learn} for details). The nontrivial question is whether the converse fails, namely whether some mean-value tasks are learnable from quantum-generated data but not simulable from their circuit descriptions alone.

The theorem below answers this question affirmatively, where the full statement and proof are deferred to SI.~\ref{append:sec:classical-hard}.
\begin{theorem}[Informal]
    \label{coro:hardness_classical_simulat}
    Consider a randomized classical algorithm $\mathcal{C}$ that does not learn from quantum-generated data. Suppose that, for any polynomial-depth circuit ensemble $\{U(\bx):\bx\in[-\pi,\pi]^d\}$ with $\mathcal{M}^{(1)}[O(\bx;U)]\le \mathcal{O}(\log N)$, $\mathcal{C}$ outputs an efficiently evaluable hypothesis $\hat{f}$
    satisfying $\mathsf{R}(\hat{f})\le \epsilon$. Then arbitrary  $\mathsf{BQP}$ computations can be simulated by a randomized polynomial-time classical algorithm, i.e., $\mathsf{BQP}\subseteq\mathsf{BPP}$.
\end{theorem}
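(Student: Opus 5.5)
The plan is a reduction: embed an arbitrary $\mathsf{BQP}$ decision problem into a circuit family of logarithmic \DSE, and show that an algorithm $\mathcal{C}$ of the stated kind would decide it in $\mathsf{BPP}$. Fix a language $L\in\mathsf{BQP}$ with a uniform family of polynomial-size circuits $W_z$ on input $z\in\{0,1\}^k$. Measuring the first qubit of $W_z\ket{0^N}$ gives acceptance probability $\ge 2/3$ for $z\in L$ and $\le 1/3$ otherwise. Equivalently, $\langle Z_1\rangle := \Tr(W_z\ket{0}\bra{0}W_z^\dagger Z_1)$ is at least $1/3$ or at most $-1/3$.

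First I build a tunable circuit from $W_z$ with only $d=\mathcal{O}(\log N)$, or even $d=1$, rotation gates. I set
\[
U(\bx)=W_z\,\RZ(\bx_1)V_1,
\]
with $V_1$ a fixed Clifford acting on a fresh ancilla. The ancilla's rotation must contribute a known, nonvanishing mode. The simplest choice is to apply $\RZ(\bx_1)$ on an ancilla prepared in $\ket{+}$ and take the observable
\[
O=\tfrac12 (Z_1\otimes X_{\mathrm{anc}}).
\]
Then
\[
f(\bx)=\tfrac12\cos(\bx_1)\,\langle Z_1\rangle_{W_z},
\]
possibly up to sign conventions. By Theorem~\ref{lem:ose_upper_bound}, the worst-case \DSE is $\lesssim d=\mathcal{O}(1)\le\mathcal{O}(\log N)$, so the family satisfies the hypothesis of the statement. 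All gates of $W_z$ are fixed gates from the universal set, so $U$ lies in $\mathsf{Arc}(N+1,1)$ and has polynomial depth.

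Next I extract the answer from $\hat f$. I run $\mathcal{C}$ on the description of this family and obtain $\hat f$ with
\[
\mathsf{R}(\hat f)=\mathbb{E}_{\bx}\,|\hat f(\bx)-f(\bx)|^2\le\epsilon .
\]
Since $f(\bx)=c\,\Phi_{\omega}(\bx)$ with $|c|\ge 1/6$ in both cases and opposite signs, I project $\hat f$ onto $\cos(\bx_1)$. Concretely, I estimate $\hat c:=2\,\mathbb{E}_{\bx}[\hat f(\bx)\cos\bx_1]$ by Monte Carlo over $\mathbb{D}$, taken uniform; this needs polynomially many evaluations of the efficiently evaluable $\hat f$. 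By Cauchy--Schwarz, $|\hat c-c|\le 2\sqrt{\epsilon}$. Choosing $\epsilon$ below a constant such as $1/200$ separates the cases $c\ge 1/6$ and $c\le -1/6$, and the sign of the estimate decides $z\in L$. The failure probabilities of $\mathcal{C}$ and of the Monte Carlo step are constants, which can be amplified by repetition and a majority vote. This gives a $\mathsf{BPP}$ algorithm, hence $\mathsf{BQP}\subseteq\mathsf{BPP}$.

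The main obstacle is matching the formal requirements rather than the reduction logic. First, the \DSE in Eq.~\eqref{eq:SSE} includes a maximization over $V\in\mathbb{SU}(2^N)$, so I must check that the \DSE bound holds uniformly. Theorem~\ref{lem:ose_upper_bound} gives $\mathcal{M}\lesssim d$ for any layout, and I expect this to suffice. Second, $\epsilon$ must be allowed to be a small enough constant, or else I must amplify the bias; if $\epsilon$ is fixed, I would instead encode several copies to boost $|c|$ against $\epsilon$. If the formal version insists on $\mathcal{M}=\Theta(\log N)$ rather than $\mathcal{O}(1)$, I would pad with $\mathcal{O}(\log N)$ extra rotations on idle ancillas. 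In that case the projection uses the single mode $\bomega$ carrying $\langle Z_1\rangle$, and the same Cauchy--Schwarz argument applies with the factor $2^{\|\bomega\|_0}$ from the normalization of $\Phi_{\bomega}$.
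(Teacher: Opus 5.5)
Your proposal is correct and follows essentially the same route as the paper's proof. The paper likewise tensors the $\mathsf{BQP}$ circuit with $\RZ\,H$ probe qubits under the observable $Z_1X_{\rm probe}$; it uses $d$ probes and $O=\frac1d\sum_j Z_1X_{m+j}$, and your construction is its $d=1$ instance up to your unnecessary factor $\tfrac12$. It then decides $\mathrm{sign}(\mu_\ell)$ by projecting $\hat f$ onto the known trigonometric factor via Cauchy--Schwarz and a Monte Carlo average, where the paper first clips $\hat f$ to $[-1,1]$ (this cannot increase the risk) so that Hoeffding's inequality justifies the polynomial number of evaluations you assert.
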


Theorem~\ref{coro:hardness_classical_simulat} reveals that matching the low-\DSE surrogate without quantum-generated data would collapse standard quantum and classical complexity classes. The achieved result completes the phase diagram in Fig.~\ref{fig:scheme}c, i.e., in the low-\DSE regime, classical learners can outperform description-only classical simulators. Consequently, circuit families $\mathcal{F}_{\mathcal{U}}$ may still be learnable even when their individual instances are highly entangled and magic-rich, placing them beyond the reach of classical simulators, e.g., tensor-network simulators~\cite{cirac2021matrix} and near-Clifford or Pauli-propagation methods~\cite{bravyi2016improved,leone2022stabilizer}. This separates learnability from simulability and provides a resource-theoretic perspective on quantum learning theory, complementing prior approaches based on quantum data, classical shadows, and surrogate models~\cite{huang2022quantum,schreiber2023classical,jerbi2023shadows}.

The proposed \DSE-guided surrogate $h_{\mathsf q}$, together with Theorem~\ref{coro:hardness_classical_simulat}, gives a practical criterion for when a classical surrogate suffices after training. At low \DSE, finite quantum measurements can be converted into a reusable classical surrogate, enabling predictions at new settings even when individual instances remain beyond description-only simulation. This replaces repeated quantum queries with a one-time training cost in tasks that evaluate a parametrized family many times, such as digital quantum simulation across Hamiltonian parameters and evolution times~\cite{feynman2018simulating,zhang2022digital,kim2023evidence}, variational quantum algorithms~\cite{cerezo2021variational_VQA,peruzzo2014variational}, and quantum certification routines~\cite{eisert2020quantum,Flammia201111dirct}. In this regard, the achieved separation clarifies a form of quantum-data-assisted utility, where the processor generates hard-to-simulate training data, and the classical surrogate handles subsequent predictions.

\section{Numerical results}
We conduct numerical simulations on quantum circuits with up to $80$ qubits to assess the efficiency of the proposed \DSE-guided classical surrogate $h_{\mathsf q}$ and its separation in prediction performance from conventional classical surrogates and classical simulators, as established theoretically and illustrated in Fig.~\ref{fig:scheme}c. Implementation details and additional numerical results, including applications of \DSE-guided classical surrogates to the classical optimization of variational quantum algorithms (VQAs), are provided in SI.~\ref{append:sec:more-numerical-simulations}.

We first evaluate the prediction performance of $h_{\mathsf q}$ on random quantum circuits $U_{\mathrm r}$ with a fixed system size $N=6$ and circuit depth $L=66$, while varying \DSE. Each circuit contains a prescribed number $G_T$ of $T$ gates and CNOT gates inserted with probability $\gamma$, with a random gate layout. Unless otherwise stated, we set $G_T=6$ and $\gamma=0.1$. The observable is fixed as $O=Z_1\cdots Z_N$. We vary the number of tunable rotation gates as $d\in\{2,4,6,8\}$, thereby generating circuits with different values of \DSE. To construct $h_{\mathsf q}$, we consider four different feature dimensions,
$D:=|\Lambda_{\mathsf q}|\in\{1,2,4,8\},$
where $D$ also controls the computational cost of classical inference. For all methods, prediction performance is quantified by the coefficient of determination $R^2$, evaluated on a test set of $200$ samples.

 Fig.~\ref{fig:exp_results_1}a compares the \DSE-guided surrogate $h_{\mathsf q}$ with the Hamming-weight surrogate $h_{\mathsf c}$ of Refs.~\cite{du2025efficient,liao2025demonstration} for varying training-set size $n\in \{2, 5, 10, 20, 60\}$ and $D\in \{1,2,4,8\}$. For both settings of $d=\{2,8\}$, increasing the feature dimension $d$ and $n$ could constantly improve the prediction accuracy of $h_{\mathsf{q}}$. In particular, when $d=2$, $h_{\mathsf{q}}$ constructed with $d=1$ and $n=10$ suffices to achieve the perfect prediction with $R^2\approx 1$, whereas $h_{\mathsf c}$ requires $(D,n)=(8,20)$ to reach a comparable accuracy. The difference becomes sharper at $d=6$. With only $D=2$ features, $h_{\mathsf q}$ reaches $R^2=0.96$ at $n=5$, while $h_{\mathsf c}$ remains at $R^2=0$ throughout the displayed range. 
 
 We next directly test the operational role of \DSE. Fig.~\ref{fig:exp_results_1}b collects the results for varying  $d\in\{2,4,6,8\}$ and $D\in\{1,2,4,8\}$ at fixed $n=20$. Here, we employ $\mathcal{S}^{(1)}$ as a tractable proxy of \DSE. Across the resulting $16$ configurations, the prediction accuracy is strongly anticorrelated with this entropy, with Pearson coefficient $r=-0.916$ and $p=6.4\times10^{-7}$. In particular, the low-entropy instances attain $R^2\approx 1$, whereas the largest observed entropy, $\mathcal{S}^{(1)}\approx 3$, reduces the accuracy to $R^2=0.38$ even when more features are retained.
 These results accord with Theorem~\ref{mt:thm:prediction_error_bound} and highlight that the relevant features identified by the quantum subroutine provide a substantially more compact representation than a truncation based only on Hamming weight.

To compare the inference efficiency of learning with conventional simulation, we first vary the rate of inserting CNOT gates $\gamma\in \{0.01,0.02,0.04,0.08,0.1\}$ while fixing $d=6$ and $n=100$. Fig.~\ref{fig:exp_results_1}c reports averaged prediction performance of $h_{\mathsf q}$ with $D=\{2,4\}$ and a matrix-product-state (MPS) simulator at matched truncation parameters $\chi=\{2,4\}$ to ensure the inference cost of $h_{\mathsf{q}}$ less than the cost of MPS. At $D$($\chi$)$=4$, the MPS accuracy rapidly decreases from $R^2=1$ for $\gamma \leq0.02$ to $0$ at $0.1$, respectively. In contrast, $h_{\mathsf q}$ maintains $R^2\geq0.92$ over the same range. The same qualitative separation is visible at $D$($\chi$)$=2$, where $h_{\mathsf q}$ retains $R^2\geq0.75$ while the MPS result approaches zero. These results highlight the prediction advantage of $h_{\mathsf{q}}$ over the tensor-network simulation methods for highly entangled quantum circuits.

Fig.~\ref{fig:exp_results_1}d extends this comparison to structured quantum circuits of size $N=31$ with varied \DSE and \OSE. To this end, we consider the quantum circuit of form $U(\bx)=\prod_{j=1}^d \RZ_j(\bx_j) \prod_{j=d+1}^{d+G_t} T_k \prod_{j=2}^{d+G_t}\CNOT_{1,j} H_1$ with the observable $O=X_1\cdots X_{d+G_t}$, which allows us to vary the worst-case \DSE and \OSE by varying $d$ and $G_t$.
The left panel evaluates the relation between $\OSE$ and $\DSE$ and shows that all displayed instances lie in the allowed region $\DSE \leq\OSE$, in agreement with Theorem~\ref{lem:ose_upper_bound}. The remaining panels compare a Pauli-path simulator (PPS), $h_{\mathsf c}$, and $h_{\mathsf q}$ using $n=400$ and a common truncation budget $D=16$. PPS is accurate mainly when the \OSE-related quantity $G_t+d$ is small, whereas $h_{\mathsf c}$ is largely confined to the lowest-\DSE edge. By contrast, $h_{\mathsf q}$ retains a substantially broader high-accuracy region, including circuits with a large \OSE. These results indicate the computational boundary of classical simulators (as highlighted by the vertical line at $G_t+d=5$) and classical surrogates (as highlighted by the horizontal line at $\mathcal{M}^{(1)}=5$), therefore providing a numerical counterpart of the established separation depicted in Fig.~\ref{fig:scheme}c.  

\begin{figure}[t]
\centering
\includegraphics[width=0.45\textwidth]{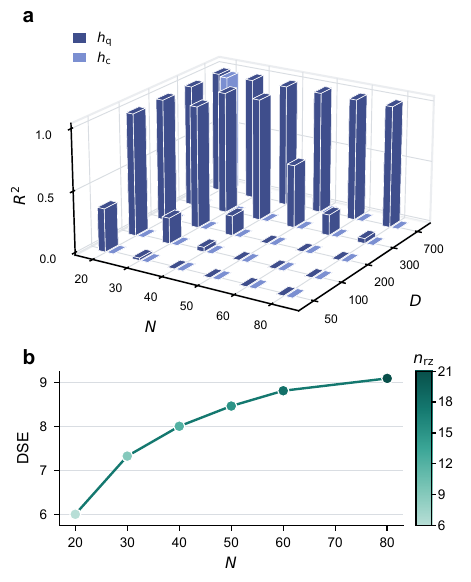}
\caption{\small{\textbf{Scaling behavior of the \DSE-guided surrogate.} \textbf{a.} Prediction accuracy $R^2$ of $h_{\mathsf q}$ and $h_{\mathsf c}$ for varying numbers of qubits, $N\in\{20,30,40,50,60,80\}$, and feature dimensions, $D\in\{50,100,200,300,700\}$, paired with training-set sizes $n\in\{100,200,300,500,700\}$, respectively. \textbf{b.} \DSE for the six system sizes considered in panel \textbf{a}, where $N\in\{20,30,40,50,60,80\}$ is paired with $d\in\{6,9,12,15,18,21\}$ tunable rotation gates (indicated by color).}}
\label{fig:scalability}
\end{figure}

Finally, we examine the scalability behavior of $h_{\mathsf q}$ at large scale. Specifically, we consider the quantum circuit 
$U(\bx)=\prod_{i=1}^{N-1}\mathrm{CZ}_{i,i+1}\prod_{i=1}^{d}T_i\RZ_i(x_i)\prod_{i=1}^{N}H_i$
with the observable $O=\frac{1}{d-k+1}\sum_{i=1}^{d-k+1}\prod_{j=i}^{i+k-1}O_j$, where $O_j=Z_{j-1}X_jZ_{j+1}$ and $k=5$. We consider six combinations of qubit size and the number of rotation gates, $(N,d)=\{(20,6), (30,9), (40,12), (50, 15), (60, 18), (80,21)\}$. As shown in Fig.~\ref{fig:scalability}b, the \DSE proxy $\mathcal{S}^{(1)}$ grows from approximately $6$ to $9$ across these instances.
Fig.~\ref{fig:scalability}a shows the prediction performance of $h_{\mathsf{q}}$ and $h_{\mathsf{c}}$ with varying feature dimension and training set size $(D,n)=\{(50,100),(100,200),(200,300),(300,500),(700,700)\}$. As the system size $N$ grows from $20$ to $80$, the required training examples for $h_{\mathsf{q}}$ to achieve $R^2\approx 1$ increase from $200$ to $700$. Importantly, this moderate increase in sample complexity accompanies the growth of $\mathcal{S}^{(1)}$. These results highlight that the prediction performance of $h_{\mathsf{q}}$ is governed primarily by \DSE rather than directly by the system size, thereby demonstrating its scalability.

\section{Discussions}
We have characterized the classical learnability of quantum circuits from a resource-theoretic standpoint. Introducing the dynamical stabilizer entropy (\DSE) as a resource measure, we proved that \DSE governs both the sample and computational efficiency of classical surrogates, and we built an explicit \DSE-guided surrogate that meets these guarantees. The diagonal boundary follows from Theorem~\ref{lem:ose_upper_bound} and marks the infeasible region where the worst-case \DSE would exceed the corresponding instance-wise magic. The horizontal boundary, established by Lemma~\ref{mt:thm:general_ML_complexity} and Theorem~\ref{mt:thm:prediction_error_bound}, separates low-\DSE families that are learnable from quantum-generated data from high-\DSE families where classical learning becomes intractable. Theorem~\ref{coro:hardness_classical_simulat} then distinguishes the lower-right region from purely classical simulation.

We further showed that the efficiently learnable regime can extend beyond classical simulability. Together, these results map out a resource-theoretic landscape linking classical surrogates and conventional simulators, sharpening the distinction between classical learnability and simulability and guiding the use of classical methods \cite{Sweke2025potential,schreiber2023classical,du2025efficient,du2025artificial,liao2025demonstration,rudolph2026pauli,shao2026characterizing} to predict quantum-circuit behavior. 

By bridging the classical learnability of quantum circuits with quantum resource theory, our work opens several directions. From a physics perspective, \DSE provides a parameter-space diagnostic of operator spreading that complements real-space measures of scrambling, such as out-of-time-order correlators~\cite{nahum2018operator,leone2022stabilizer}. Establishing its connections to these dynamical quantities, as well as to the magic and entanglement generated during circuit evolution~\cite{nahum2017entanglement,bejan2024dynamical,niroula2024phase,turkeshi2025magic}, could lead to a unified resource-theoretic picture of when quantum dynamics is learnable, simulable, or neither. 

Another important direction is to determine how noise reshapes \DSE and the performance of \DSE-guided surrogates. Characterizing how realistic noise redistributes or suppresses the frequency modes of circuit expectation functions, and how these changes propagate into learning guarantees, would extend our framework to noisy quantum devices and inform the design of robust surrogates. Moving beyond the present independent-input setting is equally important. Correlated parameters and general input distributions, as encountered in Hamiltonian simulation \cite{feynman2018simulating,zhang2022digital,kim2023evidence} and geometric quantum machine learning \cite{ragone2022representation,perrier2024quantum,wiersema2025geometric}, call for distribution-aware resource measures based on appropriate Fourier expansions of circuit expectation functions~\cite{schuld2021effect}, together with efficient procedures for extracting the relevant Fourier coefficients~\cite{barthe2025quantum} and constructing the corresponding surrogates.

Finally, extending the resource-theoretic analysis to deep-learning-based surrogates~\cite{zhu2022flexible,wu2024variational,du2025artificial} could reveal whether analogous resources govern the learnability of nonlinear model classes. Progress along these directions would deepen the connection between quantum resource theory and classical learnability and broaden the applicability of classical surrogates for predicting quantum-circuit behaviour.

\newpage

\newpage

\newpage
\bibliographystyle{unsrt}
\bibliography{apssamp}

\newpage
 
\clearpage

\onecolumngrid

\appendix 

\tableofcontents
\renewcommand{\appendixname}{SI}
 \renewcommand\thefigure{\thesection.\arabic{figure}}   
 
\bigskip

\section{Preliminary}\label{append:sec:preliminary}

\subsection{Notation}\label{append:sec:notation}
We summarize the notations used throughout this manuscript. Let $N$ denote the number of qubits and $n$ the number of training samples. For a positive integer $m$, we write $[m]=\{1,\ldots,m\}$
. Bold lowercase letters, such as $\bx
$ and $\bomega$, denote vectors. We use $\mathbb{D}$ to denote the probability distribution of the input $\bx\in \mathbb{R}^d$ and focus on the uniform distribution $\mathrm{Unif}[-\pi,\pi]^d$. The sans-serif letters $\mathsf{F}$ and $\mathsf{Q}$ label the quantum registers. We write $\mathcal{P}_N:=\{\mathbb{I},X,Y,Z\}^{\otimes N}$ for the $N$-qubit Pauli basis and $\tilde{\mathcal{P}}_N:=\frac{1}{\sqrt{2^N}}\{\mathbb{I},X,Y,Z\}^{\otimes N}$ for the normalized $N$-qubit Pauli basis. For a matrix $A$, we let $\|A\|_{\rm HS}$ denote its Hilbert-Schmidt norm, defined by $\|A\|_{\rm HS}=\sqrt{\sum_{i,j}|A_{i,j}|^2}$. For an $N$-qubit operator $A$, the notations $|A\rrangle$ and $|A))$ denote its Pauli-transfer-matrix and vectorization representations, respectively, i.e., $|A\rrangle:=[\Tr(AP)]_{P\in\tilde{\mathcal{P}}_N}$ and $|A)):=[\Tr(A\ket{i}\bra{j})]_{i,j\in[2^N]}$. We use $\lesssim$ to hide universal constant factors and $\tilde{\mathcal{O}}(\cdot)$ to hide logarithmic factors. Unless otherwise stated, logarithms are base $2$. We denote $\bm{1}_d$ as the $d$-dimensional all-ones vector $[1,1,\cdots,1]$.

\subsection{Classical simulability and learnability of quantum circuits}\label{append:subsec:prelim-sim-learn}

The classical simulability and learnability of quantum circuits may refer to different computational targets. One line of work concerns reproducing the \emph{output state} of the circuit, either by sampling from its measurement distribution or by outputting a classical description of the state itself. In this work, we focus on a different and operationally more relevant target, namely estimating the \emph{mean value} of a specified observable, since this is the quantity accessed in most applications of parametrized quantum circuits, such as digital quantum simulation, variational quantum algorithms, and quantum-system certification.

Recall from Eq.~\eqref{eq:target_function_set} that the object of interest is the class of mean-value functions
\begin{equation}\label{append:eq:target_recall}
    \mathcal{F}_{\mathcal{U}} = \Big\{f(\bx)= \Tr\big(\rho_0\, O(\bx;U)\big)~\big|~ U\in \mathcal{U} \subset \mathrm{Arc}(N,d) \Big\},
\end{equation}
where each circuit layout $U\in\mathcal{U}$ induces one mean-value function $f(\bx)$ with $\bx\in[-\pi,\pi]^d$, and $\mathcal{F}_{\mathcal{U}}$ collects these functions across all $U\in\mathcal{U}$. Classically predicting a function $f\in\mathcal{F}$ can be attempted by two distinct computational models: description-only classical simulators and learning-based classical surrogates. In this subsection, we formalize each model together with the notion of efficiency attached to it, providing the definitions on which the subsequent analysis relies. Throughout, the quality of a predictor $h(\bx)$ is measured by the expected risk under the input distribution $\mathbb{D}$,
\begin{equation}\label{append:eq:risk}
    \mathsf{R}(h) = \mathbb{E}_{\bx\sim\mathbb{D}}\,\big|h(\bx)-f(\bx)\big|^2 .
\end{equation}
In both models, efficiency is required to hold \emph{uniformly over the whole class}, i.e., for every target function $f\in\mathcal{F}_{\mathcal{U}}$.

\medskip
\noindent\textbf{Classical simulability (description-only simulators)}. A \emph{description-only classical simulator} is a classical algorithm that receives the gate description of the circuit $U(\bx)$, i.e., the fixed blocks $\{V_j\}$, the layout, the input state $\rho_0$, and the observable $O$, but is given \emph{no} data generated by a quantum device. From this description alone, it evaluates an estimate of the mean-value function.
\begin{definition}[Efficient classical simulability]\label{append:def:simulability}
    The function class $\mathcal{F}_{\mathcal{U}}$ generated from the circuit family $U\in \mathcal{U}$ in Eq.~\eqref{append:eq:target_recall} is \emph{efficiently classically simulable} if there exists a randomized classical algorithm $\mathcal{C}$ such that, for \emph{every} $f\in\mathcal{F}_{\mathcal{U}}$, given the gate description of $U(\bx)$ and the input $\bx$, the algorithm $\mathcal{C}$ outputs an estimate $\hat{f}(\bx)$ of $f(\bx)$ with $\mathsf{R}(\hat{f})\le \epsilon$ with high probability, in time $\mathrm{poly}(N,d)$.
\end{definition}
This is the model instantiated by the algorithm $\mathcal{C}$ in Theorem~\ref{coro:hardness_classical_simulat}, where efficiency is required for the $q$-sparse Pauli observable $O$ up to constant additive error. Its capability is dictated by instance-wise quantum resources. In particular, leading simulators based on tensor networks~\cite{cirac2021matrix,markov2008simulating}, stabilizer rank~\cite{bravyi2016improved,bravyi2019simulation}, and Pauli propagation~\cite{rudolph2023classical,beguvsic2024fast} run efficiently only when the entanglement or magic of the individual instance $O(\bx;U)$ is suitably bounded. Crucially, a description-only simulator has no access to the outcomes of physical measurements on the target system.

\medskip
\noindent\textbf{Classical learnability (learning-based surrogates)}. A \emph{learning-based classical surrogate} is trained on a finite dataset generated by a quantum device and, once trained, is deployed entirely on classical hardware. For a target function $f\in\mathcal{F}$, the training set is
\begin{equation}\label{append:eq:train-set}
    \mathcal{T} = \big\{(\bx^{(i)},y^{(i)})\big\}_{i=1}^{n},\qquad \bx^{(i)}\sim\mathbb{D},
\end{equation}
where each label $y^{(i)}$ is an empirical estimate of $f(\bx^{(i)})$ obtained from $m$ incoherent measurements of the evolved observable $O(\bx^{(i)};U)$ on a quantum computer. From $\mathcal{T}$, the learner constructs a predictor $h(\bx)$ that is subsequently evaluated on classical hardware without further quantum queries.
\begin{definition}[Efficient classical learnability]\label{append:def:learnability}
    The function class $\mathcal{F}$ in Eq.~\eqref{append:eq:target_recall} is \emph{efficiently classically learnable} if there exists a classical learning algorithm such that, for \emph{every} $f\in\mathcal{F}$, given a training set $\mathcal{T}$ of size $n=\mathrm{poly}(N,d)$ generated from $f$, the algorithm outputs a predictor $h(\bx)$ with $\mathsf{R}(h)\le\epsilon$ with high probability, where both the training and the per-input evaluation of $h$ run in time $\mathrm{poly}(N,d)$.
\end{definition}
The defining distinction from Definition~\ref{append:def:simulability} is the \emph{source of information}, namely, a classical learner consumes quantum-generated labels $\{y^{(i)}\}$ without knowing the circuit layout, whereas a classical simulator works from the circuit description only.


\subsection{Quantum resource theory}\label{append:subsec:prelim-resource-theory}

In this subsection, we recall the elements of quantum resource theory that underpin our analysis, with an emphasis on the resource of \emph{magic} or nonstabilizerness. We first outline the general structure of a resource theory, then review the stabilizer formalism that fixes the free objects and free operations for magic, and finally collect the magic monotones that are relevant to this work, culminating in the operator stabilizer entropy invoked in the proof of Theorem~\ref{lem:ose_upper_bound}.

\medskip
\noindent\textbf{General structure of a resource theory}. A quantum resource theory~\cite{chitambar2019quantum} formalizes a physical setting in which certain states or operations are freely available while others are costly and constitute the \emph{resource}. It is specified by three ingredients: (i) a set of \emph{free objects} $\mathcal{O}_{\mathrm{free}}$ (e.g., states, unitaries, or observables) that can be prepared or implemented at no cost; (ii) a set of \emph{free operations} $\mathbb{F}$ that map free objects to free objects; and (iii) a \emph{resource measure} $\mathcal{R}(\cdot)$, quantifying the resource content of an object. A function $\mathcal{R}$ qualifies as a \emph{resource monotone} if it satisfies two minimal requirements:
\begin{itemize}
    \item[(R1)] \emph{Non-negativity and vanishing on free objects}: $\mathcal{R}(\cdot)\ge 0$ and $\mathcal{R}(\sigma)=0$ for every $\sigma\in\mathcal{O}_{\mathrm{free}}$;
    \item[(R2)] \emph{Monotonicity under free operations}: $\mathcal{R}(V(\sigma))\le \mathcal{R}(\sigma)$ for every free operation $V\in\mathbb{F}$ and every object $\sigma$.
\end{itemize}
Entanglement~\cite{horodecki2009quantum}, coherence~\cite{baumgratz2013quantifying}, and magic~\cite{veitch2014resource,howard2017application} are archetypal resource theories obtained by different choices of $(\mathcal{O}_{\mathrm{free}},\mathbb{F})$. However, these resources are defined for individual quantum objects.

\medskip
\noindent\textbf{Stabilizer formalism and magic}. The resource theory of magic~\cite{veitch2014resource,howard2017application} identifies the classically tractable sector of quantum computation with the \emph{stabilizer} formalism~\cite{gottesman1997stabilizer}, and treats any departure from it as the resource enabling universal, potentially hard-to-simulate, quantum computation. We briefly recall the relevant notions.

The $N$-qubit Pauli group is $\mathcal{P}_N=\{\mathbb{I},X,Y,Z\}^{\otimes N}$ up to phases, and the Clifford group $\mathcal{C}_N$ is its normalizer, i.e., the set of unitaries $V$ such that $V P V^{\dagger}\in\mathcal{P}_N$ for all $P\in\mathcal{P}_N$. The \emph{stabilizer states} are those obtained from $\ket{0}^{\otimes N}$ by Clifford unitaries, and the \emph{stabilizer operations} comprise Clifford gates, Pauli measurements, and classical control. By the Gottesman-Knill theorem~\cite{gottesman1998heisenberg}, any circuit built solely from stabilizer operations acting on stabilizer inputs can be simulated classically in polynomial time. Consequently, in the resource theory of magic, one designates
\begin{equation}\label{append:eq:magic-free}
    \mathcal{O}_{\mathrm{free}} = \{\text{stabilizer states/operators}\},\qquad \mathbb{F} = \{\text{stabilizer (Clifford) operations}\},
\end{equation}
so that magic or the nonstabilizerness of a state or operator is precisely the resource that must be injected (e.g., via non-Clifford gates such as $T$) to reach universal quantum computation. Magic is a central quantity because it governs the runtime of leading classical simulators of quantum circuits, including stabilizer-rank~\cite{bravyi2016improved,bravyi2019simulation} and Pauli-propagation~\cite{rudolph2023classical,beguvsic2024fast} methods, whose cost scales with the amount of magic rather than the circuit size alone.

\medskip
\noindent\textbf{Operator stabilizer entropy}. Many advanced resource measures have been proposed for quantifying the non-stabilizers of a quantum object. When the object is the Heisenberg-evolved operator $O(\bx;U)=U(\bx)^{\dagger}OU(\bx)$, the relevant instance-wise monotone is the \emph{operator stabilizer entropy} (\OSE) introduced in Ref.~\cite{dowling2025magic}, which lifts the stabilizer Rényi entropy \cite{leone2022stabilizer} from states to operators via the Pauli decomposition. Expanding a traceless-normalized $N$-qubit operator $A$ in the Pauli basis as $A=\sum_{P\in\mathcal{P}_N}c_P P$ with $c_P=\Tr(AP)/2^N$, one defines the operator Pauli-weight distribution
\begin{equation}\label{append:eq:ose-dist}
    \Xi_P(A) = \frac{c_P^2}{\sum_{P'\in\mathcal{P}_N}c_{P'}^2},
\end{equation}
and the order-$\alpha$ \OSE\ as its Rényi entropy,
\begin{equation}\label{append:eq:ose}
    \mathfrak{M}_{\mathrm{ose}}^{(\alpha)}[A] = \frac{1}{1-\alpha}\log\Big(\sum_{P\in\mathcal{P}_N}\Xi_P(A)^{\alpha}\Big).
\end{equation}
Intuitively, $\mathfrak{M}_{\mathrm{ose}}^{(\alpha)}[A]$ measures how broadly the operator $A$ is spread x the Pauli basis. In particular, a Pauli string or, more generally, a stabilizer-supported operator has a sharply concentrated spectrum and hence small \OSE, whereas a magic-rich operator distributes its weight over many Paulis and attains large \OSE. The \OSE\ inherits the defining properties of a magic monotone from the operator viewpoint: it is non-negative, invariant under Clifford conjugation $A\mapsto V^{\dagger}AV$ with $V\in\mathcal{C}_N$, and---crucially for our proof---\emph{additive under tensor products},
\begin{equation}\label{append:eq:ose-additive}
    \mathfrak{M}_{\mathrm{ose}}^{(\alpha)}[A\otimes B] = \mathfrak{M}_{\mathrm{ose}}^{(\alpha)}[A] + \mathfrak{M}_{\mathrm{ose}}^{(\alpha)}[B].
\end{equation}
This additivity is what allows the worst-case \OSE\ of a $d$-rotation circuit to be lower-bounded by a sum of $d$ single-qubit contributions in SI.~\ref{append:subsec:SSE-OSE}. 


 \subsection{Pauli transfer matrix and the trigonometric expansion of quantum circuits}\label{append:subsec:trigo-monomial-exp-QC}

\noindent \textbf{Pauli transfer matrix.} We first review the Pauli--Liouville representation of quantum states and observables. Let $P_l\in\frac{1}{\sqrt{2^N}}\{\mathbb I,X,Y,Z\}^{\otimes N}$ denote the $l$-th normalized Pauli operator, so that $\llangle P_l|P_k\rrangle=\Tr(P_lP_k)=\delta_{lk}$. In this normalized Pauli basis, an $N$-qubit observable $O$ is represented by the $4^N$-dimensional vector
\begin{equation}
    |O\rrangle=\bigl[\Tr(OP_1),\cdots,\Tr(OP_{4^N})\bigr]^{\top},
\end{equation}
and a quantum state $\rho$ by $|\rho\rrangle=[\Tr(\rho P_1),\cdots,\Tr(\rho P_{4^N})]^{\top}$.

A unitary can likewise be represented in this basis. For a circuit $U(\bx)$, its Pauli transfer matrix (PTM)~\cite{hantzko2025fast}, denoted $\bUnitary(\bx)$, is
\begin{equation}\label{eqn:append:PTM}
	[\bUnitary(\bx)]_{jk}=\llangle P_j|\bUnitary(\bx)|P_k\rrangle=\Tr\bigl(P_jU(\bx)P_kU(\bx)^{\dagger}\bigr).
\end{equation}
For instance, the single-qubit rotation gate $\RZ(\bx_j)$ has the PTM
\begin{equation}
    \mathsf{R_Z}=\begin{pmatrix}
		1 & 0 & 0 & 0\\
		0 & \cos(\bx_j) & -\sin(\bx_j) & 0\\
		0 & \sin(\bx_j) & \cos(\bx_j) & 0\\
		0 & 0 & 0 & 1
	\end{pmatrix}
    =\mathsf{D}_0+\cos(\bx_j)\,\mathsf{D}_1+\sin(\bx_j)\,\mathsf{D}_{-1},
\end{equation}
where
\begin{equation}\label{append:eq:ptm_basis_noiseless}
	\mathsf{D}_0=\begin{pmatrix}1&0&0&0\\0&0&0&0\\0&0&0&0\\0&0&0&1\end{pmatrix},\quad
    \mathsf{D}_1=\begin{pmatrix}0&0&0&0\\0&1&0&0\\0&0&1&0\\0&0&0&0\end{pmatrix},\quad
    \mathsf{D}_{-1}=\begin{pmatrix}0&0&0&0\\0&0&-1&0\\0&1&0&0\\0&0&0&0\end{pmatrix}.
\end{equation}

\medskip

\noindent \textbf{Trigonometric expansion of $\RZ+\CI$ quantum circuits.} The family of $\RZ+\CI$ circuits has been widely studied owing to its universality. An $N$-qubit $\RZ+\CI$ circuit takes the form
\begin{equation}\label{eqn:append:circuit-ideal}
	U(\bx)=\prod_{j=1}^{d}\RZ(\bx_j)V_j,
\end{equation}
where each $V_j$ is a fixed Clifford circuit composed of gates from $\{H,S,\CNOT\}$.

We now derive the trigonometric expansion of such circuits in the Heisenberg picture. For a normalized Pauli observable $P\in\frac{1}{\sqrt{2^N}}\{\mathbb I,X,Y,Z\}^{\otimes N}$, the evolved observable is
\begin{equation}\label{append:eq:ptm_state_noiseless}
	P(\bx;U)=U(\bx)^{\dagger}PU(\bx)=\sum_{\bomega\in\Lambda}\Phi_{\bomega}(\bx)\,P_{\bomega},
\end{equation}
where $\Omega=\{0,1,-1\}^d$ is the frequency set and $\Phi_{\bomega}(\bx)$ is the trigonometric monomial
\begin{equation}\label{append:eq:tri_featrue}
	\Phi_{\bomega}(\bx)=\prod_{j=1}^d\begin{cases}
		1 & \textnormal{if}~\bomega_j=0,\\
		\cos(\bx_j) & \textnormal{if}~\bomega_j=1,\\
		\sin(\bx_j) & \textnormal{if}~\bomega_j=-1.
	\end{cases}
\end{equation}
Here $P_{\bomega}$ is the (possibly signed) normalized Pauli operator in $\frac{1}{\sqrt{2^N}}\{\mathbb I,X,Y,Z\}^{\otimes N}$ obtained by replacing the $j$-th $\RZ$ gate with $\mathsf{D}_0$, $\mathsf{D}_1$, or $\mathsf{D}_{-1}$ from Eq.~\eqref{append:eq:ptm_basis_noiseless} according to whether $\bomega_j=0,1,$ or $-1$. Concretely, let $P_{\bomega_{1:j-1}}$ be the operator obtained after the first $j-1$ layers; applying the $j$-th $\RZ$ gate updates its PTM representation as
\begin{align}
    |P_{\bomega_{1:j}}\rangle\rangle=\mathbbm{1}_{\bomega_j=0}\,\mathsf{D}_0|P_{\bomega_{1:j-1}}\rangle\rangle
    +\mathbbm{1}_{\bomega_j=1}\cos(\bx_j)\,\mathsf{D}_1|P_{\bomega_{1:j-1}}\rangle\rangle
    +\mathbbm{1}_{\bomega_j=-1}\sin(\bx_j)\,\mathsf{D}_{-1}|P_{\bomega_{1:j-1}}\rangle\rangle,
\end{align}
with $\bomega_{1:j}=(\bomega_1,\cdots,\bomega_j)$. This construction extends to a general observable $O=\sum_l \bm{a}_l P_l$, for which
\begin{equation}
    O(\bx;U)=U(\bx)^{\dagger}OU(\bx)=\sum_{\bomega\in \{0,\pm 1\}}\Phi_{\bomega}(\bx)\Bigl(\sum_l \bm{a}_l P_{l,\bomega}\Bigr),
\end{equation}
where $P_{l,\bomega}$ is the Pauli operator obtained by evolving the term $P_l$ along the branch $\bomega$.

Evaluating $O(\bx;U)$ on a state $\rho$, the expectation value inherits the trigonometric expansion
\begin{equation}\label{eqn:append:expectation-idea}
	f(\bx)\equiv\Tr\bigl[\rho\,O(\bx;U)\bigr]=\sum_{\bomega\in \{0,\pm 1\}}\Phi_{\bomega}(\bx)\,\alpha_{\bomega},
\end{equation}
where $\alpha_{\bomega}=\sum_l \bm{a}_l\Tr(\rho P_{l,\bomega})$.
This representation underlies a broad class of classical emulators for quantum circuits, including Pauli-propagation simulators and learning-based surrogates. Such methods approximate $f(\bx)$ by estimating the coefficients $\Tr(\rho P_{l,\bomega})$ over a polynomial-size subset of frequencies $\{\bomega\}\subset \{0,1,-1\}^{d}$.

\medskip

\noindent \textbf{Trigonometric expansion of $\RZ+\CI+\T$ quantum circuits.} Despite the universality of $\RZ+\CI$ circuits, this model attributes all non-stabilizerness to the arbitrary-angle $\RZ$ gates. From the viewpoint of circuit synthesis and fault-tolerant implementation, however, an arbitrary-angle $\RZ(\bx)$ must generally be decomposed into Clifford and $\T$ gates, with the $\T$ gates carrying the essential discrete non-Clifford cost. This motivates the $\RZ+\CI+\T$ setting, which incorporates both continuous and discrete non-Clifford resources and thus provides a more realistic framework for studying the tradeoff between non-stabilizerness and circuit complexity.

In this setting, for the circuit $U(\bx)=\prod_{j=1}^d\RZ(\bx_j)V_j$ of Eq.~\eqref{eqn:append:circuit-ideal}, each fixed block $V_j$ now comprises Clifford and $\T$ gates. For a Pauli observable $P$, the evolved observable expands as
\begin{equation}\label{apppend:eq:sPx_withT}
    P(\bx;U)=U(\bx)^{\dagger}PU(\bx)=\sum_{\bomega\in \{0,\pm 1\}^d}\Phi_{\bomega}(\bx)\,Q_{\bomega}',
\end{equation}
where $Q_{\bomega}'=\sum_{k=1}^{s_{\bomega}}c_{\bomega,k}P_{\bomega,k}$ is a linear combination of Pauli terms. The number of terms is bounded by the $\T$-count $G_T$, namely $s_{\bomega}\le 2^{G_T}$. When $G_T=0$, this reduces to the $\RZ+\CI$ case, in which each $Q_{\bomega}'$ collapses to a single Pauli operator $P_{\bomega}$ as given in Eq.~\eqref{append:eq:ptm_state_noiseless}. Likewise, for a general observable $O=\sum_l \bm{a}_l P_l$,
\begin{equation}\label{apppend:eq:Px_withT}
    O(\bx;U)=U(\bx)^{\dagger}OU(\bx)=\sum_{\bomega\in \{0,\pm 1\}}\Phi_{\bomega}(\bx)\Bigl(\sum_l \bm{a}_l Q_{l,\bomega}'\Bigr):=\sum_{\bomega\in \{0,\pm 1\}}\Phi_{\bomega}(\bx)\,Q_{\bomega},
\end{equation}
where $Q_{\bomega}:=\sum_l \bm{a}_l Q_{l,\bomega}'$ and $Q_{l,\bomega}'$ is the linear combination of Paulis obtained by evolving the term $P_l$ along the branch $\bomega$ as in Eq.~\eqref{apppend:eq:sPx_withT}. The trigonometric expansion of the expectation value on a state $\rho$ is then given by
\begin{equation}\label{append:eq:tri_exp_T}
    f(\bx)=\sum_{\bomega\in \{0,\pm 1\}}\Phi_{\bomega}(\bx)\,\Tr(\rho_0 Q_{\bomega}).
\end{equation}

\subsection{Literature review}  

In this section, we review prior works relevant to this study, which can be broadly classified into four categories: classical simulability of quantum circuits, learnability of quantum circuits, learning-based classical surrogates, and resource-theoretic characterizations of quantum advantage. In the following, we review each of these research directions separately and clarify how our work differs from them.

\medskip

\noindent \textbf{Classical simulability of quantum circuits.}
Classical simulability of quantum circuits refers to the ability of a classical algorithm to reproduce the output statistics of a quantum circuit, or to estimate its relevant observables, using computational resources that scale polynomially with the circuit size. For generic quantum circuits, such a task is believed to be classically intractable in the worst case, as efficient classical simulation would contradict standard complexity-theoretic assumptions. Nevertheless, many advanced classical algorithms have been developed to efficiently simulate quantum circuits with particular structures. For instance, tensor-network-based methods can efficiently simulate quantum systems with limited entanglement or small contraction complexity~\cite{shi2006classical,markov2008simulating,singh2010tensor,biamonte2017tensor,hauschild2018efficient,pang2020efficient,causer2023optimal,patra2024efficient}. Another important class of methods exploits stabilizer structure and limited magic~\cite{gottesman1998heisenberg,aaronson2004improved,nest2008classical,bravyi2016improved,lerch2024efficient,beguvsic2025simulating}, including stabilizer-rank, quasiprobability, Pauli-path simulation, and Clifford perturbation techniques. More general group-theoretic approaches have also been proposed for simulating quantum systems whose dynamics are restricted by symmetries or low-dimensional Lie-algebraic structures~\cite{somma2005quantum,somma2006efficient,galitski2011quantum,goh2023lie,anschuetz2023efficient}.

These structures can often be understood through the lens of quantum resource theory \cite{chitambar2019quantum}, as restrictions on the quantum resources that underlie computational advantage. Various resource metrics \cite{horodecki2009quantum,bravyi2016trading,howard2017application,wang2019quantifying,leone2022stabilizer,dowling2025magic} have been proposed to quantify these resources and to characterize the classical simulability of quantum circuits quantitatively. For example, stabilizer-based simulators typically have runtimes that scale exponentially with the number of $\T$ gates, or more generally with a magic-related quantity, whereas tensor-network-based methods are governed by entanglement-related quantities such as bond dimension or contraction complexity. In contrast, the present work studies the learnability of quantum circuits, where the relevant object is not a single circuit instance but the output function generated by an entire parametrized circuit family. This distinction motivates the introduction of \DSE as a family-level resource measure for classical learnability.

\medskip

\noindent \textbf{Learnability of quantum circuits.}
Learnability of quantum circuits concerns whether one can infer a predictive model from a feasible number of input--output samples such that it accurately generalizes to previously unseen inputs. It is a central problem in quantum learning theory~\cite{anshu2024survey,banchi2023statistical}. A substantial body of work has investigated this question by characterizing the sample and computational complexity of learning quantum circuits. In particular, Refs.~\cite{sharma2022reformulation,wang2024transition,wang2024separable} establish sample-complexity lower bounds for learning general quantum circuits from a no-free-lunch perspective, highlighting the potential role of quantum data resources in overcoming these limitations. Ref.~\cite{zhao2023learning} further studies bounded-gate quantum circuits and characterizes both their sample and computational complexity. It shows that, although a number of samples linear in the gate count is both necessary and sufficient for learning such circuits, the corresponding computational cost can scale exponentially with the number of gates. 
Efficient learning algorithms have also been developed for circuit families with additional restrictions on the quantum circuits or input states. Ref.~\cite{huang2023learning} introduces an efficient classical-shadow-based method for predicting observable expectation values of arbitrary quantum circuits when the input states are drawn from a locally scrambled distribution. Ref.~\cite{huang2024learning} develops a polynomial-time classical algorithm for learning the description of an unknown shallow quantum circuit. In addition, Ref.~\cite{jerbi2023power} investigates the power and limitations of learning unitary dynamics with quantum neural networks under both coherent and incoherent measurement settings.

Our work differs from these studies in three main respects. First, we consider a different learning setting. Whereas previous works typically take quantum states as training inputs, we consider a fixed input state and treat the tunable rotation angles as classical inputs. Second, while prior works primarily study quantum learners whose learned models require access to quantum computational resources at the inference stage, we focus on classical learners that yield classically efficient representations and can be evaluated entirely classically after training. Third, we establish matching sample- and computational-complexity bounds in terms of the proposed resource measure \DSE. These bounds do not rely on restrictions on the circuit depth or on the choice of input state.

\medskip

\noindent \textbf{Learning-based classical surrogates.} Learning-based classical surrogates, also referred to as classical learners, classical agents, or machine-learning models, are classical models trained to predict properties of quantum systems from data. Their defining feature is that the learned model admits an efficient classical representation and can be evaluated entirely classically at the inference stage. Unlike conventional classical simulators, which predict quantum-circuit outputs directly from a circuit description, learning-based surrogates can exploit quantum-generated training data during the training phase. This additional access to data can enable accurate prediction even in regimes where direct classical simulation is challenging. Existing approaches can be divided into heuristic deep learning-based surrogates and traditional machine-learning surrogates. In particular, deep learning-based surrogates involve training deep neural networks to fit measurement data generated by quantum circuits. Owing to the flexibility and expressive power of neural architectures, such methods have been applied to predict a range of quantum properties from different types of data~\cite{wu2023quantum,qin2024experimental,qian2024multimodal,tang2024towards,wang2022predicting,zhao2025rethink}. Although these approaches have achieved promising empirical performance, they are largely heuristic and generally lack rigorous guarantees on their sample or computational efficiency.

The machine learning-based surrogates instead construct explicit classical learning models with carefully designed feature maps, which enable analytical efficiency guarantees. Existing works mainly exploit various basis expansions of observable expectation-value functions of quantum circuits to design the feature map, including Fourier expansions~\cite{schuld2021effect} and trigonometric expansions~\cite{du2025efficient}. These two representations lead to distinct surrogate constructions. Fourier-based approaches formulate the target function as a linear model over Fourier features \cite{fontana2022efficient,schreiber2023classical,nemkov2023fourier,landman2022classically,gan2024concept} and employ techniques such as random Fourier features \cite{sweke2023potential} to reduce the effective model dimension. Trigonometric-based approaches construct a linear model over trigonometric features and retain only basis functions with low Hamming weight \cite{du2025efficient, liao2025demonstration}. In both cases, the construction of classical surrogates involves estimating the coefficients of the selected classical feature map by solving a linear regression problem on the quantum-generated training data. Under their respective structural assumptions, these surrogates can achieve efficient learning when the circuit contains a small number of tunable rotation gates or when the target expectation-value function has a small gradient norm.

Despite these advances, existing surrogate constructions do not explicitly exploit the circuit structure to identify the features that are most relevant to the target prediction task. Consequently, their feature-selection procedures can be suboptimal. In contrast, the \DSE-guided classical surrogate proposed in this work uses \DSE to characterize the effective feature complexity for the prediction task and achieves nearly optimal sample and computational complexity. Moreover, the low-\DSE regime that guarantees the efficiency of our surrogate contains the regimes considered in prior work, including quantum circuits with a small number of rotation gates or small gradient norms, as the derived complexity is tight in terms of the \DSE. This improvement is enabled by our developed quantum subroutine that identifies the dominant feature maps during training, allowing the surrogate to focus its classical representation on the features that contribute most substantially to the target function.

\medskip


\noindent \textbf{Combination of quantum learning theory and quantum resource theory.}
Quantum resource theory provides a systematic framework for identifying and quantifying physical resources that can underlie quantum advantages~\cite{chitambar2019quantum}. Beyond its central role in characterizing classical simulability, resource-theoretic ideas have recently been incorporated into quantum learning theory from several complementary perspectives.

One line of work investigates how resources contained in quantum learning models affect their expressivity, trainability, and generalization performance. For example, entanglement, magic, and coherence have been related to barren plateaus, statistical complexity, and the expressive power of parametrized quantum circuits~\cite{marrero2021entanglement,Holmes_2022,bu2022statistical,bu2023effects,Il_rio_Correr_2024,bu2024complexity}. In particular, Refs.~\cite{bu2022statistical,bu2023effects} connect resource-theoretic quantities with learning-theoretic complexity measures, including Rademacher and Gaussian complexities, for quantum-circuit hypothesis classes. These results show that restricting the available quantum resources can limit the statistical complexity and generalization capability of quantum learning models. A second line of work studies how quantum resources in the training data or access model can yield learning advantages. This includes settings involving entangled data, quantum memory, coherent measurements, and quantum-generated datasets~\cite{Huang_2021,huang2022quantum,chen2022exponential,wang2024transition,wang2024separable,zhao2025entanglement}. A further related direction characterizes the learnability of restricted classes of quantum states or circuits in terms of their resource content, such as stabilizer states doped with a small number of non-Clifford gates and other low-magic families~\cite{leone2024learning,chia2024efficient,grewal2025efficient,mele2025efficient}. In these settings, the sample or computational complexity of learning can scale exponentially with the number of injected $\T$ gates.

These studies primarily examine how quantum resources affect quantum learners, quantum data, or the learnability of an individual quantum state or circuit. In contrast, our work considers the task of predicting the output of an entire circuit family with tunable rotation gates, using a classical learner trained on quantum-generated data. The relevant object is therefore a circuit family, rather than a fixed circuit instance or quantum state. Standard resource measures defined for individual states or circuit instances do not directly characterize the learning complexity of this function family. To address this gap, we introduce \DSE as a family-level resource measure for parametrized quantum circuits. We show that \DSE characterizes the classical learnability of quantum circuits by establishing matching sample- and computational-complexity bounds for learning-based classical surrogates. Our result therefore provides a direct connection between quantum resource theory and the classical learnability of quantum circuits, extending the role of quantum resources beyond conventional classical simulability and quantum learning advantage.

\section{Definitions and properties of dynamical stabilizer entropy (\texorpdfstring{\DSE}{DSE})}\label{append:sec:properties-SSE}
 
This section contains the details of the quantum resource framework underlying \DSE\ and the proof of Theorem~\ref{lem:ose_upper_bound}. Specifically, in SI.~\ref{append:subsec:resource-framework}, we present a formulation of the quantum resource framework tailored to a tunable circuit ensemble, in which the fixed gates play the role of free operations and the tunable rotation gates are resource-generating, and we give the formal definition of \DSE\ as the associated resource measure. Then, in SI.~\ref{append:subsec:SSE-properties}, we prove that \DSE\ satisfies the properties expected of a resource monotone within this framework and is upper-bounded by the number of tunable rotation gates. Last, in SI.~\ref{append:subsec:SSE-OSE}, we prove that in the worst case the operator stabilizer entropy (\OSE)~\cite{dowling2025magic} is lower bounded by \DSE, thereby determining the infeasible region of the computational phase diagram in Fig.~\ref{fig:scheme}b.

\subsection{Quantum resource framework for a tunable circuit ensemble}\label{append:subsec:resource-framework}
 
Here we provide a formulation of the quantum resource framework that supports \DSE. Following the convention of resource theories~\cite{chitambar2019quantum}, such a framework is specified by three ingredients: the set of \textit{objects} of interest, a set of \textit{free operations} together with the induced \textit{free objects} that carry no resource, and a \textit{resource measure} that quantifies the resource content and does not increase under the free operations. We instantiate each ingredient below and, at the end of this subsection, give the formal definition of \DSE\ as the associated resource measure. A conceptual point worth emphasizing at the outset is that the objects and free objects are defined purely at the level of the \emph{evolution operator} $O(\bx;U)$ and the circuit structure, independently of the mean-value function $f(\bx)=\Tr(\rho_0 O(\bx;U))$. The function $f(\bx)$ and its trigonometric expansion enter only later, as the vehicle through which the resource content of an object is \emph{measured}. They play no role in specifying which objects are free. 
 
\medskip
\noindent\textbf{Objects}. The objects of the framework are the Heisenberg-evolved observables generated by a tunable circuit ensemble. Recall the setup in the main text: we consider $N$-qubit circuits $U(\bx)=\prod_{j=1}^{d}(\RZ(\bx_j)V_j)\in\mathrm{Arc}(N,d)$, where $\{V_j\}$ are fixed blocks drawn from a universal gate set (e.g., $\{H,S,T,\CNOT\}$) and $\{\RZ(\bx_j)\}_{j=1}^{d}$ are the $d$ tunable single-qubit $Z$-rotations, arranged in an arbitrary layout. Given an observable $O=\sum_{l}O_l$ with $\sum_{l}\|O_l\|_{\infty}\le 1$, an object is the parameterized evolution operator
\begin{equation}\label{append:eq:object}
   O(\bx;U) = U(\bx)^{\dagger}\,O\,U(\bx),\qquad \bx\in [-\pi,\pi]^d.
\end{equation}
We denote the set of all such objects induced by layouts in $\mathrm{Arc}(N,d)$ by $\mathcal{O}(N,d)$. We emphasize that an object is the operator (family) in Eq.~(\ref{append:eq:object}) itself, while the reference state $\rho_0$ and the input distribution $\mathbb{D}$ are auxiliary data used only when measuring its resource content, and do not enter the definition of the object. Throughout this work, unless otherwise stated, we take $\rho_0 =\ket{0}\bra{0}^{\otimes N}$ and $\mathbb{D} = \Unif[-\pi,\pi]^d$. Thus, the structural definitions of the objects and free operations are independent of $(\rho_0,\mathbb{D})$, whereas the resource value is evaluated with respect to this fixed pair. The bound $\sum_{l}\|O_l\|_{\infty} \le 1$ is the sole normalization convention imposed on $O$ throughout the manuscript. 
 
\medskip
\noindent\textbf{Free operations and free objects}. The free operations are conjugations of an object by fixed, or equivalently, parameter-independent unitaries. Concretely, for any fixed gates $V\in\mathbb{S}\mathbb{U}(2^N)$, the induced free operation acts on an object $O(\bx;U)$ by 
\begin{equation}\label{append:eq:free-op}
    \mathcal{V}: \ O(\bx;U)\ \longmapsto\ V^{\dagger}\,O(\bx;U)\,V.
\end{equation}
Because $V$ carries no dependence on $\bx$, it does not introduce any tunable structure. The collection $\mathbb{F}=\{\mathcal{V}:V\in\mathbb{S}\mathbb{U}(2^N)\}$ is closed under composition, since $\mathcal{V}_1\circ\mathcal{V}_2$ is the conjugation by the fixed unitary $V_2V_1$, so $\mathbb{F}$ forms a well-defined set of free operations. By contrast, the tunable rotations $\RZ(\bx_j)$ are \emph{not} free, as they endow the object with genuine parameter dependence. Accordingly, the \emph{free objects} are defined purely at the structural level as the evolution operators built from \emph{no} tunable rotation gate, i.e., 
\begin{equation}\label{append:eq:free-object}
    \mathcal{O}_{\mathrm{free}} = \big\{ V^{\dagger}\,O\,V ~:~ V\in\mathbb{S}\mathbb{U}(2^N) \big\},
\end{equation}
which are exactly the objects generated with $d=0$. A free object carries no $\bx$-dependence and hence no tunable resource. The free operations in Eq.~(\ref{append:eq:free-op}) map free objects to free objects, since conjugating a $\bx$-independent operator by a fixed unitary again yields a $\bx$-independent operator, so the free set $\mathcal{O}_{\mathrm{free}}$ is closed under $\mathbb{F}$.

\medskip
\noindent\textbf{Resource measure}. The resource measure of the framework is the dynamical stabilizer entropy (\DSE). Fixing an input state $\rho_0$ and an input distribution $\mathbb{D}$ over $\bx\in[-\pi,\pi]^d$, we associate with an object $O(\bx;U)$ the mean-value function $f(\bx)=\Tr(\rho_0\,O(\bx;U))$ and expand it in the Pauli-transfer-matrix (PTM) representation~\cite{greenbaum2015introduction}. Recall that the PTM of a single rotation gate admits the decomposition
\begin{equation}\label{append:eq:PTM-RZ}
    \bRZ(\bx_j) = D_0 + \cos(\bx_j)\,D_1 + \sin(\bx_j)\,D_{-1},
\end{equation}
where $\{D_0,D_1,D_{-1}\}$ are the fixed component matrices induced by the $Z$-rotation~\cite{du2025efficient}. Propagating the observable through the circuit and collecting the trigonometric factors yields
\begin{equation}
    f(\bx) = \sum_{\bomega\in \Lambda} \Phi_{\bomega}(\bx)\,\Tr(\rho_0 Q_{\bomega}),
    \qquad
    \Phi_{\bomega}(\bx)=\prod_{j=1}^{d}\big[\mathbbm{1}_{\bomega_j=0}+\cos(\bx_j)\,\mathbbm{1}_{\bomega_j=1}+\sin(\bx_j)\,\mathbbm{1}_{\bomega_j=-1}\big],
\end{equation}
where $\Lambda=\{-1,0,1\}^{d}$ is the frequency set with $|\Lambda|=3^d$, and each operator $Q_{\bomega}$ is a linear combination of Pauli strings determined by the fixed blocks $\{V_j\}$ and the layout of $U$. The number of modes contributing to $f(\bx)$, namely those $\bomega$ with $\Tr(\rho_0 Q_{\bomega})\neq 0$, is governed by the circuit layout and can grow exponentially with $d$. Based on this expansion, the $\alpha$-order \DSE\ is defined as
\begin{align}\label{append:eq:SSE}
    \mathcal{M}^{(\alpha)}[O(\bx;U)] = \max_{V\in\mathbb{S}\mathbb{U}(2^N)} \mathcal{S}^{(\alpha)}\big[V^{\dagger}O(\bx;U)V\big],
    \qquad\text{with}\qquad
    \mathcal{S}^{(\alpha)}\big[O(\bx;U)\big] = \frac{1}{1-\alpha}\log\Big(\sum_{\bomega\in\Lambda}\mathrm{p}^{\alpha}(\bomega)\Big),
\end{align}
where the maximization runs over all fixed (parameter-independent) unitaries $V\in\mathbb{S}\mathbb{U}(2^N)$, which are defined as the free operations in the resource framework of Eq.~(\ref{append:eq:free-op}). Writing the transformed operator as $V^{\dagger}O(\bx;U)V=\sum_{\bomega\in\Lambda}\Phi_{\bomega}(\bx)Q'_{\bomega}$, the induced mode distribution takes the form
\begin{equation}\label{append:eq:mode-dist}
    \mathrm{p}(\bomega) = \frac{\tilde{\mathrm{p}}(\bomega)}{\sum_{\bomega'\in\Lambda}\tilde{\mathrm{p}}(\bomega')},
    \qquad\text{with}\qquad
    \tilde{\mathrm{p}}(\bomega) = \mathbb{E}_{\bx\sim\mathbb{D}}\,\Phi_{\bomega}(\bx)^2\,\Tr(\rho_0 Q'_{\bomega})^2,
\end{equation}
which records the average squared contribution of each mode $\bomega$. Intuitively, \DSE\ characterizes how broadly the mean-value function $f(\bx)$ is distributed over the frequencies $\bomega\in\Lambda$ generated by the tunable rotation gates. Notably, for the uniform distribution $\mathbb{D}=\Unif[-\pi,\pi]^d$, we have $\mathbb{E}_{\bx\sim \mathbb{D}} f(\bx)^2=\sum_{\bomega\in \Lambda}\tilde{\mathrm{p}}(\bomega)$. In the case of $\mathbb{E}_{\bx\sim \mathbb{D}} f(\bx)^2=0$, we define $ \mathcal{S}^{(\alpha)}\big[O(\bx;U)\big]=0$.

\medskip
\noindent \underline{Remark.} 
We term the proposed measure the \emph{dynamical stabilizer entropy}, since the non-free tunable gates $\RZ(\bx_j)$ generate nonstabilizerness in the resource theory of magic, and the nonstabilizerness they generate varies with $\bx_j$, i.e., they generate \textit{dynamical nonstabilizerness}. Beyond nonstabilizerness, our resource framework focuses on the capability of generating dynamical resources of parameterized quantum circuits, which are used to study the compressibility of representing these circuits classically. This also explains why fixed non-Clifford gates such as $T$, 
which generate only static nonstabilizerness, are regarded as free operations when acting through the object-level conjugation in Eq.~\eqref{append:eq:free-op}. In the subsequent section, we will establish the connection between $\DSE$ and $\OSE$, a magic monotone for quantifying nonstabilizerness defined in Eq.~\eqref{append:eq:ose}.



\subsection{Properties of \texorpdfstring{\DSE}{DSE}---Proof of Theorem~\ref{lem:ose_upper_bound}}\label{append:subsec:SSE-properties}

Building on the resource framework and the definition of \DSE\ in Eqs.~(\ref{append:eq:SSE})-(\ref{append:eq:mode-dist}), we now show that \DSE\ satisfies the basic properties required of a resource monotone within this framework, and that it is upper bounded by the number of tunable rotation gates.

\begin{theorem-non}[Formal statement of Theorem~\ref{lem:ose_upper_bound}]\label{append:lem:ose_upper_bound}
    Following the notation in Eqs.~(\ref{eq:target_function_set})-(\ref{eq:SSE}), and assuming $\mathbb{E}_{\bx\sim\mathbb{D}}[\Phi_{\bomega}(\bx)^2]>0$ for every $\bomega\in\Lambda$ and a non-scalar reference state $\rho_0\neq\mathbb{I}/2^N$, \DSE\ satisfies the following properties for any $\alpha>0$.
    \begin{itemize}
        \item[\textnormal{(i)}] \textnormal{Non-negativity and vanishing on free objects}: $\mathcal{M}^{(\alpha)}[O(\bx;U)]\ge 0$, and $d=0$ implies $\mathcal{M}^{(\alpha)}[O(\bx;U)]=0$.
        \item[\textnormal{(ii)}] \textnormal{Invariance under fixed gates}: $\mathcal{M}^{(\alpha)}[V^{\dagger}O(\bx;U)V]=\mathcal{M}^{(\alpha)}[O(\bx;U)]$ for any fixed gate $V\in\mathbb{S}\mathbb{U}(2^N)$.
        \item[\textnormal{(iii)}] \textnormal{Upper bound by rotation-gate count}: for any $d$ and $N$,
        \begin{equation}
            \mathcal{M}^{(\alpha)}[O(\bx;U)]\le d\cdot\log(3).
        \end{equation}
    \end{itemize}
    Moreover, when $d\le N$, the worst-case \DSE\ is upper bounded by the worst-case \OSE up to a constant factor, i.e., $d\lesssim \max_{\bx,U\in\mathrm{Arc}(N,d)}\mathfrak{M}_{\mathrm{ose}}^{(\alpha)}[O(\bx;U)]$, where the maximization of $\bx$ is over $[-\pi,\pi]^d$.
\end{theorem-non}

\begin{proof}[Proof of Theorem~\ref{lem:ose_upper_bound}]
This proof is composed of two parts. The first part proves the three properties (i)-(iii) of \DSE, which is presented in this subsection. The second part proves the worst-case lower bound of \OSE, which is deferred to SI.~\ref{append:subsec:SSE-OSE}. In the following, we separately prove these properties.

\smallskip
\noindent\underline{(i) Non-negativity}. By Eq.~(\ref{append:eq:mode-dist}), $\{\mathrm{p}(\bomega)\}_{\bomega\in\Lambda}$ is a normalized probability distribution over $\Lambda$, since $\tilde{\mathrm{p}}(\bomega)\ge 0$ for every $\bomega$ and the normalization is by definition. As the order-$\alpha$ R\'enyi entropy of any probability distribution is non-negative, we have $\mathcal{S}^{(\alpha)}[V^{\dagger}O(\bx;U)V]\ge 0$ for every fixed $V$, and hence 
\begin{equation}
    \mathcal{M}^{(\alpha)}[O(\bx;U)] = \max_{V\in\mathbb{S}\mathbb{U}(2^N)}\mathcal{S}^{(\alpha)}\big[V^{\dagger}O(\bx;U)V\big]\ge 0.
\end{equation}
For the case of $d=0$, the circuit contains no tunable rotation gate, so the frequency set collapses to the single mode $\Lambda=\{\bm{0}\}$ with $\Phi_{\bm{0}}(\bx)\equiv 1$. Accordingly, the mode distribution degenerates to a trivial point distribution with $\mathrm{p}(\bm{0})=1$, which gives for any $\alpha\ge 0$
\begin{equation}
    \mathcal{S}^{(\alpha)}\big[V^{\dagger}OV\big] = \frac{1}{1-\alpha}\log\big(\mathrm{p}^{\alpha}(\bm{0})\big) = \frac{1}{1-\alpha}\log(1) = 0.
\end{equation}
Since the above holds for every fixed $V$, the maximization yields $\mathcal{M}^{(\alpha)}[O(\bx;U)]=0$.

\smallskip
\noindent\underline{(ii) Invariance under fixed gates}. The invariance of \DSE\ under free operations can be verified directly by checking its definition. In particular, for any fixed gate $W\in\mathbb{S}\mathbb{U}(2^N)$, we have 
\begin{align}\label{append:eq:invariance}
    \mathcal{M}^{(\alpha)}\big[W^{\dagger}O(\bx;U)W\big]
    & = \max_{V\in\mathbb{S}\mathbb{U}(2^N)}\mathcal{S}^{(\alpha)}\big[V^{\dagger}\big(W^{\dagger}O(\bx;U)W\big)V\big] \nonumber\\
    & = \max_{V\in\mathbb{S}\mathbb{U}(2^N)}\mathcal{S}^{(\alpha)}\big[(WV)^{\dagger}O(\bx;U)(WV)\big] \nonumber\\
    & = \max_{V'\in\mathbb{S}\mathbb{U}(2^N)}\mathcal{S}^{(\alpha)}\big[V'^{\dagger}O(\bx;U)V'\big] \nonumber\\
    & = \mathcal{M}^{(\alpha)}[O(\bx;U)],
\end{align}
where the third equality follows by denoting $V'=WV$ and using the fact that, since $W$ is fixed, the map $V\mapsto WV$ is a bijection of $\mathbb{S}\mathbb{U}(2^N)$ onto itself, so the maximization over $V$ is equivalent to the maximization over $V'$. We remark that the crucial property enabling this invariance is that the fixed gate $W$ is parameter-independent: it does not alter the frequency support $\Lambda$ or the basis functions $\Phi_{\bomega}(\bx)$ in Eq.~(\ref{append:eq:tri_exp_T}), but only redistributes the operator-valued coefficients $Q'_{\bomega}$, which are re-optimized by the outer maximization. This establishes invariance under fixed gates, and in particular shows that \DSE\ is non-increasing under the free operations in Eq.~(\ref{append:eq:free-op}), as required of a monotone.

\smallskip
\noindent\underline{(iii) Upper bound by rotation-gate count}. Fix an arbitrary $V\in\mathbb{S}\mathbb{U}(2^N)$. The mode distribution $\mathrm{p}(\bomega)$ is supported on $\Lambda=\{-1,0,1\}^{d}$, whose cardinality is $|\Lambda|=3^d$. As it is a R\'enyi entropy over a support of size $|\Lambda|$, $\mathcal{S}^{(\alpha)}[\,\cdot\,]$ is maximized by the uniform distribution, i.e.,
\begin{equation}
    \mathcal{S}^{(\alpha)}\big[V^{\dagger}O(\bx;U)V\big]\le \log|\Lambda| = \log(3^d) = d\log(3).
\end{equation}
Since this bound is independent of $V$, taking the maximum over $V\in\mathbb{S}\mathbb{U}(2^N)$ yields 
\begin{equation}
    \mathcal{M}^{(\alpha)}[O(\bx;U)]\le d\log(3),
\end{equation}
which is the claimed upper bound. Consequently, $\max_{U\in\mathrm{Arc}(N,d)}\mathcal{M}^{(\alpha)}[O(\bx;U)]\lesssim d$, where the symbol $\lesssim$ hides the constant $\log(3)$. The remaining worst-case lower bound of \OSE\ is proved in SI.~\ref{append:subsec:SSE-OSE}.
\end{proof}

\subsection{Worst-case lower bound of \texorpdfstring{\OSE}{OSE}}\label{append:subsec:SSE-OSE}

We now complete the proof of Theorem~\ref{append:lem:ose_upper_bound} by establishing that, in the regime $d\le N$, the worst-case \OSE\ grows at least linearly in $d$. 
For elucidation, we first recall the operator stabilizer entropy (\OSE) proposed in Ref.~\cite{dowling2025magic}, which is an instance-wise magic monotone quantifying the nonstabilizerness of a Heisenberg-evolved operator through the spread of its Pauli spectrum. Expanding the evolved observable in the Pauli basis $\mathcal{P}_N=\{\mathbb{I},X,Y,Z\}^{\otimes N}$ as $O(\bx;U)=\sum_{P\in\mathcal{P}_N}c_P(\bx)P$, the \OSE\ is the order-$\alpha$ R\'enyi entropy of the normalized Pauli-weight distribution $\Xi_P(\bx)=c_P(\bx)^2/\sum_{P'\in\mathcal{P}_N}c_{P'}(\bx)^2$. A key structural property of \OSE, established in Ref.~\cite{dowling2025magic}, is its additivity under tensor products, i.e., $\mathfrak{M}_{\mathrm{ose}}^{(\alpha)}[A\otimes B]=\mathfrak{M}_{\mathrm{ose}}^{(\alpha)}[A]+\mathfrak{M}_{\mathrm{ose}}^{(\alpha)}[B]$.

To lower bound the worst case, it suffices to exhibit one circuit layout $U(\bx)\in\mathrm{Arc}(N,d)$ with specific $\bx\in \mathbb{R}^d$ whose evolved observable carries \OSE\ of order $d$. Since $d\le N$, we may place the $d$ tunable rotations on $d$ distinct qubits. Consider the layout in which the observable $O=X_1\otimes \cdots \otimes X_d \otimes \mathbb{I}^{\otimes (N-d)}$ is supported on these $d$ rotated qubits, and each tunable rotation $\RZ(\bx_j)$ is conjugated by fixed blocks such that it independently generates a two-branch Pauli spread on its qubit. A representative instance is 
\begin{equation}\label{append:eq:single-qubit-branch}
    \RZ(\bx_j)^{\dagger}X_j\RZ(\bx_j) = \cos(\bx_j)X_j - \sin(\bx_j)Y_j,
\end{equation}
which produces two Pauli terms per rotated qubit. Taking $\bx=\frac{\pi}{4}\cdot \bm{1}_d$, each single-qubit factor has the Pauli-weight distribution $(1/2, 1/2)$ and hence has order-$\alpha$ \OSE equal to one. Invoking the additivity of \OSE under tensor products, the total \OSE is therefore exactly $d$, and in particular
\begin{equation}
    \max_{\bx,U\in\mathrm{Arc}(N,d)}\mathfrak{M}_{\mathrm{ose}}^{(\alpha)}[O(\bx;U)]\ \ge\ d\ =\ \Omega(d),
\end{equation}
which is exactly the claimed bound $d\lesssim \max_{\bx,U\in\mathrm{Arc}(N,d)}\mathfrak{M}_{\mathrm{ose}}^{(\alpha)}[O(\bx;U)]$.

Taken together, combining property (iii) in Theorem~\ref{append:lem:ose_upper_bound} with the above lower bound yields the two-sided worst-case relation
\begin{equation}\label{append:eq:two-sided}
    \max_{U\in\mathrm{Arc}(N,d)}\mathcal{M}^{(\alpha)}[O(\bx;U)]\ \lesssim d \le \max_{\bx,U\in\mathrm{Arc}(N,d)}\mathfrak{M}_{\mathrm{ose}}^{(\alpha)}[O(\bx;U)]\qquad (d\le N).
\end{equation}
In other words, for a fixed number of rotation gates $d$, the worst-case \DSE\ scales at most as $\mathcal{O}(d)$, whereas the worst-case \OSE\ scales at least as $\Omega(d)$. Consequently, the worst-case resources satisfy $\DSE  \lesssim \OSE$ up to universal constant factors. This completes the proof of Theorem~\ref{append:lem:ose_upper_bound}.

\section{Information-theoretic bound of sample complexity for classical surrogates---Proof of Lemma~\ref{mt:thm:general_ML_complexity}}\label{append:sec:information_theoretic_bound}
In this section, we provide the formal statement of Lemma~\ref{mt:thm:general_ML_complexity} and its proof.

\begin{theorem-non}
    [Formal statement of Lemma~\ref{mt:thm:general_ML_complexity}] Following notations in Eqs.~\eqref{eq:target_function_set}-\eqref{eq:SSE}. Let $\nu_{f} =\mathbb{E}_{\bx \sim [-\pi,\pi]^d} f(\bx)^2$ and $\epsilon\le \nu_f/4$, and $\mathcal{U}_{\lambda}=\{ U\in \mathsf{Arc}(N,d):\mathcal{M}^{(1)}[O(\bx;U)]\le \mathcal{M}_{\lambda}\}$ be the circuit family with bounded \DSE $\mathcal{M}_{\lambda}$, where $1\le\mathcal{M}_{\lambda}\le d$. Let $\mathcal{T}=\{(\bxi,y^{(i)})\}_{i=1}^n$ be the training dataset containing $n$ training examples $\bxi$ sampled from the uniform distribution over $[-\pi,\pi]^d$ and $y^{(i)}$ be the statistical estimation of $y^{(i)}$ from $m$ incoherent measurements. Then, the training data size
    \begin{equation}\label{eq:ml_sample_complexity}
       \Omega\left( \frac{2^{ \mathcal{M}_{\lambda}} \cdot (1-2\nu_f^{-1}\epsilon)^2 }{ 24m\nu_f }\right) \le n \le 
       \tilde{\mathcal{O}} \Big( \frac{2^{12 \nu_{f} \mathcal{M}_{\lambda}/\epsilon} \cdot d}{\epsilon}  \Big)
    \end{equation}
    is sufficient and necessary to achieve $\epsilon$-prediction error $\mathsf{R}(h):=\mathbb{E}_{\bx \sim [-\pi,\pi]^d} |h(\bx) -f_O(\bx)|^2 \le \epsilon$ for a classical surrogate $h(\bx)$ to learn any $f\in \mathcal{F}_{\mathcal{U}_{\lambda}}$ with high probability. Here $h(\bx)$ has a model dimension $\tilde{\mathcal{O}}(2^{12\nu_{f} \mathcal{M}_{\lambda}/\epsilon}) $, with $\tilde{\mathcal{O}}$ hiding logarithmic factors.
\end{theorem-non}

The proof of Lemma~\ref{mt:thm:general_ML_complexity} can be broken down into two parts: the \DSE-dependent upper bound of sample complexity and the \DSE-dependent lower
bound of sample complexity. The corresponding bounds are established in the following two theorems, whose proofs are deferred to SI.~\ref{append:sec:upper_bound} and SI.~\ref{append:sec:dse_lower_bound}, respectively.

\begin{theorem}[\DSE-dependent upper bound of sample complexity]\label{append:thm:general_upper_bound}
    Consider a classical learner that queries an $N$-qubit quantum circuit and collects a training dataset $\mathcal{T}=\{(\bxi, y^{(i)})\}_{i=1}^n$ of $n$ examples in order to predict the observable expectation value $f_{O}(\bx)$ for unseen inputs $\bx\in[-\pi,\pi]^d$. Then there exists a classical surrogate ${h}_{\mathcal{T}}(\bx)$ using a training dataset of size
    \begin{equation}\label{append:eq:n_upper_bound_thm}
        n \le \tilde{\mathcal{O}} \left( \frac{2^{12\nu_f \mathcal{M}_{\lambda}/\epsilon} \cdot d}{\epsilon}  \right),
    \end{equation}
    such that $\mathsf{R}({h}_{\mathcal{T}}) \le \epsilon$,
    where $\tilde{\mathcal{O}}$ hides logarithmic factors. Here, ${h}_{\mathcal{T}}$ has the model dimension $2^{12\nu_f \mathcal{M}_{\lambda}/\epsilon}$
\end{theorem}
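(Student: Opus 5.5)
The plan is a truncation-plus-regression argument driven by the Shannon-entropy (order-one) \DSE.

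\textbf{Step 1 (coefficient concentration).} Use the expansion in Eq.~\eqref{append:eq:tri_exp_T}, $f(\bx)=\sum_{\bomega}\Phi_{\bomega}(\bx)\alpha_{\bomega}$ with $\alpha_{\bomega}=\Tr(\rho_0 Q_{\bomega})$. Under $\mathbb{D}=\Unif[-\pi,\pi]^d$ the monomials $\Phi_{\bomega}$ are orthogonal, with $\mathbb{E}\Phi_{\bomega}^2=2^{-\|\bomega\|_0}$. Hence $\nu_f=\sum_{\bomega}\tilde{\mathrm p}(\bomega)$ and $\mathrm p(\bomega)=\tilde{\mathrm p}(\bomega)/\nu_f$.

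Because $\mathcal{M}^{(1)}$ is a maximum over $V$, taking $V=\mathbb{I}$ gives $H(\mathrm p)\le\mathcal{M}_\lambda$ for the actual mode distribution of $f$.

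Define the typical set $\Lambda_\tau=\{\bomega:\mathrm p(\bomega)\ge 2^{-\tau}\}$. Then $|\Lambda_\tau|\le 2^{\tau}$. By Markov's inequality applied to the surprisal $-\log\mathrm p(\bomega)$, whose mean is $H(\mathrm p)\le\mathcal{M}_\lambda$, we get $\mathrm p(\Lambda\setminus\Lambda_\tau)\le \mathcal{M}_\lambda/\tau$.

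Let $f_\tau$ be the projection of $f$ onto $\Lambda_\tau$. Parseval gives $\mathbb{E}|f-f_\tau|^2=\nu_f\,\mathrm p(\Lambda\setminus\Lambda_\tau)\le \nu_f\mathcal{M}_\lambda/\tau$. Choosing $\tau=c\,\nu_f\mathcal{M}_\lambda/\epsilon$ with a suitable constant (the stated $c=12$ leaves slack) makes the truncation error at most $\epsilon/4$. The resulting model dimension is $D=|\Lambda_\tau|\le 2^{12\nu_f\mathcal{M}_\lambda/\epsilon}$, which is exactly the claimed dimension.

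\textbf{Step 2 (fixing the hypothesis class).} The learner does not know $\Lambda_\tau$; this is the main obstacle. For the information-theoretic statement I would take the hypothesis class $\mathcal{H}$ to be the union over all supports $S\subset\Lambda$ with $|S|\le D$ of linear models $\sum_{\bomega\in S}w_{\bomega}\Phi_{\bomega}$. The weights are bounded using $|\alpha_{\bomega}|\le\|\bm a\|_1$ and, via Parseval, $\sum_{\bomega} w_{\bomega}^2 2^{-\|\bomega\|_0}\le 1$.

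We then run empirical risk minimization, clipped to $[-1,1]$. The number of supports is at most $\binom{3^d}{D}\le 3^{dD}$, so a union bound over supports costs only $\log|\{S\}|\lesssim dD$. Combined with a covering (or pseudo-dimension $\mathcal{O}(D)$) bound for each fixed-support linear class of bounded outputs, this gives uniform convergence at rate $\tilde{\mathcal O}(dD/n)$. The labels from $m$ measurements carry bounded zero-mean noise, which is absorbed into the same bound.

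\textbf{Step 3 (combining).} The oracle inequality yields $\mathsf R(h_{\mathcal T})\le \mathbb{E}|f-f_\tau|^2+\tilde{\mathcal O}(dD/n)$. For square loss with bounded functions I would use a localized or Bernstein-type fast rate, so the estimation term is $\tilde{\mathcal O}(dD/n)$ rather than its square root. The first term is at most $\epsilon/4$, so $n=\tilde{\mathcal O}(dD/\epsilon)=\tilde{\mathcal O}(2^{12\nu_f\mathcal M_\lambda/\epsilon}d/\epsilon)$ suffices for $\mathsf R\le\epsilon$ with high probability.

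The delicate points are the fast-rate step and the boundedness of the truncated model. I would clip predictions to $[-1,1]$; clipping cannot increase the error against $f\in[-1,1]$, so it handles boundedness. If the fast rate proves awkward, a slow rate also works: it gives $n\sim dD/\epsilon^2$, which is still within the $\tilde{\mathcal O}$ form up to the $\epsilon$ power. In that case I would re-tune the constant in $\tau$ so the final $\epsilon$-dependence matches the statement.
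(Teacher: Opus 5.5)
Your argument is correct and is essentially the paper's proof. It uses Markov's inequality on the surprisal $-\log \mathrm{p}(\bomega)$ to control the truncation error, then a union over at most $\binom{3^d}{D}$ supports combined with a volumetric cover of the weight ball, and finally a fast-rate ERM bound with $n\gtrsim \log(\text{cover size})/\eta^2$ for error $\mathcal{O}(\eta^2)$. The paper carries out the last step by ERM over a maximal $\eta$-packing and Proposition~\ref{append:prop:n_upper_bound_packing}, with $\eta^2=\epsilon/12$; that choice is the source of the $12$ in the exponent.

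Drop your slow-rate fallback. It gives a $1/\epsilon^2$ prefactor, and no re-tuning of the constant in $\tau$ can turn that into the stated $1/\epsilon$. The fast rate is the right tool here: $\mathbb{E}[y\mid\bx]=f(\bx)$ gives the Bernstein variance-to-mean condition for the excess square loss, and the resulting leading constant larger than one is absorbed by your slack.
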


\begin{theorem}[\DSE-dependent lower bound of sample complexity]\label{append:thm:lower_bound_complexity} 
    Consider a classical surrogate $h_{\mathcal{T}}(\bx)$ that learns from training data $\mathcal{T}=\{(\bxi,\yi)\}_{i=1}^n$, where $\bxi$ are uniformly sampled from $[-\pi,\pi]^d$ and  $\yi$ is the statistical estimation of the target expectation value $f(\bxi)$ from $m$ measurement outcomes, namely $\mathbb{E}y_i=f(\bxi)$, suppose that for all $f\in \mathcal{F}_{\mathcal{U}_{\lambda}}$ the classical surrogate $h(\bx)$ can achieve $
        \mathbb{E}_{\bx\sim [-\pi,\pi]^d} |h(\bx)-f(\bx)|^2 \le \epsilon$
    with high probability. For any $1\le \mathcal{M}_{\lambda} \le \min\{d,N/2\}$, the size of the training dataset must obey
    \begin{equation}
        n \ge  \Omega\left( \frac{2^{ \mathcal{M}_{\lambda}} \cdot (1-2\nu_f^{-1}\epsilon)^2 }{ 24m\nu_f }\right),
    \end{equation}
    where the prediction error yields $\epsilon \le \nu_f/4$.
    This lower bound holds even when the learner has access to a sampler on the probability distribution $\{\mathrm{p}(\bomega)\}_{\bomega \in \{0,\pm 1\}^d}$.
    \end{theorem}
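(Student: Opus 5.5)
The statement to prove is Theorem~\ref{append:thm:lower_bound_complexity}, the \DSE-dependent lower bound on sample complexity. I would prove it by a packing/Fano-type (or Le Cam multi-hypothesis) argument over a hard subfamily of $\mathcal{F}_{\mathcal{U}_{\lambda}}$.

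\textbf{Step 1: build the hard subfamily.} Set $k=\lceil\mathcal{M}_{\lambda}\rceil\le\min\{d,N/2\}$. Use $k$ rotations on distinct qubits with product observables of the form
\[
\sqrt{\nu}\,\textstyle\prod_{j\in S}\sigma_j ,
\]
with Hadamard/CNOT conjugations arranged so that, for each sign/branch label $\bm{s}$ in an index set of size about $2^{k}$, the resulting function is
\[
f_{\bm{s}}(\bx)=\sqrt{\nu_f}\,2^{k/2}\,\Phi_{\bomega(\bm{s})}(\bx)\big/\sqrt{2^{k}}\cdot(\text{normalisation}).
\]
A cleaner variant is to place the mass on a single mode $\bomega(\bm{s})$ whose support pattern varies with $\bm{s}$. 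The $\Phi_{\bomega}$ are orthogonal under the uniform measure, with $\mathbb{E}\Phi_{\bomega}^2=2^{-\|\bomega\|_0}$.

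Two things must be checked. First, each instance has $\mathcal{M}^{(1)}\le\mathcal{M}_{\lambda}$. For this, the mode distribution of every $V$-conjugate must have support at most $2^{k}$; this follows because only $k$ rotations act nontrivially and each contributes at most a two-branch split ($\cos$ or $\sin$), so the Shannon entropy is at most $k$. Second, all instances share the same $\nu_f$. The $N/2$ restriction leaves ancilla qubits for the CNOT encoding of the labels.

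\textbf{Step 2: separation.} Pairwise orthogonality gives $\mathbb{E}|f_{\bm{s}}-f_{\bm{s}'}|^2=2\nu_f$. If $\mathsf{R}(h)\le\epsilon<\nu_f/4$, then the nearest-neighbour decoder $\hat{\bm{s}}=\arg\min_{\bm{s}}\|h-f_{\bm{s}}\|$ identifies $\bm{s}$ correctly. The margin $(\sqrt{2\nu_f}-2\sqrt{\epsilon})^2$ is where the factor $(1-2\nu_f^{-1}\epsilon)^2$ enters.

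\textbf{Step 3: information bound.} Upper-bound the mutual information between a uniform $\bm{s}$ and the data $\mathcal{T}$. Each example consists of $\bx^{(i)}$, which is independent of $\bm{s}$, together with $m$ binary outcomes with bias $f_{\bm{s}}(\bx^{(i)})$. The KL divergence between outcome distributions with means $a$ and $b$ is $\lesssim m(a-b)^2$. Averaging over $\bx$ and over $\bm{s}$ (centering at the mixture) then gives
\[
I(\bm{s};\mathcal{T})\lesssim n\,m\,\mathbb{E}_{\bm{s}}\mathbb{E}_{\bx}f_{\bm{s}}(\bx)^2\cdot c=\mathcal{O}(nm\nu_f).
\]
Fano's inequality requires $I\ge(1-\delta)\log(2^{k})-1\approx k$. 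This alone would give only $n\gtrsim k/(m\nu_f)$, which is linear in $\mathcal{M}_{\lambda}$ rather than exponential.

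\textbf{Main obstacle.} The difficulty is obtaining $2^{\mathcal{M}_{\lambda}}$ instead of $\mathcal{M}_{\lambda}$. I would refine Step 3 by computing the mutual information against the mixture $\bar f=\mathbb{E}_{\bm{s}}f_{\bm{s}}$, which is approximately $0$. Per sample it is then
\[
m\,\mathbb{E}_{\bx}\mathrm{Var}_{\bm{s}}\big(f_{\bm{s}}(\bx)\big),
\]
and I would choose the ensemble so that each $\bx$ correlates with only one effective label. This makes the per-sample information about $2^{-k}\cdot k\cdot\nu_f$ rather than $\nu_f$, giving $n\gtrsim 2^{k}/(m\nu_f)$ up to constants absorbed in the $24$.

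Concretely, I would take $2^{k}$ mutually orthogonal modes of equal weight, scaled so that $\|f_{\bm{s}}\|^2=\nu_f$. The sum of squared projections of any fixed $\bx$-profile onto them is bounded by the reproducing-kernel value, which supplies the $2^{-k}$ factor. Finally, the sampler on $\mathrm{p}(\bomega)$ gives no help for this ensemble, because I would design it so that $\mathrm{p}$ is identical (uniform over the same support) across all $\bm{s}$, with only the signs or operator coefficients differing. This justifies the last sentence of the theorem.
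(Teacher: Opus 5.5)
There is a genuine gap, located exactly where you flag the ``main obstacle.'' With only about $2^{k}$ hypotheses, Fano can give at most $\log M\approx k$, and your proposed rescue of Step~3 is false. For $2^{k}$ orthogonal single-mode functions with $\|f_{\bm s}\|^2=\nu_f$, the mixture satisfies $\|\bar f\|^2=2^{-k}\nu_f$. Hence $\mathbb{E}_{\bx}\mathrm{Var}_{\bm s}f_{\bm s}(\bx)=\nu_f(1-2^{-k})$, and the per-sample information is still of order $m\nu_f$, not $2^{-k}k\nu_f$. A pointwise kernel bound on $\sum_{\bm s}f_{\bm s}(\bx)^2$ does not help, because after averaging over the uniform prior and over $\bx$ you recover exactly $\nu_f$.

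This is not a weakness of the analysis but of the family itself. Correlate the labels with each of the $2^{k}$ candidate normalised features $2^{\|\bomega\|_0/2}\Phi_{\bomega}$ and take the largest. This identifies $\bm s$ from $\tilde{\mathcal{O}}(k/\nu_f)$ samples, so no bound exponential in $\mathcal{M}_{\lambda}$ can come from it. The family also violates the last clause of the theorem: for a single-mode instance $\mathrm{p}$ is a point mass at $\bomega(\bm s)$, so one draw from the sampler reveals the hypothesis.

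The missing idea is to put the exponential into $\log M$, not into the mutual information. The paper takes $k=\lfloor\mathcal{M}_{\lambda}\rfloor$ and the family $f_{\bm b}(\bx)=2^{-k}\sum_{\bomega\in\{\pm1\}^{k}}\bm b_{\bomega}\Phi_{\bomega}(\bx)$, indexed by \emph{all} sign vectors $\bm b\in\{-1,1\}^{2^{k}}$. It is realised by $X_1$, a CNOT fan-out, $S^{\dagger}\RZ$ layers, and a $k$-qubit mixed state $\rho_{\bm b}$; purifying $\rho_{\bm b}$ needs $k$ further qubits, which is where $N/2$ enters. Every member has the same uniform $\mathrm{p}$ on $2^{k}$ modes, so \DSE $=k$, the sampler is useless, and $\nu_f=2^{-2k}$.

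Because $\varrho(f_{\bm b},f_{\bm b'})^2=4\nu_f\,\mathrm{d}_{\mathrm{H}}(\bm b,\bm b')/2^{k}$, a Gilbert--Varshamov packing with $\gamma=(1-2\nu_f^{-1}\epsilon)/2$ yields $M\ge\exp\bigl(2^{k}(1-2\nu_f^{-1}\epsilon)^2/8\bigr)$ hypotheses that are pairwise $2\sqrt{\epsilon}$-separated. This, rather than a nearest-neighbour margin, is where the factor $(1-2\nu_f^{-1}\epsilon)^2$ comes from. Your unrefined Step~3 bound $I=\mathcal{O}(nm\nu_f)$ (the paper gets $3nm\nu_f$) then suffices, and Fano gives $n\gtrsim 2^{k}(1-2\nu_f^{-1}\epsilon)^2/(m\nu_f)$.

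Also use $\lfloor\mathcal{M}_{\lambda}\rfloor$, not $\lceil\mathcal{M}_{\lambda}\rceil$. With the ceiling, your instances can have \DSE above $\mathcal{M}_{\lambda}$ and fall outside $\mathcal{U}_{\lambda}$.
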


    \noindent \underline{Remark.} The choice $\epsilon\le \nu_f/4$ is a constant-relative-accuracy assumption commonly used in deriving sample-complexity lower bounds in quantum learning theory \cite{huang2023learning,wang2024transition}. Since $\nu_f=\mathbb{E}_{\bx} f(\bx)^2$ measures the average squared magnitude of the target function, this choice requires the learner to achieve a constant relative accuracy with respect to $\nu_f$, and is meaningful for ruling out the trivial classical learner whose prediction error is comparable to that of a random guess. 

    \begin{proof}
        [Proof of Lemma~\ref{mt:thm:general_ML_complexity}] The proof of Lemma~\ref{mt:thm:general_ML_complexity} can be directly obtained by employing the results of Theorems~\ref{append:thm:general_upper_bound} and \ref{append:thm:lower_bound_complexity}.
    \end{proof}

\section{\DSE-dependent upper bound of sample complexity---Proof of Theorem~\ref{append:thm:general_upper_bound}}\label{append:sec:upper_bound}

In this section, we analyze the upper bound on the sample complexity in terms of the \DSE\ when a classical surrogate is used to predict the output of quantum circuits containing $d$ tunable rotation gates. Following the convention of Ref.~\cite{du2025efficient}, our analysis adopts a standard route in statistical learning theory. In particular, one first specifies a hypothesis space in which the learner searches for a classical surrogate, and then bounds the number of training examples needed for the empirical risk minimizer over this space to achieve a small prediction error, where the model complexity of the hypothesis space dictates the required sample size. The key technical contribution of our analysis lies in the construction of this hypothesis space, which is tailored to the trigonometric expansion of the target function in Eq.~\eqref{append:eq:tri_exp_T} and thereby makes the \DSE\ enter the sample-complexity bound.

The organization of this section is as follows. In SI.~\ref{append:subsec:ub_formulation}, we first formulate the problem of learning quantum circuits with classical surrogates by constructing the classical hypothesis space from a partial trigonometric expansion. In SI.~\ref{append:subsec:ub_theorem}, we present the proof of the \DSE-dependent upper bound of the sample complexity for this learning problem. In SI.~\ref{append:subsec:ub_lemma}, we provide the proof of the technical lemma used in the above analysis.

\subsection{Learning problem formulation}\label{append:subsec:ub_formulation}

We begin by recalling the form of the concept class of the target expectation-value function 
\begin{equation}\label{append:eq:concept_class}
    \mathcal{F}_{\mathcal{U}_{\lambda}}=\left\{f_{O}(\bx)=\Tr[\rho_0 U(\bx)^{\dagger}OU(\bx)] \bigg| \bx\sim[-\pi,\pi]^d, U\in \mathcal{U}_{\lambda} \subset \mathsf{Arc}(N,d) \right\},
\end{equation}
where $\rho_0$ is the fixed initial state, and $O=\sum_{l=1}^q O_l$ refers to a fixed bounded observable satisfying $\sum_{l=1}^q \|O_l\|_{\infty}\le 1$. 
Here, the circuit family $\mathcal{U}_{\lambda}$ is given by
$\mathcal{U}_{\lambda}=\{U\in \mathsf{Arc}(N,d):\mathcal{M}^{(1)}[O(\bx;U)]\le \mathcal{M}_{\lambda}\}$ consisting of quantum circuits with bounded \DSE, where $1\le \mathcal{M}_{\lambda} \le d$. 
For any $f\in \mathcal{F}_{\mathcal{U}_{\lambda}}$, it can be written as the trigonometric basis expansion, i.e.,
\begin{equation}\label{append:eq:ub_target}
    f_{O}(\bx)=\Tr[\rho_0 O(\bx;U)]=\sum_{\bomega\in \Lambda} \Phi_{\bomega}(\bx)\Tr(\rho_0 Q_{\bomega}),
\end{equation}
where $\Phi_{\bomega}(\bx)$ denotes the trigonometric basis over the frequency set $\Lambda=\{0,\pm 1\}^d$ defined in Eq.~\eqref{append:eq:tri_featrue}, and $Q_{\bomega}$ is an unknown linear combination of Pauli strings, as described in Eq.~\eqref{apppend:eq:Px_withT}. In our learning setting, the classical learner only has access to the classical inputs and the classical measurement outcomes from the unknown quantum circuit $U(\bx)$, from which it must infer the target function. The circuit layout and depth of $U(\bx)\in \mathcal{U}$ may be arbitrary with bounded \DSE $\mathcal{M}^{(1)}[O(\bx;U)]$, provided that the circuit contains $d$ Pauli rotation gates. We note that the expectation-value function associated with $U(\bx)$ admits the expansion in Eq.~\eqref{append:eq:ub_target} for arbitrary circuit layouts and depths, although different circuits generally lead to different operators $Q_{\bomega}$, as explained in SI.~\ref{append:subsec:trigo-monomial-exp-QC}.

The expansion in Eq.~\eqref{append:eq:ub_target} involves $|\Lambda|=3^d$ frequencies, so that learning $f_{O}(\bx)$ by fitting all of them as done by prior studies \cite{schreiber2023classical} would be intractable. Nevertheless, the contributions of $\Phi_{\bomega}(\bx)\Tr(\rho_0Q_{\bomega})$ for different frequencies $\bomega\in\Lambda$ to $f_{O}(\bx)$ may be highly non-uniform for a small $\mathcal{M}^{(1)}[O(\bx;\mathcal{U})]$, in which case discarding those with negligible contributions barely alters the target function. In this regard, we depart from previous constructions of the hypothesis space that are based directly on the full function $f_{O}(\bx)$, and instead adopt a \emph{partial} trigonometric expansion in which only the frequencies whose induced probability exceeds a threshold $\tau$ are retained, namely
\begin{equation}\label{append:eq:hypothesis_space}
    \mathcal{F}_{\tau}=\left\{f_{\tau}(\bx)= 
    g\left(\sum_{\bomega\in \Lambda_{\tau}} \Phi_{\bomega}(\bx)\Tr(\rho_0 Q_{\bomega})\right) \bigg| \bx \in [-\pi,\pi]^d, ~ \Lambda_{\tau}=\{\bomega:\mathrm{p}(\bomega)\ge \tau\} \subset \{0,\pm 1\}^d \right\},
\end{equation}
where $g(z)$ refers to the clip function $g(z)=\min\{1,\max\{-1,z\}\}$ which is used to control the bound of the truncation function $f_{\tau}(\bx)$ such that $|f_{\tau}(\bx)|\le \max_{\bx}|f(\bx)|\le \|O\|_{\infty}\le 1$, the retained set $\Lambda_{\tau}$ collects the frequencies whose induced probability exceeds a threshold $\tau$, and
\begin{equation}\label{append:eq:ub_prob}
    \mathrm{p}(\bomega)=\frac{2^{-\|\bomega\|_0}\Tr(\rho_0 Q_{\bomega})^2}{\nu_f},
    \qquad \text{with}\quad
    \nu_f=\sum_{\bomega\in \{0,\pm 1\}^d}2^{-\|\bomega\|_0}\Tr(\rho_0 Q_{\bomega})^2,
\end{equation}
is the probability distribution over $\bomega$ induced by $f_{O}(\bx)$, as defined in Eq.~\eqref{append:eq:mode-dist}. From the learner's perspective, the quantum computer can generate many circuits with varying layouts and depths, which in turn induce different subsets $\Lambda_{\tau}\subset \{0,\pm 1\}^d$ containing frequencies with large probabilities $\mathrm{p}(\bomega)$. We emphasize that the threshold $\tau$ is a free parameter of the construction rather than an assumption imposed on the circuit, as it recovers the full expansion in the limit $\tau\to 0$. In this regard, analyzing how small $\tau$ must be chosen to attain a prescribed prediction error, and how many frequencies are consequently retained, is precisely what allows us to identify the quantity governing the classical learnability for the circuit family $\mathcal{U}$.

Equipped with the hypothesis space $\mathcal{F}_{\tau}$, we now specify how the learner selects a hypothesis from it. For a specified quantum circuit with an unknown but fixed layout, we denote the corresponding target concept by $f^*_O(\bx)$, and its truncated version by $f^*_{\tau}(\bx)$. The classical learner uses the collected training dataset $\mathcal{T}=\{(\bxi,y^{(i)})\}_{i=1}^n$ to infer the truncated target function $f^*_{\tau}(\bx)$. Since $\mathcal{F}_{\tau}$ is a continuous function class, we construct a finite $\eta$-packing net $\mathcal{P}_{\eta}(\mathcal{F}_{\tau},\varrho)$ of $\mathcal{F}_{\tau}$, as detailed in Definition~\ref{append:def:packing}. Here, $\rho$ refers to the distance metric of $\mathcal{F}_{\tau}$ induced from the average prediction, given as 
\begin{equation}\label{append:eq:metric_distance}
    \varrho(f,g):=\left(\mathbb{E}_{\bx\sim\mathbb{D}} \left|f(\bx)-g(\bx)\right|^2 \right)^{\frac{1}{2}}
\end{equation}
In this regard, the learner performs the empirical risk minimization over a discretization set $\mathcal{P}_{\eta}(\mathcal{F}_{\tau},\varrho)$ of $\mathcal{F}_{\tau}$, i.e.,
\begin{equation}\label{append:eq:emp_minimize}
    h_{\mathcal{T}} = \arg\min_{f \in \mathcal{P}_{\eta}(\mathcal{F}_{\tau},\varrho)} \frac{1}{n} \sum_{i=1}^n \left| f(\bxi) - y^{(i)} \right|^2 ,
\end{equation}
where the detailed construction of $\mathcal{P}_{\eta}(\mathcal{F}_{\tau},\varrho)$ and the setting of $\eta$ is formalized in SI.~\ref{append:subsec:ub_theorem}. We will show that when choosing appropriate $\tau$ such that the discrepancy between \emph{truncated} target $f^*_{\tau}(\bx)$ and the \emph{full} target $f^*_{O}(\bx)$ is controlled, i.e.,
\begin{equation}
   \mathbb{E}_{\bx\in [-\pi,\pi]^d} \left|f_O^*(\bx)-f_{\tau}^*(\bx)\right|^2 \le \frac{\epsilon}{4},
\end{equation}
then the learned classical surrogate $h_{\mathcal{T}}$ is able to achieve a small prediction error $\mathsf{R}(h_{\mathcal{T}})\le \epsilon$ for a sufficiently large $n$. In this regard, \DSE\ enters as the quantity that dictates how small $\tau$ must be, and hence how many training examples $n$ are required.

\subsection{Sample-complexity upper bound}\label{append:subsec:ub_theorem}

Based on the formulation in SI.~\ref{append:subsec:ub_formulation}, we provide the proof of Theorem~\ref{append:thm:general_upper_bound}. Following the convention of Ref.~\cite{du2025efficient}, the proof adopts a standard technical approach in statistical learning theory and quantum learning theory. Specifically, the discretized hypothesis space $\mathcal{P}_{\eta}(\mathcal{F}_{\tau},\varrho)$ in Eq.~\eqref{append:eq:emp_minimize} is formalized as a maximal packing net, whose cardinality, i.e., the packing number, characterizes the complexity of the hypothesis space $\mathcal{F}_{\tau}$. The formal definition is as follows.

\begin{definition}[Maximal $\epsilon$-packing net and covering net]\label{append:def:packing}
    Let $(\mathcal{F},\varrho)$ be a metric space and let $\eta >0$, where $\varrho$ is a distance metric defined on the function class $\mathcal{F}$. An $\epsilon$-packing net $\mathcal{P}_{\epsilon}(\mathcal{F},\varrho)$ is a discrete subset of $\mathcal{F}$ such that, for any $f_1, f_2\in \mathcal{P}_{\epsilon}(\mathcal{F},\varrho)$, their distance satisfies $\varrho(f_1, f_2)\ge \epsilon$. It is maximal if no further element in $\mathcal{F}$ can be added while preserving this separation. The packing number is the cardinality of $\mathcal{P}_{\epsilon}(\mathcal{F},\varrho)$. Every maximal $\epsilon$-packing is also an $\epsilon$-covering net, namely for every $f\in\mathcal{F}$, there exists $g\in \mathcal{P}_{\epsilon}(\mathcal{F},\varrho)$ such that $\varrho(f,g)\le \epsilon$.
\end{definition}



Given the packing number, the prediction error $\mathsf{R}(h_{\mathcal{T}})$ is controlled by the following proposition, which was established in prior work for the same learning setting but with a different hypothesis space.

\begin{proposition}[Adapted from Proposition 1, Ref.~\cite{huang2021information}]\label{append:prop:n_upper_bound_packing}
    Suppose that the observable $O=\sum_{i}O_i$ satisfies $\sum_i\|O_i\|_{\infty} \le B=1$, and let $\mathcal{P}_{\eta}(\mathcal{F}_{\tau},\varrho)$ be the maximal $\eta$-packing net of $\mathcal{F}_{\tau}$. Let $h_{\mathcal{T}}$ be an element of $\mathcal{P}_{\eta}(\mathcal{F}_{\tau},\varrho)$ that minimizes the empirical training error in Eq.~\eqref{append:eq:emp_minimize}. If for any $f_O^*\in \mathcal{F}$, there exists $f_{\tau}\in\mathcal{P}_{\eta}(\mathcal{F}_{\tau},\varrho)$ such that $\varrho(f_O^*,f_{\tau}) \le 2\eta$. Then, for $\delta\in (0,1)$ and the size of the training dataset satisfying
    \begin{equation}
        n \ge \frac{38B^2 \log(4|\mathcal{P}_{\eta}(\mathcal{F}_{\tau},\varrho)|/\delta)}{\eta^2},
    \end{equation}
    with probability at least $1-\delta$, we have
    \begin{equation}
        \mathbb{E}_{\bx \sim [-\pi,\pi]^d} \left|h_{\mathcal{T}}(\bx) -f_{O}^*(\bx) \right|^2 \le 12 \eta^2 .
    \end{equation}
\end{proposition}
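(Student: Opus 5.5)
The plan is the standard ``finite class plus union bound'' argument for empirical risk minimization. The one refinement is a variance-sensitive (Bernstein) concentration inequality, which is what yields the $1/\eta^2$ rather than $1/\eta^4$ dependence.

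\textbf{Reference element and excess loss.} Fix the target $f_O^*$ and pick $f_\tau\in\mathcal{P}_{\eta}(\mathcal{F}_{\tau},\varrho)$ with $\varrho(f_O^*,f_\tau)\le 2\eta$; the hypothesis of the proposition guarantees it exists. Call a net element $h$ \emph{bad} if $\varrho(h,f_O^*)^2>12\eta^2$. It suffices to show that, with probability at least $1-\delta$, every bad $h$ has strictly larger empirical loss in Eq.~\eqref{append:eq:emp_minimize} than $f_\tau$. Then the minimizer $h_{\mathcal{T}}$ cannot be bad.

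For a fixed bad $h$, define the i.i.d.\ variables
\begin{equation}
Z_i=\big(h(\bxi)-y^{(i)}\big)^2-\big(f_\tau(\bxi)-y^{(i)}\big)^2=\big(h(\bxi)-f_\tau(\bxi)\big)\big(h(\bxi)+f_\tau(\bxi)-2y^{(i)}\big).
\end{equation}
We use only that the labels are unbiased, $\mathbb{E}[y^{(i)}\mid\bxi]=f_O^*(\bxi)$, and bounded, $|y^{(i)}|\le B$. Conditioning on $\bxi$ gives $\mathbb{E}Z_i=\varrho(h,f_O^*)^2-\varrho(f_\tau,f_O^*)^2$. We also have the bounds $|Z_i|\le 4B|h-f_\tau|\le 8B^2$ and $\mathbb{E}Z_i^2\le 16B^2\,\varrho(h,f_\tau)^2$.

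\textbf{Variance controlled by the mean.} This is the key step. Since $\varrho(f_\tau,f_O^*)\le 2\eta$ while $\varrho(h,f_O^*)>2\sqrt3\,\eta$, the triangle inequality gives $\varrho(h,f_\tau)\le \varrho(h,f_O^*)+2\eta\le c\,\varrho(h,f_O^*)$ for a constant $c$. In the same regime, $\mathbb{E}Z_i\ge \varrho(h,f_O^*)^2-4\eta^2\ge \tfrac{2}{3}\varrho(h,f_O^*)^2\ge 8\eta^2$. Hence $\mathrm{Var}(Z_i)\lesssim B^2\,\mathbb{E}Z_i$: the variance is proportional to the mean.

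\textbf{Concentration and union bound.} Apply Bernstein's inequality to $\frac1n\sum_i Z_i$ to bound $\Pr[\frac1n\sum_iZ_i\le 0]$. The exponent is of order
\begin{equation}
\frac{n(\mathbb{E}Z)^2}{\mathrm{Var}Z+B^2\,\mathbb{E}Z}\gtrsim \frac{n\,\mathbb{E}Z}{B^2}\gtrsim \frac{n\eta^2}{B^2}.
\end{equation}
Requiring this probability to be at most $\delta/(4|\mathcal{P}_\eta|)$ and union-bounding over all net elements gives the claimed condition $n\ge 38B^2\log(4|\mathcal{P}_\eta|/\delta)/\eta^2$, with total failure probability at most $\delta$.

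\textbf{Expected obstacle.} The main care point is tracking explicit constants so that the $38$ and the $12\eta^2$ thresholds come out exactly. The structural argument is routine once the label noise is handled as above. A cruder Hoeffding bound would give the wrong $\eta$-scaling, so Bernstein with the mean-proportional variance is essential.
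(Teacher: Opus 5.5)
Your proof is correct, and it is more complete than what the paper provides. The paper does not prove this proposition itself. It imports Proposition~1 of Ref.~\cite{huang2021information} and argues, in the Remark after the statement, only two things: that the original proof uses the net merely as a finite hypothesis class containing a $2\eta$-approximant of $f_O^*$, and that $m>1$ shots per label are harmless, which it justifies heuristically through the $m\to\infty$ limit.

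You give a self-contained version of that standard finite-class ERM fast-rate argument. You compare each bad net element directly with the approximant $f_\tau$ through the paired loss difference $Z_i$, so you need only a one-sided Bernstein bound and a single union bound. The price is the triangle-inequality step that makes the variance proportional to the mean, and you handle it correctly. Your route buys two things.

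\emph{Arbitrary $m$.} The $(y^{(i)})^2$ terms cancel in $Z_i$, so you use only $\mathbb{E}[y^{(i)}\mid\bxi]=f_O^*(\bxi)$ and $|y^{(i)}|\le B$. This covers every $m\ge 1$ at once and turns the paper's $m>1$ remark into an actual proof.

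\emph{Constants.} These are not really an obstacle. Write $\mu=\mathbb{E}Z_i$ and $D=\varrho(h,f_O^*)>2\sqrt3\,\eta$. Then $\mathrm{Var}(Z_i)\le 16B^2\frac{D+2\eta}{D-2\eta}\,\mu\le 16(2+\sqrt3)B^2\mu$ and $\mu-Z_i\le 12B^2$. Bernstein then gives $\Pr\bigl[\frac1n\sum_i Z_i\le 0\bigr]\le e^{-n\mu/(128B^2)}<e^{-n\eta^2/(16B^2)}$. So the stated $38$ has room to spare, and $12\eta^2$ is simply your chosen badness threshold, not something you must match exactly.

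The one hypothesis you should state explicitly is $\|h\|_\infty\le B$ for every net element, since that is what gives $|Z_i|\le 8B^2$. It holds because $\mathcal{F}_\tau$ is clipped by $g$. It would fail for the unclipped linear classes $\mathcal{F}(\Lambda_\tau)$ that the paper later uses when counting the packing number.
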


\noindent\underline{Remark}. 
Proposition~\ref{append:prop:n_upper_bound_packing} adapts Ref.~\cite[Proposition~1]{huang2021information}, which is stated in terms of a $2\eta$-packing of $\mathcal{F}$, to an $\eta$-packing of the truncated function class $\mathcal{F}_{\tau}$. The proof of the original result uses the packing set only as a finite hypothesis class containing an element that approximates the target function $f_O^*$ within distance $2\eta$. Therefore, the same argument applies to $\mathcal{P}_{\eta}(\mathcal{F}_{\tau},\varrho)$, provided that, for every $f_O^*\in\mathcal{F}$, there exists an $f_{\tau}\in\mathcal{P}_{\eta}(\mathcal{F}_{\tau},\varrho)$ satisfying
$\varrho(f_O^*,f_{\tau})\le 2\eta$.
Moreover, it only focuses on the case of $m=1$, but the results still hold for the case of $m>1$. This is because in the extreme case with $m\to \infty$, the required training data size is reduced to $n\ge 38\log(2\mathcal{P}_{\eta}(\mathcal{F}_{\tau}, \varrho)/\delta)/\epsilon$ \cite[Equation (C66)]{huang2021information}
In this regard, the setting of $m \ge 1$ only trivially influences the sample complexity bound (at most logarithmically). For this reason, we omit the relevant analysis in our proof.


\smallskip

To satisfy the approximation condition $\varrho(f_O^*,f_{\tau})\le 2\eta$ required by Proposition~\ref{append:prop:n_upper_bound_packing}, we apply the triangle inequality
\begin{equation}
    \varrho(f_O^*,f_{\tau}) \le \varrho(f_O^*,f_{\tau}^*) + \varrho(f_{\tau}^*,f_{\tau}).
\end{equation}
It therefore suffices to show that there exists a function $f_{\tau}^*\in \mathcal{F}_{\tau}$ such that $\varrho(f_O^*,f_{\tau}^*)\le \eta$ and  $\varrho(f_{\tau}^*,f_{\tau}) \le \eta$. By the definition of the maximal $\eta$-packing $\mathcal{P}_{\eta}(\mathcal{F}_{\tau},\varrho)$, we have that for any $f_{\tau}^*$ there exists $f_{\tau}\in\mathcal{P}_{\eta}(\mathcal{F}_{\tau},\varrho)$ satisfying $\varrho(f_{\tau}^*,f_{\tau}) \le \eta$, as $\mathcal{P}_{\eta}(\mathcal{F}_{\tau},\varrho)$ is also a $\eta$-covering net of $\mathcal{F}_{\tau}$. Meanwhile, $\varrho(f_O^*,f_{\tau}^*)$ refers to the truncation error, which is controlled by choosing an appropriate threshold $\tau$, as stated in the following lemma, whose proof is deferred to SI.~\ref{append:subsec:ub_lemma}.

\begin{lemma}\label{lem:est_err_sse_bound_1_order}
    Let $\mathrm{p}(\bomega)=2^{-\|\bomega\|_0}\Tr(\rho_0 Q_{\bomega})^2/\nu_f$ be the probability distribution induced by the circuit expectation $f_O(\bx)$ over the frequencies $\bomega\in \{0,\pm 1\}^d$. Consider the set of retained frequencies $\Lambda_{\tau}=\{\bomega\in \{0,\pm 1\}^d:\mathrm{p}(\bomega)\ge \tau \}$ with $\tau \in (0,1)$, and let $\mathcal{M}^{(1)}[O(\bx;U)]$ be the first-order \DSE\ defined in Eq.~\eqref{append:eq:SSE}. Setting
    \begin{equation}\label{append:eq:tau_choice}
        \tau=2^{-\nu_f \mathcal{M}^{(1)}[O(\bx;U)]/\eta^2},
    \end{equation}
    we have
    \begin{equation}
        \varrho(f_O^*,f_{\tau}^*):=\sqrt{\mathbb{E}_{\bx\in [-\pi,\pi]^d} \left|f_O^*(\bx)-f_{\tau}^*(\bx)\right|^2} \le \eta.
    \end{equation}
\end{lemma}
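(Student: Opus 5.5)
The plan is to use orthogonality of the trigonometric monomials under the uniform distribution, which turns the truncation error into a tail mass of $\mathrm{p}$. A Markov-type argument on the surprisal $-\log \mathrm{p}(\bomega)$ then controls that tail by the Shannon entropy, and hence by the first-order \DSE.

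\textbf{Step 1: reduce the error to a tail sum.} For $\bx\sim\Unif[-\pi,\pi]^d$ the monomials $\Phi_{\bomega}$ are pairwise orthogonal, with $\mathbb{E}\,\Phi_{\bomega}(\bx)^2=2^{-\|\bomega\|_0}$. Since $|f_O^*|\le 1$ and the clip $g$ is $1$-Lipschitz with $g(f_O^*)=f_O^*$, we have
\begin{equation}
    |f_O^*(\bx)-f_\tau^*(\bx)| = \bigl|g(f_O^*(\bx))-g(\textstyle\sum_{\bomega\in\Lambda_\tau}\Phi_{\bomega}(\bx)\Tr(\rho_0 Q_{\bomega}))\bigr| \le \bigl|\textstyle\sum_{\bomega\notin\Lambda_\tau}\Phi_{\bomega}(\bx)\Tr(\rho_0 Q_{\bomega})\bigr| .
\end{equation}
Squaring, averaging, and applying Parseval gives
\begin{equation}
    \varrho(f_O^*,f_\tau^*)^2\le \sum_{\bomega\notin\Lambda_\tau}2^{-\|\bomega\|_0}\Tr(\rho_0 Q_{\bomega})^2=\nu_f\sum_{\bomega:\,\mathrm{p}(\bomega)<\tau}\mathrm{p}(\bomega).
\end{equation}

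\textbf{Step 2: bound the tail by the entropy.} Every $\bomega$ with $\mathrm{p}(\bomega)<\tau$ has surprisal $-\log\mathrm{p}(\bomega)>\log(1/\tau)$. Markov's inequality then gives
\begin{equation}
    \sum_{\mathrm{p}(\bomega)<\tau}\mathrm{p}(\bomega)\le \frac{\sum_{\bomega}\mathrm{p}(\bomega)\log(1/\mathrm{p}(\bomega))}{\log(1/\tau)}=\frac{\mathcal{S}^{(1)}[O(\bx;U)]}{\log(1/\tau)}.
\end{equation}
Here $\mathcal{S}^{(1)}$ is the Shannon entropy of the mode distribution at $V=\mathbb{I}$, so $\mathcal{S}^{(1)}\le\mathcal{M}^{(1)}[O(\bx;U)]$ because \DSE maximizes over $V$. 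Substituting $\tau=2^{-\nu_f\mathcal{M}^{(1)}/\eta^2}$ gives $\log(1/\tau)=\nu_f\mathcal{M}^{(1)}/\eta^2$. Hence $\varrho^2\le \nu_f\cdot \mathcal{M}^{(1)}\eta^2/(\nu_f\mathcal{M}^{(1)})=\eta^2$. If $\mathcal{M}^{(1)}=0$, then $\tau=1$ formally; in that case the distribution is a point mass, $\Lambda_\tau$ contains that point, and the error is zero.

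\textbf{Main obstacle.} The delicate point is checking that the distribution $\mathrm{p}$ in Eq.~\eqref{append:eq:ub_prob} coincides with the one entering the definition of \DSE, namely with $V=\mathbb{I}$ and $\mathbb{D}$ uniform. This requires $\tilde{\mathrm{p}}(\bomega)=2^{-\|\bomega\|_0}\Tr(\rho_0Q_{\bomega})^2$ and $\sum_{\bomega}\tilde{\mathrm{p}}(\bomega)=\nu_f$, both of which follow from orthogonality. It also requires that the clipping step does not increase the error, which holds by the Lipschitz argument in Step 1.
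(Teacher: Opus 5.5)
Your proof is correct and follows essentially the same route as the paper. Both arguments use the non-expansiveness of the clip together with orthogonality to reduce the squared error to $\nu_f\sum_{\mathrm{p}(\bomega)<\tau}\mathrm{p}(\bomega)$, and then apply Markov's inequality to the surprisal $-\log\mathrm{p}(\bomega)$ to get the bound $\mathcal{S}^{(1)}/\log(1/\tau)\le\mathcal{M}^{(1)}/\log(1/\tau)$. Your explicit handling of the degenerate case $\mathcal{M}^{(1)}=0$, where the prescribed $\tau=1$ falls outside $(0,1)$, is a small addition the paper omits.
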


Supported by Lemma~\ref{lem:est_err_sse_bound_1_order} and Proposition~\ref{append:prop:n_upper_bound_packing}, the proof of Theorem~\ref{append:thm:general_upper_bound} reduces to upper bounding the packing number $|\mathcal{P}_{\eta}(\mathcal{F}_{\tau},\varrho)|$. Unlike the analysis in Ref.~\cite{du2025efficient}, which bounds the packing number by enumerating circuits consisting of a specified number of Clifford gates and $\RZ$ gates, we consider the truncated hypothesis space $\mathcal{F}_{\tau}$ in Eq.~\eqref{append:eq:hypothesis_space}. The packing number of this space is determined by two quantities: the feature dimension $|\Lambda_{\tau}|$ associated with the feature map $[\Phi_{\bomega}(\bx)]_{\bomega \in \Lambda_{\tau}}$, and the number of possible subsets $\Lambda_{\tau}\subset \Lambda$ related to various circuit layouts for the threshold $\tau$ specified in Lemma~\ref{lem:est_err_sse_bound_1_order}.

\begin{proof}[Proof of Theorem~\ref{append:thm:general_upper_bound}]
    As discussed above, this proof is composed of two parts. We first upper bound the packing number $|\mathcal{P}_{\eta}(\mathcal{F}_{\tau},\varrho)|$ with setting $\eta^2=\epsilon/12$, and then apply Proposition~\ref{append:prop:n_upper_bound_packing} to derive the sample-complexity upper bound required to achieve the $\epsilon$-prediction error $\mathsf{R}(h)\le \epsilon$.

    \smallskip
    \noindent\underline{Upper bound on the packing number}. The packing number $|\mathcal{P}_{\eta}(\mathcal{F}_{\tau},\varrho)|$ is determined by two components: the total number of possible subsets $\Lambda_{\tau}$, and, for each fixed $\Lambda_{\tau}$, the packing number of the corresponding class of functions $f_{\tau}$. In particular, for every admissible support $\Lambda_{\tau} \subset \{0,\pm 1\}^d$, choose a maximal $\eta$-packing $\mathcal{P}_{\eta}(\mathcal{F}(\Lambda_{\tau}),\varrho)$, where the hypothesis space $\mathcal{F}(\Lambda_{\tau})$ refers to an $|\Lambda_{\tau}|$-dimensional linear function class over the fixed feature map $[\Phi_{\bomega}(\bx)]_{\bomega\in\Lambda_{\tau}}$, namely
    \begin{equation}\label{append:eq:union_packing}
        \mathcal{F}(\Lambda_{\tau})=
        \left\{ f_{\tau}'(\bx)=\sum_{\bomega\in \Lambda_{\tau}} \Phi_{\bomega}(\bx)\Tr(\rho_0 Q_{\bomega}) \bigg| [\Tr(\rho_0 Q_{\bomega})]_{\bomega\in \Lambda_{\tau}}:\sum_{\bomega\in \Lambda^{\tau}}2^{-\|\bomega\|_0}\Tr(\rho_0 Q_{\bomega})^2\le 1\right\}.
    \end{equation}
    By Definition~\ref{append:def:packing}, the maximal $\eta$-packing of $\mathcal{F}_{\tau}$ yields,
    \begin{equation}
        |\mathcal{P}_{\eta}(\mathcal{F}_{\tau},\varrho)| \le \left|  \bigcup_{\Lambda_{\tau} \subset \{0,\pm 1\}^d} \mathcal{P}_{\eta}(\mathcal{F}(\Lambda_{\tau}),\varrho) \right|.
    \end{equation}

    To this end, we first count the possible subsets $\Lambda_{\tau}\in \{0,\pm 1\}^d$. We note that, for a fixed threshold $\tau$, the cardinality of $\Lambda_{\tau}$, or equivalently the feature dimension of $f_{\tau}$, satisfies
    \begin{equation}\label{append:eq:Lambda_tau_size}
        |\Lambda_{\tau}|
        = \sum_{\bomega\in\Lambda_{\tau}}1
        \le \frac{1}{\tau} \sum_{\bomega\in\Lambda_{\tau}} \mathrm{p}(\bomega)
        \le \left\lceil {\tau}^{-1} \right \rceil,
    \end{equation}
    where the first inequality follows from the definition of $\Lambda_{\tau}$, namely $\mathrm{p}(\bomega)\ge \tau$ for every $\bomega\in\Lambda_{\tau}$, and the second inequality follows from $\sum_{\bomega\in\Lambda_{\tau}}\mathrm{p}(\bomega)\le 1$. Since $\Lambda=\{0,\pm1\}^d$ has cardinality $3^d$, the total number of possible subsets $\Lambda_{\tau}$ with cardinality $|\Lambda_{\tau}|$ is precisely the number of ways of choosing $|\Lambda_{\tau}|$ elements from the set $\Lambda$, namely
    \begin{equation}\label{append:eq:all_Omega_tau}
       \binom{3^d}{|\Lambda_{\tau}|}  \le \binom{3^d}{\left\lceil  {\tau}^{-1} \right \rceil}
        \le \left( \frac{e 3^d}{ \left\lceil {\tau}^{-1} \right \rceil} \right)^{ \left\lceil {\tau}^{-1} \right \rceil},
    \end{equation}
    where the first equality follows the nondecreasing property of $\binom{3^d}{k}$ by restricting $\tau\ge 2/3^d$, and the second inequality follows from the standard binomial bound $\binom{a}{b}\le (ea/b)^b$.

    We next bound the packing number of $\mathcal{F}(\Lambda_{\tau})$. Define the normalized features and weighted coefficients
    \begin{equation}
        \Psi_{\bomega}(\bx):=2^{\|\bomega\|_0/2} \Phi_{\bomega}(\bx), \qquad \bm{c}_{\bomega}=2^{-\|\bomega\|_0/2}\Tr(\rho_0 Q_{\bomega})
    \end{equation}
    The orthogonality relation $\mathbb{E}_{\bx\sim [-\pi,\pi]^d} \Phi_{\bomega}(\bx) \Phi_{\bomega'}(\bx)=\delta_{\bomega,\bomega'} 2^{-\|\bomega\|_0}$ implies the orthonormality of $\{\Psi_{\bomega}(\bx)\}_{\bomega\in\Lambda_{\tau}}$ and that $f_{\tau}(\bx)=\sum_{\bomega\in\Lambda_{\tau}} \bm{c}_{\bomega} \Psi_{\bomega}(\bx)$. Consequently,
    \begin{equation}
        \|\bm{c}\|_2^2= \sum_{\bomega\in\Lambda_{\tau}} 2^{-\|\bomega\|_0}\Tr(\rho_0 Q_{\bomega})^2= \mathbb{E}_{\bx\sim [-\pi,\pi]^d} |f_{\tau}(\bx)|^2 \le \mathbb{E}_{\bx\sim [-\pi,\pi]^d} |f_{O}(\bx)|^2=\nu_f \le 1, 
    \end{equation}
    where the first inequality follows because $f_{\tau}$ is the orthogonal projection of $f_O$ onto the retained modes $\bomega\in\Lambda_{\tau}$, while the
    last inequality follows from $|f_O(\bx)| \le \|O\|_{\infty} \le \sum_{i}\|O_i\|_{\infty}\le 1$. Thus, the weighted-coefficient representation of $f_{\tau}'$ regarding the feature $\Psi_{\bomega}(\bx)$ embeds $\mathcal{F}(\Lambda_{\tau})$ isometrically into the unit Euclidean ball in $\mathbb{R}^{|\Lambda_{\tau}|}$ under the metric $\varrho$. Employing the standard volumetric estimate~\cite[Proposition 4.2.10]{vershynin2018high}, we obtain
    \begin{equation}\label{append:eq:packing_each_Omega_tau}
        |\mathcal{P}_{\eta}(\mathcal{F}(\Lambda_{\tau}),\varrho)|\le \left( \frac{2\|\bm{c}\|_2}{\eta} + 1\right)^{|\Lambda_{\tau}|}
        \le \left( \frac{2}{\eta} + 1\right)^{\left\lceil {\tau}^{-1} \right \rceil}.
    \end{equation}
    In conjunction with Eqs.~\eqref{append:eq:union_packing},~\eqref{append:eq:all_Omega_tau} and \eqref{append:eq:packing_each_Omega_tau}, the maximal $\eta$-packing number of the hypothesis space $\mathcal{F}_{\tau}$ yields
    \begin{equation}\label{append:eq:packing_total}
        |\mathcal{P}_{\epsilon}(\mathcal{F}_{\tau},\varrho)|
        \le
        \left( \frac{e 3^d}{ \left\lceil {\tau}^{-1} \right \rceil} \right)^{ \left\lceil {\tau}^{-1} \right \rceil}
        \cdot
        \left( \frac{2}{\eta} + 1\right)^{\left\lceil {\tau}^{-1} \right \rceil}.
    \end{equation}

    \smallskip
    \noindent\underline{Upper bound on the sample complexity}. According to Lemma~\ref{lem:est_err_sse_bound_1_order}, to achieve the desired prediction error it suffices to set the threshold as $\tau=2^{-\nu_f \mathcal{M}^{(1)}[O(\bx;U)]/\eta^2}$ in Eq.~\eqref{append:eq:tau_choice}. Substituting this choice into Eq.~\eqref{append:eq:packing_total}, the metric entropy of the packing net satisfies
    \begin{align}
        & \log\left(4 |\mathcal{P}_{\eta}(\mathcal{F}_{\tau},\varrho)|/\delta \right)
        \nonumber \\
        \le ~
        & \log \left( \frac{e 3^d}{ \left\lceil {\tau}^{-1} \right \rceil} \right)^{ \left\lceil {\tau}^{-1} \right \rceil}
        + \log \left( \frac{2}{\eta} + 1\right)^{\left\lceil {\tau}^{-1} \right \rceil}
        + \log\left( \frac{4}{\delta}\right)
        \nonumber \\
        \le ~
        &  \left\lceil 2^{\nu_f \mathcal{M}^{(1)}[O(\bx;U)]/\eta^2} \right \rceil
        \left[
        \log \left( \frac{e 3^d}{ \left\lceil {\tau}^{-1} \right \rceil} \right)
        + \log \left( \frac{2}{\eta} + 1\right)
        \right]
        + \log\left( \frac{4}{\delta}\right)
        \nonumber \\
        \le ~
        &  \tilde{\mathcal{O}} \left( 2^{\nu_f \mathcal{M}^{(1)}[O(\bx;U)]/\eta^2} \cdot d  \right).
    \end{align}
    Therefore, applying  Proposition~\ref{append:prop:n_upper_bound_packing} with $\eta^2 = \epsilon/12$ and $\mathcal{M}^{(1)}[O(\bx;U)]\le \mathcal{M}_{\lambda}$ for any $U\in\mathcal{U}_{\lambda}$ implies that the sample complexity satisfies
    \begin{equation}
        n = \tilde{\mathcal{O}} \left( \frac{2^{12\nu_f \mathcal{M}_{\lambda}/\epsilon} \cdot d}{\epsilon}  \right),
    \end{equation}
    In particular, with probability at least $1 - \delta$, Proposition~\ref{append:prop:n_upper_bound_packing} gives the prediction error bound  $\mathsf{R}(h_{\mathcal{T}}) \le  12\eta^2 =\epsilon$.
\end{proof}

\subsection{Proof of Lemma~\ref{lem:est_err_sse_bound_1_order}}\label{append:subsec:ub_lemma}

\begin{proof}[Proof of Lemma~\ref{lem:est_err_sse_bound_1_order}]
    The main idea of this proof is to relate the truncation error to the tail probability of the induced distribution $\mathrm{p}(\bomega)$, and then control this tail probability by applying Markov's inequality to the random variable $-\log(\mathrm{p}(\bomega))$, whose expectation is precisely the first-order \DSE. In particular, we first note that
    \begin{align}\label{append:eq:trunc_err_tail}
         \mathbb{E}_{\bx\in [-\pi,\pi]^d} \left|f_O(\bx)-f_{\tau}(\bx)\right|^2 \le & \mathbb{E}_{\bx\in [-\pi,\pi]^d} \left|\sum_{\bomega:\mathrm{p}(\bomega)<\tau} \Phi_{\bomega}(\bx)\Tr(\rho_0 Q_{\bomega}) \right|^2
         \nonumber \\
         = & \mathbb{E}_{\bx\in [-\pi,\pi]^d} \sum_{\bomega:\mathrm{p}(\bomega)<\tau} \Phi_{\bomega}(\bx)^2\Tr(\rho_0 Q_{\bomega})^2
         \nonumber \\
         = & \nu_f \sum_{\bomega:\mathrm{p}(\bomega)<\tau} \mathrm{p}(\bomega),
    \end{align}
    where the first inequality follows the property of the function $g(z)=\min\{1,\max\{-1,z\}\}$ that $|g(z)-y|\le |z-y|$ holds for any $|y|\le \|O\|_{\infty}$ and $z$, the first equality employs the orthogonality of the trigonometric basis $\Phi_{\bomega}(\bx)$, and the last equality follows from the definition of $\mathrm{p}(\bomega)$ in Eq.~\eqref{append:eq:ub_prob}.

    Now, we show how to control the tail probability using \DSE. The upper bound of the truncation error can then be derived by applying Markov's inequality to the distribution $\mathrm{p}(\bomega)$. In particular, let $X:=-\log(\mathrm{p}(\bomega))$, where $\bomega \in \{0,\pm 1\}^{d}$ follows the distribution $\mathrm{p}(\bomega)$. Its expectation is exactly the first-order entropy, i.e.,
    \begin{equation}
        \mathbb{E}[X]=\sum_{\bomega} \mathrm{p}(\bomega) \log\left(\frac{1}{\mathrm{p}(\bomega)} \right) = \mathcal{S}^{(1)}[O(\bx;U)].
    \end{equation}
    Hence, employing Markov's inequality, we obtain that for any $a>0$,
    \begin{equation}
        \Pr_{\bomega \sim \mathrm{p}}\big(-\log(\mathrm{p}(\bomega))\ge a\big) \le \frac{\mathcal{S}^{(1)}[O(\bx;U)]}{a}
        ~~\Longleftrightarrow~~
        \Pr_{\bomega \sim \mathrm{p}}\big(\mathrm{p}(\bomega)\le \tau\big) \le \frac{\mathcal{S}^{(1)}[O(\bx;U)]}{\log(1/\tau)},
    \end{equation}
    where the right-hand side follows by setting $a=\log(1/\tau)$.

    Substituting the above tail bound into Eq.~\eqref{append:eq:trunc_err_tail} and letting $\tau=2^{-\nu_f \mathcal{M}^{(1)}[O(\bx;U)]/\eta^2}$, we arrive at
    \begin{equation}
        \nu_f \sum_{\bomega:\mathrm{p}(\bomega)<\tau} \mathrm{p}(\bomega)
        = \nu_f\cdot \Pr_{\bomega \sim \mathrm{p}}\big(\mathrm{p}(\bomega)\le \tau\big)
        \le \nu_f\cdot \frac{\mathcal{S}^{(1)}[O(\bx;U)]}{\log(1/\tau)}
        \le \nu_f\cdot \frac{\mathcal{M}^{(1)}[O(\bx;U)]}{\log(1/\tau)}
        \le \eta^2,
    \end{equation}
    where the second inequality follows from the definition of $\mathcal{M}^{(1)}[O(\bx;U)]$ as the maximum of $\mathcal{S}^{(1)}$ over the fixed unitaries, and the last inequality employs $\log(1/\tau)=\nu_f \mathcal{M}^{(1)}[O(\bx;U)]/\eta^2$. This completes the proof.
\end{proof}

\section{\DSE-dependent lower bound of sample complexity---proof of Theorem~\ref{append:thm:lower_bound_complexity}}\label{append:sec:dse_lower_bound}

This section aims to derive the lower bound on the sample complexity of classically learning the function class $\mathcal{F}_{\mathcal{U}_{\lambda}}$ with a bounded \DSE $\mathcal{M}_{\lambda}$ in Theorem~\ref{append:thm:lower_bound_complexity}. In particular, we prove Theorem~\ref{append:thm:lower_bound_complexity} by means of an information-theoretic analysis. Conceptually, the core technical tool is Fano's method, which is widely used for analyzing the lower bound of sample complexity in statistical learning theory and quantum learning theory. We follow the conventions of our prior work \cite{wang2024transition} to organize the proof in the following section. In SI.~\ref{append:sec:learning_problem}, we first formulate the learning problem by constructing a discrete subset of the function class $\tilde{\mathcal{F}}=\{f_{\bm{b}}(\bx)\}_{\bm{b}}\subset \mathcal{F}_{\mathcal{U}_{\lambda}}$ that saturates the bounded \DSE $\mathcal{M}_{\lambda}$. Then in SI.~\ref{append:subsec:reduce_learning_problem}, we reduce the learning problem to a hypothesis testing problem over a packing net $\{\bm{b}_1,\cdots,\bm{b}_M\}$ of the target function class. Under this reduction, constructing a classical surrogate amounts to identifying the target index $\bm{b}^*$ from the packing net, with the estimated index denoted by $\hat{\bm{b}}$. Finally, we apply Fano's method to derive a sample-complexity lower bound by bounding the key quantities in the resulting hypothesis testing problem, namely the upper bound on the mutual information, including the upper bound of mutual information $I(\hat{\bm{b}};\bm{b}^*)$ in SI.~\ref{append:subsec:packing_number_LB} and the lower bound of packing number $M$ in SI.~\ref{append:subsec:MI_UB}.

    \subsection{Learning problem formulation}\label{append:sec:learning_problem}
    To reach Theorem~\ref{append:thm:lower_bound_complexity}, we constitute a class of quantum circuits $\tilde{\mathcal{U}}\subset \mathcal{U}_{\lambda}$ whose maximal \DSE is $\mathcal{M}_{\lambda}$ satisfying $1\le \mathcal{M}_{\lambda}\le \min\{d,N/2\}$ and analyze the lower bound of the required sample complexity for learning the function class $\mathcal{F}_{\tilde{\mathcal{U}}}$ in $\epsilon$-prediction error. For convenience, we denote $\lfloor \mathcal{M}_{\lambda} \rfloor$ as the round down to the nearest integer of $\mathcal{M}_{\lambda}$. Considering the observable $O=X_1$ and the input state $\rho_{0}=\ket{0}\bra{0}^{\otimes N}$, we construct a quantum circuit class $\tilde{\mathcal{U}}=\{U_{\bm{b}}(\bx)\}_{\bm{b}}$ indexed by a $2^{\lfloor \mathcal{M}_{\lambda} \rfloor}$-dimensional sign vector $\bm{b}\in \{-1,1\}^{2^{\lfloor \mathcal{M}_{\lambda} \rfloor}}$, such that the related function class is given by\begin{equation}\label{append:eq:bit_target_function}
        \mathcal{F}_{\tilde{\mathcal{U}}}=\left\{f_{\bm{b}}(\bx)=\Tr(U_{\bm{b}}(\bx)^{\dagger}OU_{\bm{b}}(\bx)\rho_0)=\frac{1}{2^{\lfloor \mathcal{M}_{\lambda} \rfloor}} \sum_{\bomega\in \{-1,1\}^{\lfloor \mathcal{M}_{\lambda} \rfloor}} \bm{b}_{\bomega} \cdot \Phi_{\bomega}(\bx)~\bigg|~\bm{b}\in \{-1,1\}^{2^{\lfloor \mathcal{M}_{\lambda} \rfloor}}\right\},
    \end{equation}
    where the $2^{\lfloor \mathcal{M}_{\lambda} \rfloor}$-dimensional sign vector $\bm{b}=[\bm{b}_{\bomega}]_{\bomega\in\{-1,1\}^{\lfloor \mathcal{M}_{\lambda} \rfloor}} \in \{-1,1\}^{2^{\lfloor \mathcal{M}_{\lambda} \rfloor}}$ determine the construction of fixed gates in the circuit $U_{\bm{b}}(\bx)$.  To this end, we consider the $N$-qubit quantum circuit containing of $d$ rotation gates $\RZ$ and $G-d$ fixed gates, which is given as follows
    \begin{equation}\label{append:eq:U_b_x}
        U_{\bm{b}}(\bx)=W(\bx) \cdot  V_{\bm{b}}:=\prod_{i=2}^{\lfloor \mathcal{M}_{\lambda} \rfloor}\CNOT_{1,i} \cdot \prod_{i=1}^{d} S_i^{\dagger} \RZ_i(\bx) \cdot V_{\bm{b}} , 
    \end{equation}
    where $\CNOT_{1,i}$ is the control-X gate with control and target qubit indices being $1$ and $i$, $S_i^{\dagger}$ and $\RZ_i(\bx)$ represent the Phase gate and rotation gate acting on the $i$-th qubit, $V_{\bm{b}}$ will be detailed later. We first note that applying $W(\bx)$ on the observable $O=X_1$ yields
    \begin{align}
            W(\bx)^{\dagger} X_1 W(\bx) = & \left(  \prod_{i=1}^{d} S_i^{\dagger} \RZ_i(\bx)\right)^{\dagger} \cdot \left( \prod_{i=2}^{\lfloor \mathcal{M}_{\lambda} \rfloor}\CNOT_{1,i} X_1 \prod_{i=2}^{\lfloor \mathcal{M}_{\lambda} \rfloor}\CNOT_{1,i} \right)\cdot \prod_{i=1}^{d} S_i^{\dagger} \RZ_i(\bx)
            \nonumber \\
            = &  \prod_{i=1}^{\lfloor \mathcal{M}_{\lambda} \rfloor} \left( \RZ_i^{\dagger}(\bx)S_i \cdot X_i \cdot S_i^{\dagger} \RZ_i(\bx) \right)
            \nonumber \\
            = &  \sum_{\bomega\in \{1,-1\}^{\lfloor \mathcal{M}_{\lambda} \rfloor}} \Phi_{\bomega}(\bx) Q_{\bomega},
        \end{align}
    where the equalities can be obtained by using the direct transformation of applying the elementary Clifford gate and rotation gates to the Pauli terms. Here, $Q_{\bomega}=\bigotimes_{i=1}^{\lfloor \mathcal{M}_{\lambda} \rfloor} P_{i,\bomega_{i}}$ and $P_{i,\bomega_{i}}=\mathbbm{1}_{\bomega_{i}=-1}\cdot X_{i} + \mathbbm{1}_{\bomega_{i}=1}\cdot Y_{i}$ refers to the Pauli operator acting on the $i$-th qubit. Moreover, we consider the fixed gate blocks $V_{\bm{b}}$ indexed by $\bm{b}$ such that applying $V_{\bm{b}}$ to the input state $\ket{0}^{\otimes N}$,
    \begin{equation}
        \rho_{\bm{b}}:=\Tr_{\lfloor \mathcal{M}_{\lambda} \rfloor+1:N}\left[V_{\bm{b}}\ket{0}\bra{0}^{\otimes N} V_{\bm{b}}^{\dagger} \right] = \frac{1}{2^{\lfloor \mathcal{M}_{\lambda} \rfloor}}\cdot \left( \mathbb{I}_{2^{\lfloor \mathcal{M}_{\lambda} \rfloor}} + \frac{1}{2^{\lfloor \mathcal{M}_{\lambda} \rfloor}} \sum_{\bomega\in \{-1,1\}^{\lfloor \mathcal{M}_{\lambda} \rfloor}}\bm{b}_{\bomega} Q_{\bomega} \right).
    \end{equation}
    One can directly check that $\rho_{\bm{b}}$ is a valid mixed state as a Hermitian, positive semi-definite matrix with unit trace $\Tr(\rho_{\bm{b}})=1$. More importantly, we have
    \begin{align}\label{append:eq:tr_equal_fbx}
        \Tr\left[\rho_0{U}_{\bm{b}}(\bx)^{\dagger} X_1 {U}_{\bm{b}}(\bx)\right]
        = & \frac{1}{2^{\lfloor \mathcal{M}_{\lambda} \rfloor}}\cdot \Tr\left[\left(\mathbb{I}_{2^{\lfloor \mathcal{M}_{\lambda} \rfloor}} + \frac{1}{2^{\lfloor \mathcal{M}_{\lambda} \rfloor}} \sum_{\bomega\in \{-1,1\}^{\lfloor \mathcal{M}_{\lambda} \rfloor}}\bm{b}_{\bomega} Q_{\bomega} \right) \cdot  \sum_{\bomega\in \{1,-1\}^{\lfloor \mathcal{M}_{\lambda} \rfloor}} \Phi_{\bomega}(\bx) Q_{\bomega}\right]
        \nonumber \\
        = & \frac{1}{2^{\lfloor \mathcal{M}_{\lambda} \rfloor}}\cdot \frac{1}{2^{\lfloor \mathcal{M}_{\lambda} \rfloor}}\cdot  \sum_{\bomega\in \{-1,1\}^{\lfloor \mathcal{M}_{\lambda} \rfloor}} \sum_{\bomega'\in \{-1,1\}^{\lfloor \mathcal{M}_{\lambda} \rfloor}} \bm{b}_{\bomega} \Phi_{\bomega'}(\bx) \Tr\left(Q_{\bomega} Q_{\bomega'} \right)
        \nonumber \\
        = & \frac{1}{2^{\lfloor \mathcal{M}_{\lambda} \rfloor}}\cdot  \sum_{\bomega\in \{-1,1\}^{\lfloor \mathcal{M}_{\lambda} \rfloor}}  \bm{b}_{\bomega} \Phi_{\bomega}(\bx)
        \nonumber \\
        = & f_{\bm{b}}(\bx),
    \end{align}
    where the third equality follows that $Q_{\bomega} \ne Q_{\bomega'}\in \{\mathbb{I},X,Y,Z\}^{\lfloor \mathcal{M}_{\lambda} \rfloor}$ for $\bomega\ne\bomega' $ and hence $\Tr(Q_{\bomega}Q_{\bomega'})=2^{\lfloor \mathcal{M}_{\lambda} \rfloor}\cdot \delta_{\bomega,\bomega'}$. Moreover, employing the orthogonality of trigonometric basis $\Phi_{\bomega}(\bx)$, we have
    \begin{equation}\label{append:eq:p0_scaling_LB}
        \nu_f = \mathbb{E}_{\bx\sim [-\pi,\pi]^d} f_{\bm{b}}(\bx)^2 = \frac{1}{2^{2\lfloor \mathcal{M}_{\lambda} \rfloor}} \sum_{\bomega\in \{-1,1\}^{\lfloor \mathcal{M}_{\lambda} \rfloor}}  \mathbb{E}_{\bx\sim [-\pi,\pi]^d}  \Phi_{\bomega}(\bx)^2 = \frac{1}{2^{2\lfloor \mathcal{M}_{\lambda} \rfloor}}
    \end{equation}
    By construction, the expectation value $\Tr(\rho_{0}{U}_{\bm{b}}(\bx)^{\dagger} X_1 {U}_{\bm{b}}(\bx))$ exactly reproduces the function $f_{\bm{b}}(\bx)$ defined in Eq.~\eqref{append:eq:bit_target_function}. As a result, each choice of $\bm{b}\in \{-1,1\}^{2^{\lfloor \mathcal{M}_{\lambda} \rfloor}}$ yields a target function $f_{\bm{b}}(\bx)$ with a varying $d$-dimensional input vector $\bx\in [-\pi,\pi]^d$. Being able to accurately estimate the sign vector $\bm{b}$ will be equivalent to accurately learning this function. 

    \medskip
    
    \noindent \underline{Remark.} Let $k=\lfloor\mathcal{M}_{\lambda}\rfloor$. Although an explicit gate-by-gate decomposition of $V_{\bm b}$ is unnecessary for the information-theoretic lower bound, its existence and a conservative gate-count upper bound can be made explicit. The $k$-qubit state $\rho_{\bm b}$ has rank $r_{\bm b}\le 2^k$ and therefore admits a purification on $k+\lceil\log_2 r_{\bm b}\rceil\le 2k$ qubits. The condition $\mathcal{M}_{\lambda}\le N/2$ guarantees $N\ge 2k$, so one may choose such a purification $\ket{\Psi_{\bm b}}$ and extend the map $\ket{0}^{\otimes N}\mapsto\ket{\Psi_{\bm b}}\otimes\ket{0}^{\otimes(N-2k)}$ to an $N$-qubit unitary $V_{\bm b}$. Standard universal state-preparation decompositions prepare an arbitrary state on $2k$ qubits using $\mathcal{O}(2^{2k})=\mathcal{O}(4^k)$ one- and two-qubit gates~\cite{plesch2011quantum}. Together with the $\mathcal{O}(d+k)$ gates in $W(\bx)$, this gives the conservative bound $G=\mathcal{O}\!\left(d+k+4^k\right)   =\widetilde{\mathcal{O}}\!\left(d+4^{\lfloor\mathcal{M}_{\lambda}\rfloor}\right).
    $

    Given these formulations, we can immediately derive the probability distribution $\mathrm{p}(\bomega)$ induced by the observable expectation function $f_{\bm{b}}(\bx)$ over the uniform distribution $\bx\in[-\pi,\pi]^d$ and its \DSE. The results are encapsulated in the following lemma.

    \begin{lemma}\label{append:lem:qc_ld_ose}
        Following notations defined in Eqs.~\eqref{append:eq:bit_target_function}-\eqref{append:eq:tr_equal_fbx} with the definition of
        $f_{\bm{b}}(\bx)=2^{-\lfloor \mathcal{M}_{\lambda} \rfloor}\sum_{\bomega} \bm{b}_{\bomega} \cdot \Phi_{\bomega}(\bx)$,
        where $\bm{b}=[\bm{b}_{\bomega}]_{\bomega}\in \{-1,1\}^{2^{\lfloor \mathcal{M}_{\lambda} \rfloor}}$ refers to a $2^{\lfloor \mathcal{M}_{\lambda} \rfloor}$-dimensional vector.
        The probability distribution $\mathrm{p}(\bomega)$ induced by  $f_{\bm{b}}(\bx)$ over the uniform distribution $\bx\in [-\pi,\pi]^d$ for varying $\bm{b}$ is the same, and is given by
        \begin{equation}
            \mathrm{p}(\bomega) = 
            \begin{cases}
        		 2^{-\lfloor \mathcal{M}_{\lambda} \rfloor} ~ & \textnormal{if}~ \bomega  \in \{1,-1\}^{\lfloor \mathcal{M}_{\lambda} \rfloor}  \\
        		 0 & \textnormal{if}~\bomega  \notin \{1,-1\}^{\lfloor \mathcal{M}_{\lambda} \rfloor}
        	\end{cases},
        \end{equation}
        and for any $U_{\bm{b}}(\bx)$, the related dynamical stabilizer entropy (\DSE) yields
        $$\mathcal{M}^{(1)}[O(\bx;U_{\bm{b}})] = \lfloor \mathcal{M}_{\lambda} \rfloor.$$
    \end{lemma}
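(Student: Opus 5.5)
The plan is to compute the unnormalized weights $\tilde{\mathrm{p}}(\bomega)$ directly from the expansion of $W(\bx)^{\dagger}X_1W(\bx)$ derived above. I then handle the maximization over fixed unitaries $V$ in the definition of \DSE\ with a support argument. Throughout, write $k=\lfloor\mathcal{M}_{\lambda}\rfloor$, and identify $\bomega\in\{\pm1\}^k$ with the full-length mode $(\bomega,\bm{0}_{d-k})\in\Lambda$.

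\textbf{Step 1: frequency support.} First I confirm that in the PTM branch expansion of $O(\bx;U_{\bm b})=V_{\bm b}^{\dagger}W(\bx)^{\dagger}X_1W(\bx)V_{\bm b}$, every mode outside $\{\pm1\}^k\times\{0\}^{d-k}$ carries a zero operator coefficient. After conjugation by the $\CNOT$ ladder, the observable acts only on qubits $1,\dots,k$. For $j>k$, the rotation $\RZ_j$ therefore meets the identity on qubit $j$, so only the $\mathsf{D}_0$ branch survives and $\bomega_j=0$. For $j\le k$, the observable meets $S_jX_jS_j^{\dagger}$, which lies in the $XY$-plane and is annihilated by $\mathsf{D}_0$, so $\bomega_j\in\{\pm1\}$. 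Hence $Q_{\bomega}$ is the single Pauli string $\bigotimes_i P_{i,\bomega_i}$ on the stated support, and the zero operator elsewhere. The trailing fixed block $V_{\bm b}$ is parameter independent, so it only conjugates these coefficients and leaves the support unchanged.

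\textbf{Step 2: the distribution at $V=\mathbb{I}$.} Next I compute the induced distribution when no extra free operation is applied. The coefficient of mode $\bomega$ is $\Tr(\rho_0V_{\bm b}^{\dagger}Q_{\bomega}V_{\bm b})=\Tr(\rho_{\bm b}Q_{\bomega})$. By the Pauli orthogonality computation in Eq.~\eqref{append:eq:tr_equal_fbx}, this equals $2^{-k}\bm{b}_{\bomega}$. Using $\mathbb{E}_{\bx}\Phi_{\bomega}(\bx)^2=2^{-\|\bomega\|_0}=2^{-k}$, I obtain $\tilde{\mathrm{p}}(\bomega)=2^{-k}\cdot 4^{-k}\bm{b}_{\bomega}^2=8^{-k}$ for every $\bomega\in\{\pm1\}^k$, and $\tilde{\mathrm{p}}(\bomega)=0$ otherwise. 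Since $\bm{b}_{\bomega}^2=1$, this is independent of $\bm b$. Normalizing by $\nu_f=\sum_{\bomega}\tilde{\mathrm{p}}(\bomega)=4^{-k}$, which agrees with Eq.~\eqref{append:eq:p0_scaling_LB}, gives the uniform distribution $2^{-k}$ on $\{\pm1\}^k$ claimed in the lemma. Its Shannon entropy is exactly $k$, which yields $\mathcal{M}^{(1)}\ge k$.

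\textbf{Step 3: upper bound under the maximization.} The main obstacle is the maximization over $V\in\mathbb{S}\mathbb{U}(2^N)$, since $V$ can reshuffle the coefficients $\Tr(\rho_0Q'_{\bomega})$. I resolve it with linearity of the branch expansion: conjugation by a fixed $V$ commutes with the $\bx$-expansion, so $Q'_{\bomega}=V^{\dagger}Q_{\bomega}V$. In particular $Q'_{\bomega}=0$ whenever $Q_{\bomega}=0$. Therefore, for every $V$, the induced distribution is supported on at most $2^k$ modes, and the Shannon entropy is at most $\log 2^k=k$. If $\nu_f=0$, the entropy is defined to be $0$ and the bound holds trivially. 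Combining this with Step 2 gives $\mathcal{M}^{(1)}[O(\bx;U_{\bm b})]=k$ for every $\bm b$.

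The one point I would state carefully is well-definedness. The expansion coefficients are determined uniquely as operator-valued coefficients of the linearly independent functions $\Phi_{\bomega}$. This makes the identification $Q'_{\bomega}=V^{\dagger}Q_{\bomega}V$ legitimate, rather than merely a convenient choice of representation.
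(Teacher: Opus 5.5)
Your proposal is correct and follows the paper's proof. The uniform distribution $2^{-k}$ on the $2^k$ modes at $V=\mathbb{I}$ gives $\mathcal{M}^{(1)}\ge k$, and since any fixed $V$ only maps $Q_{\bomega}\mapsto V^{\dagger}Q_{\bomega}V$ without creating new basis functions, the support stays within those modes and the entropy is at most $k$. Your PTM-branch derivation of the support and your remark that the $\Phi_{\bomega}$ are linearly independent (so the operator coefficients are well defined) spell out steps the paper leaves implicit.
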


    \begin{proof}
        [Proof of Lemma~\ref{append:lem:qc_ld_ose}]
        According to the definition of the induced distribution $\mathrm{p}(\bomega)$, we have for $\bomega\in\{1,-1\}^{\lfloor \mathcal{M}_{\lambda} \rfloor}$,
        \begin{equation}
            \mathrm{p}(\bomega)=\frac{\mathbb{E}_{\bx\in [-\pi,\pi]^d} \left(2^{-{\lfloor \mathcal{M}_{\lambda} \rfloor}} \Phi_{\bomega}(\bx)\right)^2}{\sum_{\bomega\in \{1,-1\}^{\lfloor \mathcal{M}_{\lambda} \rfloor}}\mathbb{E}_{\bx\in [-\pi,\pi]^d} \left(2^{-\lfloor \mathcal{M}_{\lambda} \rfloor} \Phi_{\bomega}(\bx)\right)^2 } =\frac{2^{-2\lfloor \mathcal{M}_{\lambda} \rfloor} 2^{-\lfloor \mathcal{M}_{\lambda} \rfloor} }{ \sum_{\bomega\in \{1,-1\}^{\lfloor \mathcal{M}_{\lambda} \rfloor}} 2^{-2\lfloor \mathcal{M}_{\lambda} \rfloor} 2^{-\lfloor \mathcal{M}_{\lambda} \rfloor}}=\frac{1}{2^{\lfloor \mathcal{M}_{\lambda} \rfloor}},
        \end{equation}
        where the second equality employs that $\mathbb{E}_{\bx\in [-\pi,\pi]^d} \Phi_{\bomega}(\bx)^2=2^{-\|\bomega\|_0}$, and the third equality follows that $\|\bomega\|_0=\lfloor \mathcal{M}_{\lambda} \rfloor$ for any $\bomega\in\{-1,1\}^{\lfloor \mathcal{M}_{\lambda} \rfloor}$.
        Moreover, the expansion of $f(\bx)$ over the feature $\Phi_{\bomega}(\bx)$ with the support set $\bomega \in \{1,-1\}^{\lfloor \mathcal{M}_{\lambda} \rfloor}$ immediately implicates that for $\mathrm{p}(\bomega)=0$ for $\bomega\notin \{1,-1\}^{\lfloor \mathcal{M}_{\lambda} \rfloor}$.

        For the identity choice of the fixed unitary in the maximization defining DSE, the above uniform distribution $\mathrm{p}(\bomega)$ over $2^{\lfloor \mathcal{M}_{\lambda} \rfloor}$ modes has Shannon entropy $\lfloor \mathcal{M}_{\lambda} \rfloor$. Therefore, $\mathcal{M}^{(1)}[O(\bx, U_{\bm{b}})] \ge \lfloor \mathcal{M}_{\lambda} \rfloor$. To prove the converse bound, observe from the circuit construction that the full evolved observable has the operator-valued expansion
        $$U_{\bm{b}}(\bx)^{\dagger} X_1 U_{\bm{b}}(\bx) =\sum_{\bomega\in \{-1,1\}^{\lfloor \mathcal{M}_{\lambda} \rfloor}} \Phi_{\bomega}(\bx)V_{\bm{b}}^{\dagger}Q_{\bomega}V_{\bm{b}}.$$
        Hence, for any fixed, parameter-independent unitary $V$, conjugating this observable only replaces each coefficient operator $V_{\bm{b}}^{\dagger}Q_{\bomega}V_{\bm{b}}$ by $V^{\dagger}V_{\bm{b}}^{\dagger}Q_{\bomega}V_{\bm{b}} V$ and does not introduce any additional basis function $\Phi_{\bomega}(\bx)$. Consequently, for any $V$, the induced distribution $\mathrm{p}(\bomega)$ is supported on a subset of $\{-1, 1\}^{\lfloor \mathcal{M}_{\lambda} \rfloor}$ and its Shannon entropy is at most $\log(2^{\lfloor \mathcal{M}_{\lambda} \rfloor}) = \lfloor \mathcal{M}_{\lambda} \rfloor$. Maximizing over all fixed $V$ thus gives $\mathcal{M}^{(1)}[O(\bx;U_{\bm{b}})] \le \lfloor \mathcal{M}_{\lambda} \rfloor$. Combining the two bounds proves $\mathcal{M}^{(1)}[O(\bx;U_{\bm{b}})] = \lfloor \mathcal{M}_{\lambda} \rfloor$. This completes the proof.
    \end{proof}

    \subsection{Reducing the learning problem to hypothesis testing---Proof of Theorem~\ref{append:thm:lower_bound_complexity}}\label{append:subsec:reduce_learning_problem}
    In this section, we elucidate how to reduce the learning problem of observable expectation $f_{\bm{b}}(\bx)$ introduced in the previous section to a hypothesis testing problem by exploiting Fano’s method, which is extensively used in deriving the information-theoretical lower bound of both classical and quantum learning tasks. In particular, the accessible information for this learning problem is specified by the conditions in Theorem~\ref{append:thm:lower_bound_complexity} that one can access (i) a training dataset $\{(\bxi,y_{\bm{b}}^{(i)})\}_{i=1}^n$ of size $n$ with $y^{(i)}$ is the unbiased estimation of $f_{\bm{b}}(\bxi)$ from $m$ measurement shots, and (ii) the probability distribution $\mathrm{p}(\bomega)$ induced by the observable expectation $f_{\bm{b}}(\bx)$ over the uniform distribution $\bx\sim [-\pi,\pi]^d$. As specified in Lemma~\ref{append:lem:qc_ld_ose}, for varying $\bm{b}$, we have $\mathrm{p}(\bomega)=2^{-{\lfloor \mathcal{M}_{\lambda} \rfloor}}$ for $\bomega\in \{-1,1\}^{\lfloor \mathcal{M}_{\lambda} \rfloor}$. 
    The learner aims to construct a machine learning model $h(\bx)$ to achieve the desired prediction error
    \begin{equation}\label{append:eq:target_error}
        \mathbb{E}_{\bx\in [-\pi,\pi]^d}  \left| h(\bx)-f_{\bm{b}}(\bx)\right|^2 \le  \epsilon.
    \end{equation}
    where the equality follows the definition of $f_{\bm{b}}$ and the orthogonality of the trigonometric basis $\Phi_{\bomega}(\bx)$.

    The first step is to construct a $2\sqrt{\epsilon}$-packing $
    \{f_{\bm{b}^{(j)}}\}_{j\in [M]}$ for the function class $\{f_{\bm{b}}:\bm{b}\in\{-1,1\}^{2^{\lfloor \mathcal{M}_{\lambda} \rfloor}}\}$ with $M$ being the packing number such that the function $f_{\bm{b}^{(j)}}$ in this packing net is $2\sqrt{\epsilon}$-separate, namely $\varrho(f_{\bm{b}^{(i)}},f_{\bm{b}^{(j)}})\ge 2\sqrt{\epsilon}$ for any $i\ne j\in [M]$ under specific distance metric $\varrho$, which will be specified later. This step aims to ensure that one can identify the unique sign vector $\bm{b}$ when the desired prediction error is achieved. Recall the definition of the distance metric $\varrho$ as 
    \begin{equation}
        \varrho(f,f')=\left(\mathbb{E}_{\bx}|f(\bx)-f'(\bx)|^2\right)^{\frac{1}{2}},
    \end{equation}
    and if $\varrho(f_{\bm{b}},f_{\bm{b}'})\ge 2\sqrt{\epsilon}$ for any $\bm{b}'\in \{\bm{b}^{(j)}\}_{j\in [M]}$, then employing the triangle inequality yields
    \begin{equation}\label{append:eq:h_b_prime_lb}
        \mathbb{E}_{\bx}  \left| h(\bx)-f_{\bm{b}'}(\bx)\right|^2 \ge \left(\sqrt{\mathbb{E}_{\bx} \left| f_{\bm{b}'}(\bx)-f_{\bm{b}}(\bx)\right|^2} - \sqrt{\mathbb{E}_{\bx} \left| h(\bx)-f_{\bm{b}}(\bx)\right|^2}\right)^2  \ge \left(2\sqrt{\epsilon}-\sqrt{\epsilon}\right)^2 \ge \epsilon.
    \end{equation}

    Given this $2\sqrt{\epsilon}$-packing set $
    \{f_{\bm{b}^{(j)}}\}_{j\in [M]}$ with separable concept function, learning this family of quantum circuits under the conditions of Theorem~\ref{append:thm:lower_bound_complexity} can be reduced to the following multiple hypothesis testing problem:
    \begin{itemize}
        \item[(i).] Alice randomly and uniformly samples a $2^d$-dimensional sign vector $\bm{b}^*$ from $\{\bm{b}^{(j)}\}_{j\in [M]}\subset \{-1,1\}^{2^{{\lfloor \mathcal{M}_{\lambda} \rfloor}}}$ and denote the target concept as $f_{\bm{b}^*}$;
        \item[(ii).] Alice implements the quantum circuits related to $f_{\bm{b}^*}$ to collect the training dataset $\mathcal{T}=\{(\bxi, y^{(i)})\}_{i=1}^n$ of size $n$ and construct a quantum sampler of $\mathrm{p}({\bomega})$, where $\bxi$ are randomly sampled from $[-\pi,\pi]^d$, $\bm{o}(\bxi)=[\bm{o}_1(\bxi),\cdots,\bm{o}_m(\bxi)]\in \{-1,1\}^m$ is the $m$ measurement outcomes of the observable $O=X_1$, and $y^{(i)}=\frac{1}{m}\sum_k \bm{o}_{k=1}^m(\bxi)$ is the statistical estimation of $f_{\bm{b}^*}(\bxi)$ with $\mathbb{E} y^{(i)}=f_{\bm{b}^*}(\bxi)$;
        \item[(iii).] Bob uses the training dataset $\mathcal{T}$ and the probability distribution $\mathrm{p}(\bomega)$ to construct the classical surrogate by conducting the empirical risk minimization, i.e.,
        \begin{equation}
            h_{\mathcal{T}} = \mathop{\arg\min}\limits_{h\in  
            \{f_{\bm{b}^{(j)}}|j\in [M]\} } \frac{1}{n} \sum_{i=1}^n \left( h(\bxi)-y^{(i)}\right)^2;
        \end{equation}
        \item[(iv).] Bob conducts hypothesis testing over $\{\bm{b}^{(j)}\}_{j\in[M]}$ to infer the randomly chosen $\bm{b}^*$, i.e.,
        \begin{equation}
            \bar{\bm{b}} = \mathop{\arg\min}\limits_{\bm{b}\in \{\bm{b}^{(j)}\}_{j\in[M]}} \mathbb{E}_{\bx\in [-\pi,\pi]^d} \left|h_{\mathcal{T}}(\bx)-f_{\bm{b}}(\bx) \right|^2,
        \end{equation}
        and the associated error probability is denoted by $\mathbb{P}(\bar{\bm{b}}\ne \bm{b}^*)$.
    \end{itemize}
    We note that the utilization of the probability distribution $\mathrm{p}(\bomega)$ in step (iii) for the construction of the classical surrogate $h_{\mathcal{T}}(\bx)$ is enabled by the optimization of $h_{\mathcal{T}}(\bx)$ over the hypothesis space $\{f_{\bm{b}}(\bx)\}_{\bm{b}}$ whose induced probability distribution is exactly $\mathrm{p}(\bomega)$ as shown in Lemma~\ref{append:lem:qc_ld_ose}. 
    By Eq.~\eqref{append:eq:h_b_prime_lb}, the decoder returns $\bar{\bm{b}}=\bm{b}^*$ whenever
    $\mathbb{E}_{\bx \sim [-\pi,\pi]^d}|h_{\mathcal{T}}(\bx)-f_{\bm{b}^*}(\bx)|^2 \le \epsilon$. Hence,
    \begin{equation}
        \mathbb{P}\left(\bar{\bm{b}}\ne\bm{b}^* \right) \le 1-\mathbb{P}\left(\mathbb{E}_{\bx \sim [-\pi,\pi]^d}|h_{\mathcal{T}}(\bx)-f_{\bm{b}^*}(\bx)|^2 \le \epsilon \right).
    \end{equation}
    In particular, a learning guarantee holding with probability at least $2/3$ implies $ \mathbb{P}\left(\bar{\bm{b}}\ne\bm{b}^* \right)\le 1/3$

    With this setup, Fano's inequality gives an information-theoretical lower bound on the error probability $\mathbb{P}(\bar{\bm{b}}\ne \bm{b}^*)$ for the multiple hypothesis testing problem.

    \begin{lemma}
        [Fano's inequality, Ref.~\cite{duchi2016lecture}]
        \label{append:lem:fano_inequality}
        Assume that $\bm{b}^*$ is uniform in $\{\bm{b}^{(j)}\}_{j\in [M]}$. The learning procedure can be depicted by the Markov chain $\bm{b}^* \rightarrow \mathcal{T} \rightarrow  \bar{\bm{b}}$, where $\bar{\bm{b}}$ is returned by the hypothesis testing. Then we have
        \begin{equation}
        \mathbb{P}(\bar{\bm{b}}\ne \bm{b}^*) \ge 1 - \frac{I(\bar{\bm{b}};\bm{b}^*)+\ln(2)}{\log(M)},
        \end{equation}
        where $I(\bar{\bm{b}};\bm{b}^*)$ represents the mutual information between the estimated $\bar{\bm{b}}$ and the randomly sampled variable $\bm{b}^*$, $M$ refers to the cardinality of the $2\epsilon$-packing net $\{f_{\bm{b}^{(j)}}\}_{j\in [M]}$.
    \end{lemma}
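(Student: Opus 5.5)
The plan is to use the classical entropy argument for Fano's inequality, applied to the Markov chain $\bm{b}^*\rightarrow\mathcal{T}\rightarrow\bar{\bm{b}}$ of steps (i)--(iv). All entropies are measured in one base. In nats the additive constant is exactly $\ln(2)$. In bits the binary entropy is at most $1$, and the displayed inequality uses the smaller constant $\ln(2)<1$, so I will work in nats and interpret $\log(M)$ in the same units.

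First I introduce the error indicator $E=\mathbbm{1}[\bar{\bm{b}}\ne\bm{b}^*]$ with $P_e:=\mathbb{P}(E=1)$. I expand the conditional entropy $H(E,\bm{b}^*\mid\bar{\bm{b}})$ by the chain rule in two ways. Since $E$ is a deterministic function of $(\bm{b}^*,\bar{\bm{b}})$, one expansion gives
\begin{equation*}
H(E,\bm{b}^*\mid\bar{\bm{b}})=H(\bm{b}^*\mid\bar{\bm{b}}).
\end{equation*}
The other expansion gives
\begin{equation*}
H(E,\bm{b}^*\mid\bar{\bm{b}})=H(E\mid\bar{\bm{b}})+H(\bm{b}^*\mid E,\bar{\bm{b}}).
\end{equation*}
I bound the right-hand side term by term. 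Conditioning reduces entropy, so $H(E\mid\bar{\bm{b}})\le H(E)\le\ln 2$. If $E=0$, then $\bm{b}^*=\bar{\bm{b}}$ is determined and contributes no entropy. If $E=1$, then $\bm{b}^*$ ranges over at most $M-1$ values, contributing at most $P_e\ln(M-1)\le P_e\ln M$. Together these give
\begin{equation*}
H(\bm{b}^*\mid\bar{\bm{b}})\le \ln 2+P_e\ln M.
\end{equation*}

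Next I use the uniform prior on the packing net, $H(\bm{b}^*)=\ln M$, together with the identity $I(\bar{\bm{b}};\bm{b}^*)=H(\bm{b}^*)-H(\bm{b}^*\mid\bar{\bm{b}})$. Substituting the previous bound gives $\ln M - I(\bar{\bm{b}};\bm{b}^*)\le \ln 2+P_e\ln M$. Dividing by $\ln M$ (valid since $M\ge2$) and rearranging yields the claimed $P_e\ge 1-(I(\bar{\bm{b}};\bm{b}^*)+\ln 2)/\log M$. The Markov chain structure does not enter this derivation itself. It matters only later, through the data-processing inequality $I(\bar{\bm{b}};\bm{b}^*)\le I(\mathcal{T};\bm{b}^*)$, which lets the subsequent analysis bound the mutual information using the training data alone.

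I do not expect a real obstacle, since each step is standard. The one step needing care is the bound on $H(\bm{b}^*\mid E=1,\bar{\bm{b}})$: it requires $\bar{\bm{b}}$ to take values in the same $M$-element set $\{\bm{b}^{(j)}\}_{j\in[M]}$, which holds by the definition of the decoder in step (iv). The other point to keep straight is the consistent choice of logarithm base described above.
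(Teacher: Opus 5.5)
Your argument is correct. It is the standard chain-rule entropy proof of Fano's inequality, as in the cited lecture notes of Duchi; the paper gives no proof of its own and simply cites that reference. Your remark on units is also right and worth keeping: under the paper's default base-2 logarithm, the displayed bound with constant $\ln(2)$ can fail, so $I(\bar{\bm{b}};\bm{b}^*)$ and $\log(M)$ must both be read in nats (in bits the constant would have to be $1$).
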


   The Fano's inequality implies that the derivation of the lower bound of the sample complexity $n$ amounts to separately deriving the upper bound of the mutual information $I(\bar{\bm{b}};\bm{b}^*)$ and the lower bound of the cardinality $M$, which are given by the following two lemmas.

   \begin{lemma}
       [Lower bound of $2\sqrt{\epsilon}$-packing cardinality $M$]
       \label{append:lem:cardinality_lb}
       Let $\{f_{\bm{b}}:\bm{b}\in \{-1,1\}^{2^{\lfloor \mathcal{M}_{\lambda} \rfloor}}\}$ be the class of target function with $f_{\bm{b}}$ defined in Eq.~\eqref{append:eq:bit_target_function}, and $\varrho(f_{\bm{b}},f_{\bm{b}'})=\sqrt{\mathbb{E}_{\bx\sim [-\pi,\pi]^d}|f_{\bm{b}}(\bx)-f_{\bm{b}'}(\bx)|^2}$ be the distance measure. Then there exists a $2\sqrt{\epsilon}$-packing $\{f_{\bm{b}^{(j)}}\}_{j\in [M]}$ in the $\varrho$-metric such that for any $\epsilon \le \nu_f/4$, the $2\sqrt{\epsilon}$-packing number yields
       \begin{equation}
           M\ge \exp\left(\frac{2^{\lfloor \mathcal{M}_{\lambda} \rfloor}\cdot (1-2\nu_f^{-1} \epsilon)^2}{8} \right) \ge \exp\left(\frac{2^{\lfloor \mathcal{M}_{\lambda} \rfloor}}{32} \right)
       \end{equation}
   \end{lemma}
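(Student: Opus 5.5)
The plan is to reduce the packing problem for $\{f_{\bm b}\}$ to a packing problem for the Hamming cube $\{-1,1\}^{K}$ with $K=2^{\lfloor \mathcal{M}_{\lambda}\rfloor}$, and then apply the Gilbert--Varshamov bound, derived via a Hoeffding/Chernoff random-coding argument.

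\textbf{Step 1: reduce distances to Hamming distances.} By orthogonality of the trigonometric basis, $\mathbb{E}_{\bx}\Phi_{\bomega}\Phi_{\bomega'}=\delta_{\bomega\bomega'}2^{-\|\bomega\|_0}$. Every retained mode has $\|\bomega\|_0=\lfloor\mathcal{M}_\lambda\rfloor$, so $2^{-\|\bomega\|_0}=1/K$. Hence
\begin{equation*}
\varrho(f_{\bm b},f_{\bm b'})^2=\frac{1}{K^2}\sum_{\bomega}(\bm b_{\bomega}-\bm b'_{\bomega})^2\frac{1}{K}=\frac{4\,d_H(\bm b,\bm b')}{K^3}=4\nu_f\,\frac{d_H(\bm b,\bm b')}{K},
\end{equation*}
where the last equality uses $\nu_f=K^{-2}$ from Eq.~\eqref{append:eq:p0_scaling_LB}. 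Note the extra factor $1/K$ from $\mathbb{E}\Phi_{\bomega}^2$. Eq.~\eqref{append:eq:p0_scaling_LB} as written computes $\nu_f=K^{-2}\sum_{\bomega}\mathbb{E}\Phi_\bomega^2=K^{-2}\cdot K\cdot K^{-1}$, so it already includes that factor. I will use the normalization $\varrho^2=4\nu_f\,d_H/K$ consistently, which is what the stated bound requires.

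The $2\sqrt\epsilon$-separation condition $\varrho^2\ge 4\epsilon$ therefore becomes $d_H(\bm b,\bm b')\ge K\epsilon/\nu_f$. Thus it suffices to find $M$ sign vectors with pairwise Hamming distance at least $\beta K$, where $\beta:=\epsilon/\nu_f\le 1/4$.

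\textbf{Step 2: Gilbert--Varshamov.} Draw $M$ vectors independently and uniformly from $\{-1,1\}^K$. For a fixed pair, $d_H$ is $\mathrm{Bin}(K,1/2)$. Hoeffding's inequality gives
\begin{equation*}
\Pr\big(d_H<\beta K\big)\le \exp\big(-2K(1/2-\beta)^2\big)=\exp\big(-K(1-2\beta)^2/2\big).
\end{equation*}
A union bound over fewer than $M^2/2$ pairs yields a valid packing with positive probability as long as $M^2\exp(-K(1-2\beta)^2/2)<2$. Taking $M=\lfloor\exp(K(1-2\beta)^2/8)\rfloor$ leaves ample room: $M^2\le \exp(K(1-2\beta)^2/4)$, which is much smaller than $\exp(K(1-2\beta)^2/2)$. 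Alternatively, one can use the deterministic greedy Gilbert--Varshamov argument: the ball volume is $\sum_{j<\beta K}\binom{K}{j}\le 2^K\exp(-K(1-2\beta)^2/2)$ by Hoeffding, so a greedy packing has size at least $\exp(K(1-2\beta)^2/2)$. This gives the first inequality, with $(1-2\nu_f^{-1}\epsilon)^2/8$ in the exponent.

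\textbf{Step 3: final inequality.} Since $\epsilon\le\nu_f/4$, we have $1-2\beta\ge 1/2$, so $(1-2\beta)^2/8\ge 1/32$. This yields $M\ge\exp(K/32)$.

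\textbf{Main obstacle.} The only delicate point is the normalization bookkeeping in Step 1: getting the factor $2^{-\|\bomega\|_0}=1/K$ right so that the separation threshold is exactly $\beta=\epsilon/\nu_f$. Everything else is the standard concentration argument. A secondary issue is small $K$ (e.g. $K=2$), where $\exp(K/32)<2$ and the floor could give $M=1$. The statement is only meaningful as a lower bound, so $M\ge 1$ trivially covers that case; I would simply remark on it.
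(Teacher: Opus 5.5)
Your proposal is correct and follows essentially the paper's route: orthogonality of the $\Phi_{\bomega}$ gives $\varrho^2=4\nu_f\,\mathrm{d}_{\mathrm H}/K$, so $2\sqrt{\epsilon}$-separation reduces to $\mathrm{d}_{\mathrm H}\ge K\epsilon/\nu_f$, and a Varshamov--Gilbert bound with $\gamma=(1-2\nu_f^{-1}\epsilon)/2$ finishes the argument. The only difference is that you prove the Varshamov--Gilbert bound inline (random coding or greedy, plus Hoeffding) rather than citing Rigollet's Lemma~5.12; one small correction is that for small $K$ the bound is not rescued by ``$M\ge 1$'' (since $\exp(K/32)>1$), but your greedy argument already gives $M\ge \exp\big(K(1-2\nu_f^{-1}\epsilon)^2/2\big)$ for every $K$, so rely on that version.
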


   \begin{lemma}
       [Upper bound of the mutual information $I(\bar{\bm{b}};\bm{b}^*)$]
       \label{append:lem:mutual_information_up}
       Following the notations in Lemma~\ref{append:lem:fano_inequality} and Lemma~\ref{append:lem:cardinality_lb}, the mutual information between the random variables $\bar{\bm{b}}$ and $\bm{b}^*$ yields
       \begin{equation}
           I(\bar{\bm{b}};\bm{b}^*)  \leq 3nm\nu_f
       \end{equation}
   \end{lemma}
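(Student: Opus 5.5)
The plan is to bound $I(\bar{\bm{b}};\bm{b}^*)$ by the information that the raw single-shot measurement outcomes carry about $\bm{b}^*$, and then bound that per shot by a $\chi^2$-divergence, which reduces to $\nu_f$.

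\emph{Reduction to single-shot outcomes.} By the Markov chain $\bm{b}^*\to\mathcal{T}\to\bar{\bm{b}}$ and the data-processing inequality, $I(\bar{\bm{b}};\bm{b}^*)\le I(\mathcal{T};\bm{b}^*)$. The sampler of $\mathrm{p}(\bomega)$ contributes nothing. By Lemma~\ref{append:lem:qc_ld_ose}, this distribution is the same for every $\bm{b}$, so it is independent of $\bm{b}^*$.

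Each label $y^{(i)}$ is a deterministic function of the $m$ outcomes $\bm{o}(\bxi)\in\{-1,1\}^m$. So I would replace $\mathcal{T}$ by the richer record $\{(\bxi,\bm{o}(\bxi))\}_{i=1}^n$, which can only increase the mutual information.

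The inputs $\bxi$ are drawn independently of $\bm{b}^*$. Conditional on $\bm{b}^*$ and the inputs, all $nm$ outcomes are independent. The chain rule plus subadditivity for conditionally independent observations then gives
\begin{equation}
    I(\mathcal{T};\bm{b}^*)\le \sum_{i=1}^n\sum_{k=1}^m I\big(\bm{o}_k(\bxi);\bm{b}^*\,\big|\,\bxi\big) = nm\, \mathbb{E}_{\bx}\, I\big(o;\bm{b}^*\,\big|\,\bx\big).
\end{equation}
Here $o\in\{\pm1\}$ is a single outcome with $\Pr(o=1\mid\bm{b},\bx)=(1+f_{\bm{b}}(\bx))/2$.

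\emph{Single-shot bound.} Fix $\bx$ and let $\bar f(\bx)=\mathbb{E}_{\bm{b}^*}f_{\bm{b}^*}(\bx)$. Then
\begin{equation}
    I(o;\bm{b}^*\mid\bx)=\mathbb{E}_{\bm{b}^*}\DKL\big(\mathrm{Ber}_{\bm{b}^*}\,\|\,\mathrm{Ber}_{\mathrm{avg}}\big)\le \mathbb{E}_{\bm{b}^*}\chi^2,
\end{equation}
using $\DKL\le\chi^2$. For two $\pm1$-valued distributions with means $f$ and $\bar f$, the $\chi^2$-divergence equals $(f-\bar f)^2/(1-\bar f^2)$. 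Hence
\begin{equation}
    I(o;\bm{b}^*\mid\bx)\le \frac{\mathbb{E}_{\bm{b}^*}(f_{\bm{b}^*}(\bx)-\bar f(\bx))^2}{1-\bar f(\bx)^2}\le \frac{\mathbb{E}_{\bm{b}^*}f_{\bm{b}^*}(\bx)^2}{1-\bar f(\bx)^2}.
\end{equation}

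\emph{Bounding the denominator.} This is the step I expect to need care: the denominator must be kept away from zero uniformly in $\bx$. I would write $k=\lfloor\mathcal{M}_\lambda\rfloor\ge1$ and use the product structure of $\Phi_{\bomega}$:
\begin{equation}
    |f_{\bm{b}}(\bx)|\le 2^{-k}\sum_{\bomega\in\{\pm1\}^k}|\Phi_{\bomega}(\bx)| = 2^{-k}\prod_{j=1}^k\big(|\cos\bx_j|+|\sin\bx_j|\big)\le 2^{-k}2^{k/2}\le 2^{-1/2}.
\end{equation}
The same bound holds for $\bar f$, so $1-\bar f^2\ge 1/2$. Therefore $I(o;\bm{b}^*\mid\bx)\le 2\,\mathbb{E}_{\bm{b}^*}f_{\bm{b}^*}(\bx)^2$.

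\emph{Conclusion.} Averaging over $\bx$ and using $\mathbb{E}_{\bx}f_{\bm{b}}(\bx)^2=\nu_f$ for every $\bm{b}$ (Eq.~\eqref{append:eq:p0_scaling_LB}), I get
\begin{equation}
    I(\bar{\bm{b}};\bm{b}^*)\le 2nm\nu_f\le 3nm\nu_f .
\end{equation}
The slack between $2$ and $3$ absorbs any natural-log versus base-2 conversion factor, since $\ln 2\cdot 2<3$ and $2/\ln 2<3$.
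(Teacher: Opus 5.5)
Your proof is correct, and it takes a genuinely different route from the paper's at the key step.

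The paper works with the averaged labels $y^{(i)}$. After data processing and subadditivity over the $n$ examples, it bounds $I(\bm{b}^*;y^{(i)}\mid\bxi)$ by an average of pairwise KL divergences using convexity. It then \emph{approximates} each label by a Gaussian $\mathcal{N}(f_{\bm{b}}(\bxi),\sigma^2)$ with one common variance $\sigma^2=\mathbb{E}_{\bx}(1-f_{\bm{b}}(\bx)^2)/m\ge 2/(3m)$. Each term becomes $(f_{\bm{b}^*}-f_{\bm{b}'})^2/(2\sigma^2)$, and $\mathbb{E}_{\bx}(f_{\bm{b}^*}-f_{\bm{b}'})^2\le 4\nu_f$ yields $3m\nu_f$ per example. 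In that argument $m$ enters only through the $1/m$ label variance, and the bound rests on the CLT-type and equal-variance assumptions stated in the paper's Remark. Those assumptions are not innocuous: the true variance $(1-f_{\bm{b}}(\bx)^2)/m$ changes with both $\bx$ and $\bm{b}$, and the paper averages numerator and denominator separately.

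You instead pass to the raw $\pm1$ shots and tensorize over all $nm$ conditionally independent outcomes. You then compute the single-shot information exactly, as a divergence from the Bernoulli mixture, and bound it by $\chi^2=(f-\bar f)^2/(1-\bar f^2)$. Your pointwise bound $|f_{\bm{b}}(\bx)|\le 2^{-k/2}\le 2^{-1/2}$ replaces the paper's lower bound on $\sigma^2$, but it holds uniformly in $\bx$. That uniformity is what a rigorous argument needs, because the per-$\bx$ denominator is the quantity that actually enters.

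What your route buys:
\begin{itemize}
\item a bound valid for every $m\ge1$, with no distributional approximation;
\item a slightly sharper constant, namely $2nm\nu_f$ in nats and $2nm\nu_f/\ln 2<3nm\nu_f$ in bits, so the stated bound holds under either logarithm convention;
\item an explicit treatment of the $\mathrm{p}(\bomega)$ sampler, justified as independent of $\bm{b}^*$ via Lemma~\ref{append:lem:qc_ld_ose}, which the paper leaves implicit.
\end{itemize}
What the paper's route buys is a reusable template for any label model where a Gaussian approximation is plausible, at the cost of rigor.
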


   We are now ready to present the proof of Theorem~\ref{append:thm:lower_bound_complexity}.

   \begin{proof}
       [Proof of Theorem~\ref{append:thm:lower_bound_complexity}]
        Employing the Fano's inequality in Lemma~\ref{append:lem:fano_inequality}, we have
        \begin{subequations}
        \begin{eqnarray}
        && \mathbb{P}(\bar{\bm{b}}\ne \bm{b}^*) \ge 1 - \frac{I(\bar{\bm{b}};\bm{b}^*)+\log(2)}{\log(M)}
        \nonumber \\
        \Rightarrow && I(\bm{b}^*; \bar{\bm{b}})  \geq (1 - 	\Prob[\bm{b}^* \neq \bar{\bm{b}}]) \log |M| -  \ln2 
        \nonumber  \\
        \Rightarrow &&n\cdot 3m\nu_f \geq (1 - 	\Prob[\bar{\bm{b}}\neq \bm{b}^*]) \cdot 2^{\lfloor \mathcal{M}_{\lambda} \rfloor-3} \cdot (1-2\nu_f^{-1}\epsilon)^2 - \ln2 
        \nonumber \\
        \Rightarrow && n \geq \frac{(1 - 	\Prob[\bm{b}^* \neq \bar{\bm{b}}]) \cdot 2^{\lfloor \mathcal{M}_{\lambda} \rfloor-3} \cdot (1-2\nu_f^{-1}\epsilon)^2 - \ln2 }{3m\nu_f},
         \end{eqnarray} 
         \end{subequations}
         where the third line employs the upper bound of  $I(\bm{b}^*; \bar{\bm{b}})$ and the lower bound of the cardinality $M$ achieved in Lemma~\ref{append:lem:cardinality_lb} and Lemma~\ref{append:lem:mutual_information_up}, respectively.
         
        Hence, by setting the failure probability as $	\Prob[\bm{b}^* \neq \bar{\bm{b}}]\le1/3$, we can obtain the lower bound in terms of the dynamical stabilizer entropy (\DSE)
         \begin{equation}
         	n \geq   \tilde{\Omega}\left(\frac{2^{\mathcal{M}_{\lambda}} \cdot (1-2\nu_f^{-1}\epsilon)^2 }{ 24m\nu_f }\right),
         \end{equation} 
         where the symbol $\tilde{\Omega}$ hides constant factor. This completes the proof.
   \end{proof}

    \smallskip

    \subsection{Bounding the \texorpdfstring{$2\epsilon$}{2 epsilon}-packing cardinality \texorpdfstring{$M$}{M}---Proof of Lemma~\ref{append:lem:cardinality_lb}}
    \label{append:subsec:packing_number_LB}
    The proof of Lemma~\ref{append:lem:cardinality_lb} employs the following lemma.
    \begin{lemma}
        [Adapted from Lemma 5.12 in Ref.~\cite{rigollet2015high}] 
        \label{append:lem:cardinality_lb_1}
        For any constant $\gamma \in (0,1/2)$, there exists a set $\{\bm{b}^{(1)}, \cdots,\bm{b}^{(M)}\} \subset \{-1,1\}^{K}$ of cardinality $M\ge \exp(\gamma^2 K/2)$ such that $\mathrm{d}_{\mathrm{H}}(\bm{b}^{(i)},\bm{b}^{(j)}) \ge K(1/2-\gamma)$ for any $i\ne j$, where $\mathrm{d}_{\mathrm{H}}(\bm{b}^{(i)},\bm{b}^{(j)})=\sum_{\bomega}\mathbbm{1}\{\bm{b}_{\bomega}^{(i)}\ne\bm{b}_{\bomega}^{(j)}\}$ denotes the Hamming distance between $\bm{b}^{(i)},\bm{b}^{(j)} \in  \{-1,1\}^{K}$. If let $\gamma \in (0,1/4)$, then the Hamming distance also satisfies $\mathrm{d}_{\mathrm{H}}(\bm{b}^{(i)},\bm{b}^{(j)}) \ge K(1/2-\gamma) \ge K \gamma$.
    \end{lemma}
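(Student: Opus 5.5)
We prove this by a greedy Gilbert--Varshamov argument, with the volume of a Hamming ball controlled by Hoeffding's inequality. This avoids the integer-rounding issues that a random-coding union bound would create for small $K$.

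\textbf{Greedy construction.} Fix $\gamma\in(0,1/2)$ and set $r:=K(1/2-\gamma)$. We build the set greedily. Start from any $\bm{b}^{(1)}\in\{-1,1\}^K$. At each step, add a vector whose Hamming distance to every previously chosen vector is at least $r$. Stop when no such vector exists, and call the final number of chosen vectors $M$. By construction, every pair satisfies $\mathrm{d}_{\mathrm{H}}(\bm{b}^{(i)},\bm{b}^{(j)})\ge r$. Moreover, at termination the open balls $B(\bm{b}^{(j)},r)=\{\bm{c}:\mathrm{d}_{\mathrm{H}}(\bm{c},\bm{b}^{(j)})<r\}$ cover the whole cube. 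Hence
\begin{equation}
M\cdot |B(\bm{0},r)| \ge 2^K .
\end{equation}

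\textbf{Ball volume.} The main step is to bound the ball volume. Draw $\bm{c}$ uniformly from $\{-1,1\}^K$. Then $\mathrm{d}_{\mathrm{H}}(\bm{c},\bm{b})$ is a sum of $K$ independent Bernoulli$(1/2)$ variables, whose mean is $K/2$. Hoeffding's inequality therefore gives
\begin{equation}
\frac{|B(\bm{b},r)|}{2^K}=\Pr\big(\mathrm{d}_{\mathrm{H}}(\bm{c},\bm{b})< K/2-\gamma K\big)\le \exp(-2\gamma^2K).
\end{equation}
Combining this with the covering inequality yields
\begin{equation}
M\ge \exp(2\gamma^2K)\ge \exp(\gamma^2K/2),
\end{equation}
which is the claimed cardinality bound, in fact with room to spare. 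No step here is delicate; the only care needed is to match the strict versus non-strict inequality between the ball definition and the separation requirement, and the open-ball convention above handles this.

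\textbf{Second claim.} If $\gamma\in(0,1/4)$, then $1/2-\gamma>1/4>\gamma$. Consequently
\begin{equation}
\mathrm{d}_{\mathrm{H}}(\bm{b}^{(i)},\bm{b}^{(j)})\ge K(1/2-\gamma)\ge K\gamma
\end{equation}
for all $i\ne j$, which completes the proof of the lemma.
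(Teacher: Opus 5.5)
Your proof is correct, but it takes a different route from the one behind the statement. The paper gives no proof of its own and defers to Lemma~5.12 of Rigollet's notes, which uses the probabilistic method: draw $M$ i.i.d.\ uniform vectors in $\{-1,1\}^K$, note that each pairwise Hamming distance is $\mathrm{Bin}(K,1/2)$, and combine Hoeffding with a union bound over the $\binom{M}{2}$ pairs. That union bound succeeds once $M(M-1)<2e^{2\gamma^2K}$, so one takes $M=\lfloor e^{\gamma^2K}\rfloor$ and then claims this is at least $e^{\gamma^2K/2}$.

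Your greedy Gilbert--Varshamov construction uses the same Hoeffding tail. The difference is that you apply it only once, to the volume of a single Hamming ball, through the covering property of a maximal $r$-separated set, instead of across all pairs. This buys two things:
\begin{itemize}
\item You get $M\ge e^{2\gamma^2K}$, the square of what random coding yields.
\item You never need the rounding step. The inequality $\lfloor e^{\gamma^2K}\rfloor\ge e^{\gamma^2K/2}$ is actually false when $1<e^{\gamma^2K}<2$. The random-coding route needs a separate small-$K$ patch there, for example two antipodal vectors. Your claim about avoiding integer-rounding issues is therefore accurate.
\end{itemize}
What the random-coding route offers in exchange is that a typical random family already works. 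That is sometimes convenient, but the Fano argument in this paper only needs existence.

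One cosmetic fix: $\bm{0}\notin\{-1,1\}^K$, and under your definition $B(\bm{0},r)$ would literally be empty, since every cube point is at Hamming distance $K>r$ from $\bm{0}$. Write the covering inequality with a center inside the cube, such as the all-ones vector; the ball's cardinality does not depend on which center in the cube you choose.
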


    \begin{proof}
        [Proof of Lemma~\ref{append:lem:cardinality_lb}]
        We derive the lower bound of the cardinality of $2\epsilon$-packing $M$ by showing the existence of a subset $\{\bm{b}^{(1)}, \cdots,\bm{b}^{(M)}\} \subset \{-1,1\}^{K}$ of cardinality $M\ge \exp(K/32)$ for $K=2^{\lfloor \mathcal{M}_{\lambda} \rfloor}$ and some constant $c$ such that for any $i,j\in [M]$, $\varrho(f_{\bm{b}^{(i)}},f_{\bm{b}^{(j)}})\ge 2\varepsilon$ for every $i\ne j$. To this end, we derive an explicit expression of $\varrho(f_{\bm{b}^{(i)}},f_{\bm{b}^{(j)}})$ in terms of the Hamming distance $\mathrm{d}_{\mathrm{H}}(\bm{b}^{(i)},\bm{b}^{(j)})$ between $\bm{b}^{(i)},\bm{b}^{(j)} \in  \{-1,1\}^{K}$. Employing the definition of $\varrho(f_{\bm{b}^{(i)}},f_{\bm{b}^{(j)}})$, we have
        \begin{align}\label{append:eq:distance_d_H}
            \varrho(f_{\bm{b}^{(i)}},f_{\bm{b}^{(j)}}) = &  \left(\mathbb{E}_{\bx\in [-\pi,\pi]^d} \left| f_{\bm{b}^{(i)}}(\bx)-f_{\bm{b}^{(j)}}(\bx)\right|^2  \right)^{\frac{1}{2}}
            \nonumber \\
            = &  \left(\mathbb{E}_{\bx\in [-\pi,\pi]^d} \frac{1}{2^{2\lfloor \mathcal{M}_{\lambda} \rfloor}} \left( \sum_{\bomega\in \{1,-1\}^{\lfloor \mathcal{M}_{\lambda} \rfloor}} (\bm{b}^{(i)}_{\bomega} -\bm{b}^{(j)}_{\bomega})\cdot \Phi_{\bomega}(\bx) \right)^2 \right)^{\frac{1}{2}}
            \nonumber \\
            = & \left(\nu_f \cdot \mathbb{E}_{\bx\in [-\pi,\pi]^d}   \sum_{\bomega,\bomega'\in \{1,-1\}^{\lfloor \mathcal{M}_{\lambda} \rfloor}} (\bm{b}_{\bomega}^{(i)} -\bm{b}_{\bomega}^{(j)}) (\bm{b}_{\bomega'}^{(i)} -\bm{b}_{\bomega'}^{(j)})\cdot \Phi_{\bomega}(\bx)  \Phi_{\bomega'}(\bx) \right)^{\frac{1}{2}}
            \nonumber \\
            = &  \left( \frac{\nu_f}{2^{\lfloor \mathcal{M}_{\lambda} \rfloor}}  \sum_{\bomega \in \{1,-1\}^{\lfloor \mathcal{M}_{\lambda} \rfloor}} (\bm{b}_{\bomega}^{(i)} -\bm{b}_{\bomega}^{(j)})^2 \right)^{\frac{1}{2}}
            \nonumber \\
            = &  \left( \frac{\nu_f}{2^{\lfloor \mathcal{M}_{\lambda} \rfloor}}  \sum_{\bomega \in \{1,-1\}^{\lfloor \mathcal{M}_{\lambda} \rfloor}} 4 \cdot \mathbbm{1}\{\bm{b}_{\bomega}^{(i)} \ne \bm{b}_{\bomega}^{(j)}\}\right)^{\frac{1}{2}}
            \nonumber \\
            = &  \left(\frac{4\nu_f}{2^{\lfloor \mathcal{M}_{\lambda} \rfloor}}  \mathrm{d}_{\mathrm{H}}(\bm{b}^{(i)},\bm{b}^{(j)})\right)^{\frac{1}{2}},
    \end{align}
    where the second equality follows the explicit scaling of $\nu_f=2^{-2\lfloor \mathcal{M}_{\lambda} \rfloor}$ derived in Eq.~\eqref{append:eq:p0_scaling_LB}, the fourth equality employs the orthogonality of the trigonometric basis, namely $\mathbb{E}_{\bx} \Phi_{\bomega}(\bx)\Phi_{\bomega'}(\bx)=\delta_{\bomega,\bomega'}/2^d$ for $\bomega\in\{-1,1\}^d$. Accordingly, the condition $\varrho(f_{\bm{b}^{(i)}},f_{\bm{b}^{(j)}})\ge 2\sqrt{\epsilon}$ is equivalent to $$\mathrm{d}_{\mathrm{H}}(\bm{b}^{(i)}, \bm{b}^{(j)}) \ge \frac{2^{\lfloor \mathcal{M}_{\lambda} \rfloor} \epsilon}{\nu_f}.$$ 
    In this regard, employing Lemma~\ref{append:lem:cardinality_lb_1} with taking $\gamma=(1-2\nu_f^{-1}\epsilon)/2$ and $\epsilon \le \nu_f/4$ implies that there exists $\{\bm{b}^{(1)}, \cdots,\bm{b}^{(M)}\} \subset \{-1,1\}^{2^{\lfloor \mathcal{M}_{\lambda} \rfloor}}$ of size $M\ge \exp(2^{\lfloor \mathcal{M}_{\lambda} \rfloor}\cdot (1-2\nu_f^{-1}\epsilon)^2/8)$ satisfying this requirement. Since $\epsilon \le \nu_f /4$, this also implies $M \ge \exp(2^{\lfloor \mathcal{M}_{\lambda} \rfloor}/32)$.
    \end{proof}

    \subsection{Bounding the mutual information \texorpdfstring{$I(\bar{\bm{b}};\bm{b}^*)$}{I(bar{b};b*)}---Proof of Lemma~\ref{append:lem:mutual_information_up}}
    \label{append:subsec:MI_UB}
    The proof idea of Lemma~\ref{append:lem:mutual_information_up} follows the conventions of Ref.~\cite{wang2024transition,du2025efficient}. Before moving to elucidate the detailed proof, we make some remarks on the labels $\{y^{(i)}\}_{i=1}^n$ in the training dataset $\mathcal{T}$.

    \smallskip
    
    \noindent \underline{Remark.} For ease of analysis, in the proof of Lemma~\ref{append:lem:mutual_information_up}, we consider the shot number $m$ for obtaining the statistical labels $y^{(i)}$ is sufficiently large such that the measured results can be approximated by the normal distribution with the mean $\mu_{\bm{b}}=f_{\bm{b}}(\bx)$. This assumption is widely utilized in quantum computation and quantum information theory \cite{wang2024transition,du2025efficient}. Besides, for each $\bm{b}$, the corresponding variance of measured results is assumed to be equal with the varied $\bx$, i.e., for all $\bx$, the variance is $\nu_{\bm{b}}=\mathbb{E}_{\bx}\nu_{\bm{b}}(\bx)$. Note that the assumption can be omitted if the primary focus is solely on the \DSE $\mathcal{M}^{(1)}[O(\bx;\mathcal{U})]$, or the number of RZ gates $d$, without considering the measurement cost $m$. In this case, the upper bound of the mutual information $I(\bar{\bm{b}};\bm{b}^*)$ can be effectively derived using Holevo’s theorem.

    \begin{proof}
        [Proof of Lemma~\ref{append:lem:mutual_information_up}]
        Recall that the process of learning $f_{\bm{b}^*}$ corresponds to a Markov chain: 
        \begin{equation}
        \bm{b}^*\rightarrow \{\rho_{\bm{b}^*}(\bxi)\}_{i=1}^n \rightarrow \{\bm{o}(\bxi)\}_{i=1}^n \rightarrow \{y^{(i)}\}_{i=1}^n \rightarrow \bar{\bm{b}},
        \end{equation}
        where $\rho_{\bm{b}^*}(\bxi)=U(\bx)\rho_{\bm{b}^*}U(\bx)^{\dagger}$ refers to the evolved quantum states with the initial state $\rho_{\bm{b}^*}$ and the unitary $U(\bx)$ defined in Appendix~\ref{append:sec:learning_problem}, $\bm{o}(\bxi)=[\bm{o}_1(\bxi),\cdots,\bm{o}_m(\bxi)]\in \{-1,1\}^m$ are the measurement outcomes for the observable $O=X_1$, and $y^{(i)}=\sum_{k=1}^m \bm{o}_k(\bxi)/m$ refers to the statistical estimation of $f_{\bm{b}^*}(\bxi)$. In this regard, employing the data processing inequality yields
        \begin{align}
        	 I\left(\bm{b}^*; \bar{\bm{b}} \right)   \leq  ~& I\left(\bm{b}^*; \{y^{(i)}\}_{i=1}^n \right)
             \nonumber \\
        	\leq ~&  \sum_{i=1}^n I\left(\bm{b}^*; y^{(i)}, \bxi\right)
            \nonumber \\
        	 = ~& \sum_{i=1}^n I\left(\bm{b}^* ;\bxi\right) + I\left(\bm{b}^* ; y^{(i)} | \bxi\right) 
             \nonumber \\
        	 = ~& \sum_{i=1}^n \mathbb{E}_{\bxi \sim [-\pi, \pi]^d} I\left(\bm{b}^* ; y^{(i)} | \bxi\right),
             \label{append:eq:mi_decompose_1}
        \end{align}	
        where the second equality employs the subadditivity of mutual information and the independence between the random variable $\bm{b}^*$ and $\{\bxi\}_{i=1}^n$, the first equality employs the chain rule of mutual information, namely $I(X;Y_1,Y_2)=I(X;Y_1)+I(X;Y_2|Y_1)$ for any random variables $X,Y_1,Y_2$, the last equality follows that the random variables $\bm{b}^*$ and $\bxi$ are sampled independently and hence $I(\bm{b}^*;\bxi)=0$ as well as $\bm{b}^* |\bxi = \bm{b}^* $

        In this regard, this reduces the problem of upper-bounding the mutual information $I(\bm{b}^*; \bar{\bm{b}} )$ to upper-bounding $I(\bm{b}^* ; y^{(i)} | \bxi)$. Employing the definition of mutual information, we have
        	\begin{align}
        		I(\bm{b}^* ; y^{(i)} | \bxi) = ~& \DKL(\mathbb{P}_{\bm{b}^*, y^{(i)} | \bxi} \|\mathbb{P}_{\bm{b}^*}\mathbb{P}_{y^{(i)} | \bxi})
                \nonumber \\
        		= ~& \sum_{\bm{b}^* \in \{\bm{b}^{(j)}\}_{j\in[M]}} \int \mathrm{p}(\bm{b}^*, y^{(i)}| \bxi)\log\frac{ \mathrm{p}(\bm{b}^*, y^{(i)}| \bxi)}{ \mathrm{p}(\bm{b}^*) \mathrm{p}(y^{(i)}| \bxi)}d \bm{o}
                \nonumber \\
        		= ~& \sum_{\bm{b}^* \in \{\bm{b}^{(j)}\}_{j\in[M]}}  \mathrm{p}(\bm{b}^*) \int  \mathrm{p}( y^{(i)}|\bm{b}^*, \bxi)\log\frac{ \mathrm{p}(y^{(i)}|\bm{b}^*, \bxi)}{  \mathrm{p}(y^{(i)}| \bxi)}d \bm{o} 
                \nonumber \\
        		= ~& \frac{1}{M} \sum_{\bm{b}^* \in \{\bm{b}^{(j)}\}_{j\in[M]}}\DKL(\mathbb{P}_{y^{(i)}|\bm{b}^*, \bxi} \| \mathbb{P}_{y^{(i)}|\bxi}) \nonumber \\
        		\leq ~&  \frac{1}{M^2} \sum_{\bm{b}^*, \bm{b}' \in \{\bm{b}^{(j)}\}_{j\in[M]}}\DKL(\mathbb{P}_{y^{(i)}|\bm{b}^*, \bxi}\|  \mathbb{P}_{y^{(i)}|\bm{b}', \bxi}) 
                \nonumber \\
                = ~& \frac{1}{M^2} \sum_{\bm{b}^*, \bm{b}' \in \{\bm{b}^{(j)}\}_{j\in[M]}} \frac{\left(f_{\bm{b}^*}(\bxi)-f_{\bm{b}'}(\bxi)\right)^2}{2\sigma^2}
                \label{append:eq:mi_decompose_2}
        	\end{align}
        where $\DKL$ refer to the KL-divergence between probability distributions, the third equality exploits the definition of conditional probability distribution $\mathrm{p}(\bm{b}^*, y^{(i)}| \bxi)=\mathrm{p}(\bm{b}^*)\cdot \mathrm{p}(y^{(i)}|\bm{b}^*, \bxi)$, the fourth equality follows the fact that $\bm{b}^*$ is sampled according to the uniform distribution $\mathrm{p}(\bm{b}^*)=1/M$, and the first inequality uses the convexity property of the $-\log$ function in KL divergence, the last inequality employs the assumption that the statistical estimation $y^{(i)}=\sum_{k=1}^m\bm{o}_k(\bxi)/m$ approximately follows the normal distribution $\mathbb{N}(f_{\bm{b}^*}(\bx),\sigma^2)$ with the variance $\sigma^2$ assumed to be the same for varied $\bm{b}^*$. 

        In the following, we separately derive the upper bound of the average discrepancy between $f_{\bm{b}^*}(\bx)$ and $f_{\bm{b}'}(\bx)$ and the upper bound of $\sigma^2$. In particular, employing the explicit form of $f_{\bm{b}}(\bx)=2^{-\lfloor \mathcal{M}_{\lambda} \rfloor}\sum_{\bomega\in \{-1,1\}^d} \bm{b}_{\bomega} \Phi_{\bomega}(\bx)$ defined in Eq.~\eqref{append:eq:bit_target_function} and the analysis in Eq.~\eqref{append:eq:distance_d_H} yields
        \begin{equation}\label{append:eq:mi_decompose_mean}
            \mathbb{E}_{\bxi\sim[-\pi,\pi]^d} \left(f_{\bm{b}^*}(\bxi)-f_{\bm{b}'}(\bxi) \right)^2 = \frac{4\nu_f}{2^{\lfloor \mathcal{M}_{\lambda} \rfloor}}  \mathrm{d}_{\mathrm{H}}(\bm{b}^*,\bm{b}^{\prime}) =\frac{4\nu_f}{2^{\lfloor \mathcal{M}_{\lambda} \rfloor}}  \sum_{\bomega \in \{1,-1\}^{\lfloor \mathcal{M}_{\lambda} \rfloor}}  \mathbbm{1}\{\bm{b}_{\bomega}^{(i)} \ne \bm{b}_{\bomega}^{(j)}\} \le 4\nu_f,
        \end{equation}
        where the inequality follows that there are $2^{\lfloor \mathcal{M}_{\lambda} \rfloor}$ terms corresponding to $\bomega\in\{-1,1\}^{\lfloor \mathcal{M}_{\lambda} \rfloor}$ in the summation.

        As the estimation $y^{(i)}=\sum_{k=1}^m\bm{o}_k(\bxi)/m$ is the statistical average of $m$ measurement outcomes $\{\bm{o}_k(\bxi)\}_{k=1}^m$ with each $\frac{\bm{o}_k(\bxi)+1}{2}\in \{0,1\}$ follows the Bernoulli distribution $(\frac{1+f_{\bm{b}^*}(\bxi)}{2},\frac{1-f_{\bm{b}^*}(\bxi)}{2})$, we have
        \begin{equation}\label{append:eq:mi_decompose_var}
            \sigma^2 = \mathbb{E}_{\bxi \sim [-\pi,\pi]^d} \frac{(1+f_{\bm{b}}(\bxi)) \cdot (1-f_{\bm{b}}(\bxi))}{m}= \mathbb{E}_{\bxi \sim [-\pi,\pi]^d}\frac{1-f_{\bm{b}}(\bxi))^2}{m} = \frac{1-2^{-2\lfloor \mathcal{M}_{\lambda} \rfloor}}{m}\ge \frac{2}{3m},
        \end{equation}
        where the third equality follows the definition of $f_{\bm{b}}(\bx)$ in Eq.~\eqref{append:eq:bit_target_function} and the orthogonality of the trigonometric basis $\Phi_{\bomega}(\bx)$, namely $\mathbb{E}_{\bx\sim[-\pi,\pi]^d} \Phi_{\bomega}(\bx)\Phi_{\bomega}(\bx)=\delta_{\bomega\bomega'}\cdot 2^{-\lfloor \mathcal{M}_{\lambda} \rfloor}$ for $\bomega\in\{-1,1\}^{\lfloor \mathcal{M}_{\lambda} \rfloor}$.
        
        In conjunction with Eq.~(\ref{append:eq:mi_decompose_1}), Eq.~(\ref{append:eq:mi_decompose_2}), Eq.~(\ref{append:eq:mi_decompose_mean}), and Eq.~(\ref{append:eq:mi_decompose_var}), we have
        \begin{equation}\label{append:eq:mi_bound_1}
        	I(\bm{b}^*;\bar{\bm{b}})\leq  3nm\nu_f.
        \end{equation}
        This completes the proof.
    \end{proof}

\section{Quantum subroutine for important feature identification}
\label{append:sec:direct_mode_sampler}

In this section, we present the details about the quantum subroutine introduced in the main text for identifying important trigonometric features in the construction of the \DSE-guided classical surrogate, together with its complexity analysis.
Recall that the target function in $\mathcal{F}_{\mathcal{U}}$ for all $\mathcal{U}\subset \mathsf{Arc}(N,d)$ admits the trigonometric expansion
\begin{equation}
    f(\bx)
    =
    \Tr\!\left(\rho_0U(\bx)^\dagger O U(\bx)\right)
    =
    \sum_{\bomega\in\Lambda}
    \Tr(\rho_0Q_{\bomega})\Phi_{\bomega}(\bx),
    \qquad
    \Lambda=\{0,\pm1\}^d.
    \label{append:eq:direct_sampler_target_expansion}
\end{equation}
Constructing the \DSE-guided surrogate requires selecting an informative subset
$\Lambda_{\mathsf q}\subset\Lambda$.  Since $|\Omega|=3^d$, enumerating all modes is computationally prohibitive when $d$ is large.  A truncation rule based only on $\|\bomega\|_0$, as proposed in prior studies \cite{du2025efficient,liao2025demonstration}, also ignores the circuit- and observable-dependent coefficient $\Tr(\rho_0Q_{\bomega})$. We therefore seek a quantum subroutine whose measurement outcomes follow
\begin{equation}
    \mathrm p(\bomega)
    =
    \frac{
        2^{-\|\bomega\|_0}\Tr(\rho_0Q_{\bomega})^2
    }{\nu_f},
    \qquad
    \nu_f
    :=\mathbb E_{\bx\sim\Unif[-\pi,\pi]^d}f(\bx)^2=
    \sum_{\bomega\in\Lambda}
    2^{-\|\bomega\|_0}\Tr(\rho_0Q_{\bomega})^2.
    \label{append:eq:direct_sampler_target_distribution}
\end{equation}
Sampling from $\mathrm{p}(\bomega)$ favors precisely the modes $\bomega \in \{0,\pm 1\}$ that make large contributions to the target function $f(\bx)$. We achieve this aim by preparing a quantum state that encodes $\sqrt{\mathrm{p}(\bomega)}$ as the state amplitude, namely
\begin{equation}\label{append:eq:amplitude_encode_state}
    \ket{\psi_f}
    =
    \frac{1}{\sqrt{\nu_f}}
    \sum_{\bomega\in\Lambda}
    2^{-\|\bomega\|_0/2}
    \Tr(\rho_0Q_{\bomega})\ket{\bomega}
    =
    \sum_{\bomega\in\Lambda}
    s_{\bomega}\sqrt{\mathrm p(\bomega)}\ket{\bomega},
    \qquad
    s_{\bomega}:=\operatorname{sgn}\!\left[\Tr(\rho_0Q_{\bomega})\right].
\end{equation}
Here each entry $\bomega_i\in\{0,1,-1\}$ is encoded by a two-qubit computational-basis state.  The signs $s_{\bomega}$ do not affect the measurement probabilities, and Eq.~\eqref{append:eq:direct_sampler_target_distribution} ensures $\langle\psi_f|\psi_f\rangle=1$ whenever $\nu_f>0$. In this regard, the learner can conduct computational basis measurements on the prepared state $\ket{\psi_f}$ to collect $\bomega$ and construct the mode subset $\Lambda_{\mathsf{q}}$ for feature identification.

We now specify the access model required for the quantum subroutine.  The observable is nonzero and has a known $q$-sparse Pauli representation
$O=\sum_{l=1}^{q}O_l=\sum_{l=1}^{q}\bm{a}_l P_l$, where $O_l:=\bm{a}_l P_l$, $P_l\in\mathcal{P}_N$, and
\begin{equation}
    0<\|\bm{a}\|_1
    :=\sum_{l=1}^{q}|\bm{a}_l|
    =\sum_{l=1}^{q}\|O_l\|_\infty
    \le1,
    \label{append:eq:lcu_normalization}
\end{equation}
with $\bm{a}=(\bm{a}_1,\ldots,\bm{a}_q)$.  Since $\mathcal{P}_N=\{\mathbb I,X,Y,Z\}^{\otimes N}$ is the unnormalized Pauli basis, $\|P_l\|_\infty=1$ and Eq.~\eqref{append:eq:lcu_normalization} is identical to the observable normalization in Eq.~\eqref{eq:target_function_set}.  
The known sparse decomposition is an additional implementation assumption used only to block encode $O$.  Moreover, the quantum compiler is given the locations and target qubits of the $d$ tunable $\RZ$ gates and can execute the compiled circuits $U(\bx)$ and $U(\bx)^\dagger$.  This information is used only by the feature-identification subroutine during training, while the resulting surrogate remains a purely classical predictor at inference time.

The following lemma characterizes the time complexity of this quantum subroutine for preparing the quantum state $\ket{\psi_f}$, as well as the number of measurements required to construct the mode subset $\Lambda_{\mathsf{q}}\subset \{0,\pm 1\}^d$.

\begin{lemma}[Quantum subroutine for feature identification]
\label{append:lem:direct_mode_sampler}
Consider an arbitrary $N$-qubit circuit $U(\bx)\subset \mathsf{Arc}(N,d)$ containing $G$ gates in total, including $d$ tunable $\RZ$ gates, the input state $\rho_0=\ket{0}\bra{0}^{\otimes N}$, and the bounded observable $O=\sum_{l=1}^{q}O_l$ with $\sum_{l=1}^{q}\|O_l\|_{\infty}\le 1$.  Suppose $\nu_f>0$. Then there exists a quantum subroutine, with access to $U(\bx)$, $U(\bx)^{\dagger}$, the locations and target qubits of the tunable gates $\RZ$, that prepares
\begin{equation}
    \ket{\psi_f}
    =
    \frac{1}{\sqrt{\nu_f}}
    \sum_{\bomega\in\Omega}
    2^{-\|\bomega\|_0/2}
    \Tr(\rho_0Q_{\bomega})
    \ket{\bomega}.
    \label{append:eq:direct_sampler_main_state}
\end{equation}
A computational-basis measurement of its $2d$-qubit frequency register returns one exact sample from $\mathrm p(\bomega)$ in Eq.~\eqref{append:eq:direct_sampler_target_distribution}.

Preparing
After one-time classical preprocessing of complexity $\mathcal O(qN)$, one invocation of the expectation-state preparation unitary or its inverse uses
\begin{equation}
    C_{\rm eval}=\mathcal O(qN+G+d)
    \label{append:eq:direct_eval_cost}
\end{equation}
gates and $N+2d+\lceil\log_2q\rceil$ qubits.  Repeat-until-success postselection produces one sample with expected gate complexity
\begin{equation}
    \mathcal O\left(
        C_{\rm eval}\frac{\|\bm{a}\|_1^2}{\nu_f}
    \right).
    \label{append:eq:direct_repeat_cost}
\end{equation}
Amplitude amplification improves this to
\begin{equation}
    \mathcal O\left(
        C_{\rm eval}\frac{\|\bm{a}\|_1}{\sqrt{\nu_f}}
    \right)
    \label{append:eq:direct_amplified_cost}
\end{equation}
per sample.  Consequently, obtaining $m_{\mathsf f}$ independent frequency draws and retaining their distinct outcomes in $\Lambda_{\mathsf q}$ has expected gate complexity
\begin{equation}
    \mathcal O\left(
        (qN+G+d)m_{\mathsf f}
        \frac{\|\bm{a}\|_1}{\sqrt{\nu_f}}
    \right),
    \qquad
    |\Lambda_{\mathsf q}|\le m_{\mathsf f}.
    \label{append:eq:direct_total_cost}
\end{equation}
\end{lemma}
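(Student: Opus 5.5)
The plan is to build $\ket{\psi_f}$ in three layers: a coherent path-branching register that replaces each tunable $\RZ$ gate, a block encoding of $O$ through a linear combination of unitaries (LCU), and a Hadamard-test-style overlap with $\rho_0$ that writes the coefficient $\Tr(\rho_0Q_{\bomega})$ into an amplitude.

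\textbf{Frequency register.} Each $\RZ(\bx_j)$ has the PTM decomposition $\mathsf D_0+\cos(\bx_j)\mathsf D_1+\sin(\bx_j)\mathsf D_{-1}$. At the operator level, this means $\RZ(x)^\dagger P\RZ(x)$ equals $P$ if $P$ commutes with $Z_j$. Otherwise it equals $\cos(x)P+\sin(x)(iZ_jP)$ up to sign. We exploit $\RZ(x)=\cos(x/2)\mathbb I-i\sin(x/2)Z_j$ and replace each tunable gate by a controlled operation on a two-qubit register $\ket{\bomega_j}$. First prepare $\tfrac{1}{\sqrt2}(\ket{0}_{c}+\ket{1}_{c})$-type control states, then apply $\mathbb I$ or $Z_j$ on the left and right copies of the Heisenberg sandwich. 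Concretely, $U(\bx)^\dagger O U(\bx)$ is replaced by the superoperator-averaged form $\sum_{\bomega}2^{-\|\bomega\|_0/2}Q_{\bomega}\otimes\ket{\bomega}$.

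We would verify this by integrating over $\bx$. Because $\mathbb E_{\bx}\Phi_{\bomega}\Phi_{\bomega'}=2^{-\|\bomega\|_0}\delta$, the map $\Phi_{\bomega}(\bx)\mapsto 2^{\|\bomega\|_0/2}\ket{\bomega}$ is an isometry from the trigonometric span into the register. This is realized by the controlled-$Z_j$ branches: the $\mathsf D_0$ branch is the symmetrized $\tfrac12(P+Z_jPZ_j)$, and the $\mathsf D_{\pm1}$ branches are the antisymmetrized parts. Each costs $\mathcal O(1)$ extra gates, which gives the $+d$ in $C_{\rm eval}$. The register needs $2d$ qubits.

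\textbf{Block encoding and overlap.} A PREPARE oracle on $\lceil\log_2 q\rceil$ ancillas with amplitudes $\sqrt{|\bm a_l|/\|\bm a\|_1}$, together with a SELECT oracle applying the signed $P_l$, block-encodes $O/\|\bm a\|_1$. Classical preprocessing of the sparse Pauli list costs $\mathcal O(qN)$, and SELECT costs $\mathcal O(qN)$ gates. Sandwiching SELECT between the branched $U$ and $U^\dagger$ on $\ket{0}^{\otimes N}$, then projecting the system and LCU ancillas back onto $\ket{0}$, leaves the unnormalized frequency-register state $\|\bm a\|_1^{-1}\sum_{\bomega}2^{-\|\bomega\|_0/2}\Tr(\rho_0Q_{\bomega})\ket{\bomega}$. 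Its squared norm is $\nu_f/\|\bm a\|_1^2$, so postselection succeeds with exactly that probability. Repeat-until-success therefore gives the cost in Eq.~\eqref{append:eq:direct_repeat_cost}. Fixed-point or standard amplitude amplification, using the preparation unitary and its inverse plus reflections costing $\mathcal O(N+d+\log q)$, quadratically improves this to Eq.~\eqref{append:eq:direct_amplified_cost}. Summing over $m_{\mathsf f}$ independent draws gives Eq.~\eqref{append:eq:direct_total_cost}, and $|\Lambda_{\mathsf q}|\le m_{\mathsf f}$ holds trivially.

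\textbf{Main obstacle.} The delicate step is making the branch construction produce exactly $2^{-\|\bomega\|_0/2}\Tr(\rho_0Q_{\bomega})$ with consistent signs for Pauli sums $Q_{\bomega}$ arising from non-Clifford fixed blocks. We would first argue linearity: each $\RZ$ replacement acts on the Heisenberg operator as the linear map $\mathcal D_0\oplus\mathcal D_1\oplus\mathcal D_{-1}$, with normalization chosen so that the map $A\mapsto\sum_k\ket{k}\otimes\mathcal D_kA$ is implementable as a unitary dilation. We would then check normalization on a single $\RZ$ before composing.

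One technical point must be handled in that check. The dilation naturally yields weights $1,\tfrac{1}{\sqrt2},\tfrac{1}{\sqrt2}$, which correspond to $\mathbb E\cos^2=\tfrac12$. This per-gate agreement is what produces the global factor $2^{-\|\bomega\|_0/2}$.
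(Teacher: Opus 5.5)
Your LCU block encoding, postselection, amplitude amplification and cost bookkeeping match the paper. The step you flag as the main obstacle, however, fails as described, and your claim that the dilation ``naturally'' gives weights $1,\tfrac{1}{\sqrt2},\tfrac{1}{\sqrt2}$ is incorrect.

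Let $A$ be the operator propagated back from the later layers. The target per-gate output is $\mathcal D_0(A)=\tfrac12(A+Z_jAZ_j)$, $\tfrac{1}{\sqrt2}\mathcal D_1(A)=\tfrac{1}{2\sqrt2}(A-Z_jAZ_j)$ and $\tfrac{1}{\sqrt2}\mathcal D_{-1}(A)=\tfrac{i}{2\sqrt2}[Z_j,A]$. Take two $\ket{+}$ control qubits switching $Z_j$ in $U$ and in $U^\dagger$. The four branches then carry $\tfrac12A$, $\tfrac12AZ_j$, $\tfrac12Z_jA$ and $\tfrac12Z_jAZ_j$. The natural Bell-type recombination yields $\tfrac{1}{\sqrt2}\mathcal D_0(A)$, $\tfrac{1}{\sqrt2}\mathcal D_1(A)$ and $\tfrac{1}{\sqrt2}\mathcal D_{-1}(A)$, which are uniform weights. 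It also leaks $\tfrac{1}{2\sqrt2}\{Z_j,A\}$ onto the unused codeword $\ket{11}$, which is nonzero whenever $A$ has a part commuting with $Z_j$.

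No choice of control state or recombination repairs this exactly. Take control amplitudes $\beta_{c}$ and any unitary $T$ on the pair. Linear independence of $A\mapsto A,\,AZ_j,\,Z_jA,\,Z_jAZ_j$ forces the row of $T$ producing $\gamma\mathcal D_0$ to be $\bigl(\gamma/(2\beta_{00}),0,0,\gamma/(2\beta_{11})\bigr)$. It forces the row producing $\gamma\mathcal D_1/\sqrt2$ to be $\bigl(\gamma/(2\sqrt2\beta_{00}),0,0,-\gamma/(2\sqrt2\beta_{11})\bigr)$. Their squared norms differ by a factor $2$, so they cannot both be unit rows.

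With unitary recombination you would therefore sample from a distribution proportional to $\Tr(\rho_0Q_{\bomega})^2$, not from $\mathrm p(\bomega)$. The success probability would be $2^{-d}\sum_{\bomega}\Tr(\rho_0Q_{\bomega})^2/\|\bm a\|_1^2$, which can be $2^{-d}$ times the claimed $\nu_f/\|\bm a\|_1^2$. Reweighting with an extra ancilla and postselection costs a constant factor per gate; a row-norm count shows no choice of control amplitudes beats $2/3$. The claimed bounds in Eqs.~\eqref{append:eq:direct_repeat_cost}--\eqref{append:eq:direct_amplified_cost} therefore fail by a factor exponential in $d$. Separately, the isometry you wrote has the wrong sign: it is $\Phi_{\bomega}\mapsto 2^{-\|\bomega\|_0/2}\ket{\bomega}$.

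The missing idea is to make every branch a genuine conjugation by a single rotation. Such a conjugation never populates the anticommutator direction, because $\RZ(\alpha)^\dagger A\,\RZ(\alpha)=\mathcal D_0(A)+\cos\alpha\,\mathcal D_1(A)+\sin\alpha\,\mathcal D_{-1}(A)$. The paper assigns one two-qubit block per tunable gate, in uniform superposition over the grid $\{0,\pm2\pi/3\}$. That block controls the actual rotation angle in both $U$ and $U^\dagger$, using two controlled-$\RZ$ gates per occurrence.

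This grid is an exact quadrature for products of $\{1,\cos,\sin\}$. After $\mathsf{CU}$, $B_O$ and $\mathsf{CU}^\dagger$, the postselected state is proportional to $\sum_{\bomega}f(\bx_{\bomega})\ket{\bomega}$ and is obtained with probability exactly $\nu_f/\|\bm a\|_1^2$. Then $\widetilde R^{\otimes d}$ maps it to $\ket{\psi_f}$. $\widetilde R$ is unitary because the sampled table of $(1,\cos,\sin)$ on the grid has orthogonal rows with squared norms $3,\tfrac32,\tfrac32$. That is exactly where the $1:\tfrac{1}{\sqrt2}:\tfrac{1}{\sqrt2}$ weighting comes from, at no cost in success probability. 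Your instinct to ``integrate over $\bx$'' is the right one; the quadrature grid is how to carry it out coherently.
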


For clarity, we first outline the implementation of the quantum subroutine in Lemma~\ref{append:lem:direct_mode_sampler} and the circuit constructions and their detailed derivations are deferred to the subsequent subsections. As shown in Fig.~\ref{fig:quantum_sampler}, the subroutine uses three registers, denoted by $\mathsf{Q}, \mathsf{F}, \mathsf{A}$. The $N$-qubit system register $\mathsf{Q}$ is initialized in $\rho_0=\bra{0}\ket{0^N}$ and is used to execute the circuit $U(\bx)$. The $2d$-qubit register $\mathsf{F}$ serves as the frequency-output register that extracts the information of $\sqrt{\mathrm{p}(\bomega)}$ as the amplitudes of the computational basis state $\ket{\bomega}$, where the frequency labels in each coordinate $\bomega_i= 0,1,-1$ are encoded by the two-qubit states $\ket{00},\ket{01},\ket{10}$, respectively. The $a=\lceil \log q \rceil$-qubit ancillary register $\mathsf{A}$ is used to embed the observable $O$ into the quantum circuit through LCU block encoding.

\begin{figure*}[t]
 \centering
 \includegraphics[width=0.8\linewidth]{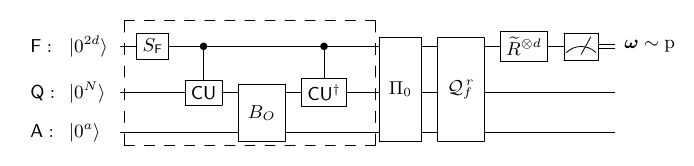}
 \caption{\textbf{Implementation of the quantum subroutine that samples important frequency modes.} The preparation $S_{\mathsf{F}}$ creates the uniform superposition over the three-angel grid. The dashed block coherently applies $U(\bx_{\bomega})$, the LCU block encoding $B_O$ of the observable, and $U(\bx_{\bomega})^{\dagger}$. The block $\Pi_0$ and $\mathcal{Q}_f^{r}$ denote the postselection of $\mathsf{QA}$ onto $\ket{0^N0^a}$ and its coherent amplitude-amplification implementation, respectively, producing $\ket{\chi_f}$ on $\mathsf{F}$. Finally, the local transformation $\widetilde{R}^{\otimes d}$ prepares $\ket{\psi_f}$, whose computational-basis measurement returns $\bomega\sim \mathrm{p}$. Bundled wires represent the registers rather than individual qubits.
 }
 \label{fig:quantum_sampler}
\end{figure*}

Starting from the initial state $\ket{0^{2d}}_{\mathsf{Q}}\ket{0^N}_{\mathsf{F}}\ket{0^{a}}_{\mathsf{A}}$, the quantum subroutine proceeds in three stages. First, the register $\mathsf{F}$ is prepared in the uniform superposition $\ket{\mathrm{unif}}_{\mathsf{F}}= \frac{1}{\sqrt{3^d}}\sum_{\bomega\in \{0,\pm 1\}^d}\ket{\bomega}$. Second, we apply a quantum circuit that transforms the state $\ket{\mathrm{unif}}{\mathsf{F}}\ket{0^N}_{\mathsf{Q}}\ket{0^{a}}_{\mathsf{A}}$ into
\begin{equation}\label{append:eq:uou_state}
    \ket{\Psi}=\frac{1}{\|\bm{a}\|_1\sqrt{3^d}}\sum_{\bomega}\ket{\bomega}_{\mathsf{F}}U(\alpha_{\bomega})^{\dagger}OU(\alpha_{\bomega})\ket{0^N}_{\mathsf{Q}}\ket{0^{a}}_{\mathsf{A}}.
\end{equation}
This state coherently encodes the evolved observables $U(\bm{\alpha}_{\bomega})^{\dagger}OU(\bm{\alpha}_{\bomega})$ for all $\bomega\in{0,\pm1}^d$, where $B_O$ is the LCU block encoding circuit of the observable $O$ (as detailed in SI.~\ref{append:subsec:direct_sampler_access}) and $\mathsf{CU}(\bx)=\sum_{\bomega} \ket{\bomega}\bra{\bomega}_{\mathsf{F}}\otimes U(\bm{\alpha}_{\bomega})_{\mathsf{Q}}$ is a controlled circuit defined over three-parameter grid points $\bm{\alpha}_{\bomega}\in \mathbb{R}^d$. Here, $\bm{\alpha}_{\bomega}$ are chosen such that $\sum_{\bomega} f(\bm{\alpha}_{\bomega})^2=\nu_f$ (as detailed in SI.~\ref{append:subsec:direct_sampler_access}). Consequently, postselecting the registers $\mathsf{Q}$ and $\mathsf{A}$ onto $\ket{0^N}_{\mathsf{Q}}\ket{0^{a}}_{\mathsf{A}}$ yields the state 
\begin{equation}\label{append:eq:f_amplitude_state}
    \ket{\chi_f}_{\mathsf{F}}=\frac{1}{\sqrt{3^d\nu_f}}\sum_{\bomega} f(\bm{\alpha}_{\bomega}) \ket{\bomega}_{\mathsf{F}}, 
\end{equation}
with success probability $\nu_f/\|\bm{a}\|_1^2$. This success probability could be improved to $\sqrt{\nu_f}/\|\bm{a}\|_1$ through amplitude amplification.
Finally, the state $\ket{\chi_f}_{\mathsf{F}}$ is transformed into the desired probability-amplitude-encoded state $\ket{\psi_f}_{\mathsf{F}}$ by applying a tensor product of $d$ two-qubit gates.

The subsequent sections are organized as follows.  First, SI.~\ref{append:subsec:direct_sampler_access} constructs the LCU block encoding circuit $B_O$ of the observable $O$ and elucidates how to set the grid rotation angles $\bm{\alpha}_{\bomega}$ in Eq.~\eqref{append:eq:uou_state}.  Second, SI.~\ref{append:subsec:coherent_expectation_encoding} implements the coherently controlled grid circuit $\mathsf{CU}$ and prepares the state $\ket{\chi_f}_{\mathsf{F}}$ whose amplitudes are the sampled expectation values $f(\bm{\alpha}_{\bomega})$ in Eq.~\eqref{append:eq:f_amplitude_state}. Third, SI.~\ref{append:subsec:grid_to_frequency} provide the tensor-provide transformation that evolves $\ket{\chi_f}_{\mathsf{F}}$ to the desired state $\ket{\psi_f}_{\mathsf{F}}$ in Eq.~\eqref{append:eq:direct_sampler_main_state}.  Finally, SI.~\ref{append:subsec:direct_sampler_complexity} provides the proof of Lemma~\ref{append:lem:direct_mode_sampler} by combining these components to derive the total gate and measurement complexities.

\subsection{Access primitives and the three-angle parameter grid}
\label{append:subsec:direct_sampler_access}

\subsubsection{Sparse-Pauli block encoding of the observable}

We first construct the observable block encoding required for coherent expectation-value evaluation.  The following lemma records both the construction and its gate complexity.

\begin{lemma}[LCU coefficient-state preparation and sparse-Pauli block encoding]
\label{append:lem:sparse_Pauli_LCU}
Let $O=\sum_{l=1}^{q}\bm{a}_{l} P_l$ be the nonzero Hermitian observable in Eq.~\eqref{append:eq:lcu_normalization}, specified by a classical list of $q$ distinct unnormalized $N$-qubit Pauli strings and their coefficients.  With
$a=\lceil\log_2q\rceil$ address qubits (and $a=0$ for $q=1$), there exists a coefficient-state preparation unitary satisfying
\begin{equation}
    \operatorname{PREP}_O\ket{0^a}
    =
    \ket{G_O}_{\mathsf A}
    :=
    \sum_{l=1}^{q}
    \sqrt{\frac{|\bm{a}_{l}|}{\|\bm{a}\|_1}}\ket{l}_{\mathsf A}.
    \label{append:eq:LCU_coefficient_state}
\end{equation}
After one-time classical preprocessing of complexity $\mathcal O(qN)$, $\operatorname{PREP}_O$ and $\operatorname{PREP}_O^\dagger$ can each be implemented using $\mathcal O(q)$ one- and two-qubit gates, while the corresponding Pauli-selection unitary can be implemented using $\mathcal O(qN)$ gates.  These operations construct a unitary $B_O$ satisfying
\begin{equation}
    (\mathbb I_{\mathsf Q}\otimes\bra{0^{a}}_{\mathsf A})
    B_O
    (\mathbb I_{\mathsf Q}\otimes\ket{0^{a}}_{\mathsf A})
    =\frac{O}{\|\bm{a}\|_1},
    \label{append:eq:observable_block_encoding}
\end{equation}
and both $B_O$ and $B_O^\dagger$ have gate complexity $\mathcal O(qN)$.  For $q=1$, no address qubit is required, and one may take $B_O=\operatorname{sgn}(\bm{a}_1)P_1$.  Here $\ket{G_O}$ is the coefficient state on the LCU address register.  The state-preparation and block-encoding constructions follow standard sparse-state-preparation and LCU/block-encoding techniques~\cite{gleinig2021efficient,gilyen2019quantum}.
\end{lemma}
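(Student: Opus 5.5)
The plan is the standard prepare--select--unprepare (LCU) construction. Since each Pauli string $P_l$ is unitary with $\|P_l\|_\infty=1$, I absorb the sign of $\bm{a}_l$ into the selected unitary. Set $\tilde P_l:=\operatorname{sgn}(\bm{a}_l)P_l$, which is again unitary, so that $O=\sum_l |\bm{a}_l|\tilde P_l$.

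Define
\[
\operatorname{SELECT}=\sum_{l=1}^{q}\ket{l}\bra{l}_{\mathsf A}\otimes \tilde P_l+\sum_{l>q}\ket{l}\bra{l}_{\mathsf A}\otimes\mathbb I_{\mathsf Q},
\]
and $B_O:=(\operatorname{PREP}_O^\dagger\otimes\mathbb I)\,\operatorname{SELECT}\,(\operatorname{PREP}_O\otimes\mathbb I)$. Then
\[
(\mathbb I\otimes\bra{0^a})B_O(\mathbb I\otimes\ket{0^a})=\sum_{l}\frac{|\bm{a}_l|}{\|\bm{a}\|_1}\tilde P_l=\frac{O}{\|\bm{a}\|_1}.
\]
This verifies Eq.~\eqref{append:eq:observable_block_encoding} by expanding $\bra{G_O}\operatorname{SELECT}\ket{G_O}$. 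For $q=1$ the address register is empty, and $B_O=\operatorname{sgn}(\bm{a}_1)P_1$ satisfies the identity trivially because $\|\bm{a}\|_1=|\bm{a}_1|$.

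For $\operatorname{PREP}_O$, I would invoke a sparse/dense amplitude-state preparation on $a=\lceil\log_2 q\rceil$ qubits. The target vector has only $q$ nonzero amplitudes, all nonnegative reals. A binary-tree (Grover--Rudolph style) preparation uses one controlled $R_Y$ rotation per internal node, giving $\mathcal O(2^a)=\mathcal O(q)$ multiplexed rotations. The step I expect to need the most care is keeping the count at $\mathcal O(q)$ one- and two-qubit gates rather than $\mathcal O(q\log q)$ once multi-controls are decomposed. I would first try the uniformly controlled rotation decomposition: a level-$k$ multiplexor costs $\mathcal O(2^k)$ CNOTs and rotations, and summing over $k\le a$ gives $\mathcal O(2^a)=\mathcal O(q)$. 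Failing that, I would cite the sparse-state construction of \cite{gleinig2021efficient}. The rotation angles are computed classically from the partial sums of $|\bm{a}_l|$ in $\mathcal O(q)$ time, and $\operatorname{PREP}_O^\dagger$ is obtained by reversing the circuit.

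For SELECT, I would parse each Pauli string in $\mathcal O(N)$, which is the $\mathcal O(qN)$ classical preprocessing. For each $l$, I apply the $N$ single-qubit factors of $\tilde P_l$, each controlled on the address register being $\ket{l}$. To avoid an $\mathcal O(a)$ overhead per controlled gate, I use the unary-iteration technique: a walk over a binary tree of address values with one work qubit. This computes the indicator $\mathbbm{1}[\text{address}=l]$ for all $l$ sequentially with $\mathcal O(q)$ total Toffoli/CNOT gates. While the indicator is set, $N$ singly controlled Paulis are applied, plus a controlled phase for the sign, for $\mathcal O(qN)$ gates in total. If one insists on no work qubits, the cruder bound $\mathcal O(qN\log q)$ follows from the multi-controlled decomposition, so this indicator-flag trick is where I would be careful.

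Combining the pieces, $B_O$ and $B_O^\dagger=(\operatorname{PREP}_O^\dagger)\operatorname{SELECT}^\dagger\operatorname{PREP}_O$ each cost $\mathcal O(q)+\mathcal O(qN)=\mathcal O(qN)$ gates, which completes the lemma.
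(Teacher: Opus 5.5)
Your construction matches the paper's: you fold $\operatorname{sgn}(\bm{a}_l)$ into the selected Pauli, sandwich $\operatorname{SELECT}_O$ between $\operatorname{PREP}_O$ and $\operatorname{PREP}_O^\dagger$, and read off $O/\|\bm{a}\|_1$ from $\bra{G_O}\operatorname{SELECT}_O\ket{G_O}$. The two arguments differ only in how the gate counts are justified.

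\textbf{SELECT.} The paper compiles, for each of the $N$ system qubits, a uniformly controlled choice among $\{\mathbb I,X,Y,Z\}$ addressed by the $a$ address qubits. That costs $\mathcal O(2^a)=\mathcal O(q)$ gates per qubit and needs no work qubits. Your unary-iteration route also reaches $\mathcal O(qN)$, but two details need correcting. First, unary iteration needs about one ancilla per level of the address tree, i.e.\ $\mathcal O(\log q)$ work qubits rather than one. That would enlarge the qubit count $N+2d+\lceil\log_2 q\rceil$ quoted in Lemma~\ref{append:lem:direct_mode_sampler}. Second, your ancilla-free fallback $\mathcal O(qN\log q)$ is needlessly pessimistic. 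The multiplexor decomposition you already use for $\operatorname{PREP}_O$ handles each target qubit of $\operatorname{SELECT}_O$ without ancillas in $\mathcal O(q)$ gates.

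\textbf{PREP.} Here your argument is the cleaner justification of the $\mathcal O(q)$ cost. The coefficient vector fills at least half of the $2^a$ basis states, so treating it as a dense state and using the multiplexed-rotation tree gives $\mathcal O(2^a)=\mathcal O(q)$ directly. The paper instead cites a sparse-state-preparation result. Bounds of that type scale like (number of nonzeros)$\times$(number of qubits), which would carry an extra $\log q$ factor here.
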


\begin{proof}
Since $O$ is Hermitian, its Pauli coefficients are real.  Write $\bm{s}_l:=\operatorname{sgn}(\bm{a}_{l})$ and define
\begin{align}
    \operatorname{SELECT}_O
    &:=
    \sum_{l=1}^{q}
    \bm{s}_l P_l
    \otimes\ket{l}\bra{l}_{\mathsf A},
    \label{append:eq:LCU_SELECT}
\end{align}
with unused address states completed by any unitary action.  Define
\begin{equation}
    B_O
    :=
    (\mathbb I_{\mathsf Q}\otimes\operatorname{PREP}_O^\dagger)
    \operatorname{SELECT}_O
    (\mathbb I_{\mathsf Q}\otimes\operatorname{PREP}_O).
    \label{append:eq:LCU_block_encoding_circuit}
\end{equation}
Projecting the address register back onto $\ket{0^a}$ gives
\begin{align}
    (\mathbb I_{\mathsf Q}\otimes\bra{0^a})
    B_O
    (\mathbb I_{\mathsf Q}\otimes\ket{0^a})
    &=
    \sum_{l=1}^{q}
    \frac{|\bm{a}_{l}|}{\|\bm{a}\|_1}\bm{s}_l P_l
    =
    \frac{O}{\|\bm{a}\|_1},
    \label{append:eq:LCU_block_derivation}
\end{align}
which verifies Eq.~\eqref{append:eq:observable_block_encoding}.

It remains to count the implementation cost.  The $q$ amplitudes in Eq.~\eqref{append:eq:LCU_coefficient_state} can be compiled into a sparse state-preparation circuit using $\mathcal O(q)$ elementary gates~\cite{gleinig2021efficient}.  For each of the $N$ system qubits, $\operatorname{SELECT}_O$ is a uniformly controlled choice among $\{\mathbb I,X,Y,Z\}$ specified by the $q$ Pauli words; compiling these choices costs $\mathcal O(q)$ gates per system qubit and hence $\mathcal O(qN)$ gates in total.  Equation~\eqref{append:eq:LCU_block_encoding_circuit} calls $\operatorname{PREP}_O$, $\operatorname{SELECT}_O$, and $\operatorname{PREP}_O^\dagger$ once each, giving
\begin{equation}
    C_{B_O}
    =
    2C_{\operatorname{PREP}_O}
    +
    C_{\operatorname{SELECT}_O}
    =
    \mathcal O(qN).
    \label{append:eq:LCU_block_encoding_cost}
\end{equation}
The inverse circuit has the same asymptotic cost.  The classical descriptions contain $q$ Pauli words of length $N$, so reading them and compiling the controlled Pauli operations requires $\mathcal O(qN)$ one-time classical preprocessing.  This proves the lemma.
\end{proof}

\subsubsection{Three-angle grid and exact evaluation of the mean-square expectation}

For each coordinate, consider the three grid points
\begin{equation}
    \bm{\alpha}_0=0,\qquad
    \bm{\alpha}_1=\frac{2\pi}{3},\qquad
    \bm{\alpha}_{-1}=-\frac{2\pi}{3}.
    \label{append:eq:three_point_grid}
\end{equation}
Let $\phi_0(\alpha)=1$, $\phi_1(\alpha)=\cos\alpha$, and $\phi_{-1}(\alpha)=\sin\alpha$.  Their values at these three angles obey
\begin{equation}
    \frac{1}{3}\sum_{\bomega\in\{0,1,-1\}}
    \phi_r(\bm{\alpha}_{\bomega})\phi_s(\bm{\alpha}_{\bomega})
    =
    \delta_{r,s}\,2^{-\mathbbm 1\{r\ne0\}},
    \qquad r,s\in\{0,1,-1\}.
    \label{append:eq:three_point_orthogonality}
\end{equation}
Indeed, their sampled values are
\begin{equation}
    \bigl[\phi_r(\bm{\alpha}_{\omega})\bigr]_
    {\substack{r\in\{0,1,-1\}\\\omega\in\{0,1,-1\}}}
    =
    \begin{pmatrix}
        1&1&1\\
        1&-1/2&-1/2\\
        0&\sqrt3/2&-\sqrt3/2
    \end{pmatrix}.
    \label{append:eq:sampled_trig_table}
\end{equation}
The three row vectors in Eq.~\eqref{append:eq:sampled_trig_table} are mutually orthogonal and have squared Euclidean norms $3$, $3/2$, and $3/2$, respectively.  Dividing their inner products by $3$ gives Eq.~\eqref{append:eq:three_point_orthogonality}.
For $\bomega=(\bomega_1,\ldots,\bomega_d)\in\{0,1,-1\}^d$, define
$\bx_{\bomega}=\bm{\alpha}_{\bomega}:=(\bm{\alpha}_{\bomega_1},\cdots,\bm{\alpha}_{\bomega_d})$.  Applying the one-dimensional relation independently to all $d$ coordinates gives
\begin{equation}
    \frac{1}{3^d}\sum_{\bomega}
    \Phi_{\bm r}(\bx_{\bomega})
    \Phi_{\bm s}(\bx_{\bomega})
    =
    \delta_{\bm r,\bm s}2^{-\|\bm r\|_0},
    \qquad
    \bm r,\bm s\in\{0,1,-1\}^d.
    \label{append:eq:grid_tensor_orthogonality}
\end{equation}
To verify that this discrete average equals the corresponding continuous mean-square, write $\bm{c}_{\bm r}:=\Tr(\rho_0Q_{\bm r})$ and substitute
$f(\bx_{\bomega})=\sum_{\bm r}\bm{c}_{\bm r}\Phi_{\bm r}(\bx_{\bomega})$.  Then
\begin{align}
    \frac{1}{3^d}\sum_{\bomega}f(\bx_{\bomega})^2
    &=
    \frac{1}{3^d}
    \sum_{\bomega}
    \sum_{\bm r,\bm s}
    \bm{c}_{\bm r}\bm{c}_{\bm s}
    \Phi_{\bm r}(\bx_{\bomega})
    \Phi_{\bm s}(\bx_{\bomega})
    \nonumber\\
    &=
    \sum_{\bm r,\bm s}
    \bm{c}_{\bm r}\bm{c}_{\bm s}
    \delta_{\bm r,\bm s}
    2^{-\|\bm r\|_0}
    \nonumber\\
    &=
    \sum_{\bm r\in\Lambda}
    2^{-\|\bm r\|_0}\bm{c}_{\bm r}^2
    \nonumber\\
    &=
    \mathbb E_{\bx\sim\Unif[-\pi,\pi]^d}f(\bx)^2
    =\nu_f.
    \label{append:eq:grid_exact_nuf}
\end{align}
Thus, averaging the squared function values over these $3^d$ grid points gives exactly the same result as averaging $f(\bx)^2$ over the continuous uniform distribution on $[-\pi,\pi]^d$.

\subsection{Coherent grid evaluation and sampled expectation-value encoding}
\label{append:subsec:coherent_expectation_encoding}

The purpose of this subsection is to coherently encode the grid values
$\{f(\bx_{\bomega})\}_{\bomega\in\{0,1,-1\}^d}$ into a quantum state.  We assume compiled coherent access to the parameterized circuit: the locations and target qubits of its $d$ tunable rotations are known, so each parameter gate can be controlled by its associated two-qubit grid block, while every fixed gate is applied without this additional control.  Thus one invocation of the expectation-encoding circuit calls one coherently controlled $U(\bx_{\bomega})$, one coherently controlled $U(\bx_{\bomega})^\dagger$, and one observable block encoding $B_O$.  We first state the resulting preparation guarantee and then derive the circuit implementation.

\begin{lemma}[Coherent expectation-value encoding]
\label{append:lem:coherent_expectation_encoding}
Under the assumptions of Lemma~\ref{append:lem:direct_mode_sampler}, there exists a unitary $\mathcal W_f$ acting on registers $\mathsf F\mathsf Q\mathsf A$ such that projecting $\mathsf Q\mathsf A$ onto $\ket{0^N0^a}$ succeeds with probability
\begin{equation}
    p_{\rm succ}=\frac{\nu_f}{\|\bm{a}\|_1^2}
\end{equation}
and, conditioned on success, prepares
\begin{equation}
    \ket{\chi_f}_{\mathsf F}
    =
    \frac{1}{\sqrt{3^d\nu_f}}
    \sum_{\bomega\in\{0,1,-1\}^d}
    f(\bx_{\bomega})\ket{\bomega}_{\mathsf F}.
    \label{append:eq:coherent_expectation_lemma_state}
\end{equation}
Amplitude amplification raises the success probability to a constant using
$\mathcal O(\|\bm{a}\|_1/\sqrt{\nu_f})$ applications of $\mathcal W_f$ and
$\mathcal W_f^\dagger$, without changing the normalized conditional state
$\ket{\chi_f}$.
\end{lemma}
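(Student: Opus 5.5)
We will build $\mathcal W_f$ explicitly as
\[
\mathcal W_f=\mathsf{CU}^{\dagger}\,(\mathbb I_{\mathsf F}\otimes B_O)\,\mathsf{CU}\,(S_{\mathsf F}\otimes\mathbb I_{\mathsf{QA}}),
\]
where $S_{\mathsf F}$ prepares $\ket{\mathrm{unif}}_{\mathsf F}=3^{-d/2}\sum_{\bomega}\ket{\bomega}$ and $\mathsf{CU}=\sum_{\bomega}\ket{\bomega}\bra{\bomega}_{\mathsf F}\otimes U(\bx_{\bomega})_{\mathsf Q}$.

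\textbf{Building $S_{\mathsf F}$ and $\mathsf{CU}$.} $S_{\mathsf F}$ is a tensor product of $d$ two-qubit gates, each mapping $\ket{00}$ to $(\ket{00}+\ket{01}+\ket{10})/\sqrt3$. For $\mathsf{CU}$, the fixed gates $V_j$ act uncontrolled. Each tunable gate $\RZ(\bx_j)$ is replaced by a gate controlled on the $j$-th two-qubit block, which applies $\RZ(\bm\alpha_{\omega_j})$ with $\bm\alpha_0=0$, $\bm\alpha_{\pm1}=\pm2\pi/3$. This uses $\mathcal O(1)$ gates per rotation, so $\mathsf{CU}$ costs $\mathcal O(G+d)$ gates. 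Because it is block diagonal in $\ket{\bomega}_{\mathsf F}$, we have $\mathsf{CU}^\dagger=\sum_{\bomega}\ket{\bomega}\bra{\bomega}\otimes U(\bx_{\bomega})^\dagger$, and the controlled inverse is obtained by reversing the gate order.

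\textbf{Computing the projected amplitudes.} Fix a branch $\bomega$ and project $\mathsf A$ onto $\ket{0^a}$. By Lemma~\ref{append:lem:sparse_Pauli_LCU}, the $B_O$ block acts on $\mathsf Q$ as $O/\|\bm a\|_1$. Projecting $\mathsf Q$ onto $\ket{0^N}$ as well, branch $\bomega$ picks up the amplitude
\[
3^{-d/2}\,\|\bm a\|_1^{-1}\,\bra{0^N}U(\bx_{\bomega})^\dagger O\,U(\bx_{\bomega})\ket{0^N}=\frac{f(\bx_{\bomega})}{3^{d/2}\|\bm a\|_1}.
\]
Here we use $\rho_0=\ket{0^N}\bra{0^N}$, so that this diagonal matrix element equals $f(\bx_{\bomega})$. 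It is real because $O$ is Hermitian.

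\textbf{Success probability and conditional state.} Summing the squared amplitudes over all branches gives
\[
p_{\rm succ}=\frac{1}{\|\bm a\|_1^2}\cdot\frac{1}{3^d}\sum_{\bomega}f(\bx_{\bomega})^2.
\]
By the exact grid identity Eq.~\eqref{append:eq:grid_exact_nuf}, this equals $\nu_f/\|\bm a\|_1^2$. The normalized post-measurement state is then exactly $\ket{\chi_f}$ in Eq.~\eqref{append:eq:coherent_expectation_lemma_state}.

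\textbf{Amplitude amplification.} Define the good subspace through the projector $\Pi=\mathbb I_{\mathsf F}\otimes\ket{0^N0^a}\bra{0^N0^a}$. The Grover iterate is
\[
\mathcal Q_f=-\mathcal W_f R_0\mathcal W_f^\dagger R_\Pi,
\]
with $R_0$ the reflection about the all-zero input and $R_\Pi$ the reflection about the range of $\Pi$. Standard amplitude amplification (the fixed-point or exponential-search variants handle an unknown $\nu_f$) reaches constant success probability after $\mathcal O(1/\sqrt{p_{\rm succ}})=\mathcal O(\|\bm a\|_1/\sqrt{\nu_f})$ calls to $\mathcal W_f$ and $\mathcal W_f^\dagger$. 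The iterate acts within the two-dimensional span of the good and bad components, so it only rescales the good component $\Pi\mathcal W_f\ket{0}$. The normalized conditional state is therefore still $\ket{\chi_f}$.

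\textbf{Expected difficulty.} The main technical point is verifying that controlling only the rotation gates realizes $U(\bx_{\bomega})$ coherently in every branch. This holds because the fixed gates are parameter independent and $\mathsf F$ stays classical (diagonal) throughout. The remaining potential subtlety, that the success probability is exact, is already settled by the three-point orthogonality in Eq.~\eqref{append:eq:three_point_orthogonality}.
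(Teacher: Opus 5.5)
Your proposal is correct and follows the paper's proof essentially step for step. Both use the same construction $\mathcal W_f=\mathsf{CU}^{\dagger}(\mathbb I_{\mathsf F}\otimes B_O)\,\mathsf{CU}\,(S_{\mathsf F}\otimes\mathbb I_{\mathsf{QA}})$ with block-controlled three-angle rotations, the same branchwise amplitude $f(\bx_{\omega})/(3^{d/2}\|\bm a\|_1)$ obtained from the block encoding (the paper writes out the orthogonal remainder $\ket{\Gamma^\perp}$ where you simply commute the $\mathsf A$-projector past $\mathsf{CU}^\dagger$), the grid identity giving $p_{\rm succ}=\nu_f/\|\bm a\|_1^2$, and the same Grover iterate with an exponential-search schedule for unknown $\nu_f$.
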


\begin{proof}[Proof of Lemma~\ref{append:lem:coherent_expectation_encoding}]
We first prepare the uniform grid-label state, then compile the coherently controlled circuit $\mathsf{CU}$.  We next combine this circuit with the observable block encoding to place the sampled expectation values in the amplitudes of $\mathsf{F}$.  Finally, we calculate the postselection probability and show how amplitude amplification reduces the preparation cost.

\medskip
\noindent\underline{Register encoding and uniform grid-state preparation.}
Let $\mathsf F=\mathsf F^{(1)}\cdots\mathsf F^{(d)}$ be a $2d$-qubit register, where each block $\mathsf F^{(j)}$ consists of two qubits and uses the encoding
\begin{equation}
    \ket{0}_{\mathsf F^{(j)}}:=\ket{00},\qquad
    \ket{1}_{\mathsf F^{(j)}}:=\ket{01},\qquad
    \ket{-1}_{\mathsf F^{(j)}}:=\ket{10}.
    \label{append:eq:two_qubit_grid_encoding}
\end{equation}
The fourth basis state $\ket{11}$ is unused, so the entire grid is stored in exactly $2d$ qubits.  Let $\mathsf Q$ and $\mathsf A$ denote the $N$-qubit system register and the block-encoding ancilla, respectively.  Prepare
\begin{equation}
    \ket{\mathrm{unif}}_{\mathsf F}
    =
    \frac{1}{\sqrt{3^d}}
    \sum_{\bomega\in\{0,1,-1\}^d}
    \ket{\bomega_1}\cdots\ket{\bomega_d}_{\mathsf F}
    =
    \frac{1}{\sqrt{3^d}}
    \sum_{\bomega\in\Lambda}\ket{\bomega}_{\mathsf F},
    \qquad
    \ket{0^N}_{\mathsf Q}\ket{0^a}_{\mathsf A}.
\end{equation}
Equivalently, let $S_{\mathsf F}=\bigotimes_{j=1}^d S_{\mathsf F^{(j)}}$, where one explicit choice of the local two-qubit unitary is
\begin{equation}
    S_{\mathsf F^{(j)}}
    =
    \begin{pmatrix}
        1/\sqrt3&\sqrt{2/3}&0&0\\
        1/\sqrt3&-1/\sqrt6&1/\sqrt2&0\\
        1/\sqrt3&-1/\sqrt6&-1/\sqrt2&0\\
        0&0&0&1
    \end{pmatrix}.
    \label{append:eq:uniform_grid_preparation_gate}
\end{equation}
Its columns are orthonormal, and its first column gives
$S_{\mathsf F^{(j)}}\ket{00}=(\ket{00}+\ket{01}+\ket{10})/\sqrt3$.
Then $\ket{\mathrm{unif}}_{\mathsf F}=S_{\mathsf F}\ket{0^{2d}}$, and $S_{\mathsf F}$ uses $\mathcal O(d)$ gates.

\medskip
\noindent\underline{Coherently controlled grid circuit.}
Conditioned locally on the two-qubit codeword $\ket{\bomega_j}_{\mathsf F^{(j)}}$, replace the $j$-th tunable rotation by its value at $\bm{\alpha}_{\bomega_j}$.  On the valid code space this implements
\begin{equation}
    \mathsf{CU}
    =
    \sum_{\bomega}
    \ket{\bomega}\bra{\bomega}_{\mathsf F}
    \otimes U(\bx_{\bomega}),
    \qquad
    \ket{\bomega}
    :=
    \ket{\bomega_1}\cdots\ket{\bomega_d},
    \label{append:eq:controlled_grid_circuit}
\end{equation}
with an arbitrary unitary extension on codewords containing $\ket{11}$.  More concretely, for an $\RZ$ parameter gate, the corresponding controlled grid gate is
\begin{align}
    \mathcal R_j
    =&\;
    \ket{00}\bra{00}_{\mathsf F^{(j)}}\otimes\mathbb I
    +
    \ket{01}\bra{01}_{\mathsf F^{(j)}}\otimes
    \RZ(2\pi/3)\nonumber\\
    &+
    \ket{10}\bra{10}_{\mathsf F^{(j)}}\otimes
    \RZ(-2\pi/3)
    +
    \ket{11}\bra{11}_{\mathsf F^{(j)}}\otimes\mathbb I.
    \label{append:eq:controlled_three_angle_rotation}
\end{align}
The four orthogonal control projectors in Eq.~\eqref{append:eq:controlled_three_angle_rotation} sum to the identity, and each target block is unitary; hence $\mathcal R_j$ is unitary.  We next give an explicit elementary-gate implementation.  Denote the two qubits of $\mathsf F^{(j)}$ by $(a_j,b_j)$, with $a_j$ the first bit in Eq.~\eqref{append:eq:two_qubit_grid_encoding}, and let $t_j$ be the system qubit on which the original $\RZ(\bx_j)$ acts.  Setting $\varphi:=2\pi/3$, the rotation angle selected by a computational-basis codeword is
\begin{equation}
    \theta(a_j,b_j)=\varphi(b_j-a_j).
    \label{append:eq:grid_angle_boolean}
\end{equation}
Indeed, the four bit strings $(a_j,b_j)=(0,0),(0,1),(1,0),(1,1)$ give the angles $0,\varphi,-\varphi,0$, respectively.  Consequently, Eq.~\eqref{append:eq:controlled_three_angle_rotation} is implemented exactly by two singly controlled rotations,
\begin{equation}
    \mathcal R_j
    =
    \operatorname{CRZ}_{b_j\rightarrow t_j}(\varphi)\,
    \operatorname{CRZ}_{a_j\rightarrow t_j}(-\varphi).
    \label{append:eq:controlled_grid_two_CRZ}
\end{equation}
The two gates commute because they are rotations about the same target axis.  In particular, on the unused codeword $\ket{11}$ their angles cancel, so the chosen unitary extension acts as the identity.

\medskip
\noindent\underline{Elementary-gate realization of the controlled rotations.}
For completeness, under the convention $\RZ(\theta)=e^{-i\theta Z/2}$, each controlled rotation in Eq.~\eqref{append:eq:controlled_grid_two_CRZ} has the exact decomposition
\begin{align}
    \operatorname{CRZ}_{c\rightarrow t}(\theta)
    =&\;
    \CNOT_{c,t}
    \bigl(\mathbb I_c\otimes\RZ_t(-\theta/2)\bigr)
    \CNOT_{c,t}
    \bigl(\mathbb I_c\otimes\RZ_t(\theta/2)\bigr),
    \label{append:eq:CRZ_elementary_decomposition}
\end{align}
where the rightmost gate acts first.  If the control is $\ket{0}$, the two target rotations cancel.  If the control is $\ket{1}$, using
$X\RZ(\gamma)X=\RZ(-\gamma)$ gives
\begin{equation}
    X\RZ(-\theta/2)X\RZ(\theta/2)
    =
    \RZ(\theta),
    \label{append:eq:CRZ_decomposition_check}
\end{equation}
which verifies Eq.~\eqref{append:eq:CRZ_elementary_decomposition}.  Hence each $\mathcal R_j$ uses four CNOT gates and four single-qubit $\RZ$ gates.

\medskip
\noindent\underline{Assembly of the full controlled circuit.}
It remains to verify that these local replacements implement the full controlled circuit in Eq.~\eqref{append:eq:controlled_grid_circuit}.  Write the original parameterized circuit as
\begin{equation}
    U(\bx)
    =
    V_d\RZ_{t_d}(\bx_d)V_{d-1}\cdots
    V_1\RZ_{t_1}(\bx_1)V_0,
    \label{append:eq:grid_original_gate_sequence}
\end{equation}
where all $V_j$ are independent of $\bx$.  Replace every $\RZ_{t_j}(\bx_j)$ by $\mathcal R_j$ and apply every $V_j$ without control.  The resulting circuit is
\begin{equation}
    \widehat{\mathsf CU}
    =
    (\mathbb I_{\mathsf F}\otimes V_d)\mathcal R_d
    (\mathbb I_{\mathsf F}\otimes V_{d-1})\cdots
    (\mathbb I_{\mathsf F}\otimes V_1)\mathcal R_1
    (\mathbb I_{\mathsf F}\otimes V_0).
    \label{append:eq:grid_compiled_sequence}
\end{equation}
Expanding each $\mathcal R_j$ in its orthogonal control projectors and restricting to the valid code space gives
\begin{align}
    \widehat{\mathsf CU}
    &=
    \sum_{\bomega_1,\ldots,\bomega_d\in\{0,1,-1\}}
    \left(
        \bigotimes_{j=1}^{d}
        \ket{\bomega_j}\bra{\bomega_j}_{\mathsf F^{(j)}}
    \right)
    \otimes
    \left[
        V_d\RZ_{t_d}(\alpha_{\bomega_d})V_{d-1}\cdots
        V_1\RZ_{t_1}(\alpha_{\bomega_1})V_0
    \right]\nonumber\\
    &=
    \sum_{\bomega}
    \ket{\bomega}\bra{\bomega}_{\mathsf F}
    \otimes U(\bx_{\bomega})
    =
    \mathsf{CU}.
    \label{append:eq:grid_compiled_expansion}
\end{align}
Thus the circuit processes all $3^d$ grid branches coherently without enumerating them.  Each parameter gate incurs only the constant overhead derived above, so all $d$ controlled grid rotations contribute $\mathcal O(d)$ elementary gates.

\medskip
\noindent\underline{Expectation-value amplitude encoding.}
The expectation-encoding circuit is explicitly
\begin{equation}
    \mathcal A_f
    :=
    \mathsf{CU}^{\dagger}
    (\mathbb I_{\mathsf F}\otimes B_O)
    \mathsf{CU}.
    \label{append:eq:expectation_encoding_circuit}
\end{equation}
Including the preparation of the uniform grid state, define the full state-preparation unitary
\begin{equation}
    \mathcal W_f
    :=
    \mathcal A_f
    (S_{\mathsf F}\otimes\mathbb I_{\mathsf Q\mathsf A}).
    \label{append:eq:full_expectation_state_preparation}
\end{equation}
It therefore uses the quantum circuits $U(\bx_{\bomega})$, $B_O$, and $U(\bx_{\bomega})^\dagger$ in this order.  Eqs.~\eqref{append:eq:observable_block_encoding} and the definition
$f(\bx)=\bra{0^N}U(\bx)^\dagger O U(\bx)\ket{0^N}$ imply the amplitude-encoding identity below.  We give its derivation explicitly.  The state after applying the controlled forward circuit $\mathsf{CU}$ is
\begin{equation}
    \ket{\Psi_1}
    =
    \frac{1}{\sqrt{3^d}}
    \sum_{\bomega}
    \ket{\bomega}_{\mathsf F}
    U(\bx_{\bomega})\ket{0^N}_{\mathsf Q}
    \ket{0^a}_{\mathsf A}.
    \label{append:eq:direct_state_after_forward}
\end{equation}
For every system state $\ket{\varphi}$, the block-encoding relation in Eq.~\eqref{append:eq:observable_block_encoding} is equivalently written as
\begin{equation}
    B_O\ket{\varphi}_{\mathsf Q}\ket{0^a}_{\mathsf A}
    =
    \frac{O}{\|\bm{a}\|_1}\ket{\varphi}_{\mathsf Q}
    \ket{0^a}_{\mathsf A}
    +
    \ket{\Gamma_{\varphi}^{\perp}}_{\mathsf Q\mathsf A},
    \qquad
    (\mathbb I_{\mathsf Q}\otimes\bra{0^a}_{\mathsf A})
    \ket{\Gamma_{\varphi}^{\perp}}=0.
    \label{append:eq:block_encoding_action}
\end{equation}
Applying Eq.~\eqref{append:eq:block_encoding_action} branchwise to $\ket{\Psi_1}$ and then applying $\mathsf{CU}^\dagger$ yields
\begin{align}
    \ket{\Psi_2}
    &=
    \frac{1}{\|\bm{a}\|_1\sqrt{3^d}}
    \sum_{\bomega}
    \ket{\bomega}_{\mathsf F}
    U(\bx_{\bomega})^\dagger
    O
    U(\bx_{\bomega})\ket{0^N}_{\mathsf Q}
    \ket{0^a}_{\mathsf A}
    +
    \ket{\Gamma^\perp},
    \label{append:eq:direct_state_after_inverse}
\end{align}
where $(\mathbb I_{\mathsf F\mathsf Q}\otimes\bra{0^a}_{\mathsf A})\ket{\Gamma^\perp}=0$.  Projecting both $\mathsf Q$ and $\mathsf A$ onto their all-zero states therefore gives
\begin{align}
    &\bigl(\mathbb I_{\mathsf F}\otimes
    \bra{0^N}_{\mathsf Q}\bra{0^a}_{\mathsf A}\bigr)
    \mathcal A_f
    \ket{\mathrm{unif}}_{\mathsf F}
    \ket{0^N}_{\mathsf Q}\ket{0^a}_{\mathsf A}
    \nonumber\\
    &\hspace{20mm}
    =
    \frac{1}{\|\bm{a}\|_1\sqrt{3^d}}
    \sum_{\bomega}f(\bx_{\bomega})
    \ket{\bomega}_{\mathsf F}.
    \label{append:eq:expectation_amplitude_state}
\end{align}

\medskip
\noindent\underline{Postselection probability and normalized sampled-value state.}
Hence, projecting $\mathsf Q\mathsf A$ onto
$\ket{0^N}\ket{0^a}$ succeeds with probability
\begin{equation}
    p_{\rm succ}
    =
    \frac{1}{\|\bm{a}\|_1^2 3^d}
    \sum_{\bomega}f(\bx_{\bomega})^2
    =
    \frac{\nu_f}{\|\bm{a}\|_1^2},
    \label{append:eq:direct_sampler_success_probability}
\end{equation}
where the first equality is the squared norm of the state in Eq.~\eqref{append:eq:expectation_amplitude_state}, and the last equality follows from Eq.~\eqref{append:eq:grid_exact_nuf}.  Moreover, $|f(\bx)|\le\|O\|_\infty\le\|\bm{a}\|_1$ implies $0\le\nu_f/\|\bm{a}\|_1^2\le1$, confirming that Eq.~\eqref{append:eq:direct_sampler_success_probability} is a valid probability.  Dividing the successful branch in Eq.~\eqref{append:eq:expectation_amplitude_state} by $\sqrt{p_{\rm succ}}=\sqrt{\nu_f}/\|\bm{a}\|_1$ gives the normalized $2d$-qubit state
\begin{equation}
    \ket{\chi_f}_{\mathsf F}
    =
    \frac{1}{\sqrt{3^d\nu_f}}
    \sum_{\bomega}f(\bx_{\bomega})
    \ket{\bomega}_{\mathsf F}.
    \label{append:eq:sampled_function_state}
\end{equation}

\medskip
\noindent\underline{Amplitude amplification.}
For later use in amplitude amplification, introduce the successful-subspace projector
\begin{equation}
    \Pi_{\rm good}
    :=
    \mathbb I_{\mathsf F}\otimes
    \ket{0^N0^a}\bra{0^N0^a}_{\mathsf Q\mathsf A}.
    \label{append:eq:direct_good_projector}
\end{equation}
The full preparation in Eq.~\eqref{append:eq:full_expectation_state_preparation} then has the orthogonal decomposition
\begin{equation}
    \mathcal W_f\ket{0^{2d+N+a}}
    =
    \sqrt{\frac{\nu_f}{\|\bm{a}\|_1^2}}\,
    \ket{\chi_f}_{\mathsf F}\ket{0^N0^a}_{\mathsf Q\mathsf A}
    +
    \sqrt{1-\frac{\nu_f}{\|\bm{a}\|_1^2}}\,
    \ket{\Psi_f^\perp},
    \label{append:eq:good_bad_decomposition}
\end{equation}
where $\Pi_{\rm good}\ket{\Psi_f^\perp}=0$.  Amplitude amplification rotates only between the normalized good and bad components in Eq.~\eqref{append:eq:good_bad_decomposition}.  It therefore increases the probability of the good outcome without changing its conditional frequency-register state $\ket{\chi_f}$.
To make this statement explicit, define
\begin{equation}
    \ket{G_f}
    :=
    \ket{\chi_f}_{\mathsf F}\ket{0^N0^a}_{\mathsf Q\mathsf A},
    \qquad
    \ket{B_f}:=\ket{\Psi_f^\perp},
    \qquad
    \sin\vartheta_f:=\frac{\sqrt{\nu_f}}{\|\bm{a}\|_1}.
    \label{append:eq:amplitude_amplification_angle}
\end{equation}
Equation~\eqref{append:eq:good_bad_decomposition} becomes
\begin{equation}
    \mathcal W_f\ket{0^{2d+N+a}}
    =
    \sin\vartheta_f\ket{G_f}
    +
    \cos\vartheta_f\ket{B_f}.
    \label{append:eq:amplitude_amplification_plane}
\end{equation}
Let
\begin{equation}
    S_0:=\mathbb I-2\ket{0^{2d+N+a}}\bra{0^{2d+N+a}},
    \qquad
    S_{\rm good}:=\mathbb I-2\Pi_{\rm good},
\end{equation}
and define the Grover iterate
\begin{equation}
    \mathcal Q_f
    :=
    -\mathcal W_fS_0\mathcal W_f^\dagger S_{\rm good}.
    \label{append:eq:direct_grover_iterate}
\end{equation}
The two reflections in Eq.~\eqref{append:eq:direct_grover_iterate} preserve the plane spanned by $\{\ket{G_f},\ket{B_f}\}$, and their product performs a rotation by $2\vartheta_f$.  Consequently,
\begin{equation}
    \mathcal Q_f^{\,r}\mathcal W_f\ket{0^{2d+N+a}}
    =
    \sin\bigl((2r+1)\vartheta_f\bigr)\ket{G_f}
    +
    \cos\bigl((2r+1)\vartheta_f\bigr)\ket{B_f}.
    \label{append:eq:amplitude_amplification_rotation}
\end{equation}
Thus $r=\mathcal O(\vartheta_f^{-1})=\mathcal O(\|\bm{a}\|_1/\sqrt{\nu_f})$ applications suffice to obtain a constant success probability.  Crucially, the normalized vector inside the good component remains exactly $\ket{G_f}$ for every $r$; amplitude amplification changes only its scalar coefficient.  Hence it does not alter the conditional state $\ket{\chi_f}$ or the final distribution $\mathrm p(\bomega)$.
If $\nu_f$ and hence $\vartheta_f$ are not known beforehand, one need not choose a precisely tuned value of $r$.  The standard randomized amplitude-amplification schedule chooses $r$ uniformly from
$\{0,\ldots,M-1\}$ and increases $M$ geometrically until the good outcome is observed~\cite{brassard2002quantum}.  Once
$M=\Theta(1/\sin\vartheta_f)$, averaging Eq.~\eqref{append:eq:amplitude_amplification_rotation} over $r$ gives a success probability bounded below by a universal constant.  The total number of calls accumulated over all preceding geometric stages is still
\begin{equation}
    \mathcal O\left(\frac{1}{\sin\vartheta_f}\right)
    =
    \mathcal O\left(\frac{\|\bm{a}\|_1}{\sqrt{\nu_f}}\right).
    \label{append:eq:unknown_success_AA_cost}
\end{equation}
Every successful trial has conditional state $\ket{G_f}$, so this unknown-success-probability version also produces an exact frequency sample.
\end{proof}

\subsection{Grid-to-frequency transformation and sampling correctness}
\label{append:subsec:grid_to_frequency}

We next transform the sampled expectation-value state into the desired mode-amplitude state.  The following lemma isolates this deterministic step.

\begin{lemma}[Grid-to-frequency transformation]
\label{append:lem:grid_to_frequency}
Given the state $\ket{\chi_f}$ in Eq.~\eqref{append:eq:coherent_expectation_lemma_state}, a tensor product of $d$ constant-size two-qubit unitaries prepares the frequency state in Eq.~\eqref{append:eq:direct_sampler_main_state}.  The transformation uses $\mathcal O(d)$ gates, and a computational-basis measurement of its output returns $\bomega$ according to $\mathrm p(\bomega)$.
\end{lemma}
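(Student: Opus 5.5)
The plan is to write $\ket{\chi_f}$ in the frequency basis and to observe that the sampled trigonometric table in Eq.~\eqref{append:eq:sampled_trig_table}, once its rows are rescaled, is a real orthogonal $3\times3$ matrix. Its tensor power then maps grid values to weighted Fourier coefficients.

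\emph{Step 1 (local unitary).} For each coordinate, define the $3\times3$ matrix
\[
R_{r,\omega} := \sqrt{\tfrac{2^{\mathbbm 1\{r\neq 0\}}}{3}}\,\phi_r(\bm{\alpha}_\omega), \qquad r,\omega\in\{0,1,-1\}.
\]
By Eq.~\eqref{append:eq:three_point_orthogonality}, its rows satisfy $\sum_\omega R_{r,\omega}R_{s,\omega}=\delta_{r,s}$. Hence $R$ is real orthogonal, and its columns are orthonormal as well. Embed $R$ into a two-qubit unitary $\widetilde R$ that acts as $R$ on the codewords $\ket{00},\ket{01},\ket{10}$ of Eq.~\eqref{append:eq:two_qubit_grid_encoding} and as the identity on $\ket{11}$. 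Because the support of $\ket{\chi_f}$ is confined to the valid code space, the action on $\ket{11}$ is irrelevant. Any two-qubit unitary compiles into $\mathcal O(1)$ elementary gates, so $\widetilde R^{\otimes d}$ costs $\mathcal O(d)$ gates.

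\emph{Step 2 (amplitude computation).} Write $f(\bx_{\bomega})=\sum_{\bm s}\bm c_{\bm s}\Phi_{\bm s}(\bx_{\bomega})$ with $\bm c_{\bm s}=\Tr(\rho_0Q_{\bm s})$. The amplitude of $\ket{\bm r}$ in $\widetilde R^{\otimes d}\ket{\chi_f}$ is
\[
\frac{1}{\sqrt{3^d\nu_f}}\sum_{\bomega}\prod_j R_{r_j,\omega_j}\,f(\bx_{\bomega})
= \frac{2^{\|\bm r\|_0/2}}{3^d\sqrt{\nu_f}}\sum_{\bm s}\bm c_{\bm s}\sum_{\bomega}\Phi_{\bm r}(\bx_{\bomega})\Phi_{\bm s}(\bx_{\bomega}).
\]
Apply the tensorized discrete orthogonality in Eq.~\eqref{append:eq:grid_tensor_orthogonality} to the inner sum over $\bomega$. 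The right-hand side then collapses to $2^{\|\bm r\|_0/2}\,2^{-\|\bm r\|_0}\bm c_{\bm r}/\sqrt{\nu_f} = 2^{-\|\bm r\|_0/2}\Tr(\rho_0Q_{\bm r})/\sqrt{\nu_f}$. This is exactly the amplitude of $\ket{\psi_f}$ in Eq.~\eqref{append:eq:direct_sampler_main_state}.

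\emph{Step 3 (sampling correctness).} By the Born rule, a computational-basis measurement of $\ket{\psi_f}$ returns $\bomega$ with probability $2^{-\|\bomega\|_0}\Tr(\rho_0Q_{\bomega})^2/\nu_f=\mathrm p(\bomega)$. The sign factors $s_{\bomega}$ do not affect these probabilities, and normalization follows from Eq.~\eqref{append:eq:grid_exact_nuf}.

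The only delicate point is Step 1. One must check that the row rescaling by $2^{\mathbbm 1\{r\neq0\}/2}/\sqrt3$ produces exactly orthonormal rows; this is the content of Eq.~\eqref{append:eq:three_point_orthogonality}. One must also ensure that the embedding leaves the valid code space invariant, so that no amplitude leaks into $\ket{11}$. Taking $\widetilde R$ block-diagonal (acting as $R$ on the code space and trivially on $\ket{11}$) settles the second requirement directly. The remaining steps are bookkeeping of the factors $3^{-d}$ and $2^{\pm\|\bm r\|_0/2}$.
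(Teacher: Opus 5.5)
Your proposal is correct and follows essentially the same route as the paper. Your rescaled matrix $R_{r,\omega}=2^{\mathbbm 1\{r\neq0\}/2}\phi_r(\bm{\alpha}_\omega)/\sqrt3$ is exactly the valid-subspace block of the paper's $\widetilde R$, and your amplitude computation via the tensorized discrete orthogonality is the same contraction argument the paper uses. The paper additionally writes out the explicit $4\times4$ matrix (noting $\widetilde R=S_{\mathsf F^{(j)}}^\dagger$) and includes a one-dimensional sanity check.
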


\begin{proof}[Proof of Lemma~\ref{append:lem:grid_to_frequency}]
On each two-qubit block $\mathsf F^{(j)}$, introduce the following $4\times4$ unitary:
\begin{equation}
    \widetilde R=
    \begin{pmatrix}
        1/\sqrt{3} & 1/\sqrt{3} & 1/\sqrt{3} & 0\\
        \sqrt{2/3} & -1/\sqrt{6} & -1/\sqrt{6} & 0\\
        0 & 1/\sqrt{2} & -1/\sqrt{2} & 0\\
        0 & 0 & 0 & 1
    \end{pmatrix},
    \label{append:eq:local_trig_transform}
\end{equation}
where the computational-basis order is
$\{\ket{0},\ket{1},\ket{-1},\ket{11}\}
=\{\ket{00},\ket{01},\ket{10},\ket{11}\}$.
The upper-left $3\times3$ block mixes the amplitudes on the three valid codewords, while the fourth row and column leave the unused state $\ket{11}$ invariant.  Direct multiplication gives
\begin{equation}
    \widetilde R^\dagger\widetilde R
    =
    \begin{pmatrix}
        1&0&0&0\\
        0&1&0&0\\
        0&0&1&0\\
        0&0&0&1
    \end{pmatrix},
    \label{append:eq:local_trig_unitarity}
\end{equation}
so $\widetilde R$ is a valid two-qubit quantum gate.
Comparing Eqs.~\eqref{append:eq:uniform_grid_preparation_gate} and~\eqref{append:eq:local_trig_transform} also shows that
\begin{equation}
    \widetilde R=S_{\mathsf F^{(j)}}^\dagger.
    \label{append:eq:grid_preparation_transform_relation}
\end{equation}
Thus the same constant-size two-qubit gate synthesis can be used in reverse for uniform-grid preparation and in the forward direction for coefficient extraction.

\medskip
\noindent\underline{One-dimensional transformation.}
To see explicitly how $\widetilde R$ extracts the trigonometric coefficients, first take
\begin{equation}
    f(x)=c_0+c_1\cos x+c_{-1}\sin x.
\end{equation}
At the three grid points,
\begin{align}
    f(\alpha_0)&=c_0+c_1,\nonumber\\
    f(\alpha_1)&=c_0-\frac{c_1}{2}+\frac{\sqrt3\,c_{-1}}{2},\nonumber\\
    f(\alpha_{-1})&=c_0-\frac{c_1}{2}-\frac{\sqrt3\,c_{-1}}{2}.
    \label{append:eq:one_dim_sampled_values}
\end{align}
The corresponding normalized sampled-value state is
\begin{equation}
    \ket{\chi_f}
    =
    \frac{1}{\sqrt{3\nu_f}}
    \left[
        f(\alpha_0)\ket{0}
        +f(\alpha_1)\ket{1}
        +f(\alpha_{-1})\ket{-1}
    \right].
    \label{append:eq:one_dim_sampled_state}
\end{equation}
The output amplitudes on the three valid codewords are obtained by multiplying the rows of $\widetilde R$ by the sampled-value vector.  For the constant component,
\begin{align}
    \bra{0}\widetilde R\ket{\chi_f}
    &=
    \frac{f(\alpha_0)+f(\alpha_1)+f(\alpha_{-1})}
    {3\sqrt{\nu_f}}
    =
    \frac{c_0}{\sqrt{\nu_f}}.
    \label{append:eq:one_dim_constant_output}
\end{align}
For the cosine component, using
$f(\alpha_1)+f(\alpha_{-1})=2c_0-c_1$, we obtain
\begin{align}
    \bra{1}\widetilde R\ket{\chi_f}
    &=
    \frac{1}{\sqrt{3\nu_f}}
    \left[
        \sqrt{\frac23}f(\alpha_0)
        -\frac{f(\alpha_1)+f(\alpha_{-1})}{\sqrt{6}}
    \right]\nonumber\\
    &=
    \frac{1}{\sqrt{3\nu_f}}
    \left[
        \sqrt{\frac23}(c_0+c_1)
        -\frac{2c_0-c_1}{\sqrt{6}}
    \right]\nonumber\\
    &=
    \frac{c_1}{\sqrt{2\nu_f}}.
    \label{append:eq:one_dim_cosine_output}
\end{align}
For the sine component, using
$f(\alpha_1)-f(\alpha_{-1})=\sqrt{3}c_{-1}$, we obtain
\begin{align}
    \bra{-1}\widetilde R\ket{\chi_f}
    &=
    \frac{f(\alpha_1)-f(\alpha_{-1})}
    {\sqrt{6\nu_f}}
    =
    \frac{c_{-1}}{\sqrt{2\nu_f}}.
    \label{append:eq:one_dim_sine_output}
\end{align}
Consequently,
\begin{equation}
    \widetilde R\ket{\chi_f}
    =
    \frac{1}{\sqrt{\nu_f}}
    \left(
        c_0\ket{0}
        +
        \frac{c_1}{\sqrt2}\ket{1}
        +\frac{c_{-1}}{\sqrt2}\ket{-1}
    \right),
    \label{append:eq:one_dim_frequency_state}
\end{equation}
where the two-qubit encoding of $\ket{0}$, $\ket{1}$, and $\ket{-1}$ is given in Eq.~\eqref{append:eq:two_qubit_grid_encoding}.

\medskip
\noindent\underline{Tensor-product transformation.}
For a compact extension of the above calculation to $d$ dimensions, the valid-subspace matrix elements of $\widetilde R$ can be written as
\begin{equation}
    \bra{r}\widetilde R\ket{\bomega}
    =
    \frac{2^{\mathbbm1\{r\ne0\}/2}}{\sqrt{3}}
    \phi_r(\bm{\alpha}_{\bomega}),
    \qquad
    r,\omega\in\{0,1,-1\}.
    \label{append:eq:R_trig_matrix_element}
\end{equation}
Combining Eq.~\eqref{append:eq:R_trig_matrix_element} with Eq.~\eqref{append:eq:three_point_orthogonality} gives the local contraction identity
\begin{align}
    \sum_{\bomega\in\{0,1,-1\}}
    \bra{r}\widetilde R\ket{\bomega}
    \phi_s(\alpha_{\bomega})
    &=
    \frac{2^{\mathbbm1\{r\ne0\}/2}}{\sqrt3}
    \sum_{\bomega\in\{0,1,-1\}}
    \phi_r(\bm{\alpha}_{\bomega})\phi_s(\bm{\alpha}_{\bomega})
    \nonumber\\
    &=
    \sqrt3\,
    \delta_{r,s}
    2^{-\mathbbm1\{r\ne0\}/2}.
    \label{append:eq:local_trig_contraction}
\end{align}
Now expand $f(\bx)=\sum_{\bomega''}\bm{c}_{\bomega''}\Phi_{\bomega''}(\bx)$ with $\bm{c}_{\bomega''}=\Tr(\rho_0Q_{\bomega''})$.  In the following calculation, $\bomega'$ denotes the input-grid label, while $\bomega$ denotes the output frequency label.  The amplitude of $\bomega$ after applying $\widetilde R^{\otimes d}$ to Eq.~\eqref{append:eq:sampled_function_state} is
\begin{align}
    A_{\bomega}
    &:=
    \bra{\bomega}\widetilde R^{\otimes d}\ket{\chi_f}
    \nonumber\\
    &=
    \frac{1}{\sqrt{3^d\nu_f}}
    \sum_{\bomega'}
    f(\bx_{\bomega'})
    \prod_{j=1}^{d}
    \bra{\bomega_j}\widetilde R\ket{\bomega_j'}
    \nonumber\\
    &=
    \frac{1}{\sqrt{3^d\nu_f}}
    \sum_{\bomega''}\bm{c}_{\bomega''}
    \prod_{j=1}^{d}
    \left[
        \sum_{\bomega_j'\in\{0,1,-1\}}
        \bra{\bomega_j}\widetilde R\ket{\bomega_j'}
        \phi_{\bomega_j''}(\bm{\alpha}_{\bomega_j'})
    \right]\nonumber\\
    &=
    \frac{1}{\sqrt{3^d\nu_f}}
    \sum_{\bomega''}\bm{c}_{\bomega''}
    \prod_{j=1}^{d}
    \left[
        \sqrt3\,
        \delta_{\bomega_j,\bomega_j''}
        2^{-\mathbbm1\{\bomega_j\ne0\}/2}
    \right]\nonumber\\
    &=
    \frac{2^{-\|\bomega\|_0/2}\bm{c}_{\bomega}}
    {\sqrt{\nu_f}}.
    \label{append:eq:d_dim_frequency_amplitude}
\end{align}
In the last equality, the Kronecker deltas force $\bomega''=\bomega$, the $d$ factors of $\sqrt3$ cancel the prefactor $\sqrt{3^d}$, and
$\sum_{j=1}^d\mathbbm1\{\bomega_j\ne0\}=\|\bomega\|_0$.  Therefore,
\begin{equation}
    \widetilde R^{\otimes d}\ket{\chi_f}_{\mathsf F}
    =
    \frac{1}{\sqrt{\nu_f}}
    \sum_{\bomega\in\Lambda}
    2^{-\|\bomega\|_0/2}
    \Tr(\rho_0Q_{\bomega})
    \ket{\bomega}_{\mathsf F}
    =:\ket{\psi_f}_{\mathsf F}.
    \label{append:eq:direct_frequency_state}
\end{equation}
The normalization follows directly from
\begin{equation}
    \langle\psi_f|\psi_f\rangle
    =
    \frac{1}{\nu_f}
    \sum_{\bomega}
    2^{-\|\bomega\|_0}\bm{c}_{\bomega}^2
    =1.
    \label{append:eq:direct_frequency_normalization}
\end{equation}
Measuring the same $2d$ qubits in the computational basis then returns
\begin{equation}
    \mathrm p(\bomega)
    =
    |\langle\bomega|\psi_f\rangle|^2
    =
    \frac{2^{-\|\bomega\|_0}\Tr(\rho_0Q_{\bomega})^2}{\nu_f},
    \label{append:eq:direct_sampler_distribution}
\end{equation}
which is exactly the mode distribution used to define \DSE\ and to construct $\Lambda_{\mathsf q}$.
\end{proof}

\subsection{Computational complexity of the quantum subroutine---Proof of Lemma~\ref{append:lem:direct_mode_sampler}}
\label{append:subsec:direct_sampler_complexity}

\begin{proof}[Proof of Lemma~\ref{append:lem:direct_mode_sampler}]
We combine Lemmas~\ref{append:lem:sparse_Pauli_LCU}, \ref{append:lem:coherent_expectation_encoding}, and~\ref{append:lem:grid_to_frequency}.  Lemma~\ref{append:lem:coherent_expectation_encoding} shows that $\mathcal W_f$ prepares the sampled expectation-value state $\ket{\chi_f}$ in its good subspace with probability $\nu_f/\|\bm a\|_1^2$.  Lemma~\ref{append:lem:grid_to_frequency} then maps this conditional state deterministically to $\ket{\psi_f}$, whose computational-basis measurement is exactly distributed according to $\mathrm p$.  Reinitializing all registers makes repeated outcomes independent, and retaining their distinct values gives $|\Lambda_{\mathsf q}|\le m_f$.  It remains to derive the resource bounds.

Let $C_S$, $C_{\rm grid}$, $C_O$, and $C_R$ denote the costs of $S_{\mathsf F}$, one invocation of $\mathsf{CU}$, the block encoding $B_O$, and $\widetilde R^{\otimes d}$, respectively.  Lemma~\ref{append:lem:sparse_Pauli_LCU} supplies the cost of $B_O$, while the remaining constructions give
\begin{equation}
    C_S=\mathcal O(d),\qquad
    C_{\rm grid}=\mathcal O(G+d),\qquad
    C_O=\mathcal O(qN),\qquad
    C_R=\mathcal O(d).
    \label{append:eq:direct_cost_components}
\end{equation}
Because $\mathcal W_f$ contains $S_{\mathsf F}$, one forward grid circuit, $B_O$, and one inverse grid circuit,
\begin{align}
    C_{\mathcal W_f}
    &=
    C_S+2C_{\rm grid}+C_O\nonumber\\
    &=
    \mathcal O(qN+G+d).
    \label{append:eq:direct_preparation_cost_derivation}
\end{align}
The reflections about the all-zero state and the good subspace use $\mathcal O(N+d+a)$ gates and are dominated by Eq.~\eqref{append:eq:direct_preparation_cost_derivation} for $q\ge1$ and the standard sparse LCU block encoding.  This proves Eq.~\eqref{append:eq:direct_eval_cost}.

Without amplitude amplification, the number of independent executions until the first success is a geometric random variable with mean
\begin{equation}
    \mathbb E[T_{\rm post}]
    =
    \frac{1}{p_{\rm succ}}
    =
    \frac{\|\bm{a}\|_1^2}{\nu_f}.
    \label{append:eq:direct_geometric_mean}
\end{equation}
Multiplying Eq.~\eqref{append:eq:direct_geometric_mean} by $C_{\mathcal W_f}$ and adding the lower-order transformation cost $C_R$ proves Eq.~\eqref{append:eq:direct_repeat_cost}.

With amplitude amplification, Eq.~\eqref{append:eq:amplitude_amplification_rotation} shows that the number of uses of $\mathcal W_f$ and $\mathcal W_f^\dagger$ is
\begin{equation}
    T_{\rm AA}
    =
    \mathcal O\left(\frac{1}{\sqrt{p_{\rm succ}}}\right)
    =
    \mathcal O\left(\frac{\|\bm{a}\|_1}{\sqrt{\nu_f}}\right).
    \label{append:eq:direct_AA_iterations}
\end{equation}
The normalized good state is preserved throughout this rotation, so the final application of $\widetilde R^{\otimes d}$ and the computational-basis measurement still return an exact draw from $\mathrm p(\bomega)$.  Combining Eqs.~\eqref{append:eq:direct_preparation_cost_derivation} and~\eqref{append:eq:direct_AA_iterations} proves Eq.~\eqref{append:eq:direct_amplified_cost}.  Finally, reinitializing all registers makes successive outputs independent, and hence
\begin{align}
    C_{\rm total}
    &=
    m_{f}
    \left[
        \mathcal O\left(
            C_{\mathcal W_f}\frac{\|\bm{a}\|_1}{\sqrt{\nu_f}}
        \right)
        +C_R
    \right]\nonumber\\
    &=
    \mathcal{O}\left(
        (qN+G+d)m_{f}
        \frac{\|\bm{a}\|_1}{\sqrt{\nu_f}}
    \right),
    \label{append:eq:direct_total_cost_derivation}
\end{align}
where $\|\bm{a}\|_1/\sqrt{\nu_f}\ge1$ follows from $\nu_f\le\|\bm{a}\|_1^2$.  This proves Eq.~\eqref{append:eq:direct_total_cost} and completes the proof.
\end{proof}

\noindent
Taken together, this section establishes only the feature-identification interface: it produces independent draws from $\mathrm p(\bomega)$ and hence a random set $\Lambda_{\mathsf q}$ of distinct observed modes.  The number of draws $m_f$ required to control the omitted probability mass, and the resulting truncation and prediction errors of the classical surrogate, are analyzed separately in SI.~\ref{append:sec:surrogate_prediction}.

\section{Provably efficient \DSE-guided classical surrogate---Proof of Theorem~\ref{mt:thm:prediction_error_bound}}\label{append:sec:surrogate_prediction}
This section provides the proof of Theorem~\ref{mt:thm:prediction_error_bound}, which analyzes how the prediction error of the proposed \DSE-guided classical surrogate $h_{\mathsf{q}}$ depends on the dynamical stabilizer entropy (\DSE) $\mathcal{M}^{(1)}[O(\bx;U)]$ of the  quantum circuit ensemble $\{U(\bx)\}_{\bx \in [-\pi,\pi]^d}$, the number of training examples $n$, and the size of quantum system $N$. Recall that in the main text, the proposed prediction model is
\begin{equation}\label{eq:learning_surrogate_cs}    
    h_{\mathsf{q}}(\bx)=\frac{1}{n}\sum_{i=1}^{n} \kappa_{\mathsf{q}}(\bx,\bxi) y^{(i)}~\mbox{with}~\kappa_{\mathsf{q}}(\bx,\bxi) =\sum_{\bomega\in \Lambda_{\mathsf{q}}} 2^{\|\bomega\|_0} \Phi_{\bomega}(\bx)\Phi_{\bomega}(\bxi)\in\mathbb{R},
\end{equation}
where $y^{(i)}$ is the statistical estimation of $f(\bxi)$ from $m$ measurements, and $\Lambda_{\mathsf{q}} \subset \{0,1,-1\}^d$ is the subset of frequencies identified by the quantum subroutine described in Lemma~\ref{append:lem:direct_mode_sampler}. In particular, Let $m_f$ be the number of measurements on the amplitude-encoding states $\ket{\psi_{f}}_{\mathsf{F}}=\sum_{\bomega} \bm{s}_{\bomega}\sqrt{\mathrm{p}(\bomega)} \ket{\bomega}_{\mathsf{F}}$ defined in Eq.~\eqref{append:eq:amplitude_encode_state} to collect $\bomega\in \Lambda_{\mathsf{q}}$. In this regard, we have $|\Lambda_{\mathsf{q}}|\le m_{f}$.

Given the identified feature map $\{\Phi_{\bomega}(\bx)\}_{\bomega\in \Lambda_{\mathsf{q}}}$, we intend to prove that the average discrepancy between $h_{\mathsf{q}}(\bx)$ and the ground truth $f_O(\bx)=\Tr[\rho O(\bx;U)]$ with $O(\bx;U)=U(\bx)^{\dagger}OU(\bx)$ being the evolved operator when $\bx$ is uniformly and randomly sampled from $[-\pi,\pi]^d$, i.e., $\mathbb{E}_{\bx\sim [-\pi,\pi]^d} |h_{\mathcal{T}}(\bx)-f_O(\bx)|^2$, where the observable $O=\sum_{l=1}^q \bm{a}_l P_{l}$ has bounded norm $\sum_{l=1}^q |\bm{a}_l|\le 1$.

\begin{theorem}\label{thm:prediction_error_bound}
    Consider the function class $\mathcal{F}=\{f_O(\bx)=\Tr(\rho_0 U(\bx)^{\dagger}OU(\bx))|U\in \mathrm{Arc}(N,d),\bx\in [\-\pi,\pi]^d\}$ for the $N$-qubit circuits $U(\bx)$, an initial state $\rho_0=\ket{0}\bra{0}^N$, and a $q$-sparse Pauli observable $O=\sum_{l=1}^q \bm{a}_l P_l$ with bounded norm $\|O\|_{\infty}\le \|\bm{a}\|_1=\sum_{l=1}^q |\bm{a}_l|\le 1$. Suppose $U(\bx)$ consists of $G$ gates in total, including $d$ tunable $\RZ$ gates, with all remaining gates drawn from $\{\rm H,P,CNOT,T\}$. Let $\mathrm{p}(\bomega)=\tilde{\mathrm{p}}(\bomega)/\nu_f$ be the induced probability density function over the frequency $\bomega\in \Lambda$, where $\tilde{\mathrm{p}}(\bomega)=\mathbb{E}_{\bx\sim [-\pi,\pi]^d} \Tr(\rho_0 Q_{\bomega})^2\Phi_{\bomega}(\bx)^2$ and $\nu_f=\sum_{\bomega\in \Lambda}\tilde{\mathrm{p}}(\bomega)$. Suppose $\nu_f>0$ and let $m_f$ be the number of independent sampled frequencies. Then the quantum subroutine outputs a set $\Lambda_{\mathsf{q}}\subset \{0,\pm 1\}^d$ of distinct modes with runtime complexity $\mathcal{O}((qN+G+d)m_f\|\bm{a}\|_1/\sqrt{\nu_f})$, where
    \begin{equation}
        d_{\mathsf{q}}:=|\Lambda_{\mathsf{q}}| \le |m_f|= 2^{\frac{2\nu_f\cdot\mathcal{M}^{(1)}[O(\bx;U)]}{\epsilon}} \cdot \left( \frac{2\nu_f\cdot \mathcal{M}^{(1)}[O(\bx;U)]}{\epsilon} + \log\left(\frac{2}{\delta} \right) \right),
    \end{equation}
    such that the classical surrogate $h_{\mathsf{q}}$ constructed on $\{\Phi_{\bomega}(\bx)\}_{\bomega \in  \Lambda_{\mathsf{q}}}$ and trained on $\mathcal{T}=\{(\bxi,y^{(i)})\}_{i=1}^{n}$ with $n=4d_{\mathsf{q}}/{\epsilon\delta}$ yields with probability at least $1-\delta$,
    \begin{equation}
        \mathbb{E}_{\bx\sim [-\pi,\pi]^d}\left|f_O(\bx)-h_{\mathsf{q}}(\bx)\right|^2 \le \epsilon 
    \end{equation}
\end{theorem}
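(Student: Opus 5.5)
The proof splits the error into a truncation error, controlled by the quantum subroutine, and an estimation error, controlled by the training set. Write $\bm c_{\bomega}=\Tr(\rho_0Q_{\bomega})$ and $f_{\mathsf q}(\bx)=\sum_{\bomega\in\Lambda_{\mathsf q}}\bm c_{\bomega}\Phi_{\bomega}(\bx)$ for the orthogonal projection of $f_O$ onto the sampled modes. Since $h_{\mathsf q}$ lies in the span of $\{\Phi_{\bomega}\}_{\bomega\in\Lambda_{\mathsf q}}$, orthogonality of the trigonometric basis gives the exact Pythagorean split
\[
\mathbb{E}_{\bx}|f_O-h_{\mathsf q}|^2=\nu_f\sum_{\bomega\notin\Lambda_{\mathsf q}}\mathrm p(\bomega)+\mathbb{E}_{\bx}|f_{\mathsf q}-h_{\mathsf q}|^2 .
\]
The goal is to make each term at most $\epsilon/2$ with failure probability at most $\delta/2$, and then take a union bound. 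The runtime claim and the bound $d_{\mathsf q}\le m_f$ follow directly from Lemma~\ref{append:lem:direct_mode_sampler}, so only the two error terms need work.

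\textbf{Truncation term.} Set $\tau=2^{-2\nu_f\mathcal M^{(1)}/\epsilon}$ and $\Lambda_\tau=\{\bomega:\mathrm p(\bomega)\ge\tau\}$. The Markov argument in the proof of Lemma~\ref{lem:est_err_sse_bound_1_order} gives $\nu_f\,\mathrm p(\Lambda_\tau^c)\le \nu_f\mathcal M^{(1)}/\log(1/\tau)=\epsilon/2$. Every $\bomega\in\Lambda_\tau$ is missed by all $m_f$ draws with probability at most $(1-\tau)^{m_f}\le e^{-\tau m_f}$. Also $|\Lambda_\tau|\le 1/\tau$. A union bound then shows $\Lambda_\tau\subset\Lambda_{\mathsf q}$ with probability at least $1-\delta/2$ once $m_f\ge \tau^{-1}\ln(2/(\tau\delta))$. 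This is exactly $2^{2\nu_f\mathcal M^{(1)}/\epsilon}\big(2\nu_f\mathcal M^{(1)}/\epsilon+\log(2/\delta)\big)$, up to the $\ln/\log$ constant, which I would absorb. On this event the truncation term is at most $\epsilon/2$.

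\textbf{Estimation term.} Fix $\Lambda_{\mathsf q}$; it is independent of $\mathcal T$. Then $h_{\mathsf q}=\sum_{\bomega\in\Lambda_{\mathsf q}}\hat{\bm c}_{\bomega}\Phi_{\bomega}$ with
\[
\hat{\bm c}_{\bomega}=\frac1n\sum_i 2^{\|\bomega\|_0}\Phi_{\bomega}(\bxi)y^{(i)} .
\]
Because $\mathbb E[y^{(i)}\mid\bxi]=f_O(\bxi)$ and $\mathbb E_{\bx}[2^{\|\bomega\|_0}\Phi_{\bomega}f_O]=\bm c_{\bomega}$, each $\hat{\bm c}_{\bomega}$ is unbiased. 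Hence
\[
\mathbb E_{\mathcal T}\,\mathbb{E}_{\bx}|f_{\mathsf q}-h_{\mathsf q}|^2=\sum_{\bomega\in\Lambda_{\mathsf q}}2^{-\|\bomega\|_0}\mathrm{Var}(\hat{\bm c}_{\bomega}) .
\]
For the variance I would use $|y^{(i)}|\le1$ together with $\mathbb E_{\bx}[2^{2\|\bomega\|_0}\Phi_{\bomega}^2]=2^{\|\bomega\|_0}$. This gives $2^{-\|\bomega\|_0}\mathrm{Var}(\hat{\bm c}_{\bomega})\le 1/n$, so the expectation is at most $d_{\mathsf q}/n$. Markov's inequality then bounds the term by $2d_{\mathsf q}/(n\delta)$ with probability at least $1-\delta/2$. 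With $n=4d_{\mathsf q}/(\epsilon\delta)$ this equals $\epsilon/2$.

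\textbf{Expected obstacle.} The delicate step is the variance computation. The labels come from $m$ shots, so one must condition on $\bxi$ and use $\mathbb E[(y^{(i)})^2]\le1$, not a Gaussian model. One must also check that the $2^{\|\bomega\|_0}$ weights in $\kappa_{\mathsf q}$ exactly cancel the $2^{-\|\bomega\|_0}$ norms, so that each mode contributes $1/n$ independent of its Hamming weight. Beyond that, the only care needed is matching the stated constants in $m_f$, which are loose by a logarithmic factor.
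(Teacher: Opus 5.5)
Your proposal is correct and follows the paper's proof essentially step for step. It uses the same orthogonal split into truncation and estimation errors, the same threshold $\tau=2^{-2\nu_f\mathcal{M}^{(1)}/\epsilon}$ handled by Markov's inequality on $-\log \mathrm{p}(\bomega)$ plus a union bound over $\Lambda_\tau$ with $|\Lambda_\tau|\le 1/\tau$, and the same unbiasedness/variance/Markov argument giving $n=4d_{\mathsf q}/(\epsilon\delta)$. One small correction: no constant needs absorbing in $m_f$, since with the paper's base-2 logarithm the stated value equals $\tau^{-1}\log_2(2/(\tau\delta))\ge\tau^{-1}\ln(2/(\tau\delta))$ and therefore already suffices.
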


To reach Theorem~\ref{thm:prediction_error_bound}, we first use the orthogonality of the trigonometric basis to decompose the difference between the prediction of $h_{\mathsf{q}}(\bx)$ and ground truth $f(\bx)$ into the truncation error and the estimation error, i.e.,
\begin{align}\label{append:eqn:tri_decouple}
    \mathbb{E}_{\bx\sim [-\pi,\pi]^d} \left(f_O(\bx)-h_{\mathcal{T}}(\bx) \right)^2
    ~= ~\underbrace{\mathbb{E}_{\bx\sim [-\pi,\pi]^d} \left( f_O(\bx)-f_{\mathsf{q}}(\bx)\right)^2}_{T_1} + \underbrace{\mathbb{E}_{\bx\sim [-\pi,\pi]^d}\left(f_{\mathsf{q}}(\bx)-h_{\mathsf{q}}(\bx)\right)^2}_{T_2},
\end{align}
where $f_{\mathsf{q}}(\bx)=\sum_{\bomega \in {\Omega}_{\mathsf{q}}} \Phi_{\bomega}(\bx)\Tr(\rho_0 Q_{\bomega})$ denotes the truncated target function over the frequency set $\Lambda_{\mathsf{q}}$, and this equality follows
\begin{equation}
    \mathbb{E}_{\bx\sim [-\pi,\pi]^d}\left[\left(f_O(\bx)-f_{\mathsf{q}}(\bx)\right)\left(f_{\mathsf{q}}(\bx)-h_{\mathsf{q}}(\bx)\right)\right]=0.
\end{equation}

After decoupling, we then separately derive the upper bounds of these two terms, where the relevant results are
encapsulated in the following three lemmas whose proofs are given in the subsequent two sections.

\begin{lemma}[Truncation error of $T_1$]\label{append:lem:trunc_error_bound}
    Following notations in Theorem~\ref{thm:prediction_error_bound}. Let $\epsilon,\delta\in (0,1)$ be arbitrary positive real numbers. Let $\{\bomega^{(1)}, \cdots, \bomega^{(m_{f})}\}$ be $m_f$ independently sampled frequencies obtained by measuring the quantum state $\ket{\psi_f}$, and define $\Lambda_{\mathsf{q}}=\{\bomega^{(1)}, \cdots, \bomega^{m_f}\}_{\rm distinct}$ and $d_{\mathsf{q}}:=|\Lambda_{\mathsf{q}}|\le m_f$. If the number of sampled frequencies $m_{f}$ yields
    \begin{equation}
        m_{f} = 2^{\frac{2\nu_f\cdot\mathcal{M}^{(1)}[O(\bx;U)]}{\epsilon}} \cdot \left( \frac{2\nu_f\cdot \mathcal{M}^{(1)}[O(\bx;U)]}{\epsilon} + \log\left(\frac{2}{\delta} \right) \right),
    \end{equation}
    then we have $T_1 \le \epsilon/2$ with probability at least $1-\delta/2$.
\end{lemma}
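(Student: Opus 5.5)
The plan mirrors the proof of Lemma~\ref{lem:est_err_sse_bound_1_order}, with a sampling step added. First, by orthogonality of the trigonometric basis, exactly as in Eq.~\eqref{append:eq:trunc_err_tail}, the truncation error is the missing probability mass:
\begin{equation}
T_1=\nu_f\sum_{\bomega\notin\Lambda_{\mathsf q}}\mathrm p(\bomega).
\end{equation}
Hence it suffices to show that, with probability at least $1-\delta/2$, the unsampled modes carry total mass at most $\epsilon/(2\nu_f)$.

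To do this, introduce a threshold $\tau$ and split the unsampled mass into two parts. The first part consists of modes with $\mathrm p(\bomega)<\tau$; call it the light tail. The second part consists of heavy modes, with $\mathrm p(\bomega)\ge\tau$, that were never drawn.

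\emph{Light tail.} Apply Markov's inequality to $-\log\mathrm p(\bomega)$, whose mean is $\mathcal S^{(1)}\le\mathcal M^{(1)}[O(\bx;U)]$. This gives
\begin{equation}
\sum_{\mathrm p(\bomega)<\tau}\mathrm p(\bomega)\le \frac{\mathcal M^{(1)}}{\log(1/\tau)}.
\end{equation}
Choosing $\tau=2^{-2\nu_f\mathcal M^{(1)}/\epsilon}$ makes this contribution to $T_1$ at most $\epsilon/2$.

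\emph{Heavy modes.} By the argument of Eq.~\eqref{append:eq:Lambda_tau_size}, $|\Lambda_\tau|\le 1/\tau$. A fixed heavy mode is missed in all $m_f$ independent draws with probability at most $(1-\tau)^{m_f}\le e^{-\tau m_f}$. A union bound over $\Lambda_\tau$ gives
\begin{equation}
\Pr[\Lambda_\tau\not\subseteq\Lambda_{\mathsf q}]\le \tau^{-1}e^{-\tau m_f}.
\end{equation}
This is at most $\delta/2$ once $m_f\ge\tau^{-1}\bigl(\ln(1/\tau)+\ln(2/\delta)\bigr)$. Substituting $\tau$ gives
\begin{equation}
m_f=2^{2\nu_f\mathcal M^{(1)}/\epsilon}\bigl(2\nu_f\mathcal M^{(1)}/\epsilon+\log(2/\delta)\bigr),
\end{equation}
up to the base of the logarithm, which only changes constants; the base-2 expressions in the statement dominate the natural-log requirement. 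On the event $\Lambda_\tau\subseteq\Lambda_{\mathsf q}$, the only unsampled modes are light, so $T_1\le\epsilon/2$.

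\emph{Main obstacle.} The delicate point is the budget split. The Markov step alone already consumes the full $\epsilon/2$, so the heavy part must contribute zero. This is why I require all heavy modes to be captured with high probability, rather than only most of their mass. The union bound over at most $1/\tau$ heavy modes is what produces the extra $\log(1/\tau)=2\nu_f\mathcal M^{(1)}/\epsilon$ factor in $m_f$.

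A minor additional check is the boundary case $\mathrm p(\bomega)=\tau$, which should be assigned to the heavy set. This is consistent with the definition $\Lambda_\tau=\{\bomega:\mathrm p(\bomega)\ge\tau\}$.
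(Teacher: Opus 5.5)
Your proposal is correct and follows the paper's proof essentially step for step. It uses the same identity $T_1=\nu_f\sum_{\bomega\notin\Lambda_{\mathsf q}}\mathrm p(\bomega)$, the same threshold $\tau=2^{-2\nu_f\mathcal{M}^{(1)}[O(\bx;U)]/\epsilon}$ with the Markov bound on the light tail, and the same union bound over at most $1/\tau$ heavy modes leading to $m_f=\tau^{-1}\log(2/(\delta\tau))$; your explicit check that the base-2 logarithms in the stated $m_f$ dominate the natural-log requirement $\tau^{-1}(\ln(1/\tau)+\ln(2/\delta))$ fills in a step the paper leaves implicit.
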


\begin{lemma}  
    [Estimation error of $h_{\mathsf{q}}(\bx)$] 
    \label{append:lem:estimation_error_bound}
    Following notations in Theorem~\ref{thm:prediction_error_bound} and Lemma~\ref{append:lem:trunc_error_bound}. If the size of training dataset $\mathcal{T}=\{(\bxi,y^{(i)})\}_{i=1}^{n}$ satisfies
    \begin{equation}
        n =  \frac{4d_{\mathsf{q}}}{\epsilon\delta},
    \end{equation}
    then we have that with probability at least $1-\delta/2$, the estimation error yields $\mathbb{E}_{\bx} | h_{\mathsf{q}}(\bx)-f_{\mathsf{q}}(\bx|^2  \leq  \epsilon/2$
    where $f_{\mathsf{q}}(\bx)=\sum_{\bomega\in \Lambda_{\mathsf{q}}}\Phi_{\bomega}(\bx)\Tr(\rho_0 Q_{\bomega})$ denotes the truncated target function.
\end{lemma}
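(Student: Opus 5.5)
The plan is to show that $h_{\mathsf q}$ is an unbiased, coefficient-wise estimator of the truncated target $f_{\mathsf q}$. We then bound its expected $L^2$ error by $d_{\mathsf q}/n$ and finish with Markov's inequality. Throughout we condition on the mode set $\Lambda_{\mathsf q}$. It is produced by the quantum subroutine of Lemma~\ref{append:lem:direct_mode_sampler} independently of the training set $\mathcal{T}$, so we may treat it as fixed while taking expectations over $\{(\bxi,y^{(i)})\}_{i=1}^n$.

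\emph{Step 1 (coefficient form).} Write $c_{\bomega}:=\Tr(\rho_0Q_{\bomega})$ and define the empirical coefficients
\begin{equation*}
\hat c_{\bomega}:=\frac1n\sum_{i=1}^n 2^{\|\bomega\|_0}\Phi_{\bomega}(\bxi)\,y^{(i)} .
\end{equation*}
Exchanging the sums in Eq.~\eqref{eq:learning_surrogate_cs} gives $h_{\mathsf q}(\bx)=\sum_{\bomega\in\Lambda_{\mathsf q}}\hat c_{\bomega}\Phi_{\bomega}(\bx)$. We also have $f_{\mathsf q}(\bx)=\sum_{\bomega\in\Lambda_{\mathsf q}}c_{\bomega}\Phi_{\bomega}(\bx)$. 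By the orthogonality relation $\mathbb{E}_{\bx}\Phi_{\bomega}\Phi_{\bomega'}=\delta_{\bomega\bomega'}2^{-\|\bomega\|_0}$ used in SI.~\ref{append:sec:upper_bound},
\begin{equation*}
\mathbb{E}_{\bx}|h_{\mathsf q}(\bx)-f_{\mathsf q}(\bx)|^2=\sum_{\bomega\in\Lambda_{\mathsf q}}2^{-\|\bomega\|_0}(\hat c_{\bomega}-c_{\bomega})^2 .
\end{equation*}

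\emph{Step 2 (unbiasedness and variance).} Since $\mathbb{E}[y^{(i)}\mid\bxi]=f(\bxi)$, orthogonality gives
\begin{equation*}
\mathbb{E}\,\hat c_{\bomega}=2^{\|\bomega\|_0}\,\mathbb{E}_{\bx}[\Phi_{\bomega}(\bx)f(\bx)]=c_{\bomega},
\end{equation*}
where the frequencies outside $\Lambda_{\mathsf q}$ contribute nothing. The $n$ summands are i.i.d., so
\begin{equation*}
\mathrm{Var}(\hat c_{\bomega})\le \frac1n\,4^{\|\bomega\|_0}\,\mathbb{E}\big[\Phi_{\bomega}(\bx)^2 (y^{(i)})^2\big]\le \frac1n\,4^{\|\bomega\|_0}\,2^{-\|\bomega\|_0}=\frac{2^{\|\bomega\|_0}}{n}.
\end{equation*}
The second inequality uses $|y^{(i)}|\le 1$, which holds because each label is an average of measurement outcomes of $O$ with $\|O\|_\infty\le\|\bm a\|_1\le 1$. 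Substituting into Step 1, each mode contributes at most $2^{-\|\bomega\|_0}\cdot 2^{\|\bomega\|_0}/n=1/n$. Hence $\mathbb{E}_{\mathcal{T}}\mathbb{E}_{\bx}|h_{\mathsf q}-f_{\mathsf q}|^2\le d_{\mathsf q}/n$.

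\emph{Step 3 (high probability).} Markov's inequality gives
\begin{equation*}
\Pr\big(\mathbb{E}_{\bx}|h_{\mathsf q}-f_{\mathsf q}|^2>\epsilon/2\big)\le \frac{2d_{\mathsf q}}{n\epsilon}.
\end{equation*}
This equals $\delta/2$ exactly when $n=4d_{\mathsf q}/(\epsilon\delta)$, which proves the lemma.

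The main point needing care is the label bound $|y^{(i)}|\le1$ for a general $q$-sparse observable. If labels come from measuring the Pauli terms separately, I would take $y^{(i)}=\sum_l \bm a_l\hat P_l$ with $|\hat P_l|\le1$, so that $|y^{(i)}|\le\|\bm a\|_1\le1$. Otherwise it suffices to bound the second moment, using $\mathbb{E}[(y^{(i)})^2\mid\bxi]\le 1$. The other delicate point is the independence of $\Lambda_{\mathsf q}$ from $\mathcal{T}$, which the protocol guarantees by construction.
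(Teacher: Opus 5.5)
Your proposal is correct and follows the paper's proof essentially step for step. You expand $h_{\mathsf q}-f_{\mathsf q}$ in the orthogonal trigonometric basis and show each empirical coefficient is unbiased (the paper phrases this as $\mathbb{E}_{\mathcal T}h_{\mathsf q}=f_{\mathsf q}$). You bound each mode's contribution to the expected error by $1/n$ via $|y^{(i)}|\le 1$ and $\mathbb{E}_{\bx}\Phi_{\bm\omega}(\bx)^2=2^{-\|\bm\omega\|_0}$, then finish with Markov's inequality at $n=4d_{\mathsf q}/(\epsilon\delta)$. The only difference is cosmetic (the paper rescales each coefficient by $2^{\|\bm\omega\|_0/2}$ so the per-mode variance is directly at most $1$), and your explicit remarks on conditioning on $\Lambda_{\mathsf q}$ and on the label bound make rigorous two points the paper leaves implicit.
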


We are now ready to present the proof of Theorem~\ref{thm:prediction_error_bound}.

\begin{proof}
    [Proof of Theorem~\ref{thm:prediction_error_bound}] 
    We note that both the truncation error of and estimation error relies on the construction of the feature map $\{\Phi_{\bomega}(\bx)\}_{\bomega\in \Lambda_{\mathsf{q}}}$ as well as its dimension $d_{\mathsf{q}}=|\Lambda_{\mathsf{q}}|$. To identify the index set $\Lambda_{\mathsf{q}}$, we employs the quantum subroutine in Lemma~\ref{append:lem:direct_mode_sampler} to prepare the quantum state $$\ket{\psi_f}=\sum_{\bomega\in \Lambda} \bm{s}_{\bomega}\sqrt{\mathrm{p}(\bomega)}\ket{\bomega}$$ 
    with $\bm{s}_{\bomega}$ is a sign vector, 
    and make $m_f$ independent sampling by applying the computational-basis measurement $\{\ket{i}\bra{i}\}_{i=1}^{2^{2d}}$ to this state $\ket{\psi(\rho)}$, retaining only the distinct outcomes in $\Lambda_{\mathsf{q}}$. 
    
    Based on the feature map $\{\Phi_{\bomega}(\bx)\}_{\bomega\in \Lambda_{\mathsf{q}}}$, the difference between the prediction and ground truth can be obtained by integrating Lemma~\ref{append:lem:trunc_error_bound} and Lemma~\ref{append:lem:estimation_error_bound} into Eq.~\eqref{append:eqn:tri_decouple}. Mathematically, taking the sampling budget $m_f$ for construction $\Lambda_{\mathsf{q}}$ such that the dimension of $\Lambda_{\mathsf{q}}$ yields
    \begin{equation}
         d_{\mathsf{q}} =|\Lambda_{\mathsf{q}}| \le m_{f}=2^{2\nu_f\cdot\mathcal{M}^{(1)}[O(\bx;U)]/\epsilon} \cdot \left( \frac{2\nu_f\cdot \mathcal{M}^{(1)}[O(\bx;U)]}{\epsilon} + \log\left(\frac{2}{\delta} \right) \right),
    \end{equation}
    and the number of training examples satisfies $n = 4d_{\mathsf{q}}/{\epsilon \delta} $
    with probability at least $1-\delta$, we have
    \begin{equation}
        \mathbb{E}_{\bx\sim [-\pi,\pi]^d}  \left(h_{\mathsf{q}}(\bx)-f_O(\bx) \right)^2 \le \epsilon.
    \end{equation}
    Finally, Lemma~\ref{append:lem:direct_mode_sampler} shows that collecting the $m_f$ independent frequencies according to the quantum subroutine requires the time complexity of $\mathcal{O}((qN+G+d)m_f\|\bm{a}\|_1/\sqrt{\nu_f})$.  
    This completes the proof.
    
\end{proof}

\subsection{Truncation error bound for the \texorpdfstring{\DSE}{DSE}-guided classical surrogate \texorpdfstring{$h_{\mathsf{q}}$}{hq}---proof of Lemma~\ref{append:lem:trunc_error_bound}}
The proof of Lemma~\ref{append:lem:trunc_error_bound} employs the following lemma whose proof is deferred to Appendix~\ref{appendix:subsec:proof_lem9}.

\begin{lemma}\label{lem:cdf_sse_bound_1_order}
    Let $\mathrm{p}(\bomega)=2^{-\|\bomega\|_0}\Tr(\rho_0 Q_{\bomega})^2/\nu_f$ be a probability distribution induced by the circuit expectation $f(\bx)=\Tr(\rho_0 U(\bx)^{\dagger}OU(\bx))$ over the frequency $\bomega\in \{0,1,-1\}^d$. 
    Let $\mathcal{M}^{(1)}[O(\bx;U)]$ be the 1-order dynamical stabilizer entropy (\DSE) defined in Eq.~\eqref{append:eq:SSE}. Then, we have
    \begin{equation}
        \Pr_{\bomega \sim p}(\mathrm{p}(\bomega)\le \tau) \le \frac{\mathcal{M}^{(1)}[O(\bx;U)] }{\log(1/\tau)}.
    \end{equation}
\end{lemma}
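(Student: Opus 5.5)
The plan is to apply Markov's inequality to the surprisal random variable, exactly as in the proof of Lemma~\ref{lem:est_err_sse_bound_1_order}, and then pass from the Shannon entropy of $\mathrm{p}$ to the first-order \DSE\ via the maximization over fixed unitaries.

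First, define the random variable $X:=-\log \mathrm{p}(\bomega)$, where $\bomega\sim\mathrm{p}$. Only modes with $\mathrm{p}(\bomega)>0$ are drawn, so $X$ is almost surely finite and nonnegative, because $\mathrm{p}(\bomega)\le 1$. Its expectation is
\begin{equation}
\mathbb{E}[X]=\sum_{\bomega}\mathrm{p}(\bomega)\log\frac{1}{\mathrm{p}(\bomega)}=\mathcal{S}^{(1)}[O(\bx;U)],
\end{equation}
which is the $\alpha\to1$ limit of the R\'enyi entropy in Eq.~\eqref{append:eq:SSE}. At this point I would check that the distribution $\mathrm{p}(\bomega)=2^{-\|\bomega\|_0}\Tr(\rho_0Q_{\bomega})^2/\nu_f$ coincides with Eq.~\eqref{append:eq:mode-dist} for the identity choice $V=\mathbb{I}$. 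This follows because $\mathbb{E}_{\bx}\Phi_{\bomega}(\bx)^2=2^{-\|\bomega\|_0}$ under the uniform distribution.

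Second, for $\tau\in(0,1)$ the event $\{\mathrm{p}(\bomega)\le\tau\}$ is the same as $\{X\ge\log(1/\tau)\}$, and $\log(1/\tau)>0$. Markov's inequality then gives
\begin{equation}
\Pr_{\bomega\sim\mathrm{p}}(\mathrm{p}(\bomega)\le\tau)\le \mathcal{S}^{(1)}[O(\bx;U)]/\log(1/\tau).
\end{equation}

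Third, since $V=\mathbb{I}$ is admissible in the maximization defining $\mathcal{M}^{(1)}$, we have $\mathcal{S}^{(1)}[O(\bx;U)]\le\mathcal{M}^{(1)}[O(\bx;U)]$, and the claim follows.

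The main obstacle is small. One must confirm that $\mathrm{p}$ is well defined, which holds since $\nu_f>0$ is assumed in Theorem~\ref{thm:prediction_error_bound}. One must also handle zero-probability modes, which never occur as outcomes of the sampler and so do not affect the expectation. The degenerate cases $\tau\ge1$ or $\nu_f=0$ can be excluded or treated trivially.
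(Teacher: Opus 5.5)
Your proposal is correct and matches the paper's proof: Markov's inequality applied to the surprisal $X=-\log\mathrm{p}(\bomega)$, whose mean is $\mathcal{S}^{(1)}[O(\bx;U)]$, followed by $\mathcal{S}^{(1)}\le\mathcal{M}^{(1)}$ from the maximization over fixed unitaries with $V=\mathbb{I}$. Your extra checks are details the paper leaves implicit: that $\mathrm{p}$ coincides with Eq.~\eqref{append:eq:mode-dist} at $V=\mathbb{I}$ via $\mathbb{E}_{\bx}\Phi_{\bomega}(\bx)^2=2^{-\|\bomega\|_0}$, and the handling of $\nu_f>0$ and zero-probability modes.
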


We are now ready to present the proof of Lemma~\ref{append:lem:trunc_error_bound}.

\begin{proof}[Proof of Lemma~\ref{append:lem:trunc_error_bound}]
    We begin the proof by rewriting the problem of upper-bounding the truncation error term $T_1=\sum_{\Omega \backslash \Lambda_{\mathsf{q}}} 2^{-\|\bomega\|_0} \Tr(\rho_0 Q_{\bomega})^2$ as
    \begin{align}\label{appendix:eq:T1_p0_1}
        \frac{1}{\nu_f}\cdot T_1=& \frac{1}{\nu_f} \cdot \left( \nu_f- \sum_{\bomega \in \Lambda_{\mathsf{q}}}2^{-\|\bomega\|_0} \Tr(\rho_0 Q_{\bomega})^2 \right) 
        \nonumber \\
        = & 1-\sum_{\bomega \in \Lambda_{\mathsf{q}}}\frac{2^{-\|\bomega\|_0} \Tr(\rho_0 Q_{\bomega})^2}{\nu_f}
        \nonumber \\
        = & 1-\sum_{\bomega \in \Lambda_{\mathsf{q}}}\mathrm{p}(\bomega) 
    \end{align}
    where $\Lambda_{\mathsf{q}}=\{\bomega^{(1)}, \cdots, \bomega^{(m_f)}\}_{\rm distinct}$, and $\{\bomega^{(1)}, \cdots, \bomega^{(m_f)}\}$ refer to the independently sampled frequencies according to the probability distribution $\mathrm{p}(\bomega)$.

    For each $\bomega\in \Lambda=\{0,1,-1\}^d$, we define the count $s_{\bomega}=\sum_{i=1}^{m_f} \mathbbm{1}\{\bomega^{(i)}=\bomega\}$. Then we have $\bomega\in \Lambda_{\mathsf{q}}$ if and only if $s_{\bomega} \ge 1$ so that
    \begin{equation}\label{appendix:eq:T1_p0_2}
        1-\sum_{\bomega \in \Lambda_{\mathsf{q}}}\mathrm{p}(\bomega)  = \sum_{\bomega\in \Lambda} \mathrm{p}(\bomega) \mathbbm{1}\{s_{\bomega}=0\}.
    \end{equation}
    To derive the upper bound of this summation, we split the summation over $\bomega \in \Lambda$ into two parts, namely 
    \begin{equation}\label{appendix:eq:split_sum}
        \frac{1}{\nu_f}\cdot T_1=\sum_{\bomega\in \Lambda} \mathrm{p}(\bomega) \mathbbm{1}\{s_{\bomega}=0\}= \sum_{\bomega\in \Lambda_{\tau}} \mathrm{p}(\bomega) \mathbbm{1}\{s_{\bomega}=0\} +\sum_{\bomega\in \Lambda_{\tau}^{\rm c}} \mathrm{p}(\bomega) \mathbbm{1}\{s_{\bomega}=0\}
    \end{equation}
    where $\Lambda_\tau=\{\bomega\in \Lambda: \mathrm{p}(\bomega)\ge \tau\}$ denotes the subset of frequency whose probability is small than a specific value $\tau\in (0,1)$ and $ \Lambda_\tau^{\rm c}=\Omega\backslash \Lambda_\tau:= \{\bomega\in \Lambda: \mathrm{p}(\bomega)< \tau\}$ denotes the complement of $\Lambda_\tau$. In the following, we separately derive the upper bounds of these two summation terms over $\Lambda_\tau^{\rm c}$ and $\Lambda_\tau$.

    \smallskip

    \noindent \underline{\textit{Upper bound of $\sum_{\bomega\in \Lambda_{\tau}^{\rm c}} \mathrm{p}(\bomega) \mathbbm{1}\{s_{\bomega}=0\} $.}} For the summation term over the frequency set $\Lambda_{\tau}^{\rm c}$, employing Lemma~\ref{lem:cdf_sse_bound_1_order} immediately implies that 
    \begin{equation}\label{appendix:eq:sum_less_tau}
        \sum_{\bomega\in \Lambda_{\tau}^{\rm c}} \mathrm{p}(\bomega) \mathbbm{1}\{s_{\bomega}=0\} \le \sum_{\bomega\in \Lambda_{\tau}^{\rm c}} \mathrm{p}(\bomega) = \Pr_{\bomega \sim \mathrm{p}} \left(\mathrm{p}(\bomega) \le  \tau\right) \le \frac{\mathcal{M}^{(1)}[O(\bx;U)]}{\log(1/\tau)}.
    \end{equation}

    \smallskip

    \noindent \underline{\textit{Upper bound of $\sum_{\bomega\in \Lambda_{\tau}} \mathrm{p}(\bomega) \mathbbm{1}\{s_{\bomega}=0\}$ .}} For the summation term over the frequency set $\Lambda_{\tau}$, we will show that with high probability, 
    \begin{equation}
        \sum_{\bomega\in \Lambda_{\tau}} \mathrm{p}(\bomega) \mathbbm{1}\{s_{\bomega}=0\} =0.
    \end{equation}
    To achieve this, we first note that the event $\sum_{\bomega\in \Lambda_{\tau}} \mathrm{p}(\bomega) \mathbbm{1}\{s_{\bomega}=0\} >0$ means that there exists at least one frequency $\bomega \in \Lambda_{\tau}$ that was never sampled. Hence, by using the union bound, we have 
    \begin{align}\label{appendix:eq:prob_bound_1}
        & \Pr_{\bomega \sim p} \left(\sum_{\bomega\in \Lambda_{\tau}} \mathrm{p}(\bomega) \mathbbm{1}\{s_{\bomega}=0\} >0\right) 
        \nonumber \\
        \le & \Pr_{\bomega \sim p} \left(\bigcup_{\bomega\in \Lambda_{\tau}} \left[\mathrm{p}(\bomega) \mathbbm{1}\{s_{\bomega}=0\} >0 \right] \right) 
        \nonumber \\
        \le & \sum_{\bomega\in \Lambda_{\tau}}  \Pr_{\bomega \sim p} \left(\mathrm{p}(\bomega) \mathbbm{1}\{s_{\bomega}=0\} >0 \right) 
        \nonumber \\
        \le & \sum_{\bomega \in \Lambda_{\tau}} \Pr_{\bomega \sim p}(s_{\bomega}=0)
        \nonumber \\
        = & \sum_{\bomega \in \Lambda_{\tau}} \left(1-\mathrm{p}(\bomega)\right)^{m_f},
    \end{align}
    where the third inequality follows that the event $\mathrm{p}(\bomega) \mathbbm{1}\{s_{\bomega}=0\} >0 $ implies the event $\{s_{\bomega}=0\} $ and hence $\Pr(\mathrm{p}(\bomega) \mathbbm{1}\{s_{\bomega}=0\} >0) \le \Pr(s_{\bomega}=0)$, the last equality employs the observation that $s_{\bomega}=0$ means that none of the $m_f$ sampled frequency equals $\bomega$, which happens with probability $(1-\mathrm{p}(\bomega))^{m_f}$. Moreover, employing the definition of $\Lambda_{\tau}$ given in Eq.~\eqref{appendix:eq:split_sum}, we have $\mathrm{p}(\bomega) > \tau$ for $\bomega\in \Lambda_{\tau}$ and hence
    \begin{equation}\label{appendix:eq:prob_bound_2}
        \Pr_{\bomega \sim p} \left(\sum_{\bomega\in \Lambda_{\tau}} \mathrm{p}(\bomega) \mathbbm{1}\{s_{\bomega}=0\} >0\right) \le \sum_{\bomega \in \Lambda_{\tau}} \left(1-\mathrm{p}(\bomega)\right)^{m_f} \le \sum_{\bomega \in \Lambda_{\tau}}  e^{-m_f\mathrm{p}(\bomega)} \le  |\Lambda_{\tau}| e^{-m_f\tau},
    \end{equation}
    where the second inequality employs the fundamental inequality $(1-x) \le e^{-x}$. Then it remains to bound $|\Lambda_{\tau}|$. In particular, we have
    \begin{equation}\label{appendix:eq:size_Omega_complt}
        |\Lambda_{\tau}| < \sum_{\bomega\in \Lambda_{\tau}} \frac{\mathrm{p}(\bomega)}{\tau} \le \sum_{\bomega\in \Lambda} \frac{\mathrm{p}(\bomega)}{\tau} = \frac{1}{\tau},
    \end{equation}
    where the first inequality employs the definition of $\Lambda_{\tau}$ with $\mathrm{p}(\bomega)> \tau $, the second inequality follows $\Lambda_{\tau} \subset \Omega$, and the final equality uses the fact that $\sum_{\bomega}\mathrm{p}(\bomega)=1$. To the end,
    in conjunction with Eq.~\eqref{appendix:eq:prob_bound_1}, Eq.~\eqref{appendix:eq:prob_bound_2} and Eq.~\eqref{appendix:eq:size_Omega_complt}, we have 
    \begin{equation}\label{appendix:eq:sum_greater_tau}
        \Pr_{\bomega \sim p} \left(\sum_{\bomega\in \Lambda_{\tau}} \mathrm{p}(\bomega) \mathbbm{1}\{s_{\bomega}=0\} >0\right) \le \frac{e^{-m_f\tau}}{\tau}.
    \end{equation}

    Finally, combining Eq.~\eqref{appendix:eq:split_sum} with the results achieved in Eq.~\eqref{appendix:eq:sum_less_tau} and Eq.~\eqref{appendix:eq:sum_greater_tau}, with probability at least $1-e^{-m_f\tau}/\tau$, we have
    \begin{equation}
        \frac{1}{\nu_f} \cdot T_1 =  \left( 1-\sum_{\bomega\in \Lambda_{\mathsf{q}}} \mathrm{p}(\bomega) \right) = \sum_{\bomega\in\Lambda} \mathrm{p}(\bomega)\mathbbm{1}\{s_{\bomega}=0\} = \sum_{\bomega\in\Lambda_{\tau}^{\rm c}} \mathrm{p}(\bomega)\mathbbm{1}\{s_{\bomega}=0\} \le \frac{ \mathcal{M}^{(1)}[O(\bx;U)]}{\log(1/\tau)},
        \end{equation}
    where the first two equalities follow Eq.~\eqref{appendix:eq:prob_bound_1} and Eq.~\eqref{appendix:eq:prob_bound_2}.
    To ensure the error term $T_1$ is upper bounded by $\epsilon/2$ with probability $1-\delta/2$ for any $\epsilon,\delta>0$, by setting $\tau = 2^{-2\nu_f\cdot \mathcal{M}^{(1)}[O(\bx;U)]/\epsilon}$ and the sampling budget
    $$ m_f = \frac{1}{\tau} \cdot \log\left(\frac{2}{\delta \tau} \right) = 2^{2\nu_f \cdot \mathcal{M}^{(1)}[O(\bx;U)]/\epsilon} \cdot \log\left(\frac{2\cdot 2^{2\nu_f \cdot \mathcal{M}^{(1)}[O(\bx;U)]/\epsilon}}{\delta} \right),$$
    we have $e^{-m_f\tau}/\tau\le \delta/2$ and can obtain the final result
    \begin{equation}
        \Pr_{\bomega\sim p} \left(T_1 \le \frac{\epsilon}{2} \right) \ge 1-\frac{\delta}{2}.
    \end{equation}
    This completes the proof.
\end{proof}

\subsection{Estimation error of the \texorpdfstring{\DSE}{DSE}-guided classical surrogate \texorpdfstring{$h_{\mathsf{q}}$}{hq}---proof of Lemma~\ref{append:lem:estimation_error_bound}}
\label{append:subsec:proof-lemma2} 

The proof of the estimation error bound follows the conventions of Ref.~\cite{du2025efficient}, the core of the proof is to show that the \DSE-guided classical surrogate $h_{\mathsf{q}}(\bx)$ in Eq.~\eqref{eq:learning_surrogate_cs} is equal to the trigonometric expansion of the truncated target function $f_{\mathsf{q}}(\bx)=\sum_{\bomega\in \Lambda_{\mathsf{q}}} \Phi_{\bomega}(\bx)\Tr(\rho_0 Q_{\bomega})$ after averaging over the training data $\mathcal{T}=\{(\bxi, y^{(i)})\}_{i=1}^N$. This average includes the randomness of the sampled inputs $\bxi$ and statistical unbiased estimation of $y^{(i)}=\sum_{k=1}^m \bm{o}_k(\bxi)/m$ obtained from $m$ measurement outcomes $\bm{o}=(\bm{o}_1(\bxi), \cdots, \bm{o}_T(\bxi))$, namely $\mathbb{E}_{\bm{o}} y^{(i)} =  f_O(\bxi)=\Tr(\rho_0O(\bxi;U))$. We focus on the bounded observable $O=\sum_{l=1}^q\bm{a}_l P_l$ with $\|\bm{a}\|_1=\sum_{l=1}^q|\bm{a}_l|\le 1$ and $P_i$ being Pauli operators such that the single-shot measurement outcome $\bm{o}_1(\bxi)$ is the eigenvalue of the weighted Pauli observable $\bm{a}_lP_l$. 

\smallskip

\noindent \underline{Remark.} Instead of using the shadow-based estimation strategy in Ref.~\cite{du2025efficient} to construct the training data, which is particularly advantageous for estimating an exponential number of observables simultaneously, we focus on a fixed computationally efficient observable $O=\sum_{l=1}^q\bm{a}_lP_l$, where $P_l$ could be the global Pauli operators. In this case, the unbiased label $y^{(i)}$ can be estimated more efficiently using measurement strategies tailored to the observable, such as direct measurement, Pauli grouping, or derandomized shadow methods. Moreover, our results could be extended to more general bounded observables as long as the observable has a classically efficient representation and its block-encoding state could be efficiently implemented. This also includes the projector measurement. This setting allows classical surrogates to be applied to a broader class of observables beyond the locality restrictions inherent in standard shadow-based methods.
 
\begin{proof}[Proof of Lemma~\ref{append:lem:estimation_error_bound}]
	
    Most of this proof follows the conventions of Ref.~\cite[Lemma~F.2]{du2025efficient}. The first step is to prove the equivalence between the expectation of the classical surrogate $h_{\mathcal{T}}$ and the truncated target quantum state $f_{\mathsf{q}}(\bx)$, i.e., $\mathbb{E}_{\mathcal{T}}h_{\mathcal{T}}(\bx) = f_{\mathsf{q}}(\bx)$. Following the explicit form of $h_{\mathcal{T}}(\bx)$ in Eq.~(\ref{eq:learning_surrogate_cs}), we have  
    \allowdisplaybreaks
    \begin{subequations}
    	\begin{eqnarray}\label{eqn:exp-model-recover-true-state-trigeo}
    	\mathbb{E}_{\mathcal{T}}h_{\mathcal{T}}(\bx) = &&  \frac{1}{n}\sum_{i=1}^{n} \mathbb{E}_{\bxi \sim [-\pi, \pi]^d} \left[\kappa_{\mathsf{q}}\left(\bx, \bxi\right) \right]\mathbb{E}_{\bm{o}^{(i)}} y^{(i)}\label{eqn:exp-model-recover-true-state-trigeo-1} \\
    	= &&  \mathbb{E}_{\bx^{(1)}\sim [-\pi, \pi]^{d}} \left[\kappa_{\mathsf{q}}\left(\bx, \bx^{(1)}\right) \right]\Tr(\rho(\bx^{(1)})O) \label{eqn:exp-model-recover-true-state-trigeo-2} \\
        = && \sum_{\bomega\in \Lambda_{\mathsf{q}}} \Phi_{\bomega}(\bx)\Tr(\rho_0 Q_{\bomega}) \\
       = && f_{\mathsf{q}}(\bx),
    \end{eqnarray}
    \end{subequations}
    where $\bm{o}^{(i)}=(\bm{o}_1(\bxi), \cdots, \bm{o}_m(\bxi))$ denotes the randomized measurement outcome for the state $\rho(\bxi)$, the second equality employs the unbiasedness of the label $y^{(i)}$, the final equality follows the derivation of Ref.~\cite[Eq.~(34b)-(34i)]{du2025efficient}.

    The second step is to analyze the estimation error  $\mathbb{E}_{\bx\sim [-\pi ,\pi]^d}  | h_{\mathcal{T}}(\bx) - f_{\mathsf{q}}(\bx)) |^2 $, namely
    \allowdisplaybreaks
    \begin{align}
    	& \mathbb{E}_{\bx\sim [-\pi, \pi]^d} \left |h_{\mathcal{T}}(\bx) - f_{\mathsf{q}}(\bx))\right|^2  \\
    	= & ~\mathbb{E}_{\bx\sim [-\pi, \pi]^d} \left|\frac{1}{n}\sum_{i=1}^{n} \kappa_{\mathsf{q}}(\bx, \bxi) y^{(i)} -  \sum_{\bomega\in \Lambda_{\mathsf{q}}}\Phi_{\bomega}(\bx) \Tr(\rho_0 Q_{\bomega})\right|^2  \\
    	= & ~  \sum_{\bomega\in \Lambda_{\mathsf{q}}}  2^{-\|\bomega\|_0}    \left| \frac{1}{n}\sum_{i=1}^{n} 2^{\|\bomega\|_0}  \Phi_{\bomega}(\bxi) y^{(i)} -   \Tr(\rho_0 Q_{\bomega})   \right|^2  \\
        = & ~  \sum_{\bomega\in \Lambda_{\mathsf{q}}}      \left| \frac{1}{n}\sum_{i=1}^{n} 2^{\|\bomega\|_0/2}  \Phi_{\bomega}(\bxi) y^{(i)} -   2^{-\|\bomega\|_0/2} \Tr(\rho_0 Q_{\bomega})   \right|^2
        \label{subeqn:exp-kernel-Lowesa-2} 
    \end{align}
    where the second equality employs the evaluation of the orthogonality of basis functions $\Phi_{\bomega}(\bx)$, namely $\mathbb{E}_{\bx\sim [-\pi,\pi]}\Phi_{\bomega}(\bx) \Phi_{\bomega'}(\bx)=\delta_{\bomega \bomega'} 2^{-\|\bomega\|_0}$.
    For each $\bomega\in \Lambda_{\mathsf{q}}$ and each training examples $(\bxi, y^{(i)})$, we define the random variable conditioned on the sampled set $\Lambda_{\mathsf{q}}$ as $
    X_{i,\bomega}:=2^{\|\bomega\|_0/2} \Phi_{\bomega}(\bxi)y^{(i)}.$
    Employing the unbiasedness of $y^{(i)}=\mathbb{E}_{\bm{o}^{(i)}} f(\bxi)$, we have
    \begin{equation}
        \mathbb{E}_{\bxi,\bm{o}^{(i)}}[X_{i,\bomega}] = 2^{\|\bomega\|_0/2} \mathbb{E}_{\bxi,\bm{o}^{(i)}} \Phi_{\bomega}(\bxi)y^{(i)}= 2^{\|\bomega\|_0/2} 2^{-\|\bomega\|_0}\Tr(\rho_0 Q_{\bomega}) =  2^{-\|\bomega\|_0/2}\Tr(\rho_0 Q_{\bomega}).
    \end{equation}
    Moreover, using the orthogonality relation $ \mathbb{E}_{\bx\sim [-\pi,\pi]^d} \Phi_{\bomega}(\bx)^2=2^{-\|\bomega\|_0}$ and the bounded-label $|y^{(i)}|\le 1$, we obtain the variance of $X_{i,\bomega}$ as 
    \begin{equation}
        \mathrm{Var}(X_{i,\bomega}) \le \mathbb{E}[X_{i,\bomega}^2] =  2^{\|\bomega\|_0}\mathbb{E}_{\bxi,\bm{o}^{(i)}} \left[\Phi_{\bomega}(\bxi)^2(y^{(i)})^2 \right] \le 1.
    \end{equation}
    Since the training examples are independent, we can write the expectation of the estimation error over the training dataset as 
    \begin{align}
        \mathbb{E}_{\mathcal{T}} = & \mathbb{E}_{\mathcal{T}} \mathbb{E}_{\bx\sim [-\pi,\pi]^d } \left| h_{\mathsf{q},\mathcal{T}}(\bx)-f_{\mathsf{q}}(\bx)\right|^2
        \nonumber \\
        = & \mathbb{E}_{\mathcal{T}} \sum_{\bomega\in \Lambda_{\mathsf{q}}} \left| \frac{1}{n} \sum_{i=1}^nX_{i,\bomega} -\mathbb{E}_{\bxi,\bm{o}^{(i)}} X_{i,\bomega}  \right|^2
        \nonumber \\
        = &  \sum_{\bomega\in \Lambda_{\mathsf{q}}} \mathrm{Var}\left(\frac{1}{n}\sum_{i=1}^nX_{i,\bomega} \right) \le \frac{d_{\mathsf{q}}}{n}.
    \end{align}
    Applying Markov's inequality to the nonnegative random variable $T_2$ gives
    \begin{align}
        \Pr \left[T_2 \ge \frac{\epsilon}{2} \right] \le \frac{2\mathbb{E}_{\mathcal{T}} T_2}{\epsilon} \le \frac{2d_{\mathsf{q}}}{n\epsilon}.
    \end{align}
    In this regard, when the number of training examples yields
    \begin{equation}
        n = \frac{4d_{\mathsf{q}}}{\epsilon\delta}
    \end{equation}
    for a constant $\delta>0$, then we have $T_2 \le \epsilon/2$ with probability at least $1 - \delta/2$. This completes the proof.
\end{proof}

\subsection{Proof of Lemma~\ref{lem:cdf_sse_bound_1_order}} 
\label{appendix:subsec:proof_lem9}

\begin{proof}
    [Proof of Lemma~\ref{lem:cdf_sse_bound_1_order}] The proof uses Markov's inequality, $\Pr(X\ge a) \le \mathbb{E}[X]/a$, which holds for any non-negative random variable $X$ and $a>0$. First, let $X:=-\log(\mathrm{p}(\bomega))$, where $\bomega \in \{0,1,-1\}^{d}$ follows the distribution $\mathrm{p}(\bomega)$. Then we have
    \begin{equation}
        \mathbb{E}[X]=\sum_{\bomega} \mathrm{p}(\bomega) \log\left(\frac{1}{\mathrm{p}(\bomega)} \right) = \mathcal{S}^{(1)}[O(\bx;U)].
    \end{equation}
    Hence, using Markov's inequality, we find that for any $a>0$,
    \begin{equation}
        \Pr_{\bomega \sim p}(-\log(\mathrm{p}(\bomega))\ge a) \le \frac{\mathcal{S}^{(1)}[O(\bx;U)]}{a}  ~~\Leftrightarrow~~\Pr_{\bomega \sim p}(\mathrm{p}(\bomega)\le 2^{-a}) \le \frac{\mathcal{S}^{(1)}[O(\bx;U)]}{a},
    \end{equation}
    or equivalently, letting $\tau=2^{-a}$,
    \begin{equation}
        \sum_{\bomega\in \Lambda\backslash \Lambda_{\tau}}\mathrm{p}(\bomega)=\Pr_{\bomega \sim p}(\mathrm{p}(\bomega)\le \tau) \le \frac{\mathcal{S}^{(1)}[O(\bx;U)] }{\log(1/\tau)} \le \frac{\mathcal{M}^{(1)}[O(\bx;U)] }{\log(1/\tau)},
    \end{equation}
    where the second inequality follows the definition of $\calM^{(1)}[O(\bx;U)]$.
\end{proof}

\section{Computational hardness for purely classical algorithms to emulate observable expectations}\label{append:sec:classical-hard}

In this section, we complement the efficient surrogate-emulation result with a complexity-theoretic limitation for purely classical algorithms. Recall that the learnability guarantee established in Theorem~\ref{mt:thm:prediction_error_bound} applies to a classical predictor trained with quantum-generated data. By contrast, a purely classical simulator receives only the classical description of the circuit and the observable, and has no access to measurement data from the quantum device. We prove that, under the standard complexity-theoretic conjecture $\mathsf{BQP}\not\subseteq\mathsf{BPP}$, there exists a circuit family that is efficiently learnable by the $\DSE$-guided surrogate, yet cannot be emulated in polynomial time by any purely classical algorithm. In particular, for a fixed sufficiently small constant target error $\epsilon>0$, we will construct $N$-qubit quantum circuits $U_{\ell}(\bx)$ containing $d=\mathcal{O}(\min\{1/\epsilon,N\})$ tunable rotation gates, which nevertheless yield a logarithmically scaling \DSE\ for a specific observable $O$, namely $\mathcal{M}^{(1)}[O(\bx;U_{\ell})]=\mathcal{O}(\log N)$.

The organization of this section is as follows. In SI.~\ref{append:subsec:construction}, we present the explicit construction of the circuit family and establish the separation, i.e., the constructed family is efficiently learnable by the $\DSE$-guided surrogate but is intractable for any purely classical algorithm unless $\mathsf{BQP}\subseteq\mathsf{BPP}$. In SI.~\ref{append:subsec:proof_lemmas}, we present the proofs of the two technical lemmas used in the above analysis.

\subsection{Construction of the circuit family and the separation}\label{append:subsec:construction}
The construction of the quantum circuits is based on the computational hardness of the $\mathsf{BQP}$ complexity class, whose hardness is embedded into the parameter-independent blocks of the circuit, while the tunable rotation gates are arranged to generate a low-\DSE trigonometric structure. In this regard, we first recall the $\mathsf{BQP}$ complexity class.

\smallskip

\noindent\textbf{$\mathsf{BQP}$ complexity class}. $\mathsf{BQP}$ (short for bounded-error quantum polynomial time) is the complexity class of promise problems that can be solved by polynomial-time quantum computations with a small probability of error~\cite{watrous2008quantum}. Concretely, for a language $L\in\mathsf{BQP}$ and an input $\ell$, there exists a uniform family of polynomial-size quantum circuits $\{W_\ell\}$ acting on $m=m(\ell)=\mathrm{poly}(|\ell|)$ qubits initialized in $\ket{0^m}$, such that measuring the first output qubit in the computational basis yields an acceptance probability $p_\ell$ satisfying $p_\ell\ge 2/3$ if $\ell\in L$ and $p_\ell\le 1/3$ if $\ell\notin L$. It is widely believed that $\mathsf{BQP}\not\subseteq\mathsf{BPP}$; otherwise, problems such as integer factoring would admit randomized polynomial-time classical algorithms.

\smallskip
\noindent\textbf{Explicit construction}. Fix a language $L\in\mathsf{BQP}$. For an arbitrary input $\ell$, standard error amplification yields a polynomial-size circuit $W_\ell$ on $m=\mathrm{poly}(|\ell|)$ qubits, compiled over the universal gate set $\{H,S,\CNOT,T\}$, whose first output qubit obeys
\begin{equation}\label{append:eq:dpos_mu}
\mu_\ell := \bra{0^m} W_\ell^{\dagger} Z_1 W_\ell\ket{0^m} = 1-2p_\ell,
\qquad
\ell\in L \Rightarrow \mu_\ell\le -\tfrac13,
\quad
\ell\notin L \Rightarrow \mu_\ell\ge \tfrac13 .
\end{equation}
In other words, deciding the sign of $\mu_\ell$ is equivalent to deciding the membership of $\ell$ in $L$. Let $N=m+d$, where the $d$ additional qubits, indexed by $m+1,\dots,m+d$, serve as \emph{probe} qubits. We then define the $N$-qubit circuit family
\begin{equation}\label{append:eq:dpos_U}
\mathcal{U}_L=\left\{U_{\ell}(\bx) = W_\ell \otimes \prod_{j=1}^{d}\RZ_{m+j}(\bx_j)\,H_{m+j}~\Big|~ \ell \in \{0,1\}^{*},~ \bx\in[-\pi,\pi]^d \right\},
\end{equation}
together with the fixed initial state $\rho_0=\ket{0}\bra{0}^{\otimes N}$ and the fixed observable
\begin{equation}\label{append:eq:dpos_O}
O = \frac1d\sum_{j=1}^d Z_1\,X_{m+j},
\qquad \sum_{l}\|O_l\|_{\infty}=1 .
\end{equation}
Each $U_{\ell}(\bx)$ is an instance of the $\RZ+\CI+T$ circuit of Eq.~\eqref{eq:target_function_set} with $d$ tunable rotations, where the parameter-independent gates, namely $W_\ell$ and the Hadamards, form the fixed blocks. In this way, the entire $\mathsf{BQP}$ hardness is placed into the fixed block $W_\ell$, whereas the $d$ tunable probes generate the trigonometric structure of the target function. We emphasize that both $\rho_0$ and $O$ are prescribed and independent of $\ell$, so that the unknown instance is carried solely by the circuit.

The following lemma shows that the mean-value function of the constructed family admits a product form, and that the resulting $\DSE$ for the evolved observable $O(\bx;U_{\ell})=U_{\ell}(\bx)^{\dagger}OU_{\ell}(\bx)$ remains logarithmic in $N$.

\begin{lemma}[Product form and $\DSE$ bound]\label{append:lem:dpos_sse}
For the circuit family $\mathcal{U}_L$ and the observable $O$ defined in Eqs.~\eqref{append:eq:dpos_mu}--\eqref{append:eq:dpos_O}, the mean-value function yields
\begin{equation}
f_{\ell}(\bx)=\Tr\big(\rho_0\, U_{\ell}(\bx)^{\dagger} O\, U_{\ell}(\bx)\big) = \mu_\ell\, h(\bx),
\quad \text{with}\quad h(\bx)=\frac{1}{d}\sum_{j=1}^d\cos(\bx_j).
\end{equation}
Consequently, we have $\nu_f=\mathbb{E}_{\bx}f_{\ell}(\bx)^2=\mu_\ell^2/(2d)$ and $\mathcal{M}^{(1)}[O(\bx;U_{\ell})] \le \log(2d)=\mathcal{O}(\log N)$.
\end{lemma}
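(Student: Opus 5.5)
The proof has three steps: compute $f_\ell$ by tensor factorization, read off $\nu_f$ from orthogonality of the trigonometric basis, and bound the \DSE\ by a support-counting argument. The circuit $U_\ell(\bx)=W_\ell\otimes\prod_j \RZ_{m+j}(\bx_j)H_{m+j}$ acts as a tensor product on the first $m$ qubits and on each probe qubit. The input $\rho_0=\ket{0}\bra{0}^{\otimes N}$ is also a product. Hence, by linearity in the $d$ summands of $O$,
\begin{equation}
f_\ell(\bx)=\frac1d\sum_{j=1}^d \bra{0^m}W_\ell^\dagger Z_1W_\ell\ket{0^m}\cdot \bra{0}H\,\RZ(\bx_j)^\dagger X\,\RZ(\bx_j)H\ket{0}.
\end{equation}
The first factor is $\mu_\ell$ by Eq.~\eqref{append:eq:dpos_mu}. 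For the second factor, use Eq.~\eqref{append:eq:single-qubit-branch} (up to the sign convention of $\RZ$) to write $\RZ^\dagger X\RZ=\cos(\bx_j)X\pm\sin(\bx_j)Y$. Since $H\ket{0}=\ket{+}$ has $\langle X\rangle=1$ and $\langle Y\rangle=0$, the second factor equals $\cos(\bx_j)$. This gives $f_\ell=\mu_\ell h(\bx)$.

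For $\nu_f$, expand $f_\ell$ in the trigonometric basis. Its only nonzero coefficients sit on the $d$ modes $\bomega=e_j$, each with $\|\bomega\|_0=1$, and each coefficient equals $\mu_\ell/d$. The orthogonality relation $\mathbb{E}_{\bx}\Phi_{\bomega}\Phi_{\bomega'}=\delta_{\bomega\bomega'}2^{-\|\bomega\|_0}$ then gives $\nu_f=d\cdot(\mu_\ell/d)^2/2=\mu_\ell^2/(2d)$.

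The \DSE\ bound is the main point and requires handling the maximization over fixed $V\in\mathbb{SU}(2^N)$ in Eq.~\eqref{append:eq:SSE}. Mirror the converse argument in the proof of Lemma~\ref{append:lem:qc_ld_ose}. At the operator level,
\begin{equation}
O(\bx;U_\ell)=\frac1d\sum_j \big(W_\ell^\dagger Z_1W_\ell\big)\otimes H\big(\cos(\bx_j)X\pm\sin(\bx_j)Y\big)H.
\end{equation}
Expanding the $j$-th summand over modes gives the decomposition
\begin{equation}
O(\bx;U_\ell)=\sum_{\bomega\in\{e_j,-e_j\}_{j=1}^d}\Phi_{\bomega}(\bx)Q_{\bomega},
\end{equation}
supported on at most $2d$ modes. (The mode $\bomega=\bm 0$ carries nothing, since $X$ and $Y$ are both rotated.) Conjugating by any parameter-independent $V$ replaces each $Q_{\bomega}$ by $V^\dagger Q_{\bomega}V$ and introduces no new basis function $\Phi_{\bomega}$.

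Therefore, for every $V$, the induced distribution $\mathrm{p}(\bomega)$ is supported on at most $2d$ modes. Its Shannon entropy is then at most $\log(2d)$; if $\nu_f=0$, the convention sets $\mathcal{S}=0$. Taking the maximum over $V$ yields $\mathcal{M}^{(1)}\le\log(2d)=\mathcal{O}(\log N)$, since $d\le N$.

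The one delicate point is to make this support argument rigorous given that the operator expansion is not unique. I expect to handle it by fixing the canonical PTM expansion of Eq.~\eqref{apppend:eq:Px_withT}, under which conjugation acts coefficientwise, as in Lemma~\ref{append:lem:qc_ld_ose}.
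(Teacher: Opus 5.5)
Your proposal is correct and follows essentially the same route as the paper's proof: you factor across the $W_\ell$ register and the probe qubits to get $f_\ell=\mu_\ell h(\bx)$, use trigonometric orthogonality to get $\nu_f=\mu_\ell^2/(2d)$, and observe that conjugation by any fixed $V$ leaves the $2d$-mode frequency support $\{\pm e_j\}$ unchanged, which gives $\mathcal{M}^{(1)}\le\log(2d)$. Your extra care about the zero-norm convention and about fixing a canonical operator expansion is a harmless refinement that the paper leaves implicit.
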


Building on the product form $f_{\ell}(\bx)=\mu_\ell h(\bx)$, the next lemma provides a reduction showing that any accurate hypothesis $\hat{f}$ of the target function $f_{\ell}$ can be converted into a decision procedure for the sign of $\mu_\ell$. The intuition is that, since the $\bx$-dependence of $f_{\ell}$ is entirely carried by the \emph{known} function $h(\bx)$, correlating a hypothesis $\hat f$ against $h$ isolates the unknown constant $\mu_\ell$.

\begin{lemma}[Reduction to the decision problem of $\mu_{\ell}$]\label{append:lem:dpos_reduction}
Let $\hat f$ satisfy $\mathbb{E}_{\bx}|\hat{f}(\bx)-f_{\ell}(\bx)|^2\le\epsilon$ with $\epsilon\le\nu_f/16$. Then the estimator
\begin{equation}
\hat\mu:=2d\,\mathbb{E}_{\bx}\big[\hat f(\bx)\,h(\bx)\big]
\quad\text{obeys}\quad
|\hat\mu-\mu_\ell|\le \tfrac14|\mu_\ell| ,
\end{equation}
so that $\mathrm{sign}(\hat\mu)=\mathrm{sign}(\mu_\ell)$ decides whether $\ell\in L$. Moreover, $\hat\mu$ can be computed to accuracy $\tfrac18|\mu_\ell|$ from $M=\mathcal{O}(d^2\log(1/\delta))$ evaluations of $\hat f$ and $h$, with probability at least $1-\delta$.
\end{lemma}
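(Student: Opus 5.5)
The argument has two parts: an $L^2$ projection bound for the exact correlation $\hat\mu$, and a concentration bound for estimating it by Monte Carlo. For the first part I use Lemma~\ref{append:lem:dpos_sse}, which gives $f_\ell=\mu_\ell h$ with $h(\bx)=\frac1d\sum_j\cos(\bx_j)$. Orthogonality of the trigonometric basis gives $\mathbb{E}_{\bx}h(\bx)^2=\frac{1}{d^2}\cdot d\cdot\frac12=\frac{1}{2d}$. Hence
\begin{equation}
2d\,\mathbb{E}_{\bx}[f_\ell(\bx)h(\bx)]=2d\,\mu_\ell\,\mathbb{E}_{\bx}h(\bx)^2=\mu_\ell ,
\end{equation}
so $\hat\mu-\mu_\ell=2d\,\mathbb{E}_{\bx}[(\hat f-f_\ell)h]$. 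By Cauchy--Schwarz,
\begin{equation}
|\hat\mu-\mu_\ell|\le 2d\sqrt{\epsilon}\,\sqrt{\tfrac{1}{2d}}=\sqrt{2d\epsilon}.
\end{equation}
Substituting $\epsilon\le\nu_f/16=\mu_\ell^2/(32d)$ gives $\sqrt{2d\epsilon}\le|\mu_\ell|/4$, which is the first claim. The sign statement then follows immediately: $|\hat\mu-\mu_\ell|<|\mu_\ell|$ forces $\mathrm{sign}(\hat\mu)=\mathrm{sign}(\mu_\ell)$. Together with Eq.~\eqref{append:eq:dpos_mu}, this decides whether $\ell\in L$.

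For the estimation claim, I draw $\bx^{(1)},\dots,\bx^{(M)}$ i.i.d.\ uniform on $[-\pi,\pi]^d$ and set
\begin{equation}
\tilde\mu=\frac{2d}{M}\sum_{k=1}^M\hat f(\bx^{(k)})h(\bx^{(k)}).
\end{equation}
I need a tail bound on $|\tilde\mu-\hat\mu|$, and the main obstacle is that $\hat f$ is only guaranteed in $L^2$, not in sup norm.

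I would handle this first by clipping: replace $\hat f$ with $g(\hat f)=\min\{1,\max\{-1,\hat f\}\}$. Since $|f_\ell|\le1$, clipping can only reduce the $L^2$ error, so the first part of the argument still applies to the clipped hypothesis. Each summand is then bounded by $2d\cdot1\cdot1=2d$, because $|h|\le1$. Hoeffding's inequality gives deviation $t$ with probability $1-\delta$ once $M=\mathcal{O}(d^2\log(1/\delta)/t^2)$.

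Setting $t=|\mu_\ell|/8$ requires a lower bound on $|\mu_\ell|$. Eq.~\eqref{append:eq:dpos_mu} gives $|\mu_\ell|\ge1/3$, so $t=\Omega(1)$ and $M=\mathcal{O}(d^2\log(1/\delta))$ as claimed. The total error is then $|\tilde\mu-\mu_\ell|\le\frac14|\mu_\ell|+\frac18|\mu_\ell|<|\mu_\ell|$, so $\mathrm{sign}(\tilde\mu)=\mathrm{sign}(\mu_\ell)$ with probability at least $1-\delta$.

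If a sharper variance bound is wanted, a Chebyshev or median-of-means argument can be used instead. It uses $\mathrm{Var}\le 4d^2\,\mathbb{E}[\hat f^2h^2]$, and this quantity is also $\mathcal{O}(d^2)$ after clipping.
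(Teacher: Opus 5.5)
Your proof is correct and follows essentially the same route as the paper. You use Cauchy--Schwarz against $h$ with $\mathbb{E}_{\bx}h(\bx)^2=1/(2d)$ to get $|\hat\mu-\mu_\ell|\le\sqrt{2d\epsilon}\le|\mu_\ell|/4$, then clip $\hat f$ to $[-1,1]$ and apply Hoeffding with $|\mu_\ell|\ge 1/3$ to obtain $M=\mathcal{O}(d^2\log(1/\delta))$. You also make explicit that the first bound is re-applied to the clipped hypothesis, so the sampled quantity is the clipped $\hat\mu$, which still lies within $|\mu_\ell|/4$ of $\mu_\ell$; the paper leaves that step implicit.
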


Supported by Lemmas~\ref{append:lem:dpos_sse} and \ref{append:lem:dpos_reduction}, whose proofs are deferred to SI.~\ref{append:subsec:proof_lemmas}, we are now ready to present the formal statement of the separation, in which the constructed family is simultaneously efficiently learnable by the $\DSE$-guided surrogate and hard for any purely classical algorithm.

\begin{theorem-non}[Formal statement of Theorem~\ref{coro:hardness_classical_simulat}]\label{append:thm:dpos_separation}
Let $L\in\mathsf{BQP}$ be an arbitrary language and fix a universal constant $0< \epsilon\le 1/288$. Consider the $N$-qubit circuit family $\mathcal{U}_L$ in Eq.~\eqref{append:eq:dpos_U} with $d\ge 1$ satisfying $d\epsilon \le 1/288$, the fixed initial state $\rho_0$, and the fixed observable $O$ in Eq.~\eqref{append:eq:dpos_O}. The induced target function class is
\begin{equation}
    \mathcal{F}_{L}=\left\{f_{\ell}(\bx)=\Tr\big(\rho_0\, U_{\ell}(\bx)^{\dagger} O\, U_{\ell}(\bx)\big) = \mu_\ell\, h(\bx)~\Big|~\ell \in \{0,1\}^{*},~ \bx\in [-\pi,\pi]^d\right\}.
\end{equation}
Then $\mathcal{F}_{L}$ satisfies the following two properties.
\begin{itemize}
\item[\textnormal{(i)}] \textnormal{(Classically learnable.)} For every input $\ell$, the $\DSE$-guided surrogate $h_{\mathsf q}$ achieves $\mathbb{E}_{\bx}|h_{\mathsf q}(\bx)-f_{\ell}(\bx)|^2\le\epsilon$ using a feature dimension $d_{\mathsf q}\le m_f=\mathrm{poly}(N)$ and $n=\mathrm{poly}(N)$ quantum-labeled samples.
\item[\textnormal{(ii)}] \textnormal{(Classically non-simulable.)} Suppose there exists a randomized polynomial-time classical algorithm which, without access to quantum-generated data, outputs for every input $\ell$ a hypothesis $\hat f$ satisfying $\mathbb{E}_{\bx}|\hat f(\bx)-f_{\ell}(\bx)|^2\le\epsilon$. Then $\mathsf{BQP}\subseteq\mathsf{BPP}$.
\end{itemize}
\end{theorem-non}

\noindent\underline{Remark}. The quantifier in property (ii) is over \emph{all} inputs $\ell$ rather than a single one, since deciding the membership $\ell\in L$ requires the classical algorithm to succeed on both the yes-instances and the no-instances of the promise problem. Accordingly, the index set of $\mathcal{U}_L$ and $\mathcal{F}_L$ ranges over all bit strings $\ell\in\{0,1\}^{*}$, and the language $L$ enters only through the sign of $\mu_\ell$ in Eq.~\eqref{append:eq:dpos_mu}.

\begin{proof}[Proof of Theorem~\ref{coro:hardness_classical_simulat}]
This proof is composed of two parts, which separately establish the classical learnability and the classical hardness of the constructed family.

\smallskip
\noindent\underline{Classical learnability}. This part follows by directly invoking Theorem~\ref{mt:thm:prediction_error_bound}, whose hypotheses are all met by the constructed target function class $\mathcal{F}_L$. Specifically, the initial state is the fixed product state $\rho_0=\ket{0}\bra{0}^{\otimes N}$, the circuit $U_{\ell}(\bx)$ has polynomial depth, and the observable $O$ is a sum of weight-two Pauli operators admitting direct eigenvalue measurement. Supported by Lemma~\ref{append:lem:dpos_sse}, we have $\mathcal{M}^{(1)}[O(\bx;U_{\ell})]\le\log(2d)$. Since $|\mu_{\ell}|\ge 1/3$ and $d\epsilon \le 1/288$, we have
\begin{equation}
    \epsilon \le \frac{1}{288d}\le \frac{\mu_{\ell}^2}{32d}=\frac{\nu_f}{16}.
\end{equation}
For the observable in Eq.~\eqref{append:eq:dpos_O}, we have $q = d, \bm{a}_j = 1/d$, and hence $\|\bm{a}\|_1 =\sum_{j=1}^d|\bm{a}_j| = 1$. Moreover, we have
\begin{equation}
    \frac{\|\bm{a}\|_1^2}{\nu_f} = \frac{2d}{\mu_{\ell}^2}\le 18d =\mathcal{O}(1/\epsilon)=\mathcal{O}(\poly(N)),
\end{equation}
where the inequality uses $|\mu_{\ell}|\ge 1/3$ and the scaling follows from $d\epsilon \le 1/288$. Moreover, both $\epsilon$ and $d=\mathcal{O}(1/\epsilon)$ are fixed independently of $N$. Hence, at the target accuracy $\epsilon$, the exponent governing the model dimension obeys
\begin{equation}
    \frac{2\nu_f \mathcal{M}^{(1)}[O(\bx;U_{\ell})]}{\epsilon}\le \frac{\log(2d)}{d\mu_\ell^2}=\mathcal{O}(1).
\end{equation}
Consequently, the draw budget satisfies $m_f=\tilde{\mathcal{O}}(2^{2\nu_f \mathcal{M}^{(1)}[O(\bx;U_{\ell})])/\epsilon})=\poly(N)$ for the inverse-polynomial error $\epsilon$. Since $d_{\mathsf{q}}\le m_f$, the feature dimension is also polynomial in $N$. Moreover, choosing $n = 4d_{\mathsf q}/(\epsilon\delta)$ gives $n = \poly(N, 1/\delta)$. Taken together, the \DSE-guided surrogate $h_{\mathsf{q}}$ emulates this family with polynomial sample complexity and runtime.

\smallskip
\noindent\underline{Hardness for purely classical algorithms}. Suppose, for the sake of the argument, that a randomized polynomial-time classical algorithm without access to quantum-generated data outputs, for every input $\ell$, a hypothesis $\hat f$ satisfying $\mathbb{E}_{\bx}|\hat f(\bx)-f_{\ell}(\bx)|^2\le\epsilon$. As shown above, $d\epsilon\le 1/288$ and $|\mu_{\ell}|\ge 1/3$ imply $\epsilon\le\nu_f/16$. Employing Lemma~\ref{append:lem:dpos_reduction}, the estimator $\hat\mu$ can be computed from $\hat f$ in randomized polynomial time and obeys $\mathrm{sign}(\hat\mu)=\mathrm{sign}(\mu_\ell)$. By Eq.~\eqref{append:eq:dpos_mu}, the sign of $\mu_\ell$ decides whether $\ell\in L$, so the above procedure decides the language $L$ in randomized polynomial time. Since $L\in\mathsf{BQP}$ was arbitrary, this yields a randomized polynomial-time classical decision procedure for every language in $\mathsf{BQP}$, and hence $\mathsf{BQP}\subseteq\mathsf{BPP}$. This completes the proof.
\end{proof}

\subsection{Proof of Lemmas~\ref{append:lem:dpos_sse} and \ref{append:lem:dpos_reduction}}\label{append:subsec:proof_lemmas}

Here we present the proofs of the two technical lemmas used in the separation.

\begin{proof}[Proof of Lemma~\ref{append:lem:dpos_sse}]
This proof consists of two parts, which separately derive the product form of $f_{\ell}(\bx)$ given the evolved observable $O(\bx,U_{\ell})=U_{\ell}^{\dagger}(\bx)OU_{\ell}(\bx)$ defined in Eq.~\eqref{append:eq:dpos_U}-\eqref{append:eq:dpos_O} and the upper bound of the related $\DSE$.

\smallskip
\noindent\underline{Explicit form of $f_{\ell}(\bx)$}. By construction, the circuit takes the tensor-product form $U_{\ell}(\bx)=W_\ell\otimes V_B(\bx)$ with $V_B(\bx)=\prod_{j=1}^{d}\RZ_{m+j}(\bx_j)H_{m+j}$, and the observable takes the tensor-product form $O=Z_1\otimes\big(\tfrac1d\sum_{j=1}^{d} X_{m+j}\big)$. Since conjugation preserves tensor products, the evolved observable yields
\begin{equation}\label{append:eq:evolved_tensor}
U_{\ell}(\bx)^{\dagger} O\, U_{\ell}(\bx) = \big(W_\ell^{\dagger}Z_1 W_\ell\big)\otimes \frac1d\sum_{j=1}^d V_B(\bx)^{\dagger} X_{m+j} V_B(\bx).
\end{equation}
Because the initial state $\rho_0=\ket{0}\bra{0}^{\otimes N}$ is also a product state across the same bipartition, the expectation value of the tensor-product operator in Eq.~\eqref{append:eq:evolved_tensor} equals the product of the two registers' expectation values. The first factor is exactly the $\mathsf{BQP}$ value, i.e., $\bra{0^m}W_\ell^{\dagger}Z_1W_\ell\ket{0^m}=\mu_\ell$, which is independent of $\bx$. For the second factor, a single-qubit calculation gives
\begin{equation}
    V_B(\bx)^{\dagger}X_{m+j}V_B(\bx)=H\big(\cos(\bx_j)\,X-\sin(\bx_j)\,Y\big)H=\cos(\bx_j)\,Z_{m+j}-\sin(\bx_j)\,Y_{m+j},
\end{equation}
so that using $\bra{0}Z\ket{0}=1$ and $\bra{0}Y\ket{0}=0$, its expectation value is $\tfrac1d\sum_{j=1}^{d}\cos(\bx_j)=h(\bx)$, which depends only on the input $\bx$. Multiplying the two factors yields $f_{\ell}(\bx)=\mu_\ell\,h(\bx)$. Employing the orthogonality relation $\mathbb{E}_{\bx}\cos^2(\bx_j)=\tfrac12$ then gives $\nu_f=\mu_\ell^2\,\mathbb{E}_{\bx}h(\bx)^2=\mu_\ell^2/(2d)$.
 
\smallskip

\noindent \underline{$\DSE$ bound for $O(\bx,U_\ell)$}. As shown in Eq.~\eqref{append:eq:evolved_tensor}, the evolved observable is a sum of single-probe terms, so its trigonometric expansion contains only frequencies of Hamming weight $\|\bomega\|_0\le 1$. Explicitly, its frequency support consists of those modes $\bomega\in\{0,\pm 1\}^d$ that are nonzero in exactly one entry, i.e., $\bomega_j\in\{+1,-1\}$ for a single index $j\in[d]$ and $\bomega_{j'}=0$ for all $j'\ne j$. The number of such modes is $2d$, since there are $d$ choices of the index $j$ and two choices of the sign of $\bomega_j$. Recall that in the definition of $\mathcal{M}^{(\alpha)}$ in Eq.~\eqref{eq:SSE}, a fixed unitary $V$ maps each operator coefficient as $Q_{\bomega}\mapsto V^{\dagger}Q_{\bomega}V$ but, being independent of $\bx$, leaves the frequency support unchanged. Hence for \emph{every} fixed $V$, the induced distribution $\mathrm{p}(\bomega)$ is supported on at most $2d$ modes, which gives $\mathcal{S}^{(1)}[V^{\dagger}O(\bx;U_{\ell})V]\le\log(2d)$ and therefore
\begin{equation}
    \mathcal{M}^{(1)}[O(\bx;U_{\ell})]=\max_{V}\mathcal{S}^{(1)}\big[V^{\dagger}O(\bx;U_{\ell})V\big]\le\log(2d).
\end{equation}
This completes the proof.
\end{proof}

\begin{proof}[Proof of Lemma~\ref{append:lem:dpos_reduction}]
Since $\mathbb{E}_{\bx}h(\bx)^2=1/(2d)$, the estimator can be rewritten as $\hat\mu=\mathbb{E}_{\bx}[\hat{f}(\bx) h(\bx)]/\mathbb{E}_{\bx}[h(\bx)^2]$. Employing the product form $f_{\ell}(\bx)=\mu_\ell h(\bx)$ from Lemma~\ref{append:lem:dpos_sse} and the Cauchy--Schwarz inequality, we have
\begin{align}
|\hat\mu-\mu_\ell|
= & \frac{\big|\mathbb{E}_{\bx}\big[(\hat{f}(\bx)-\mu_\ell h(\bx))\,h(\bx)\big]\big|}{\mathbb{E}_{\bx}[h(\bx)^2]}
\nonumber \\ 
\le & \frac{\sqrt{\mathbb{E}_{\bx}(\hat{f}(\bx)-f_{\ell}(\bx))^2}\,\sqrt{\mathbb{E}_{\bx}[h(\bx)^2]}}{\mathbb{E}_{\bx}[h(\bx)^2]}
\nonumber \\
= & \sqrt{\frac{\epsilon}{\mathbb{E}_{\bx}[h(\bx)^2]}}
=\sqrt{2d\,\epsilon}
\le\sqrt{2d\cdot\tfrac{\nu_f}{16}}
=\frac{|\mu_\ell|}{4},
\end{align}
where the second inequality uses the assumption $\mathbb{E}_{\bx}|\hat{f}(\bx)-f_{\ell}(\bx)|^2\le\epsilon$, and the last equality follows from $\nu_f=\mu_\ell^2/(2d)$. Hence $\hat\mu$ and $\mu_\ell$ share the same sign, and $|\hat\mu|\ge\tfrac34|\mu_\ell|\ge\tfrac14$ with the last inequality using $|\mu_\ell|\ge 1/3$ from Eq.~\eqref{append:eq:dpos_mu}.

It remains to bound the cost of estimating $\hat\mu$. Clipping $\hat{f}(\bx)$ to the interval $[-1,1]$ never increases the prediction error and ensures that each summand obeys $\hat f(\bx)h(\bx)\in[-1,1]$. Employing Hoeffding's inequality, a Monte Carlo average over $M=\mathcal{O}(d^2\log(1/\delta))$ i.i.d.\ inputs estimates $\mathbb{E}_{\bx}[\hat{f}(\bx) h(\bx)]$ to an additive accuracy $|\mu_\ell|/(16d)$, and hence estimates $\hat\mu$ to an accuracy $|\mu_\ell|/8$, with probability at least $1-\delta$. This completes the proof.
\end{proof}

\section{More numerical simulations}
\label{append:sec:more-numerical-simulations}

This section complements the numerical results presented in the main text with additional implementation details and supporting validation experiments. We first describe the baseline methods considered in the main text and introduce the evaluation metric used throughout our numerical studies in SI.~\ref{append:sec:numerical-methods}. We then investigate the predictive performance of the \DSE-guided classical surrogate under finite-shot constraints in SI.~\ref{append:sec:finite-shot-feature-selection}. Next, we provide additional numerical evidence for the scalability of the \DSE-guided surrogate in SI.~\ref{append:sec:controlled-dse}. Finally, in SI.~\ref{append:sec:application_1_dse}, we demonstrate that the \DSE-guided surrogate improves upon existing classical surrogates with provable efficiency guarantees for pretraining variational quantum algorithms.

\subsection{Implementation details of the numerical methods and evaluation metrics}
\label{append:sec:numerical-methods}
In this section, we elucidate the implementation of the numerical methods and the evaluation metrics used in our experiments. We first describe the learning-based surrogates and classical simulation methods considered in the benchmarks, and then specify the metric used to evaluate their performance.

\medskip
\noindent \textbf{Classical surrogates and simulators.}
We consider four numerical methods: the \DSE-guided surrogate $h_{\mathsf q}$, the Hamming-weight surrogate $h_{\mathsf c}$, the matrix-product-state (MPS) simulator, and the Pauli-path simulator (PPS), where the latter two simulators are representative for simulating low-entanglement and low-magic circuits, respectively. Their implementation details are described below.

\smallskip
\noindent \underline{\textit{\DSE-guided surrogate $h_{\mathsf{q}}$.}} Recall that constructing the \DSE-guided classical surrogate involves three steps: (i) data collection, (ii) feature selection, and (iii) model construction. In step (i), the classical learner collects a training dataset $\mathcal{T}=\{(\bxi,y^{(i)})\}_{i=1}^n$ containing $n$ examples, where each input $\bxi$ is independently sampled from the uniform distribution over $[-\pi,\pi]^d$ and $y^{(i)}$ is a statistical estimation of $f(\bxi)=\Tr(\rho_0 U(\bxi)^{\dagger}OU(\bxi))$ obtained from $m_y$ measurements. In step (ii), the classical learner selects a subset of features $\{\Phi_{\bomega}(\bx)\}_{\Lambda_\mathsf{q}}$ from the trigonometric expansion $f(\bx)=\sum_{\bomega\in\{0,\pm 1\}^d} \Phi_{\bomega}(\bx)\Tr(\rho_0 Q_{\bomega})$ defined in Eq.~\eqref{append:eq:tri_exp_T}, where the selected feature set is indexed by $\Lambda_{\mathsf{q}}\subset \{0,\pm 1\}^d$. To determine $\Lambda_{\mathsf q}$, we use the quantum subroutine introduced in SI.~\ref{append:sec:direct_mode_sampler} to prepare the $2d$-qubit state $\ket{\psi_f}=\sum_{\bomega\in \{0,\pm 1\}^d}\sqrt{\mathrm{p}(\bomega)}\ket{\bomega}$. The state is measured $m_f$ times in the computational basis $\ket{\bomega}\bra{\bomega}$. The learner then selects the $D$ most frequently observed modes $\{\tilde{\bomega}^{(i)}\}_{i=1}^D$ and defines $\Lambda_{\mathsf{q}}=\{\tilde{\bomega}^{(i)}\}_{i=1}^D$. In the ideal limit $m_f\rightarrow\infty$, the selected modes are ordered according to $\mathrm{p}(\tilde{\bomega}^{(1)})\ge \cdots \ge \mathrm{p}(\tilde{\bomega}^{(D)})$.

In step (iii), the learner uses the training dataset $\mathcal{T}$ and the selected feature map $\bm{\Phi}_{\Lambda{\mathsf q}}$ to construct the linear surrogate $h_{\mathsf{q}}(\bx,\hat{\bm{\mathrm{w}}})=\braket{\bm{\Phi}_{\Lambda_{\mathsf{q}}}(\bx),\hat{\bm{\mathrm{w}}}}$. The coefficient vector $\hat{\bm{\mathrm w}}$ is obtained by solving the ridge-regression problem
\begin{equation}
    \hat{\bm{\mathrm{w}}}=\arg \min_{\bm{\mathrm{w}}} \sum_{i=1}^n \left(h_{\mathsf{q}}(\bx,\bm{\mathrm{w}}) - y^{(i)}\right)^2 + \alpha \|\bm{\mathrm{w}}\|,
\end{equation}
where $\alpha$ is the regularization parameter. We set $\alpha=1$ throughout the numerical experiments.

To distinguish between ideal and finite-shot implementations, we use $h_{\mathsf q}$ to denote the \DSE-guided surrogate constructed in the ideal limit $m_y=m_f=\infty$, and $\bar h_{\mathsf q}$ to denote the surrogate constructed using finite values of $m_y$ and $m_f$. In the numerical comparisons presented below, $h_{\mathsf q}$ is primarily used to characterize the predictive capability of the \DSE-guided surrogate under a prescribed inference-time computational budget. Unless explicitly stated otherwise, the one-time cost of constructing $h_{\mathsf q}$ during the training stage is not included in the inference-cost comparison.

\smallskip
\noindent \underline{\textit{Hamming-weight surrogate $h_{\mathsf{c}}$.}} The Hamming-weight surrogate $h_{\mathsf c}$ is a representative classical surrogate proposed in previous studies and equipped with provable efficiency guarantees~\cite{du2025efficient,liao2025demonstration}. Its construction follows the same three-step procedure as that of $h_{\mathsf q}$, but employs a different feature-selection rule in stage (ii). In particular, the Hamming-weight surrogate has the linear form $h_{\mathsf{c}}(\bx,\hat{\bm{\mathrm{w}}})=\braket{\bm{\Phi}_{\Lambda_{\mathsf{c}}}(\bx),\hat{\bm{\mathrm{w}}}}$, where $\hat{\bm{\mathrm{w}}}$ is obtained using the same ridge-regression procedure described above. The feature vector $\bm{\Phi}_{\Lambda_{\mathsf{c}}}(\bx)$ is constructed by prioritizing modes $\bomega$ with small Hamming weight $\|\bomega\|_0$. For an integer threshold $\Gamma$, define $
\Lambda(\Gamma)=\{\bomega\in{0,\pm1}^{d}:
|\bomega|_0\leq\Gamma\}$. For a prescribed feature budget (D), there may not exist an integer $\Gamma$ satisfying $|\Lambda(\Gamma)|=D$. We therefore choose $\Gamma'$ such that $|\Lambda(\Gamma')|\le D \le |\Lambda(\Gamma'+1)|$. To construct $\Lambda_{\mathsf{c}}$ of cardinality $D$, we include all modes in $\Lambda(\Gamma')$ and then uniformly sample $D-|\Lambda(\Gamma')|$ additional modes from the set $\{\bomega:\|\bomega\|_0 = \Gamma'+1\}$. Notably, this feature selection procedure is purely classical and ignores the effect of the structure-dependent quantity $\Tr(\rho_0Q_{\bomega})$ in the trigonometric expansion of $f(\bx)$.

\smallskip
\noindent \underline{\textit{Matrix-product-state simulator.}} 
Matrix-product-state methods are tensor-network-based approaches that are particularly effective for simulating quantum circuits whose intermediate states exhibit limited entanglement. To compare $h_{\mathsf q}$ with a representative tensor-network simulator, we implement an MPS baseline that independently contracts each parameterized circuit $U(\bx)$ using the PennyLane tensor-network backend~\cite{bergholm2018pennylane}. 

The maximum MPS bond dimension $\chi$ is set equal to the feature dimension $D$ used by the classical surrogates. This choice provides a controlled comparison between the inference budgets of the learning-based and simulation-based methods. Evaluating a surrogate with $D$ selected trigonometric features requires at most $\mathcal{O}(dD)$ arithmetic operations per input when the features are evaluated directly. By contrast, simulating an $N$-qubit circuit of depth $L$ using an MPS generally requires a computational cost that grows at least linearly with $NL$ and polynomially with $\chi$. Consequently, setting $\chi=D$ ensures that the MPS baseline is not assigned a smaller nominal truncation budget than the surrogate while still reflecting the substantially different computational structures of the two methods.

\smallskip
\noindent \underline{\textit{Pauli-path simulator.}} 
As a complementary operator-based simulation method, PPS estimates expectation values $f(\bx)$ by propagating the measured Pauli observable backward through the circuit in the Heisenberg picture~\cite{rudolph2026pauli}. Clifford gates map each Pauli string to a single Pauli string, whereas non-Clifford and parameterized rotation gates can map one Pauli string to a linear combination of multiple Pauli strings. Repeated propagation through such gates therefore produces a branching collection of weighted Pauli paths whose size can grow exponentially with the number of non-Clifford gates.

Different PPS variants control this growth using different truncation strategies. In our implementation, we employ a top-$K$ truncation rule. After each circuit-layer update, the simulator retains at most the $K$ Pauli paths with the largest coefficient magnitudes and discards the remaining paths. This truncation keeps the memory and runtime costs explicitly controlled and enables a direct comparison with classical surrogates under a fixed representation budget. The hyperparameter $K$ specifies the maximum number of Pauli paths retained during propagation. Throughout our numerical comparisons, we set $K=D$ to compare PPS and the predictive surrogates under the same representation budget and to demonstrate the superior predictive performance of $h_{\mathsf{q}}$ over PPS.

\medskip
\noindent \textbf{Evaluation metric.}
To evaluate the prediction performance of classical surrogates and simulators under a common criterion, we employ the coefficient of determination ($R^2$) on the test dataset $\{\bx^{(i)},y^{(i)}\}_{i=1}^{n_{\mathrm{test}}}$
\begin{equation}
R^2=1-\frac{\sum_{i=1}^{n_{\mathrm{test}}}
[y_i-h(\bx^{(i)})]^2}
{\sum_{i=1}^{n_{\mathrm{test}}}(y_i-\overline y)^2},
\qquad
\overline y=\frac{1}{n_{\mathrm{test}}}
\sum_{i=1}^{n_{\mathrm{test}}}y_i,
\label{append:eq:r2}
\end{equation}
where $y^{(i)}=\Tr[\rho_0U(\bx^{(i)})^{\dagger}OU(\bx^{(i)})]$ is the exact expectation value and $h(\bx^{(i)})$ denotes the surrogate prediction or simulator estimate. An ideal predictor achieves $R^2=1$, while $R^2=0$ corresponds to the constant predictor $\overline y$ (the mean of the test targets). Negative values of $R^2$ indicate a larger squared prediction error than this constant baseline. Since all negative values represent performance worse than the baseline, we display them as zero in the main-text figures for visualization purposes.

To provide a more nuanced assessment across the full range of prediction accuracy, we also introduce another bounded evaluation metric derived from $R^2$, which is given by
\begin{equation}
\widetilde{R}^2=\frac{1}{2-R^2}.
\label{append:eq:bounded-nmse-score}
\end{equation}
Since $\widetilde{R}^2$ is a monotonic transformation of $R^2$, it preserves the ranking of different methods while mapping the entire range $R^2\in(-\infty,1]$ onto the normalized interval $\widetilde{R}^2\in(0,1]$. Consequently, even arbitrarily negative values of $R^2$ are assigned finite scores within this bounded range, making the metric more suitable for comparing methods with poor predictive performance. Under this definition, exact prediction yields $\widetilde{R}^2=1$, the mean predictor gives $\widetilde{R}^2=1/2$, and values below $1/2$ indicate performance worse than the mean predictor.

\subsection{Numerical results for \DSE-guided classical surrogates under finite-shot constraints}
\label{append:sec:finite-shot-feature-selection}

The numerical results in the main text are obtained in an idealized setting
that neglects the statistical errors arising from a finite number of quantum
measurements. In this section, we complement the random-circuit experiments conducted in Fig.~\ref{fig:exp_results_1}a by
incorporating two distinct sources of measurement error: the finite-shot error in
the training labels and the finite-sampling error in identifying the
\DSE-guided features with the quantum subroutine. Treating these two resources
separately allows us to distinguish imperfect feature identification from
label noise. Specifically, we examine how prediction accuracy varies with the
numbers of measurements used for feature identification and label collection,
compare practical \DSE-guided feature selection with the ideal \DSE-guided,
Hamming-weight, and random Fourier strategies, and determine how a fixed total
measurement budget should be divided between feature identification and
training-label collection.

\smallskip
\noindent \textbf{Experimental setting.}
We consider the random-circuit setting used in Fig.~\ref{fig:exp_results_1}a. In particular, the system size is fixed at $N=6$ with circuit depth $L=66$. The number of rotation gates and $T$ gates contained in the circuit $U(\bx)$ are set as $d=6$ and $G_t=6$, respectively, while CNOT gates are inserted independently with probability $\gamma=0.1$. For surrogate construction, the training set contains $n=100$ examples and the feature dimension is fixed at $D=8$. All surrogates are evaluated on an independent test set of $n_{\mathrm{test}}=200$ examples. We generate three random quantum circuits and perform $20$ independent measurement repetitions for each circuit instance.

To investigate the effect of measurement noise in the training labels, we vary the number of measurements used to estimate each label $y^{(i)}$ according to
\begin{equation}
 m_{\mathrm y}\in\{1,2,4,8,16,32,64,128,256,512,1024,\infty\},
 \label{append:eq:label-shot-grid}
\end{equation}
where $m_{\mathrm y}=\infty$ denotes exact expectation values. For the \DSE-guided surrogate $\bar h_{\mathsf q}$, we also vary the number of measurements used in the quantum subroutine for feature identification with
\begin{equation}
 m_{\mathrm f}\in\{2^4,2^5,\ldots,2^{18}\}.
 \label{append:eq:feature-shot-grid}
\end{equation}
Notably, the feature-identification budget $m_{\mathrm f}$ is required only for constructing $\bar h_{\mathsf q}$. In contrast, the ideal \DSE-guided surrogate $h_{\mathsf q}$ assumes the exact feature-identification, corresponding to the scenario of $m_{\mathrm f}=\infty$, while the Hamming-weight surrogate $h_{\mathsf c}$ and the random Fourier surrogate $h_{\mathsf r}$ do not require the quantum feature-identification procedure.

For the fixed-budget comparison, we define the total measurement budget as
\begin{equation}
 m_{\mathrm t}=m_{\mathrm f}+n\cdot m_{\mathrm y},
 \label{append:eq:total-shot-budget}
\end{equation}
and consider
\[
m_{\mathrm t}\in
\{10^3,3\times10^3,10^4,3\times10^4,
10^5,3\times10^5,10^6\}.
\]
For $\bar h_{\mathsf q}$, fractions
$m_{\mathrm f}/m_{\mathrm t}\in\{0.05,0.10,0.25,0.50\}$
are allocated to feature identification, with the remaining budget distributed uniformly across the $n$ training labels. The Hamming-weight and random Fourier surrogates devote the entire measurement budget to label estimation.

\begin{figure*}[t]
 \centering
 \includegraphics[width=\textwidth]{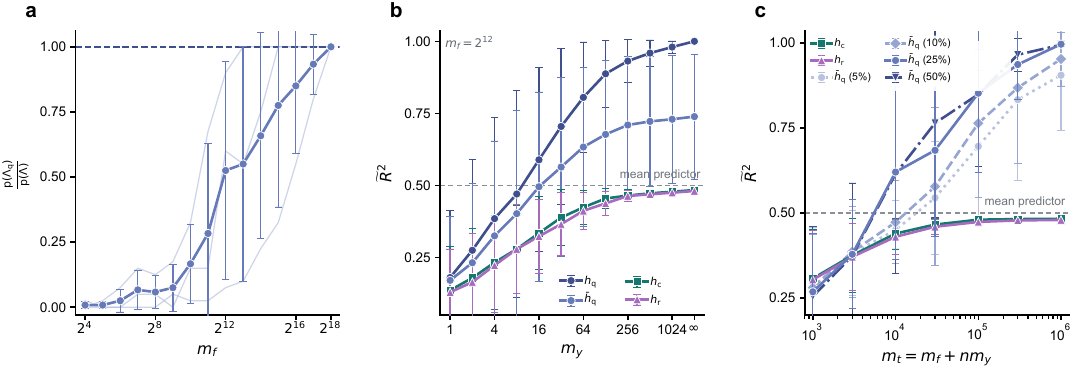}
 \caption{\textbf{Prediction performance of \DSE-guided classical surrogate under finite-shot constraints.}
 \textbf{a.} Probability mass captured by the quantum subroutine using varying number of measurements $m_f=\{2^4,2^8,2^{12},2^{16},2^{18}\}$. Light lines in show individual circuit-seed averages.
 \textbf{b.} Prediction accuracy of classical surrogates with varying number of measurements used for estimating per training label $m_{y}\in \{1,4,16,64,256,1024,\infty\}$, where $m_{y}=\infty$ refers to using exact training labels. The surrogate $\bar{h}_{\mathsf{q}}$ is constructed with setting $m_{\mathrm f}=2^{12}$.
 \textbf{c.} Prediction accuracy under a fixed total measurement budget $m_t=m_{f}+nm_{y}$. Percentages specify the fraction of $m_t$ allocated to feature identification for $\bar{h}_{\mathsf q}$. The dashed horizontal line in \textbf{b,c} denotes the train-set mean predictor, $A=1/2$. Markers and error bars report the mean and standard deviation over three random quantum circuits after averaging $20$ measurement repetitions within each circuit.}
 \label{fig:supp-finite-shot}
\end{figure*}

\smallskip
\noindent \textbf{Numerical results.}
Fig.~\ref{fig:supp-finite-shot}a first isolates the finite-sampling error arising from feature identification and evaluates the ability of the quantum subroutine to recover important trigonometric features. We quantify the quality of the identified feature set using the captured probability mass
\begin{equation}
    \frac{\mathrm{p}(\Lambda_{\mathsf{q}})}{\mathrm{p}(\Lambda)}=\frac{\sum_{\bomega\in \Lambda_{\mathsf{q}}}\mathrm{p}(\bomega)}{\sum_{\bomega\in \{0,\pm 1\}} \mathrm{p}(\bomega)},
\end{equation}
where $\mathrm{p}(\bomega)= \tilde{\mathrm{p}}(\bomega)/\sum_{\bomega\in \{0,\pm 1\}^d}\tilde{\mathrm{p}}(\bomega)$ with $\tilde{\mathrm{p}}(\bomega)=\mathbb{E}_{\bx} \Phi_{\bomega}(\bx)^2\Tr(\rho_0Q_{\bomega})^2$ is the probability distribution induced by the trigonometric expansion of $f(\bx)$ in Eq.~\eqref{append:eq:tr_equal_fbx}. Here, $p(\Lambda)=1$ and $\Lambda_{\mathsf{q}}$ denotes the set of distinct modes sampled by employing the quantum subroutine.
The results show that increasing the number of feature-identification measurements $m_f$ substantially improves the recovery of important modes. In particular, the captured probability mass increases from $0.167$ at $m_f=2^{10}$ to $0.93$ at $m_f=2^{17}$. It reaches unity at $m_f=2^{18}$, indicating that all modes with nonzero probability, $\{\bomega: \mathsf{p}(\bomega) >0\}$, have been successfully identified. These results demonstrate that the quantum subroutine can reliably recover the important trigonometric features as the measurement budget increases.

Fig.~\ref{fig:supp-finite-shot}b fixes the feature-identification budget at $m_f=2^{12}$ and varies the number of measurements $m_y$ used to estimate each training label. The predictive performance of all methods improves with $m_y$, showing that label-estimation noise dominates in the low-shot regime. Nevertheless, the \DSE-guided surrogates consistently outperform the Hamming-weight and random-feature baselines once $m_y$ is sufficiently large. In particular, with exact training labels, $h_{\mathsf q}$ and $\bar h_{\mathsf q}$ achieve $\widetilde{R}^{2}=1$ and $0.74$, respectively, whereas $h_{\mathsf c}$ and $h_{\mathsf r}$ remain below $0.5$. Thus, although $m_f=2^{12}$ does not recover all important modes, the selected features still retain a clear predictive advantage, while the gap between $h_{\mathsf q}$ and $\bar{h}_{\mathsf q}$ reflects the cost of imperfect feature identification.

Finally, Fig.~\ref{fig:supp-finite-shot}c examines how a fixed total measurement budget $m_t$ should be allocated between feature identification and label estimation. At small $m_t$, the additional cost of feature identification can outweigh its benefit, causing the conventional baselines to perform slightly better. As $m_t$ increases, however, the practical \DSE-guided surrogate exhibits a clear advantage, with allocating $25\%$--$50\%$ of the budget to feature identification providing the best overall performance. For instance, at $m_t=10^{4}$, these two allocation strategies achieve $\widetilde{R}^{2}=0.62$ and $0.61$, respectively, compared with $0.44$ for $h_{\mathsf c}$ and $0.43$ for $h_{\mathsf r}$. These results show that, once the total measurement budget is sufficient to support both feature and label estimation, the \DSE-guided surrogate substantially outperforms the Hamming-weight and random-feature baselines.

\subsection{More numerical results for the scalability of the \DSE-guided classical surrogates}
\label{append:sec:controlled-dse}

This subsection complements the scalability analysis in Fig.~\ref{fig:scalability} by further separating the effect of the system sizes and the number of rotation gates, the latter of which controls the \DSE of the considered circuit. Using the same structured cluster-circuit family, we perform two controlled comparisons. First, we vary $N$ while keeping \DSE fixed to determine whether increasing the system size alone reduces the prediction accuracy. Second, we vary \DSE at fixed $N$ to examine how the prediction accuracy changes as the mode distribution becomes more complex. Together, these experiments test whether the performance of $h_{\mathsf q}$ is governed primarily by \DSE rather than by the number of qubits. Throughout this analysis, we use the tractable quantity $\mathcal{S}^{(1)}$ as a proxy for \DSE, denoted by $\mathcal{M}^{(1)}$.

\smallskip
\noindent \textbf{Experimental setting.}
We consider the structured cluster-circuit family introduced in Fig.~\ref{fig:scalability}. For the cluster observable with window size $k=5$, the nonzero modes $\{\bomega:\mathrm{p}(\bomega)>0\}$
have equal probabilities $\mathrm{p}(\bomega)$ and The number of such modes is
\begin{equation}
 M_{\mathrm{nz}}=(d-k+1)2^k,\qquad
 \mathcal{S}^{(1)}=\log_2 M_{\mathrm{nz}}.
 \label{append:eq:cluster-dse}
\end{equation}
In the fixed-\DSE experiment, we set $d=12$, for which
$M_{\mathrm{nz}}=256$ and $\mathcal{S}^{(1)}=8$, and vary the number of qubits as
$N\in\{20,30,40,50,60,80\}$. Because increasing $N$ only adds qubits that do not alter the target function or its mode distribution, both $f(\bx)$ and $\mathrm{p}(\bomega)$ remain unchanged. In the fixed-$N$ experiment, we
set $N=40$ and vary the number of rotation gates as $d\in\{6,9,12,15,18,21\}$, corresponding to
$\mathcal{S}^{(1)} \in\{6.00,7.32,8.00,8.46,8.81,9.09\}$.

For both experiments, we consider three combinations of training-set size $n$ and feature dimension $D$, i.e., $(n,D)=(100,50),(300,200)$, and $(500,300)$. When $D>M_{\mathrm{nz}}$, all nonzero modes are retained, and the effective feature dimension is therefore $M_{\mathrm{nz}}$. Each configuration is averaged over $20$ independently sampled training datasets and evaluated on $1000$ test inputs with exact labels.

\smallskip
\noindent \textbf{Numerical results.}
As shown in Fig.~\ref{append:fig:controlled-dse}, the two controlled experiments clearly separate the dependence of the prediction accuracy on \DSE from its dependence on the number of qubits. At fixed $\mathcal{S}^{(1)}=8$, the performance of $h_{\mathsf q}$ exhibits no systematic variation with (N). In particular, for all three choices of $(n,D)$, the fitted slopes of the values of $R^2$ with respect to $N$ are close to zero. In contrast, at fixed $N=40$, the accuracy decreases monotonically or near-monotonically as the \DSE proxy $\mathcal{S}^{(1)}$ increases. 
For example, at $(n,D)=(300,200)$, the prediction accuracy $R^2$ decreases from $1$
at $\mathcal{S}^{(1)}=6$ to $0.32$ at $\mathcal{S}^{(1)}=8$ and becomes zero for the three largest values of $\mathcal{S}^{(1)}$. At $(n,D)=(500,300)$, the model remains exact through
$\mathcal{S}^{(1)}=8$, where the requested $D=300$ already contains all $256$ nonzero modes, and then degrades as the mode support $\{\bomega:\mathrm{p}(\bomega)>0\}$ grows beyond the feature budget $D$.
These controlled comparisons show that enlarging $N$ does not reduce accuracy when the mode distribution $\mathrm{p}(\bomega)$ is fixed, whereas increasing \DSE at fixed $N$ increases the feature and data requirements.

\begin{figure}[t]
\centering
\includegraphics[width=0.96\linewidth]{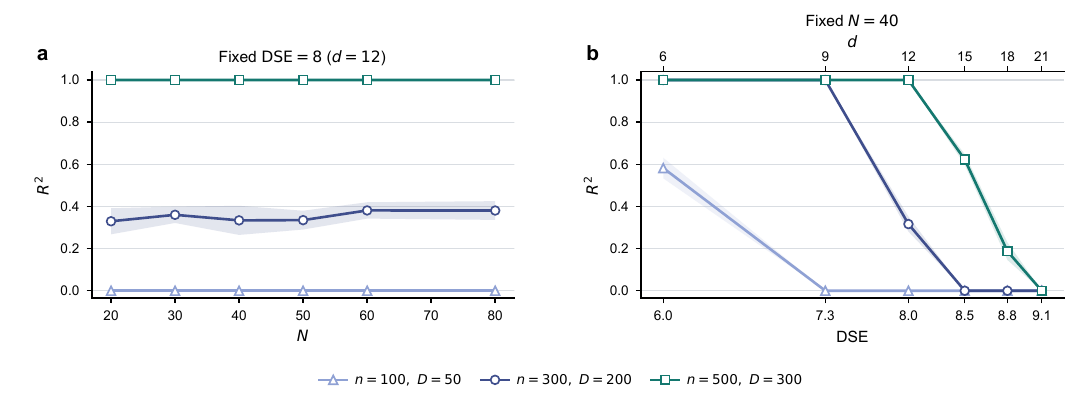}
\caption{\textbf{Controlled validation of \DSE as the dominant scale of the
structured surrogate benchmark.} \textbf{a}, Prediction accuracy at fixed
$d=12$ and $\mathcal{S}^{(1)}=8$ as the number of qubits $N$ is varied, where $\mathcal{S}^{(1)}$ is the proxy of \DSE. The target function
and its mode distribution remain unchanged. \textbf{b}, Prediction accuracy at
fixed $N=40$ as $d$, and hence \DSE, is increased. The three curves correspond
to $(n,D)=(100,50),(300,200)$, and $(500,300)$. Points and shaded bands denote
the mean and 95\% confidence interval over $20$ independently sampled datasets. Each model is evaluated on $1000$ exact test inputs. }
\label{append:fig:controlled-dse}
\end{figure}

\subsection{Application of \DSE-guided surrogates: pre-training variational quantum eigensolvers (VQEs)}
\label{append:sec:application_1_dse}
While the classical surrogates have wide applications as studied in previous studies \cite{du2025efficient, liao2025demonstration}, here we focus on a major application of \DSE-guided surrogates for enhancing variational quantum algorithms by substantially reducing the quantum resource demands. In the following, we first introduce how to use the classical surrogate to pre-train variational quantum eigensolvers, and then introduce the numerical settings and results.

\medskip
\noindent\textbf{Implementations of pre-training VQEs with \DSE-guided classical surrogate.} We first recap the mechanism of VQEs. Given an initial state $\rho_0$ and a target Hamiltonian $\mathsf{H}$, the objective of VQEs is to find the ground state of $\mathsf{H}$ by optimizing the following expectation function
\begin{equation}
    \min_{\bx} f(\bx, \mathsf{H}):=\min_{\bx}\Tr(\rho_0 U(\bx)^{\dagger} \mathsf{H}U(\bx)),
\end{equation}
where $U(\bx)$ is a variation quantum circuits. To achieve this, a common strategy is to use a gradient descent optimizer to iteratively update the tunable parameters $\bx$ to minimize $f(\bx, \mathsf{H})$. Specifically, at the $t$-th iteration, the updating 
rule yields
\begin{equation}
    \bx^{(t+1)}=\bx^{(t)}-\eta \nabla_{\bx} f(\bx^{(t)}, \mathsf{H}),
\end{equation}
where $\eta$ is the learning rate. This process continues until the loss function converges to a predefined condition. However, implementing VQEs on quantum processors incurs significant measurement overhead, primarily due to the need for estimating expectation values and gradients during the optimization process. As indicated in previous studies \cite{du2025efficient,liao2025demonstration}, a well-optimized classical surrogate $h_{\mathsf{q}}$ can alleviate the expensive measurement overhead via pre-training. Instead of directly optimizing $f(\bx^{(t)}, \mathsf{H})$, the target parameters can be obtained by minimizing $h_{\mathsf{q}}(\bx,\mathsf{H})$, i.e.,
\begin{equation}
    \min_{\bx} h_{\mathsf{q}}(\bx,\mathsf{H})
\end{equation}
This process is entirely classical and does not involve any access to quantum processors. Given access to the optimized $\hat{\bx}$, one can further minimize $f(\hat{\bx},\mathsf{H})$ on quantum devices for additional fine-tuning.

\medskip
\noindent\textbf{Experimental settings.}
Here, we consider the 1D open-boundary transverse field Ising models (TFIMs), with the formal expression as
\begin{equation}
 \mathsf{H} = -\sum_{i=1}^{N-1}Z_iZ_{i+1}
 -\sum_{i=1}^{N}X_i.
\end{equation}
The initial state is $\rho_0=\ket{0}\bra{0}^{\otimes N}$ and the adopted ansatz in VQE employs the Hardware efficient ansatz of the form
\begin{equation}
    U(\bx)=\prod_{\ell=1}^L \left[ \prod_{i=1}^{N-1}\CNOT_{i,i+1} \prod_{i=1}^N \mathrm{RY}(\bx_{\ell,i})\right]
\end{equation}
where the input dimension is $d=LN$. 

The TFIM considered here is set to $N=6$, $J=g=1$. The circuit depth of $U(\bx)$ is set as $L=2$. For the construction of $h_{\mathsf{q}}$, the training data size and the feature dimension is set as 
\begin{equation}
 n\in\{50,100,200,400,1600\},
 \qquad
 D\in\{8,16,32,128,512\}.
\end{equation}
The total budget of measurement for constructing surrogates is set as 
\begin{equation}
 m_t\in\{2^8,2^{10},2^{12},2^{14},2^{16}\}.
\end{equation}

For the optimization of VQEs, all methods start from the same initial parameter point, and we use Adam \cite{kingma2017adam} as the optimizer with learning rate $0.08$. To examine the ability of the optimized surrogate $h$ in estimating the ground state energy, we use the normalized deviation as a metric to quantify its performance
\begin{equation}
 \mathfrak{R}(\bx) =
 \frac{\left|f(\bx)-E_0\right|}
 {E_{\max}-E_0},
 \label{append:eq:normalized-vqe-error}
\end{equation}
where $E_0$ and $E_{\max}$ are the minimum and maximum eigenvalues of $\mathsf{H}$.

\medskip
\noindent\textbf{Numerical results.}
Figure~\ref{fig:append:vqe-pretraining}a presents the prediction performance $\widetilde{R}^2=(2-R^2)^{-1}$ of the ideal \DSE-guided surrogate $h_{\mathsf{q}}$ and Hamming-weight surrogate $h_{\mathsf{c}}$ across different feature dimensions $D$ and training-set sizes $n$. The \DSE-guided surrogate achieves strong predictive performance with only moderate resources, namely $h_{\mathsf q}$ reaches $\widetilde{R}^{2}=0.87$ using $D=32$ and $n=50$. By contrast, even with $D=512$ and $n=1600$, $h_{\mathsf c}$ remains below $\widetilde{R}^{2}=0.5$, indicating performance worse than the mean predictor. These results demonstrate the superior performance of $h_{\mathsf q}$ over $h_{\mathsf c}$ and highlight the importance of identifying the most relevant features when constructing predictive surrogates.

Figure~\ref{fig:append:vqe-pretraining}b further examines whether this advantage persists when feature identification and training-label estimation share a fixed total measurement budget. With $25\%$ of the budget allocated to feature identification, the performance of the finite-shot surrogate $\bar h_{\mathsf q}$ improves from $\widetilde{R}^{2}=0.41$ at $m_t=2^{8}$ to $0.93$ at $m_t=2^{16}$. At the largest budget $m_t=2^{18}$, $h_{\mathsf c}$ reaches only $0.30$ with $n=50$ and $0.46$ with $n=200$. Thus, although increasing the number of training samples improves $h_{\mathsf c}$, it does not recover the advantage obtained by allocating part of the measurement budget to \DSE-guided feature identification.

Finally, we examine whether the prediction advantage of the \DSE-guided surrogate translates into improved VQE optimization. As shown in Fig.~\ref{fig:append:vqe-pretraining}(c), increasing $m_f$ from $2^8$ to $2^{12}$ improves the test accuracy of $\bar h_{\mathsf q}$ from $0.561$ to $0.898$. Correspondingly, the normalized energy error after 80 classical pre-training iterations decreases from $\mathfrak{R}=0.279$ to $0.044$. By contrast, pre-training with $h_{\mathsf c}$ yields $\mathfrak{R}=0.506$, providing a substantially less effective initialization. We then fine-tune the pre-trained parameters by optimizing the circuit objective $f$ for 20 additional iterations. The final errors are $0.070$, $0.036$, and $0.026$ for $m_f=2^8$, $2^{10}$, and $2^{12}$, respectively, compared with $0.171$ for $h_{\mathsf c}$. In particular, the $m_f=2^{12}$ protocol approaches the accuracy of directly optimizing $f$ for 100 iterations, which yields $\mathfrak{R}=0.024$.

We further compare the quantum measurement costs of these two procedures. The ansatz contains $d=12$ independent parameters, so the parameter-shift rule requires $2d=24$ energy evaluations per gradient step. With $m_y=64$ measurements per energy evaluation, direct optimization for 100 iterations requires $100(2d)m_y=153{,}600$ circuit executions. The \DSE-guided protocol instead requires $m_f+nm_y=10{,}496$ executions to construct the surrogate at $m_f=2^{12}$ and $n=100$, followed by $20(2d)m_y=30{,}720$ executions for circuit fine-tuning. Its total measurement cost is therefore $41{,}216$, while the preceding 80 surrogate-optimization iterations are performed entirely classically. Under this parameter-shift accounting, \DSE-guided pre-training reduces the total quantum measurement cost by $73.2\%$, while attaining nearly the same ground-state accuracy as direct VQE optimization. Together, these results show that the improved landscape approximation enabled by the \DSE-guided surrogate provides a high-quality initialization and substantially reduces the quantum measurement cost required for VQE optimization.

\begin{figure}[t]
\centering
\includegraphics[width=0.96\linewidth]{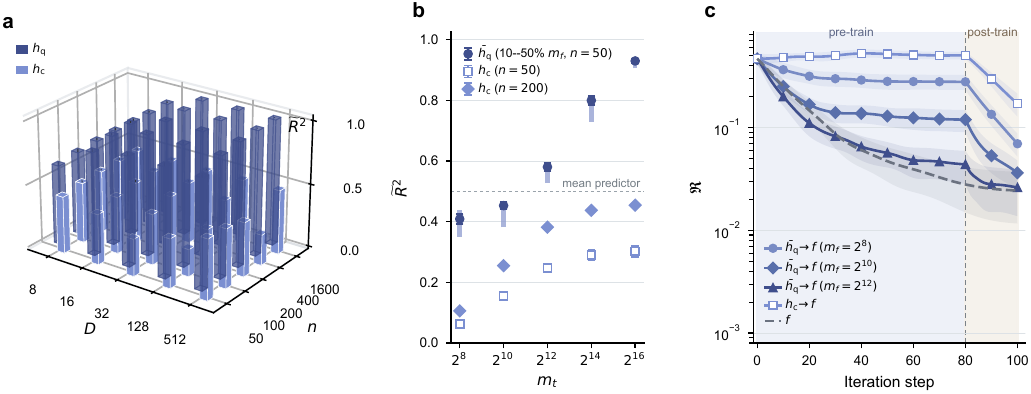}
\caption{\textbf{\DSE-guided surrogate pre-training for a variational quantum
eigensolver.}
\textbf{a.} Prediction accuracy $\widetilde{R}^{2}$ of the ideal \DSE-guided surrogate $h_{\mathsf q}$ and the Hamming-weight surrogate $h_{\mathsf c}$ as functions of the requested feature dimension $D\in\{8,16,32,128,512\}$ and training size $n\in\{50,100,200,400,1600\}$. Training labels are collected using $m_y=2^6$ measurements, and $h_{\mathsf q}$ uses the ideal feature ordering corresponding to $m_f=\infty$. \textbf{b.} Prediction accuracy under the fixed total measurement budgets $m_t\in\{2^8,2^{10},2^{12},2^{14},2^{16}\}$ at $D=32$. For
$\bar h_{\mathsf q}$, the circles show the 25\% allocation to feature identification and the light vertical ranges span allocations from 10\% to 50\%, with $n=50$. The Hamming-weight results use $n=50$ or $n=200$ and allocate the entire budget to training-label measurements. The dashed horizontal line denotes the test-set mean predictor.
\textbf{c.} Normalized deviation $\mathfrak{R}$ during VQE optimization with $n=100$, $m_y=2^6$, and $D=32$. The practical \DSE-guided surrogates use $m_f=\{2^8, 2^{10}, 2^{12}\}$. The first 80 iterations optimize the classical surrogate (pre-train), after which the true circuit objective is optimized for 20 iterations (post-train). The direct baseline optimizes the circuit objective for all 100 iterations from the same initial parameters. Bars and points in \textbf{a,b} show means over 20 seeds after averaging five measurement repetitions within each seed; error bars denote 95\% confidence intervals. Lines in \textbf{c} show means over 20 seeds, and shaded regions denote 95\% confidence intervals.}
\label{fig:append:vqe-pretraining}
\label{append:fig:tfim}
\end{figure}  
 
\end{document}